\numberwithin{equation}{section}
\numberwithin{theorem}{section}
\newlist{steps}{enumerate}{1}
\setlist[steps, 1]{label = Step \arabic*:}
\def\spacingset#1{\renewcommand{\baselinestretch}%
{#1}\small\normalsize} 
\begin{document}
\begin{bibunit}[my-plainnat]

\title{Latent Multimodal Functional Graphical Model Estimation}
\author[1]{Katherine Tsai}
\author[2]{Boxin Zhao}
\author[3]{Sanmi Koyejo}
\author[2]{Mladen Kolar}
\affil[1]{Department of Electrical and Computer Engineering, University of Illinois Urbana-Champaign}
\affil[2]{Booth School of Business, University of Chicago}
\affil[3]{Department of Computer Science, Stanford University}

\renewcommand\Authands{ and }
\date{\today}
\maketitle
\begin{abstract}
Joint multimodal functional data acquisition, where functional data from multiple modes are measured simultaneously from the same subject, has emerged as an exciting modern approach enabled by recent engineering breakthroughs in the neurological and biological sciences. One prominent motivation to acquire such data is to enable new discoveries of the underlying connectivity by combining multimodal signals. Despite the scientific interest, there remains a gap in principled statistical methods for estimating the graph underlying multimodal functional data. To this end, we propose a new integrative framework that models the data generation process and identifies operators mapping from the observation space to the latent space. We then develop an estimator that simultaneously estimates the transformation operators and the latent graph. This estimator is based on the partial correlation operator, which we rigorously extend from the multivariate to the functional setting. Our procedure is provably efficient, with the estimator converging to a stationary point with quantifiable statistical error. Furthermore, we show recovery of the latent graph under mild conditions. Our work is applied to analyze simultaneously acquired multimodal brain imaging data where the graph indicates functional connectivity of the brain. We present simulation and empirical results that support the benefits of joint estimation. 
\end{abstract}

\noindent {\bf Keywords: integrative analysis; multimodal data; functional Gaussian graphical model; Neighborhood regression}

\section{Introduction}

Recent engineering breakthroughs have enabled new ways to acquire rich multimodal data from individual subjects. For example, high-throughput sequencing enables the acquisition of genotype, and gene expression, among other signals, from the same set of subjects~\citep{hao2021integrated}. Our work is motivated by emerging technology that simultaneously acquires data from electroencephalogram (EEG) and functional magnetic resonance imaging (fMRI)~\citep{morillon2010neurophysiological}. FMRI and EEG data are multivariate time series, where, after standard preprocessing~\citep{wirsich2020concurrent}, each dimension represents a region of the brain. We study the case where both the EEG and fMRI data are parcellated into the same atlas, resulting in the same number of dimensions. We view the time series of each region as a function of time, namely the functional data~\citep{ramsay2005fitting}, given the continuous underlying brain signals and the high sampling rate of the measurements. Our goal is to estimate the functional connectivity of the brain by solving a graph estimation problem~\citep{qiu2016joint,qiao2019functional}. 

There are several challenges in integrating multimodal functional data for our problem. 
First, estimation is often performed in a high-dimensional setting, where the ambient dimension is much larger than the sample size. Furthermore, multimodal data are often highly correlated across modes, since they measure related features of the same subject. Thus, many high-dimensional methods, such as graphical lasso~\citep{yuan2007model}, are not easily applied, as they require restrictions on correlations to achieve the desired statistical properties. Furthermore, in our setting of interest, the EEG and fMRI data are noisy and confounded by non-neural activity~\citep{murphy2013resting,goto2015head}. Therefore, the graphs estimated by either modality alone are inaccurate. Due to the observed structured noise and confounders, the graphs estimated separately from the two modalities may be dissimilar, although they contain partial information about the same underlying brain network~\citep{wirsich2020multi}. To address these shortcomings, we present a novel generative framework for observed processes (EEG-fMRI measurements) and latent processes (the underlying brain networks). Then, we provide a framework to jointly estimate the inverse operator mapping from the observed to the latent space, along with the latent graph, which encodes conditional independence. 

Our framework uses functional graphical models~\citep{qiao2019functional, zhao2021high} as a building block. We study the setting in which multimodal temporal data are viewed as functions belonging to different Hilbert spaces or subspaces of the same Hilbert space. This allows us to encode temporal characteristics using functional scores, a vector of real numbers obtained by projection to specific bases, effectively circumventing the need to address temporal discrepancies between modalities; for example, the bottom right of Figure~\ref{fig:illustration} indicates that fMRI and EEG recordings have distinct frequency characteristics, belonging to different subspaces of the space of continuous functions. Our estimation approach handles such a challenging setting by first constructing linear operators that transform the functional data from the observation spaces to a shared latent space. Then, we estimate the latent graph using functional neighborhood regression. Our algorithm jointly estimates the graphical model and the linear operators through an alternating iterative procedure.  
 
Our work offers several contributions.
From the modeling perspective, we propose an integrative function-on-function regression framework to estimate a latent functional Gaussian graphical model. We develop a novel initialization method inspired by the equivalence of the maximum log-likelihood estimator of the latent model and canonical correlation analysis~\citep{bach2005probabilistic},
\textcolor{black}{ when there are two data modalities. 
While our focus is on two data modalities, the proposed initialization method can be generalized to $M\geq 2$ by merging the data modalities, as discussed in Appendix~\ref{ssec:generalization}.}
Subsequently, we devise an efficient iterative method that boasts a linear rate of convergence. Specifically, our optimization analysis draws inspiration from recent advances in nonconvex optimization~\citep{zhang2018unified}. Regarding the theoretical contributions, we demonstrate that, under mild conditions, we can recover the latent graph with high probability. We showcase the effectiveness of the model through simulations and the analysis of concurrent EEG-fMRI. Empirical results indicate improved prediction scores when using the estimated latent graph.

\begin{figure}[t]
     \centering     \includegraphics[width=.9\textwidth]{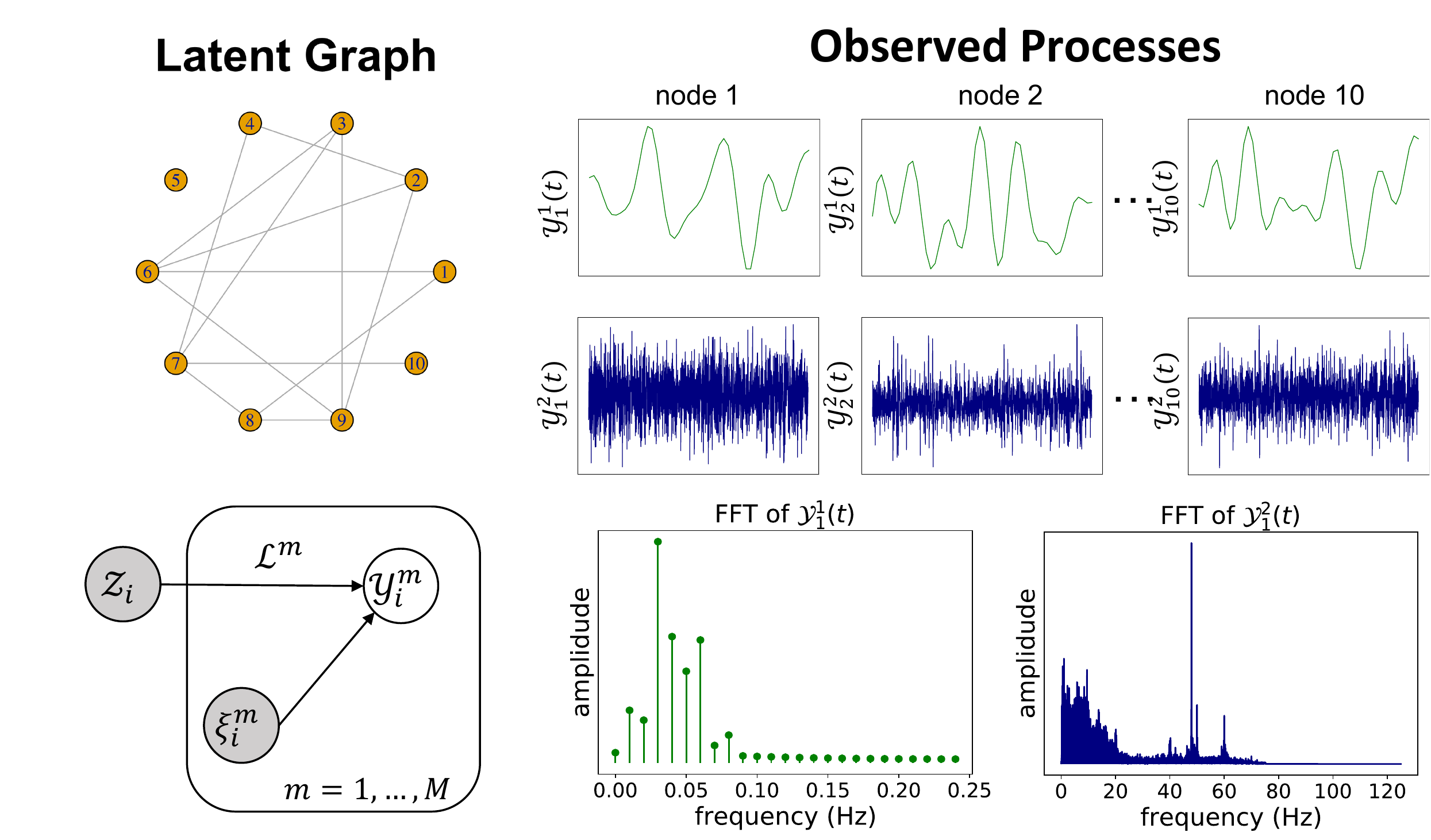}
     \caption{An illustration of the generative processes. 
     {\bf Top left}: The latent graph of interest. {\bf Bottom left}: The diagram of the generative model. {\bf Top right}: Observed processes from multiple sources. Simultaneous fMRI recordings, $\Ycal^1_i(t)$, and EEG recordings, $\Ycal_i^2(t)$, from~\citet{morillon2010neurophysiological}. 
     {\bf Bottom right}: The spectrum of fMRI dominates in the lower frequency domain while that of EEG spans to a higher frequency domain.}
     \label{fig:illustration}
\end{figure}

\section{Methodology}\label{sec:mathodology}

This section introduces a generative model for multimodal data. As our goal is to estimate the latent graphical model using regression, we establish the relation between function-on-function regression and conditional independence in the Gaussian setting.  We rigorously define the partial covariance operator and prove its equivalence to conditional independence. 

\subsection{Notation}

Given a separable Hilbert space $\HH$ of continuous functions with a Complete OrthoNormal System (CONS):  $\{\phi_\ell:\ell\in\NN\}$, the inner product is defined as $\dotp{f}{g}_{\HH}=\sum_{i=1}^\infty\dotp{f}{\phi_i}\dotp{g}{\phi_i}$, where $\dotp{f}{\phi_i}=\int_{t\in\Tcal} f(t)\phi_i(t)dt$. The induced norm is defined as $\norm{f}_{\HH}=(\dotp{f}{f}_{\HH})^{1/2}$ for any $f\in\HH$. We let $\HH^p$ be the Cartesian product of $\HH^{(1)}\times\cdots\times\HH^{(p)}$ where $\HH^{(j)} \equiv \HH$ for all $j=1,\ldots,p$ and its inner product is defined as $\dotp{f}{g}_{\HH^p}=\sum_{i=1}^p\dotp{f_i}{g_i}_{\HH}$ for any $f, g\in\HH^p$. Let $\mathfrak{B}(\HH_1,\HH_2)$ be the class of linear bounded operators from $\HH_1$ to $\HH_2$. 
A linear operator $\Kscr\in\mathfrak{B}(\HH_1,\HH_2)$ is in the equivalence class of a zero operator if $\Kscr\in\{\Ascr\in\mathfrak{B}(\HH_1,\HH_2):\dotp{f}{\Kscr g}=0,\;\forall g\in\HH_1, f\in\HH_2\}$. 
For any $\Tscr\in\mathfrak{B}(\HH_1,\HH_2)$, we define the Hilbert-Schmidt norm as $\hsnorm{\Tscr}^2=\sum_{\ell=1}^\infty\norm{\Tscr\phi_{1,\ell}}_{\HH_2}^2$, where $\{\phi_{1,\ell}:\ell\in\NN\}$ is a CONS for $\HH_1$ and $\opnorm{\Tscr}{}=\sup_{f\in\HH_1,f\neq 0}\norm{\Tscr f}_{\HH_2}/\norm{f}_{\HH_1}$.  $\Tscr$ is a Hilbert-Schmidt operator if $\hsnorm{\Tscr}<\infty$. Given an operator $\Tscr$, $\Image(\Tscr)$ is the image space of $\Tscr$.  
Let $\chi$ be a random element taking values in a measurable space $(\HH,\Bscr(\HH))$, where $\Bscr(\cdot)$ denotes the Borel $\sigma$-algebra. A random element $\chi$ in $L^2(\Omega,\Fscr,\mu)$ satisfies  
$\EE\norm{\chi}_\HH^2=\int_\Omega\norm{\chi}_\HH^2d\mu<\infty$.   Let $x_1,y\in\HH_1$ and $x_2\in\HH_2$, we define the tensor product $(x_2\otimes x_1):\HH_1\rightarrow \HH_2$ as $(x_2\otimes x_1)y=\dotp{x_1}{y}_{\HH_1}x_2$. The mean element is defined as $\mathfrak{m}=\int_{\Omega}\chi d\mu$ and the covariance operator $\Kscr\in\mathfrak{B}({\HH},{\HH})$ for $\chi$ is defined as $\Kscr=\EE[(\chi-\mathfrak{m})\otimes(\chi-\mathfrak{m})]=\int_\Omega(\chi-\mathfrak{m})\otimes(\chi-\mathfrak{m})d\mu$. Given a matrix $\Ab\in\RR^{m\times n}$ and for $p,q\geq 1$, we define $\norm{\Ab}_{p,q}=(\sum_{j=1}^n(\sum_{i=1}^m|a_{ij}|^p)^{q/p})^{1/q}$, $\norm{\Ab}_p=\sup_{\xb\neq 0}\norm{\Ab \xb}_p/\norm{\xb}_p$ and $\norm{\Ab}_F$ as the Frobenius norm of $\Ab$. 
We use $\mathbb{N}$ to denote positive integers.

\subsection{The Latent and Observed Processes}\label{ssec:observation}

Suppose that there are $M$ modalities of functional data, each living in a Hilbert space $\HH_m$, $\mseq$.  For modality $m$, we observe a $p$-dimensional random function $\Ycal^m=(\Ycal_{1}^m,\ldots,\Ycal_{p}^m)^\top\in\HH_m^p$, where $\HH_m$ is a separable Hilbert space of continuous functions defined on a closed interval $\mathcal{T}\subseteq \mathbb{R}$. Additionally, denote by $\HH$ the latent separable Hilbert space of continuous functions defined on the closed interval $\mathcal{T}$. Assume that $\Ycal^m$ is driven by some $p$-dimensional latent processes $\Zcal=(\Zcal_{1},\ldots,\Zcal_{p})\in\HH^p$ where $\Zcal_{i}:L^2(\Omega,\Fcal,\mu)\rightarrow(\HH,\Bscr(\HH))$ is a centered Gaussian random element. In neuroscience applications, we have $M=2$ modalities, ${\Ycal^m}$, $m=1,2$, which represent fMRI and EEG measurements, and $\Zcal$ is the latent functional brain process. 

Motivated by recent findings that EEG and fMRI measurements can be modeled as linear transformations of brain signals \citep{calhoun2012multisubject, chen2013dynamic}, we define ${\Lscr^m}\in\mathfrak{B}({\HH},{\HH_m})$, $\mseq$, as data generation operators that transform data from latent space to observed spaces. We assume that the transformation operator is the same across nodes; hence ${\Ycal_{i}^m}$ can be decomposed as:
\begin{align}\label{eq:multiview}
     \Ycal_{i}^m=\chi_{i}^m+\xi_{i}^m,\qquad \chi_{i}^m={\Lscr^m}\Zcal_{i},\qquad\pseq,\;\mseq,
\end{align}
where  $V=\{1,\ldots,p\}$ is the set of vertices, $\chi_{i}^m$ is obtained by a deterministic transformation of $\Zcal_{i}\in\HH$,
the noise $\xi_{i}^m$, with $\EE[\xi_{i}^m]=0$ and $\EE[\norm{\xi_{i}^m}_{\HH_m}^2]<\infty$, is the nuisance independent of $(\Zcal_{i})_{\pseq}$, $(\xi_{j}^m)_{j\in\pni}$, and $(\xi_{j}^{m'})_{m\neq m', j=1,\ldots,p}$.  
See Figure~\ref{fig:illustration} for an illustration of the model.
 
To avoid identifiability issues, 
we assume the following regularity conditions. 
\begin{assumption}\label{assumption:compact:L}
    Let ${\Lscr^m}$, $\mseq$, be compact operators. The set of eigenfunctions associated with nonzero distinct eigenvalues, i.e., $\lambda_1>\lambda_2>\ldots$, of ${\Lscr^m}^*{\Lscr^m}$ is denoted as $\{\phi_\ell\in\HH:\ell\in\NN\}$,  where ${\Lscr^m}^*$ is the adjoint operator of ${\Lscr^m}$. We assume that
\begin{equation}\label{eq:factor_assumption}
 z_{i,\ell}:=\dotp{ \Zcal_{i}}{\phi_{\ell}}_{\HH}\sim\Ncal(0,1),\qquad \ell\in\NN,
\end{equation}
and $z_{i,\ell}$ is not correlated with $z_{i,\ell'}$ for $\ell\neq\ell'$.
\end{assumption}
This assumption is common in factor model analysis~\citep[Chapter~10.4]{hsing2015theoretical}. Note that for $i\neq j$, $z_{i,\ell}$ and $z_{j,\ell'}$ may be correlated. The distinct eigenvalues assumption allows the set of eigenfunctions to be unique.

Let $V=\{1,\ldots,p\}$ be the set of vertices, and let $E=\{(i,j):i,j\in V,i\neq j\}$
be the set of edges. Our goal is to estimate the undirected latent graph $G=(V,E)$ underlying $\Zcal$ from the observed processes $\Ycal^{m}$, $\mseq$. 
The conditional independence for functional graphical models is defined as in~\citet{qiao2019functional}.
\begin{definition}\label{def:conditional_graph}
 A centered Gaussian random vector $\Zcal\in\HH^p$ follows a functional graphical model with respect to an undirected graph $G=(V, E)$ if we have
 \begin{align*}
 \Zcal_i\indep\Zcal_j\mid\Zcal_{-(i,j)} \text{ if and only if } (i,j)\not\in E,
 \end{align*}
 where $\Zcal_{-(i,j)} = \{\Zcal_{j'} : j' \in V\backslash\{i,j\}\}$ denotes the components of $\Zcal$ indexed by $V\backslash\{i,j\}$. 
\end{definition}
 
\citet{qiao2019functional} estimate $E$ by studying the inverse covariance operator of $\Zcal$, which generally does not exist in infinite dimensional space and can only be well-approximated under some restrictions on the eigenvalues of the covariance operator. In contrast, in the multivariate Gaussian setting, it is well known that conditional covariance is zero if and only if the partial covariance is zero, which motivates neighborhood regression as an estimator of conditional independence~\citep{meinshausen2006high, peng2009partial}. We develop a neighborhood regression estimator for the functional data setting, bypassing the need to compute the inverse covariance operator directly.

 \subsection{Functional Partial Covariance Operator}\label{ssec:latent_process}

Although the relation between the partial and conditional correlation has been implicitly stated in the RKHS setting~\citep{fukumizu2009kernel}, to our knowledge, the formal notion of partial covariance operators has not been clearly established in the literature on functional Gaussian graphical models. The form of the regression model that establishes the equivalence of conditional independence is unclear: when does a zero \emph{regression element} imply conditional independence? To answer this, we introduce the partial covariance operator and establish the corresponding equivalence with respect to conditional independence. 

The regression of $\Zcal_i$ on $\Zcal_{-(i,j)}$, is defined as 
\begin{align}\label{eq:rgression_b}
\{\tilde\beta_{ij'}\}_{j'\in V \backslash\{i,j\}} = \argmin_{\substack{\beta_{ij'}\in\mathfrak{B}(\HH,\HH)\\j'\in V \backslash\{i,j\}}}\EE\sbr{\bignorm{\Zcal_i-\sum_{j'\in V \backslash\{i,j\}}\beta_{ij'}\Zcal_{j'}}_{\HH}^2},
\end{align}
where $\beta_{ij}\in\mathfrak{B}(\HH,\HH)$ is a bounded linear operator. 
For any $f,g\in\HH$, we define the partial cross covariance operator $\Kscr_{ij\sbullet}$ as:
\begin{equation}\label{eq:partialcov_multi}
 \left\langle{\Kscr_{ij\sbullet}f}, {g}\right\rangle_\HH = \EE\sbr{
 \left\langle{\Zcal_i-\sum_{j'\in V \backslash\{i,j\}}\tilde\beta_{ij'}\Zcal_{j'}},{g}\right\rangle_\HH
 \left\langle{\Zcal_j-\sum_{j''\in V\backslash\{i,j\}}\tilde\beta_{jj''}\Zcal_{j''}},{f}\right\rangle_\HH
 },
\end{equation}
that is, the cross covariance operator of $\Zcal_{i}$ and $\Zcal_j$ after removing the effect of $\Zcal_{-(i,j)}$.
The partial cross covariance operator $\Kscr_{ij\sbullet}$ is a bounded linear operator, which can be shown using the same technique as in Lemma~\ref{lemma:bdd_partial}. 
We show the following result.
\begin{theorem}\label{theorem:partialcovariance_multi}   
The partial cross covariance operators
$\Kscr_{ij\sbullet}$ and $\Kscr_{ji\sbullet}$ are in the equivalence class of the zero operator if and only if $\Zcal_i\indep\Zcal_j\mid\Zcal_{-(i,j)}$ or, equivalently, if $(i,j)\not\in E$. 
\end{theorem}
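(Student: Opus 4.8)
The plan is to reduce the claim to the classical equivalence, for jointly Gaussian vectors, between a vanishing partial covariance and conditional independence, but carried out at the level of $\HH$-valued random elements. Write $\varepsilon_i=\Zcal_i-\sum_{j'\in V\backslash\{i,j\}}\tilde\beta_{ij'}\Zcal_{j'}$ and $\varepsilon_j=\Zcal_j-\sum_{j''\in V\backslash\{i,j\}}\tilde\beta_{jj''}\Zcal_{j''}$ for the two regression residuals from \eqref{eq:rgression_b}. By the definition \eqref{eq:partialcov_multi}, $\dotp{\Kscr_{ij\sbullet}f}{g}_\HH=\EE[\dotp{\varepsilon_i}{g}_\HH\dotp{\varepsilon_j}{f}_\HH]$, so $\Kscr_{ij\sbullet}$ lies in the zero equivalence class exactly when the cross-covariance operator of $\varepsilon_i$ and $\varepsilon_j$ vanishes; since $\Kscr_{ji\sbullet}$ is, up to the obvious swap, its adjoint, the two operator conditions coincide, and it suffices to characterize when $\operatorname{Cov}(\varepsilon_i,\varepsilon_j)=0$.

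\emph{Step 1 (normal equations and Gaussian decorrelation).} I would first extract the first-order optimality conditions of \eqref{eq:rgression_b}: perturbing $\tilde\beta_{ij'}$ in the direction $g\otimes h$ for arbitrary $g,h\in\HH$ and setting the derivative to zero gives $\EE[\dotp{\varepsilon_i}{g}_\HH\dotp{\Zcal_{j'}}{h}_\HH]=0$ for every $j'\in V\backslash\{i,j\}$ and all $g,h$, i.e.\ $\varepsilon_i$ has zero cross-covariance operator with each regressor; the same holds for $\varepsilon_j$. Since $\Zcal$ is a centered Gaussian element of $\HH^p$, the tuple $(\varepsilon_i,\varepsilon_j,\Zcal_{-(i,j)})$ is a bounded-linear image of $\Zcal$ and is therefore jointly Gaussian, and for jointly Gaussian random elements a vanishing cross-covariance operator is equivalent to independence; hence $(\varepsilon_i,\varepsilon_j)$ is independent of $\Zcal_{-(i,j)}$. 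Making this ``uncorrelated $\Rightarrow$ independent'' step rigorous in the Hilbert-space setting -- e.g.\ by testing characteristic functionals of the relevant Gaussian measures, cf.\ \citet{hsing2015theoretical} -- is the main technical obstacle; everything afterward is bookkeeping with conditional laws, exactly as in the multivariate neighborhood-regression argument \citep{meinshausen2006high,peng2009partial}.

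\emph{Step 2 (sufficiency, $\Leftarrow$).} Suppose $\Zcal_i\indep\Zcal_j\mid\Zcal_{-(i,j)}$. Conditionally on $\Zcal_{-(i,j)}=z$, the subtracted sums $\sum_{j'}\tilde\beta_{ij'}\Zcal_{j'}$ and $\sum_{j''}\tilde\beta_{jj''}\Zcal_{j''}$ are deterministic, so $\varepsilon_i$ and $\varepsilon_j$ given $z$ are deterministic shifts of $\Zcal_i$ and $\Zcal_j$; consequently $\varepsilon_i\indep\varepsilon_j\mid\Zcal_{-(i,j)}$. Combined with the unconditional independence $(\varepsilon_i,\varepsilon_j)\indep\Zcal_{-(i,j)}$ from Step 1, the conditional law of $(\varepsilon_i,\varepsilon_j)$ given $z$ equals its unconditional law and is a product law, so $\varepsilon_i\indep\varepsilon_j$; therefore $\operatorname{Cov}(\varepsilon_i,\varepsilon_j)=0$, i.e.\ $\Kscr_{ij\sbullet}$ and $\Kscr_{ji\sbullet}$ lie in the zero equivalence class.

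\emph{Step 3 (necessity, $\Rightarrow$).} Conversely, suppose $\Kscr_{ij\sbullet}=\Kscr_{ji\sbullet}=0$, i.e.\ $\operatorname{Cov}(\varepsilon_i,\varepsilon_j)=0$. By Step 1, $\varepsilon_i$ is uncorrelated with $\varepsilon_j$ and with every $\Zcal_{j'}$, $j'\in V\backslash\{i,j\}$; by joint Gaussianity $\varepsilon_i$ is independent of $(\varepsilon_j,\Zcal_{-(i,j)})$. Conditioning on $\Zcal_{-(i,j)}=z$ in $\Zcal_i=\varepsilon_i+\sum_{j'}\tilde\beta_{ij'}\Zcal_{j'}$ and $\Zcal_j=\varepsilon_j+\sum_{j''}\tilde\beta_{jj''}\Zcal_{j''}$: $\Zcal_i\mid z$ is $\varepsilon_i$ plus a deterministic shift, whereas $\Zcal_j\mid z$ is $\sigma(\varepsilon_j,z)$-measurable, and since $\varepsilon_i\indep\sigma(\varepsilon_j,\Zcal_{-(i,j)})$ these are conditionally independent. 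Thus $\Zcal_i\indep\Zcal_j\mid\Zcal_{-(i,j)}$, which by Definition~\ref{def:conditional_graph} is the same as $(i,j)\notin E$, completing the proof.
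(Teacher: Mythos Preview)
Your proposal is correct and uses essentially the same ingredients as the paper's proof: the normal equations for \eqref{eq:rgression_b} (residuals uncorrelated with the regressors), joint Gaussianity of $(\varepsilon_i,\varepsilon_j,\Zcal_{-(i,j)})$, and the ``uncorrelated $\Rightarrow$ independent'' principle for Gaussian elements in a separable Hilbert space. The organization differs slightly: the paper first identifies the regression residual with $\Zcal_i-\EE[\Zcal_i\mid\Zcal_{-(i,j)}]$ (its Lemma~A.4, which is exactly the step you flag as the technical obstacle), then writes $\langle\Kscr_{ij\sbullet}f,g\rangle$ as an outer expectation over $\Zcal_{-(i,j)}$ of the conditional cross-covariance and invokes the Gaussian fact that this conditional covariance is deterministic to pass between ``$\EE[\cdot]=0$'' and ``$\cdot=0$ a.s.''; you bypass that tower-property step by first establishing $(\varepsilon_i,\varepsilon_j)\indep\Zcal_{-(i,j)}$ and then transferring directly between conditional and unconditional independence of the residuals. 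Both routes are short and equivalent; yours is arguably a little more self-contained since it never explicitly names the conditional expectation.
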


This result allows us to estimate the partial covariance operator to measure the conditional independence of two nodes given the remaining nodes. An immediate corollary to Theorem~\ref{theorem:partialcovariance_multi} is that we can write $\Zcal_{i}$ as a linear combination of $\Zcal_j$, $j\in\pni$.

\begin{corollary}\label{corollary:partialcovariance_multi}
There exists $\beta_{ij}^\star\in\mathfrak{B}(\HH,\HH)$, $j\in\pni$, such that 
\begin{align}\label{eq:zlinear_relation}
\Zcal_{i}=\sum_{j\in\pni}\beta_{ij}^\star\Zcal_{j}+\Wcal_i,
\end{align}
and $\Wcal_i\in\HH$ is a Gaussian random function independent of $\Zcal_{j}$, $j\in\pni$. Furthermore, $\beta_{ij}^\star$ is in the equivalence class of the zero operator if and only if $\Zcal_i\indep\Zcal_j\mid\Zcal_{-(i,j)}$.
\end{corollary}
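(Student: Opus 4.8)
The plan is to obtain Corollary~\ref{corollary:partialcovariance_multi} as a repackaging of Theorem~\ref{theorem:partialcovariance_multi}. First I would introduce the regression of $\Zcal_i$ onto \emph{all} remaining nodes,
\[
\{\beta_{ij}^\star\}_{j\in\pni}\in\argmin_{\substack{\beta_{ij}\in\mathfrak{B}(\HH,\HH)\\ j\in\pni}}\EE\sbr{\bignorm{\Zcal_i-\sum_{j\in\pni}\beta_{ij}\Zcal_j}_{\HH}^2},
\]
whose minimizer exists by the same argument used for $\{\tilde\beta_{ij'}\}$ in \eqref{eq:rgression_b}, and set $\Wcal_i:=\Zcal_i-\sum_{j\in\pni}\beta_{ij}^\star\Zcal_j$, which yields \eqref{eq:zlinear_relation}. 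Writing $\Zcal_{-i}:=(\Zcal_j)_{j\in\pni}$, note that $(\Zcal_i,\Zcal_{-i})$ is jointly Gaussian in $\HH^p$ and $\Wcal_i$ is a bounded linear image of it, so $\Wcal_i$ is Gaussian. Testing the first-order optimality conditions against rank-one perturbations $\beta_{ij}=g\otimes h$ gives $\EE[\dotp{\Wcal_i}{g}_{\HH}\,\dotp{\Zcal_j}{h}_{\HH}]=0$ for all $g,h\in\HH$ and all $j\in\pni$, i.e.\ the cross-covariance operator of $\Wcal_i$ and $\Zcal_j$ vanishes for every $j\in\pni$. Since $(\Wcal_i,\Zcal_{-i})$ is jointly Gaussian, vanishing of all these cross-covariances forces $\Wcal_i$ to be independent of $\Zcal_{-i}$, proving the independence claim.

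For the zero pattern I would use a Frisch--Waugh--Lovell-type reduction. Fix $j\in\pni$ and profile out $\{\beta_{ik}\}_{k\in V\setminus\{i,j\}}$: for fixed $\beta_{ij}$, the inner minimization is the regression of $\Zcal_i-\beta_{ij}\Zcal_j$ onto $\Zcal_{-(i,j)}$. By linearity of the (conditional-expectation) regression and the fact that bounded operators pull through the Bochner conditional expectation, its residual is $R_i-\beta_{ij}R_j$, where $R_i:=\Zcal_i-\sum_{j'\in V\setminus\{i,j\}}\tilde\beta_{ij'}\Zcal_{j'}$ and $R_j:=\Zcal_j-\sum_{j''\in V\setminus\{i,j\}}\tilde\beta_{jj''}\Zcal_{j''}$ are precisely the residuals whose cross-covariance operator is $\Kscr_{ij\sbullet}$ in \eqref{eq:partialcov_multi}. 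Hence the profiled objective equals $\EE\norm{R_i-\beta_{ij}R_j}_{\HH}^2$, and the $j$-block of any minimizer of the full problem is also a minimizer of $\beta\mapsto\EE\norm{R_i-\beta R_j}_{\HH}^2$.

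Finally I would connect $\beta_{ij}^\star$ to $\Kscr_{ij\sbullet}$ through the normal equation of the reduced one-operator problem, which in weak form reads $\beta_{ij}^\star\,\Kscr_{R_j}=\Kscr_{ij\sbullet}$, where $\Kscr_{R_j}$ is the covariance operator of $R_j$ (and $\Kscr_{ij\sbullet}$ automatically annihilates $\ker\Kscr_{R_j}$, so this weak equation is solvable). If $(i,j)\notin E$, then $\Kscr_{ij\sbullet}=0$ by Theorem~\ref{theorem:partialcovariance_multi}, so $\EE\norm{R_i-\beta R_j}_{\HH}^2=\EE\norm{R_i}_{\HH}^2+\EE\norm{\beta R_j}_{\HH}^2$ and $\beta=0$ is optimal; we may therefore select the decomposition with $\beta_{ij}^\star=0$ (equivalently, take the $\{\tilde\beta_{ij'}\}$ from \eqref{eq:rgression_b} as the nonzero blocks). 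Conversely, if the selected $\beta_{ij}^\star=0$, then $\sum_{k\in\pni}\beta_{ik}^\star\Zcal_k=\EE[\Zcal_i\mid\Zcal_{-i}]$ is a function of $\Zcal_{-(i,j)}$ only, so $\EE[\Zcal_i\mid\Zcal_{-i}]=\EE[\Zcal_i\mid\Zcal_{-(i,j)}]$; by the law of total covariance this also equates the (deterministic) conditional covariances, and for jointly Gaussian elements this yields $\Zcal_i\indep\Zcal_j\mid\Zcal_{-(i,j)}$, i.e.\ $(i,j)\notin E$.

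The main obstacle is infinite-dimensional bookkeeping rather than a conceptual difficulty: covariance operators are not boundedly invertible, minimizers over $\mathfrak{B}(\HH,\HH)$ need not be unique, and the profiled inner minimization need not attain its optimum \emph{within} $\mathfrak{B}(\HH,\HH)$ without invoking joint Gaussianity (which places the conditional expectation in the closed linear span of the regressors). Consequently ``$\beta_{ij}^\star$ lies in the equivalence class of the zero operator'' has to be read as ``the representative produced by the construction has this block equal to zero'', which is exactly what the existential phrasing of the corollary permits. Making the Frisch--Waugh--Lovell reduction and the weak normal equation rigorous therefore needs care about which minimizer is selected, together with two standard facts — jointly Gaussian Hilbert-space-valued elements with vanishing cross-covariance are independent, and bounded operators commute with the Bochner conditional expectation (see, e.g., \citealp{hsing2015theoretical}); the remaining computations are routine.
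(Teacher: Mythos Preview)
Your argument is correct and rests on the same two pillars as the paper---joint Gaussianity (so that orthogonality of $\Wcal_i$ to $\Zcal_{-i}$ upgrades to independence and conditional expectations are bounded linear) together with Theorem~\ref{theorem:partialcovariance_multi}---but the bridge you build between $\beta_{ij}^\star$ and $\Kscr_{ij\sbullet}$ is genuinely different. You profile via a Frisch--Waugh--Lovell reduction to obtain the one-block normal equation $\beta_{ij}^\star\Kscr_{R_j}=\Kscr_{ij\sbullet}$, which immediately gives $(i,j)\notin E\Rightarrow$ one may \emph{select} $\beta_{ij}^\star=0$; for the converse you observe that $\beta_{ij}^\star=0$ forces $\EE[\Zcal_i\mid\Zcal_{-i}]$ to be $\sigma(\Zcal_{-(i,j)})$-measurable, hence equal to $\EE[\Zcal_i\mid\Zcal_{-(i,j)}]$, and conclude from $\Wcal_i\indep(\Zcal_j,\Zcal_{-(i,j)})$. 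The paper instead argues both directions by direct conditional-expectation manipulations: for $\beta_{ij}^\star=0\Rightarrow(i,j)\notin E$ it computes $\Kscr_{ij\sbullet}$ as $\EE[\dotp{\Wcal_i}{\cdot}\dotp{\Wcal_j}{\cdot}]=0$ (using that with $\beta_{ij}^\star=\beta_{ji}^\star=0$ the partial residuals coincide with $\Wcal_i,\Wcal_j$, which are independent); for the reverse it writes $\Zcal_j=\EE[\Zcal_j\mid\Zcal_{-(i,j)}]+r_j$ and uses a tower argument together with $r_i\indep r_j$ (from $\Kscr_{ij\sbullet}=0$) to show $\EE[\Zcal_i\mid\Zcal_{-i}]=\EE[\Zcal_i\mid\Zcal_{-(i,j)}]$. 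Your FWL route makes the role of the existential phrasing explicit and handles the direction $\beta_{ij}^\star=0\Rightarrow(i,j)\notin E$ without also assuming $\beta_{ji}^\star=0$; the paper's route avoids the profiling machinery and is a bit shorter once Theorem~\ref{theorem:partialcovariance_multi} is available.
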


Based on Corollary~\ref{corollary:partialcovariance_multi}, specifically the relationship~\eqref{eq:zlinear_relation}, we can apply functional regression to measure conditional independence. However, since the space of bounded linear operators $\mathfrak{B}(\HH,\HH)$ is infinite dimensional, it is computationally intractable to estimate the parameters in~\eqref{eq:zlinear_relation}. Thus, it is important to study a subclass of models that can be well-approximated by finite-rank operators. As a consequence, we study the case where the true regression operators $\beta_{ij}^\star$ are compact and $\norm{\beta_{ij}^\star}_{\text{HS}}<\infty$.
\begin{assumption}\label{assumption:beta:HSoperator}
The operators $\beta^\star_{ij}$, $i,j\in V$ and $i\neq j$, are Hilbert-Schmidt operators.
\end{assumption}
Our formulation of functional neighborhood regression closely follows the formulation in~\citet{zhao2021high}.  Since $\beta_{ij}^\star\in\bophs{\HH}{\HH}$ is a Hilbert-Schmidt operator, it admits a singular system~\citep[Theorem 4.3.1]{hsing2015theoretical}. Hence, we are able to represent the singular system in terms of $\{\phi_\ell,\ell\in\NN\}$ and obtain
\begin{equation}\label{eq:bij} \beta_{ij}^\star=\sum_{\ell,\ell'=1}^\infty b_{ij,\ell\ell'}^\star \phi_{\ell} \otimes \phi_{\ell'},\quad i,j\in V,i\neq j,
\end{equation}
where $b_{ij,\ell\ell'}^\star\in\RR$. Under Theorem~\ref{theorem:partialcovariance_multi}, one can verify that $\Zcal_{i}\indep\Zcal_{j}\mid\Zcal_{-(i,j)}$ if and only if $\norm{\beta_{ij}^\star}_{HS}=0$.  Therefore, the set of neighbors of $\Zcal_{i}$ 
is defined as 
\begin{align}\label{eq:def_pNi}
\pNi = \{j\in\pni:\norm{\beta_{ij}^\star}_{HS}>0\}. 
\end{align}

In the high-dimensional setting, where $p$ is larger than the sample size $N$, we often consider that the coefficients are sparse~\citep{meinshausen2006high}, rendering a sparse network structure. Such structure is often observed in biological and clinical experiments~\citep{tsai2022joint}. In the functional data scenario, a sparse functional network is equivalent to assuming that many of $\beta_{ij}^\star$ are zero operators:
\begin{assumption}\label{assumption:s-sparse}
    The set $\{\beta_{ij}^\star\}_{j\in\pni}$ has at most $s^\star$ nonzero operators. 
\end{assumption}
Assumption~\ref{assumption:s-sparse} implies that 
the neighborhood $\pNi$, defined in~\eqref{eq:def_pNi}, satisfies
$\abr{\pNi}\leq s^\star$. 

For $\beta_{ij}^\star$ not a zero operator, we expect that $\beta_{ij}^\star$ can be well approximated by some finite-rank operator: consider a $k$-dimensional subspace of the original space, denoted as $\{\phi_\ell:\ell=1,\ldots,k\}$, we want $\beta_{ij}^{k\star}:=\sum_{\ell,\ell'=1}^kb_{ij,\ell\ell'}^k\phi_\ell\otimes\phi_{\ell'}\approx\beta_{ij}^\star$. 
Observe that $\norm{\beta_{ij}^\star}_{HS}=\norm{\beta_{ij}^\star-\beta_{ij}^{\star {k}}+\beta_{ij}^{\star {k}}}_{HS}\leq \norm{\beta_{ij}^\star-\beta_{ij}^{\star {k}}}_{HS}+\norm{\beta_{ij}^{\star {k}}}_{HS}$. 
Hence, if there exists a positive $\epsilon_0$ such that  $\min_{i\in V}\min_{j\in\pNi}\norm{\beta_{ij}^\star}_{HS}\geq2\epsilon_0$,
then we can choose a ${k}$ such that the truncated signal still has half the magnitude of the original signal: $\norm{\beta_{ij}^{{k\star}}}_{HS}\geq\epsilon_0$.  We define
\[
\Lambda(k) :=\min_{\pseq}\min_{j\in\pNi}\norm{\beta_{ij}^{k \star}}_{HS}.
\]

In addition, for $\beta_{ij}^\star$ not a zero operator, we expect that most of $b_{ij,\ell\ell'}^\star$, $\ell,\ell'\in\NN$, are zero or have a small magnitude.  We make the following assumption on $\beta^\star_{ij}$.
\begin{assumption}\label{assumption:disperse}
The rank $k$ is selected such that $\Lambda(k)\geq(1/2)\min_{\pseq}\min_{j\in\pNi}\norm{\beta_{ij}^{ \star}}_{HS}=:\epsilon_0$.
There exists a constant $\alpha^\star\in(0,1]$ such that the sets $\{b_{ij,\ell 1}^\star,\ldots ,b_{ij,\ell k}^\star\}$ and $\{b_{ij, 1\ell'}^\star,\ldots ,b_{ij,k\ell'}^\star\}$ each have at most $\alpha^\star k$ non-zero coefficients,  for $\ell,\ell'=1,\ldots,k$.
\end{assumption}
The first assumption implies that~\eqref{eq:def_pNi} is equivalent as
\begin{align}\label{eq:def_pNi2}
    \pNi = \{j\in\pni:\norm{\beta_{ij}^{k \star}}_{HS}>\epsilon_0\}.    
\end{align}
The second structural assumption implies that each basis function is only correlated with at most $\alpha k-1$ other basis functions but we allow the rank to grow with $k$.  From the neuroscience perspective~\citep{olsen2012hippocampus, vidaurre2017brain}, the first level of sparsity, Assumption~\ref{assumption:s-sparse}, corresponds to the sparse connectivity between sub-networks of the whole connectome.
\textcolor{black}{We consider $\phi_\ell$ to be the $\ell$-th cognitive process~\citep{posner1988localization}, e.g., visual imaging, word reading, shifting visual attention, and etc. Assumption~4 implies two connectivity structures, as outlined in the following. First, each node $i$ only has a few ongoing cognitive processes. Secondly, each connectivity process $\ell$ in node $i$ is sparsely correlated with the other cognitive process $\ell'$ in node $j$~\citep{park2013structural, vidaurre2017brain}. This assumption further implies that the operator $\beta_{ij}^\star$ is approximately low-rank, because many of the $\{b_{ij,\ell\ell'}^\star\}$ are zero. 
    Additionally, the variability of sparsity patterns imposed by Assumption~3--4 are subject to the cognitive tasks performed. For example,  ~\citet{cohen2016segregation} have found that during the motor cognitive process, within sub-network connectivity is highly activated, meaning that $\alpha^\star$ may be high.  In contrast, during the memory task, between-network communication is more active, meaning that $s^\star$ might be high.
}
While the former assumption is well recognized in functional graphical model literature~\citep{qiao2019functional}, we argue that adding the second level sparsity assumption provides more informative interpretability of the underlying functional processes.

 \section{Estimation}\label{sec:estimation}
 
We describe the estimation of the parameters for the generative model described in Section~\ref{sec:mathodology}. We first introduce an estimation procedure for the transformation operator, which we formulate as an inverse problem. Then, the objective function in infinite dimensional space is outlined in Section~\ref{ssec:infinite_estimator}. Finally, the objective function in a finite dimensional space along with the structured assumptions are discussed in Section~\ref{ssec:OF_SA}.

\subsection{Transformations to the Latent Space}

Our problem is to estimate the graphical model $G=(V, E)$ in the latent space through data from multiple modalities $\Ycal^{1},\ldots,\Ycal^{M}$. While the mathematical formulation in~\eqref{eq:multiview} outlines the generative model, from the estimation point of view, we focus on its inverse, that is, the linear operators that transform the functional processes from the observation spaces to the latent space. In general, the inverse operator of ${\Lscr^m}$ does not exist; even if it does exist, it might not be continuous. However, the inverse operator is well defined when the operator domain is restricted to ${\Image({\Lscr^m})}$, giving us hope to recover the inverse of ${\Lscr^m}$ on the restricted domain. 
Following Definition 3.5.7 in~\citet{hsing2015theoretical}, we define the Moore-Penrose (generalized) inverse of ${\Lscr^m}$.
\begin{definition} 
 Let $\tilde{\Lscr}^m$ be the operator ${\Lscr^m}$ restricted to $\ker({\Lscr^m})^\perp$, $\mseq$.
 The domain of ${\Lscr^m}^\dagger$ is defined as $\Domain({\Lscr^m}^\dagger)=\Image({\Lscr^m})\oplus\Image({\Lscr^m})^\perp$. Then, for any $y\in\Domain({\Lscr^m}^\dagger)$, ${\Lscr^m}^\dagger y=(\tilde{\Lscr}^m)^{-1} y$ if $y\in\Image({\Lscr^m})$ and ${\Lscr^m}^\dagger y=0$ if $y\in\Image({\Lscr^m})^\perp$.
\end{definition}

We use ${\Ascr^m}$ to denote the pseudoinverse operator ${\Lscr^m}^\dagger$ for simplicity of notation. 
Since ${\chi_{i}^m}\in\Image({\Lscr^m})$, we have the following relationship:
 \begin{equation}\label{eq:relationofA}
 \Zcal_{i}={\Ascr^m}(\chi_{i}^m),\qquad \mseq,\;\pseq.
 \end{equation}
The operator ${\Ascr^m}$ maps the random elements from the observation spaces back to the latent space, where we can jointly estimate the latent graph from multiple views $\Ycal^{1},\ldots,\Ycal^{M}$.

The following representation of ${\Ascr^m}$ will be useful for estimation from the observed data. Let $\{\phi_\ell^m\in\HH_m:\ell\in\NN\}$ be the set of eigenfunctions of $\Lscr^m\Lscr^{m*}$. Since ${\Lscr^m}\in\mathfrak{B}({\HH},{\HH_m})$ is a compact operator, when the domain is restricted to $\Image({\Lscr^m})\oplus\Image({\Lscr^m})^\perp$ we have:
\begin{equation}\label{eq:Aform}
{\Ascr^m}=\sum_{\ell=1}^\infty\sum_{\ell'=1}^\infty a_{\ell\ell'}^{m\star}\phi_{\ell}\otimes\phi_{\ell'}^m,\quad {\Ascr^m}\chi = \sum_{\ell=1}^\infty\sum_{\ell'=1}^\infty a_{\ell\ell'}^{m\star}\dotp{\chi}{\phi_{\ell'}^m}_{\HH_m}\phi_{\ell},
\end{equation}
where $a_{\ell\ell'}^{m\star}\in\RR$ are coefficients 
and $\chi\in\Domain({\Ascr^m})$. 

Although ${\Ascr^m}$ is well defined on the restricted domain, it is not necessarily a bounded operator. Since $\Lscr^m$ is a compact operator, to have a bounded inverse ${\Ascr^m}$, $\Lscr^m$ must be finite dimensional~\citep[Corollary~1.3.3]{groetsch1984theory}. The assumption is mild in our application because the rank of ${\Lscr^m}$ can be arbitrarily large, so long as it is finite.

\subsection{Approximation of Infinite Dimensional Estimator}\label{ssec:infinite_estimator}

When the noise ${\xi_{i}^m}$ in~\eqref{eq:multiview} is small in magnitude compared to ${\chi_{i}^m}$, then $\Zcal_i={\Ascr^m}{\chi_{i}^m}\approx {\Ascr^m}{\Ycal_{i}^m}$. We combine~\eqref{eq:rgression_b} and the estimation of ${\Ascr^m}$ and optimize the objective:
\begin{equation}\label{eq:infinite_obj}
\min_{\{{\Ascr^m}\},\{\beta_{ij}\}}\qquad \sum_{i=1}^p\sum_{m=1}^M\EE\sbr{\bignorm{{\Ascr^m}{\Ycal_{i}^m}-\sum_{j\in\pni}\beta_{ij}{\Ascr^m}\Ycal_{m,j} }_\HH^2},
\end{equation}
where ${\Ascr^m}$ is of the form~\eqref{eq:Aform} and $\beta_{ij}$ is of the form~\eqref{eq:bij}. 

The temporal realization of random elements $\Zcal_{i}(\omega,\cdot)$ and ${\Ycal_{i}^m}(\omega,\cdot)$, $\omega\in\Omega$, are bivariate functions $\Zcal_{i}(\omega,t)$ and ${\Ycal_{i}^m}(\omega,t)$ that we assume are jointly measurable with respect to the product $\sigma$-field $\Fcal\times\Bscr(\Tcal)$.
Since both $\Zcal_i$ and ${\Ycal_{i}^m}$ are mean-squared integrable random elements taking values in spaces of continuous functions, by the Karhunen–Lo\`eve theorem~\citep{karhunen1947uber, loeve1945fonctions}, we can uniquely represent them as
\begin{equation}\label{eq:randomelement_p}
 \Zcal_{i}(\omega, t)=\sum_{\ell=1}^\infty z_{i,\ell}(\omega)\phi_{\ell}(t),\qquad
 {\Ycal_{i}^m}(\omega, t)=\sum_{\ell=1}^\infty y_{i,\ell}^m(\omega)\phi_{\ell}^m(t)
 ,\qquad \omega\in\Omega,\;t\in\Tcal,
\end{equation}
where we recall that $z_{i,\ell}(\omega)$ is defined in~\eqref{eq:factor_assumption} and $y_{i,\ell}^m(\omega)=\dotp{{\Ycal_{i}^m}(\omega,\cdot)}{\phi_{\ell}^m(\cdot) }_{\HH_m}$. 
For conciseness, we omit $\omega$ from the notation on $y_{i,\ell}^m(\omega)$ and $z_{i,\ell}(\omega)$, and $t$ on $\phi_\ell(t)$ and $\phi_{\ell}^m(t)$.

\begin{proposition}\label{prop:infinite_program}
The optimization problem in~\eqref{eq:infinite_obj} is equivalent to 
 \begin{equation}\label{eq:obj_infinite}
 \min_{\{a_{\ell\ell'}^m\},\{b_{ij,\ell\ell'}\}}\qquad \sum_{m=1}^M\sum_{i=1}^p\EE\sbr{\sum_{\ell\in\NN}
 \rbr{\sum_{\ell''\in\NN}a_{\ell\ell''}^m y_{i,\ell''}^m
 - 
 \sum_{j\in\pni}\sum_{\ell',\ell''\in\NN} b_{ij,\ell\ell'}a_{\ell'\ell''}^my_{j,\ell''}^m
}^2},
\end{equation}
where ${\Ascr^m}=\sum_{\ell,\ell'}a_{\ell\ell'}^m\phi_{\ell}\otimes\phi_{\ell'}^m$ and $\beta_{ij}=\sum_{\ell,\ell''}b_{ij,\ell\ell''}\phi_{\ell}\otimes\phi_{\ell''}$.
\end{proposition}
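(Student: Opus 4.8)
The plan is to substitute the series representations \eqref{eq:Aform} for $\Ascr^m$, \eqref{eq:bij} for $\beta_{ij}$, and \eqref{eq:randomelement_p} for the random elements $\Ycal_i^m$ directly into the objective \eqref{eq:infinite_obj}, and then read off the scalar coefficients of the resulting expansion in the CONS $\{\phi_\ell\}$ of $\HH$. First I would compute $\Ascr^m\Ycal_i^m$: using $(\phi_\ell\otimes\phi_{\ell'}^m)\Ycal_i^m = \dotp{\Ycal_i^m}{\phi_{\ell'}^m}_{\HH_m}\phi_\ell = y_{i,\ell'}^m\phi_\ell$ together with the Karhunen--Lo\`eve expansion and the orthonormality $\dotp{\phi_{\ell'}^m}{\phi_{\ell''}^m}_{\HH_m}=\delta_{\ell'\ell''}$, one gets $\Ascr^m\Ycal_i^m = \sum_{\ell\in\NN}\big(\sum_{\ell''\in\NN} a_{\ell\ell''}^m y_{i,\ell''}^m\big)\phi_\ell$, so the $\ell$-th coefficient of $\Ascr^m\Ycal_i^m$ in $\{\phi_\ell\}$ is exactly the first inner sum appearing in \eqref{eq:obj_infinite}. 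Next I would apply $\beta_{ij}=\sum_{\ell,\ell'}b_{ij,\ell\ell'}\phi_\ell\otimes\phi_{\ell'}$ to $\Ascr^m\Ycal_j^m$; since $(\phi_\ell\otimes\phi_{\ell'})\phi_{\ell'''} = \delta_{\ell'\ell'''}\phi_\ell$, composing the two operators and collecting terms gives $\beta_{ij}\Ascr^m\Ycal_j^m = \sum_{\ell\in\NN}\big(\sum_{\ell',\ell''\in\NN} b_{ij,\ell\ell'} a_{\ell'\ell''}^m y_{j,\ell''}^m\big)\phi_\ell$, whose $\ell$-th coefficient is the second summand in \eqref{eq:obj_infinite}.

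Having both terms expressed as elements of $\HH$ written in the basis $\{\phi_\ell\}$, the difference $\Ascr^m\Ycal_i^m - \sum_{j\in\pni}\beta_{ij}\Ascr^m\Ycal_j^m$ is again of the form $\sum_{\ell\in\NN} c_{i,\ell}^m \phi_\ell$ with $c_{i,\ell}^m$ equal to the bracketed scalar expression inside \eqref{eq:obj_infinite}. Then by Parseval's identity in the separable Hilbert space $\HH$, $\bignorm{\Ascr^m\Ycal_i^m - \sum_{j\in\pni}\beta_{ij}\Ascr^m\Ycal_j^m}_\HH^2 = \sum_{\ell\in\NN} (c_{i,\ell}^m)^2$. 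Taking expectations and summing over $i$ and $m$ turns \eqref{eq:infinite_obj} into \eqref{eq:obj_infinite}; the equivalence of the minimization problems is then immediate since the parametrizations $\Ascr^m\leftrightarrow\{a_{\ell\ell'}^m\}$ and $\beta_{ij}\leftrightarrow\{b_{ij,\ell\ell'}\}$ are in bijection (the coefficients are the Fourier coefficients of the operators with respect to $\{\phi_\ell\otimes\phi_{\ell'}^m\}$, resp.\ $\{\phi_\ell\otimes\phi_{\ell''}\}$, which form CONS of the respective Hilbert--Schmidt spaces).

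The main technical obstacle I anticipate is justifying the term-by-term manipulations of the infinite series: interchanging the (infinite) sums defining $\Ascr^m$ and $\beta_{ij}$ with the action on $\Ycal_i^m$, and interchanging the resulting sum with the expectation. For the first, I would invoke boundedness/Hilbert--Schmidt-ness of the operators (the coefficient sequences are square-summable, and $\Ycal_i^m$ is mean-square integrable) so that the partial sums converge in $\HH_m$ and the operators act continuously; for the expectation interchange I would use Tonelli/Fubini, since all the summands $(c_{i,\ell}^m)^2$ are nonnegative. A secondary subtlety worth a sentence is that the representation \eqref{eq:Aform} is valid only on $\Domain(\Ascr^m)$, and the approximation step replacing $\chi_i^m$ by $\Ycal_i^m$ (i.e. evaluating $\Ascr^m$ at $\Ycal_i^m$ rather than at $\chi_i^m$) is already built into \eqref{eq:infinite_obj} as the defining objective, so I would simply treat \eqref{eq:Aform} as the chosen parametric form for $\Ascr^m$ in the optimization and not belabor the domain issue. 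Everything else is bookkeeping with Kronecker deltas.
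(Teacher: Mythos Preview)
Your proposal is correct and follows essentially the same route as the paper: substitute the series expansions of $\Ascr^m$, $\beta_{ij}$, and $\Ycal_i^m$, collapse the tensor actions via orthonormality, and apply Parseval to pass from $\norm{\cdot}_\HH^2$ to $\sum_\ell(\cdot)^2$. The paper's proof is in fact more terse than yours---it simply writes out the substitution in three displayed lines without commenting on the interchange of sums with expectation or the domain issue you flag---so your discussion of those points is extra care rather than a departure.
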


Based on Proposition~\ref{prop:infinite_program}, the optimization over the operators ${\Ascr^m}$ and $\beta_{ij}$  in~\eqref{eq:infinite_obj} is transformed to optimization over sequences of real numbers $\{a_{\ell\ell'}^{m}\}_{\ell,\ell'\in\NN}$ for $\mseq$ and $\{b_{ij,\ell\ell'}\}_{\ell,\ell'\in\NN}$ for $i\in V$ and $j\in\pni$.
While functional data live in infinite dimensional spaces, in practice, one is faced with finite computational resources and is forced to truncate both the latent functions and observed functions to perform estimation. In addition,  Assumption~\ref{assumption:s-sparse}--\ref{assumption:disperse} indicates that most of $b_{ij,\ell\ell'}^\star$ are zero. 
 Hence, using the result from~\eqref{eq:randomelement_p}, the ${k}$-dimensional approximation of $\Zcal_{i}$ via projection to $\{\phi_1,\ldots,\phi_{k}\}$  is $\sum_{\ell=1}^{ k}z_{i,\ell}\phi_{\ell}$. 
For the observed space, we assume that the signals are smooth enough such that the $k_m$-dimensional projection $\sum_{\ell=1}^{ k_m}y_{i,\ell}^m\phi_{\ell}^m$ can well approximate ${\Ycal_{i}^m}$. These observations imply that the following optimization problem can recover the model parameters sufficiently well:
\begin{equation}\label{eq:population:finite:program}
\Ab^\star, \Bb^\star:=\arg\min_{\Ab,\Bb} \sum_{m=1}^M\sum_{i=1}^p\EE\sbr{\bignorm{\Amk\ykmfree{i}-\sum_{j\in\pni}\Bijk\Amk\ykmfree{j}}_F^2},
\end{equation}
where $\Ab^\star=\{\pAmk\}_{m=1}^M$, 
$\Bb^\star=\{\pBik\}_{i=1}^p$, 
$\yb_{i}^m=(y_{i,1}^m,\ldots,y_{i,k_m}^m)^\top\in\RR^{k_m}$, 
$\Amk=(a_{\ell\ell'}^m)_{\ell=1,\ell'=1}^{k,k_m}$, and 
$\Bijk=(b_{ij,\ell\ell'})_{\ell=1,\ell'=1}^{k,k}\in\RR^{k\times k}$. { The parameters $k$ and $k_m$ are suitably chosen -- we discuss the selection of $k$ and $k_m$ in more detail in Section~\ref{sec:parameterselection}.}

 

\subsection{Empirical Objective Function and Structured Assumptions}\label{ssec:OF_SA}

Throughout the section, we assume the sets of basis $\{\phi_1^m,\ldots,\phi_{k_m}^m\}$, $\mseq$ are provided. We are given independent observations from $N$ individuals, where for each individual $n$ and each vertex $i$, we have data from $M$ modalities, $\{\mathcal{Y}_i^{m, (n)}\}_{m=1}^M$. The vector $\ykmnfree{i}{n}$ is obtained by projecting $\mathcal{Y}_i^{m, (n)}$ onto the set of basis $\{\phi_1^m,\ldots,\phi_{k_m}^m\}$, and we let $\Ykmnfree{i}=(\ykmnfree{i}{1},\ldots, \ykmnfree{i}{N})\in\RR^{k_m\times N}$. 
The sample version of the objective in~\eqref{eq:population:finite:program} is:
\begin{align}\label{eq:f_n}
 f(\setAk,\setBk) &= \sum_{i=1}^p\sum_{m=1}^M\frac{1}{2N}\bignorm{\Amk \Ykmnfree{i}-\sum_{j\in\pni}\Bijk\Amk \Ykmnfree{j}}_F^2.
\end{align}
To enforce the sparsity of $\setBk$, we define the following norm.
\begin{definition}\label{def:rnorm}
Let $\Ub=(\Ub_{1}\cdots\Ub_{p})\in\RR^{d\times kp}$, where $\Ub_i\in\RR^{d\times k}$. 
We define the norm
\[
\norm{\Ub}_{r(k,\ell)}
=
\left\|\begin{pmatrix}\VEC\rbr{\Ub_1}\cdots\VEC\rbr{\Ub_p}\end{pmatrix}\right\|_{2,\ell},
\quad \ell\geq 1,\]
and let $\norm{\Ub}_{r(k,0)}=\sum_{i=1}^p\mathbbm{1}(\norm{\VEC\rbr{\Ub_i}}_2>0)$, where $\mathbbm{1}(\cdot)$ denotes the indicator function.
\end{definition}
Let $\pBik = (\pBikfree{1}\cdots\pBikfree{(i-1)}\pBikfree{(i+1)}\cdots\pBikfree{p})\in\RR^{k\times k(p-1)}$. 
Under Assumption~\ref{assumption:s-sparse}, we have that $\norm{\pBik}_{r(k,0)}\leq s^\star$. 
Assumption~\ref{assumption:disperse} implies that each nonzero submatrix $\pBikfree{j}$ of $\pBik$ has columns and rows with at most $\alpha$-fraction of nonzero entries.  We define the following constraint set
\[
\Kcal_B(s,\alpha) = \{ \Ub:\norm{\Ub}_{r(k,0)}\leq s,\norm{\Ub_{j,m\cdot}}_0, \norm{\Ub_{j,\cdot n}}_{0}\leq \alpha k, m,n=1,\ldots,k,j=1,\ldots,p-1\},
\]
where $\Ub_{j}\in\RR^{k\times k}$ denotes the sub-matrix of $\Ub\in\RR^{k\times k(p-1)}$.
We assume that $\Amk\in\RR^{k\times k_m}$ satisfies the following constraints. Let $\Wb_{j\cdot}$ be the $j$-th row of a matrix $\Wb$, we define
\begin{equation}\label{eq:constraint_a}
\Kcal_m(\tau_1,\tau_2)=\cbr{\Wb:\tau_1^{1/2}\sinmin{\Ab^{m(0)}}\leq
\norm{\Wb_{ j\cdot}}_2
\leq(\tau_2/k_m)^{1/2}\sinmax{\Ab^{m(0)}},\;j=1,\ldots k},
\end{equation}
where $\Ab^{m(0)}$ is the initial guess. 
We will later verify that there exist some $\tau_1,\tau_2>0$ such that $\pAmk $ lies in $\Kcal_m(\tau_1,\tau_2)$. 
The lower bound $\tau_1$ ensures that any row of $\Amk$
is always bounded away from zero.\footnote{We need to make sure that $\Amk$ is nonzero so that the solution is nontrivial, since $\Amk={\bf 0}$ and $\Bik={\bf 0}$ for $\mseq$ and $\pseq$ is always an optimal solution in the unconstrained optimization problem.} 
If $\tau_2$ is proportional to $k^{1/2}$, then the upper bound constraint can be viewed as the incoherence condition~\citep{ candes2011robust}.  
We want to optimize the following objective function:
 \begin{align}\label{eq:mainobj}
     \argmin_{\setAk,\setBk}&\quad f(\setAk,\setBk),\\
     &\quad\text{s.t.}\quad\Bik\in\Kcal_B(s,\alpha),\; \pseq;\notag\\
     &\quad\quad\quad\;\Amk\in\Kcal_m(\tau_1,\tau_2),\;\mseq\notag.
 \end{align}

\section{Algorithm}\label{sec:algo}

\begin{algorithm}[t!]
\spacingset{1}
 \caption{Latent Graph Estimation}\label{alg:update}
\begin{algorithmic}
\State Input: { $\{\Yb_m\}_{\mseq}$,$\{\Ab^{m(0)}\}_{\mseq}$, $\{\Bb_i^{(0)}\}_{\pseq}$,$\tau=(\tau_1,\tau_2)$, $s$, $\alpha$, $\eta_A$, $\eta_B$; }
\State Output: {$\{\hat\Ab^m\}_{\mseq}$, $\{\hat{\Bb}_i \}_{\pseq}$ }
 \While{Not converged}
 \For{\mseq}
 \State$\Ab^{m(t+1)}\leftarrow \Pcal_{m,\tau}(\Ab^{m(t)} - \eta_A\nabla_{\Amk }f$)\;
 \EndFor
 \For{$\pseq$}
 \State$\Bb_i^{(t+.5)}\leftarrow \Tcal_{s}(\Bb_i^{(t)} - \eta_B\nabla_{\Bik}f$)\;
     \For{$j\in\pni$}
\State$\Bijk^{(t+1)}\leftarrow\Hcal_{\alpha}(\Bijk^{(t+.5)})$\;
   \EndFor
 \EndFor
 \EndWhile
 \end{algorithmic}
\end{algorithm}

We propose a two-stage algorithm to minimize~\eqref{eq:mainobj} efficiently. 
We start by introducing the second stage. 
Given a suitably chosen initial tuple $(\Ab^{(0)},\Bb^{(0)})$ from the first stage, 
we use the alternating projected gradient descent with the group-sparse hard-thresholding operator. We first define the $s$-group sparse truncation operator as
 \begin{align*}\label{eq:Tcal}
 [\Tcal_s(\Bik)]_{\cdot,k(j-1)+1:kj}=\left\{\begin{array}{cc}
      \Bijk\in\RR^{k\times k},&  \text{the magnitude $\norm{\Bijk}_F$ is among the top-s of $\{ \Bijk \}$};\\
      {\bf 0}\in\RR^{k\times k},& \text{otherwise}.
 \end{array}\right.
 \end{align*}
The operator $\Tcal_s$ keeps the largest $s$ submatrices of $\Bik\in\RR^{k\times k(p-1)}$.
Next, we define the $\alpha$-truncation operator as 
\begin{equation*}\label{eq:Hcal}
 [\Hcal_{\alpha}(\Bijk)]_{u,v}=\left\{\begin{array}{ll}
    [\Bijk]_{u,v}, & \text{$[\Bijk]_{u,v}$ is one of the $\alpha k$ largest elements in}\\
    &\text{magnitude of $[\Bijk]_{u,\cdot}$ and $[\Bijk]_{\cdot,v}$.}\\
    0,  & \text{otherwise}.
 \end{array}\right.
\end{equation*}
The operator $\Hcal_\alpha$ keeps the largest $\alpha$-fraction of entries in each row and column of $\Bijk\in\RR^{k\times k}$.
Then, the projection to the set $\Kcal_B(s,\alpha)$ can be implemented by the composition of $\Tcal_s$ and $\Hcal_\alpha$, as shown in Algorithm~\ref{alg:update}. The operator $\Pcal_{m,\tau}$ as the projection operator to the set $\Kcal_{m}(\tau_1,\tau_2)$ is defined as 
\begin{align*}
     \Pcal_{m,\tau}(\Amk)=\argmin_{\Wb\in\Kcal_m(\tau_1,\tau_2)}\norm{\Wb-\Amk}_F.
\end{align*}
After Algorithm~\ref{alg:update} converges, we select the edges of the graph either using the AND or OR operation with a threshold $\epsilon_0>0$:
 \begin{equation}\label{eq:edge_thre}
\hat{\Nscr}_i(\epsilon_0)=\{(i,j):\norm{\hat{\Bb}_{ij}}_F\geq\epsilon_0\text{ and (or) }\norm{\hat{\Bb}_{ji}}_{F}\geq\epsilon_0\}.
 \end{equation}

\subsection{Initialization Procedure}\label{ssec:initialization}

We describe a procedure to find a good initial tuple $(\{\Ab^{m(0)}\}_{m=1}^M,\{\Bb_{i}^{(0)}\}_{i=1}^p)$. 
We focus on the case when $M=2$, as this is the case for the motivating example. 
To find initial estimates, we connect the problem with probabilistic canonical correlation analysis.
We first construct $\Amk^{(0)}$ by finding the canonical correlation between two views and then compute $\Bik^{(0)}$ with an iterative method by fixing $\Amk$.

Let $\Lb^{m\star}\in\RR^{k_m\times k}$ be such that the $\ell\ell'$-th entry of  $\Lb^{m\star}$ is $L_{\ell\ell'}^{m\star }=\dotp{{\Lscr^m}\phi_{\ell'}^m}{\phi_{\ell}^m}$.  Let ${\qb}_{i}^m=(q_{i,1}^m,\ldots,q_{i,k_m}^m)^\top\in\RR^{k_m}$, where $q_{i,j}^m=\dotp{\xi_{i}^m}{\phi_{j}^m}$, be the truncation of $\xi_{i}^m$; ${\mu}_{i}^m=(\mu_{i,1}^m,\ldots,\mu_{i,k_m}^m)^\top\in\RR^{k_m}$ be the bias induced by finite-dimensional truncation and $\mu_{i,\ell'}^m=\sum_{\ell=k+1}^\infty\dotp{{\Lscr^m} (z_{i,\ell}\phi_{\ell})}{\phi_{\ell'}^m}$ for $\ell'=1,\ldots,k_m$. Recalling that $\zb_i \sim \Ncal({\bf 0},\Ib_k)$, it follows that 
\begin{align}\label{eq:y_z}
    \yb_{i}^m|\zb_i
    &\sim\Ncal(\Lb^{m\star}\zb_i, \Sigmab^{m,\qb}_{i,i}
    +
    \Sigmab^{m,{\bm\mu}}_{i,i}
    ),
\end{align}
where $\Sigmab^{m,{\bm\mu}}_{i,i}$ is the covariance of $\mu_i^{m}$ and $\Sigmab^{m,{\qb}}_i$ is the covariance of $\qb_i^m$. 
From~\eqref{eq:factor_assumption} we have that $\bm\mu_{i}^m$ is uncorrelated with $\zb_i$. Recall that we assume that $\Lb^{m\star}$ is the same across $\pseq$.
We briefly discuss how extend to the setting where $\Lb_{i}^{m\star}$ differ across the nodes in Appendix~\ref{ssec:extension}.
To estimate $\Lb^{m\star}$ from data, we pick the first node $i=1$ for convenience.  
We discuss an alternative to this strategy in Appendix~\ref{sec:appendix:aggregate}
The log-likelihood is defined as
\begin{align*}
    \ell({\bf S}) = 
    (k_1+k_2)\log2\pi
    &+\log\abr{{\bf S}}+\tr({\bf S}^{-1}\hat\Sigmab),\\
{\bf S} = \begin{pmatrix}
\Lb^1{\Lb^1}^\top+\Sigmab^{1,\qb}_{11}&\Lb^1{\Lb^2}^\top\\
\Lb^2{\Lb^1}^\top &\Lb^2{\Lb^2}^\top+\Sigmab^{2,\qb}_{11}
\end{pmatrix},&
\quad\hat\Sigmab = \frac{1}{N}\sum_{n=1}^N
\begin{pmatrix}
\yb_{1}^{1,(n)}\yb_{1}^{1,(n)\top}&
\yb_{1}^{1,(n)}\yb_{1}^{2,(n)\top}\\
\yb_{1}^{2,(n)}\yb_{1}^{1,(n)\top}&
\yb_{1}^{2,(n)}\yb_{1}^{2,(n)\top}
\end{pmatrix}.
\end{align*}
Let $\Sigmab^m_{11}$, $\hat{\Sigmab}^m_{11}$ be the covariance matrix and sample covariance of node $1$ for modality $m$, respectively. Let $\Sigmab^{12}_{11}$, $\hat{\Sigmab}^{12}_{11}$ be the cross-covariance matrix and sample cross-covariance matrix for two modalities at node $1$, respectively. By Theorem 2 in~\citet{bach2005probabilistic}, the maximum log-likelihood estimator for $\Lb^{m\star}$ is
$
\hat\Lb^m=\sCovfree{1}{1}\hat\Vb^{m} \hat\Gammab^{1/2}
$, where
$\hat\Gammab$ is a diagonal matrix whose diagonal entries are the top-k singular values of $\hat\Rb^{12}=(\hat\Sigmab^1_{11})^{-1/2}\hat\Sigmab^{12}_{11}(\hat\Sigmab^2_{11})^{-1/2}$, columns of $\hat\Vb^{1}$ are the corresponding top-k left singular vectors and columns of $\hat\Vb^{2}$ are the corresponding right singular vectors. The initial estimate $\Amk$ is obtained by taking the pseudo-inverse of $\hat\Lb^m$, denoted as
\begin{align}
\Amk^{(0)} = \Gammab^{-1/2}\hat\Vb^{m\top}(\sCovfree{1}{1})^{-1/2}\label{eq:cca_model}.
\end{align}
After obtaining $\Amk^{(0)}$, we compute the initial estimate for $\defsetBk$. We solve the following constrained optimization problem:
\[
        \min_{\substack{\Bik\in\Kcal_B(s,\alpha)}}h_i(\Bb_i)=
        \min_{\substack{\Bik\in\Kcal_B(s,\alpha)}}
        \sum_{m=1}^M\frac{1}{2MN}
        \bignorm{
            \Amk^{(0)}\Ykmnfree{i}  
            -
            \Bik\bigAd{p-1}{\Amk^{(0)}}\Ykmnfree{\noi}
        }_F^2
\]
using the projected gradient descent described in Algorithm~\ref{alg:initializationB}.
\begin{algorithm}[t!] 
\spacingset{1}
\caption{Initialization of $\setBk$}\label{alg:initializationB}
\begin{algorithmic}
\State Input: { $\{\Yb_m\}_{\mseq}$,$\{\Ab^{m(0)}\}_{\mseq}$, $\{\Bb_{0,i}={\bf 0}_{i\in V}\}$, $s$, $\alpha$, $\eta_{B_0}$;}
     \For{$\pseq$, $t=1,\ldots,L-1$}
    \State $\Bb_{0,i}^{(t+.5)}\leftarrow \Tcal_{s}(\Bb_{0,i}^{(t)} - \eta_{B_0}\nabla_{\Bb_{0,i}}h)$\;
    \For{$j\in\pni$}
    \State $\Bb_{0,ij}^{(t+1)}\leftarrow\Hcal_{\alpha}(\Bb_{0,ij}^{(t+.5)})$\;
   \EndFor
    \EndFor
    \State $\Bb_i^{(0)}\leftarrow \Bb_{0,i}^{(L)}$
\end{algorithmic}
\end{algorithm}

\subsection{Selection of Basis Parameters}\label{sec:parameterselection}

We select $k_m$ based on the projection score to the basis, where we employ the elbow method to decide $k_m$. Specifically, we compute the mean-squared error of the projected signals with the original signals and pick the elbow point. Then, we compute the canonical correlation of two views and select $k$ based on the canonical correlation score using the elbow method. 
\textcolor{black}{Although Theorem~\ref{theorem:main} requires knowledge of $\varepsilon_0$, in practice we find setting $\varepsilon_0$ to be a small constant $10^{-3}$ suffices to give stable and reproducible results in various simulations and real data}.
The parameters $s$, $\alpha$, $\tau_1$, $\tau_2$ are selected based on the $5$-fold cross-validation with the BIC score function used in the functional regression~\citep{zhao2021high}:
\begin{align*}
&\BIC(s,\alpha, \tau_1,\tau_2) =
\sum_{i=1}^p\cbr{\sum_{m=1}^MN\log\det\rbr{\frac{1}{2N}\Gb_{i}^m\Gb_{i}^{m\top}}
+|\hat\Nscr_i(s,\alpha, \tau_1,\tau_2)|\log N},
\end{align*}
where $\Gb_{i}^m=\Amk \Ykmnfree{i}-\sum_{j\in\pni}\Bijk\Amk \Ykmnfree{j}$ for $\pseq$ and $\mseq$.

\section{Theory}\label{sec:theory}

We show that Algorithm~\ref{alg:update}--\ref{alg:initializationB} can recover the underlying latent graphs with high probability under mild conditions.
The quality of the graph estimate depends on how well we can estimate $\setpAk$ and $\setpBk$.
We start by showing the convergence guarantee for both parameters. 
The convergence analysis provides the rate of the algorithmic convergence, along with the statistical error at the stationary points.

In Section~\ref{sec:theory_initial}, we assess the quality of the initial estimates. The theory suggests the consistency of the estimator with a small truncation error. The convergence guarantee of Algorithm~\ref{alg:update} is provided in Section~\ref{sec:theory_convergence} along with the quantification of the statistical error at the convergence points. Section~\ref{ssec:graph_recovery} presents the guarantee of latent graph recovery. 

\subsection{Analysis of the Initialization}\label{sec:theory_initial}

We quantify the initial error of $\Amk^{(0)}$ to $\pAmk$ and $\Bik^{(0)}$ to $\pBik$ in terms of the sample size $N$. When the truncation error is negligible, the result suggests that $\Amk^{(0)}$ converges in Frobenius norm at a rate $O(\sqrt{k(k_1+k_2)/N})$. Furthermore, we show that $\Bik^{(0)}$ converges in Frobenius norm at a rate  $O(\sqrt{k(k_1+k_2)/N}+\sqrt{s^\star k^2/N})$, where the first term is induced by the estimation error of $\Amk^{(0)}$ and the second term is is the statistical error of $\Bik^{(0)}$.

We make the following assumption on the matrix $\Rb^{12}=(\Sigmab^1_{1,1})^{-1/2}\Sigmab^{12}_{1,1}(\Sigmab^{2}_{1,1})^{-1/2}$.

\begin{assumption}\label{assumption:distinctcca}
Suppose that $\min\{k_1,k_2\}\geq k$ and the top-k singular values of $\Rb^{12}$ satisfy: $\gamma_{1}> \gamma_{2}>\ldots>\gamma_{k}>0$.
\end{assumption}
\textcolor{black}{By definition, the singular values of $\Rb^{12}$ are the canonical correlations of two views. Assumption~\ref{assumption:distinctcca} is for identifiability purpose. It implies that the canonical vectors are unique, up to sign changes. }
Let the sets of left and right singular vectors of $\hat\Rb^{12}$ and $\Rb^{12}$ be $\{(\hat{\vb}_{i}^1,\hat{\vb}_{i}^2)\}_{i=1,\ldots, k}$ and $\{({\vb}_{i}^{1\star},{\vb}_{i}^{2\star})\}_{i=1,\ldots, k} $, respectively. Hence we define $\Qb$ as a diagonal matrix whose $i$-th diagonal entry $q_{i}\in\{-1,1\}$ satisfies the condition that $\dotp{q_{i}\hat{\vb}_{i}^1}{\vb_{i}^{1\star}}\geq0$.

\begin{assumption} \label{assumption:cov}
The covariance $\Sigmab^m = \frac{1}{N}\EE[\Ykmn\YkmnT]\in\RR^{pk_m\times pk_m}$ satisfies
\[
0 < \nu_{x} \leq \sinmin{\Sigmab^m} < \sinmax{\Sigmab^m}\leq \rho_{x}<\infty,\quad\mseq.
\]
\end{assumption}

We now establish the theory of the distance of $\Amk^{(0)}$ and $\pAmk$ under sign matrix $\Qb$.
\begin{theorem}\label{theorem:initialcca}
 Let $\defcdnm$. Suppose that $N=O(\max_{m=1,2}\kappa_m^2 k_m)$ and Assumption~\ref{assumption:distinctcca}--\ref{assumption:cov}  hold. Let $\Lscr^{m,r}=\sum_{(\ell,\ell')\in I}\dotp{\Lscr^m\phi_{\ell'}}{\phi_{\ell}^m}_{\HH_m}\phi_{\ell}^m\otimes\phi_{\ell'}$, with $I=\{(\ell,\ell')\in\NN\times\NN:(\ell,\ell')\neq (a,b), a=1,\ldots,k_m, b=1,\ldots,k\}$,  denote the truncation term. Then
\begin{equation}\label{eq:errorofA}
\norm{\Amk^{(0)} -\Qb\pAmk}_F
\leq 
C_{\gamma_k,\nu,\rho}\rbr{1+\max_{j\neq i}\frac{1}{|\gamma_j -\gamma_i|}}
\sqrt{k\frac{k_1 + k_2}{N}}+\sqrt{8k}\opnorm{\Ascr^m}{}^2\opnorm{\Lscr^{m,r}}{},
\end{equation}
with probability at least $1-5\exp(-\sum_{m=1}^2 k_m)$, where
$C_{\gamma_k,\nu,\rho}>0$ is a constant depending on $\gamma_k$, $\nu_x$, and $\rho_x$ defined in Appendix~\ref{ssec:notation}.
\end{theorem}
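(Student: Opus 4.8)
The plan is to decompose the error $\norm{\Amk^{(0)} - \Qb\pAmk}_F$ into a statistical part controlled by $N$ and a deterministic truncation part controlled by $\opnorm{\Lscr^{m,r}}{}$, and to handle the statistical part through perturbation bounds for the SVD used in the probabilistic CCA construction~\eqref{eq:cca_model}. Recall that $\Amk^{(0)} = \hat\Gammab^{-1/2}\hat\Vb^{m\top}(\hat\Sigmab^m_{11})^{-1/2}$, whereas the population analogue (built from $\Sigmab^m_{11}$, the true singular vectors $\vb_i^{m\star}$ of $\Rb^{12}$, and $\Gammab$) equals, up to the truncation error, $\pAmk$. So I would first write
\[
\Amk^{(0)} - \Qb\pAmk
= \underbrace{\big(\Amk^{(0)} - \Qb\Amk^{m,\mathrm{pop}}\big)}_{\text{statistical}}
+ \underbrace{\big(\Qb\Amk^{m,\mathrm{pop}} - \Qb\pAmk\big)}_{\text{truncation}},
\]
where $\Amk^{m,\mathrm{pop}} = \Gammab^{-1/2}\Vb^{m\star\top}(\Sigmab^m_{11})^{-1/2}$. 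The truncation term is deterministic: since $\pAmk$ is the $k\times k_m$ block of the pseudoinverse $\Ascr^m = (\Lscr^m)^\dagger$ restricted to the first $k$ latent and $k_m$ observed coordinates, the difference from the exactly-inverted population CCA map is governed by the discarded operator mass $\Lscr^{m,r}$; bounding it by $\sqrt{8k}\,\opnorm{\Ascr^m}{}^2\opnorm{\Lscr^{m,r}}{}$ is essentially a rank-$k$ Frobenius-vs-operator-norm conversion combined with a resolvent/Neumann-type estimate, using that $\Ascr^m$ acts as $(\Lscr^m)^{-1}$ on $\Image(\Lscr^m)$.

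For the statistical term I would proceed in three steps. Step 1: control $\norm{\hat\Sigmab^m_{11} - \Sigmab^m_{11}}$ and $\norm{\hat\Sigmab^{12}_{11} - \Sigmab^{12}_{11}}$ in operator norm; since the projected scores $\yb_1^{m,(n)}$ are sub-Gaussian (Gaussian latent plus independent noise) and we are in the regime $N = O(\max_m \kappa_m^2 k_m)$, standard covariance concentration gives deviations of order $\sqrt{(k_1+k_2)/N}$ with the stated exponential probability $1 - 5\exp(-\sum_m k_m)$, where $\kappa_m$ absorbs the conditioning constants $\nu_x,\rho_x$ from Assumption~\ref{assumption:cov}. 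Step 2: propagate these to $\hat\Rb^{12} - \Rb^{12}$; this uses the fact that $x \mapsto x^{-1/2}$ is Lipschitz on the spectral interval $[\nu_x,\rho_x]$, so the whitening matrices $(\hat\Sigmab^m_{11})^{-1/2}$ are close to $(\Sigmab^m_{11})^{-1/2}$, and a product-rule expansion controls $\norm{\hat\Rb^{12} - \Rb^{12}}$ by the same $\sqrt{(k_1+k_2)/N}$ rate times a constant in $\nu_x,\rho_x$. Step 3: apply a Wedin/Davis--Kahan $\sin\Theta$ bound to pass from $\norm{\hat\Rb^{12} - \Rb^{12}}$ to $\norm{\hat\Vb^m\Qb - \Vb^{m\star}}$ and a singular-value perturbation bound to pass to $\norm{\hat\Gammab^{-1/2} - \Gammab^{-1/2}}$; here Assumption~\ref{assumption:distinctcca} ($\gamma_1 > \cdots > \gamma_k > 0$) supplies the eigengaps, which is exactly where the $\big(1 + \max_{j\neq i}|\gamma_j - \gamma_i|^{-1}\big)$ factor enters. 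The sign matrix $\Qb$ is the price of SVD nonidentifiability up to sign, matched column-by-column as in the definition preceding the theorem. Finally I would recombine the perturbed factors $\hat\Gammab^{-1/2}$, $\hat\Vb^{m\top}$, $(\hat\Sigmab^m_{11})^{-1/2}$ into $\Amk^{(0)}$ via another product-rule expansion, each cross term being (bounded factor)$\times$(one perturbation), and convert the resulting operator-norm bound to the Frobenius norm with a $\sqrt{k}$ factor since all matrices involved have rank at most $k$; collecting constants into $C_{\gamma_k,\nu,\rho}$ gives~\eqref{eq:errorofA}.

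The main obstacle I anticipate is Step 2--3: cleanly tracking how the whitening step $(\hat\Sigmab^m_{11})^{-1/2}$ interacts with the SVD perturbation without losing the sharp $\sqrt{k(k_1+k_2)/N}$ scaling — a naive bound would pick up extra factors of $k_m$ or of the conditioning constants, and one has to be careful that the $\sin\Theta$ theorem is applied to $\hat\Rb^{12}$ (whose top-$k$ singular subspace is what feeds $\hat\Vb^m$) rather than to the raw sample covariances, and that the eigengap used is the gap of $\Rb^{12}$, which is where Assumption~\ref{assumption:distinctcca} is indispensable. A secondary subtlety is verifying that the truncation bias $\bm\mu_i^m$ (the tail $\sum_{\ell>k}$ mass from~\eqref{eq:y_z}) contributes only to the deterministic $\opnorm{\Lscr^{m,r}}{}$ term and does not corrupt the sub-Gaussian concentration argument; this follows because $\bm\mu_i^m$ is uncorrelated with $\zb_i$ and has bounded second moment, but it needs to be stated carefully so the $\hat\Sigmab$ concentration is about the right population object.
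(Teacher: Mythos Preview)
Your proposal is correct and follows essentially the same route as the paper: the same statistical/truncation split (the paper's intermediate object is $\check\Ab^{m\star}=(\Lb^{m\star})^\dagger$, which coincides with your $\Amk^{m,\mathrm{pop}}$), the same three-term product-rule expansion of $\Amk^{(0)}-\Qb\check\Ab^{m\star}$ into perturbations of $(\hat\Sigmab^m_{11})^{-1/2}$, $\hat\Vb^m$, and $\hat\Gammab^{-1/2}$, and the same pseudoinverse perturbation bound for the truncation piece. The only refinement worth flagging is that in Step~3 the paper does not use a subspace $\sin\Theta$ theorem but an \emph{individual} singular-vector perturbation bound (Theorem~5.2.2 of Hsing--Eubank), which is what forces the pairwise gap factor $\max_{j\neq i}|\gamma_j-\gamma_i|^{-1}$ and allows alignment by a diagonal sign matrix $\Qb$ rather than a full orthogonal matrix; a plain Wedin bound would give only the $k$-th gap and an orthogonal alignment.
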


\begin{remark}\label{remark:truncation}
The second term in~\eqref{eq:errorofA} is the truncation error, which becomes small as $k$ and $k_m$ grow. To obtain a non-trivial upper bound, we must have $\opnorm{{\Ascr^m}}{}<\infty$. Therefore, the smallest non-zero singular value of ${\Lscr^m}$ should be bounded away from zero, suggesting that assuming ${\Lscr^m}$ to be a finite-rank operator~\citep[Theorem~4.2.3]{hsing2015theoretical} is a necessary condition. 
\textcolor{black}{
If the truncation error scales the same as the statistical error, the first term of~\eqref{eq:errorofA}, the theorem tells us that $\Amk^{(0)}$ is reasonably close to $\pAmk$, up to a sign change, at the rate of $O(\sqrt{k(k_1+k_2)/N})$.
We can achieve this by choosing the appropriate $k_m$ because $\opnorm{\Lscr^{m,r}}{}$ decreases with increasing $k_m$. Therefore, there exists a $k_m$ such that $\opnorm{\Lscr^{m,r}}{}=O(\sqrt{(k_1+k_2)/N})$.
Additionally, the first term of~\eqref{eq:errorofA} is inversely proportional to the difference between two consecutive canonical correlations. Hence, the canonical correlations must be distinct, as stated in Assumption~\ref{assumption:distinctcca}, and a gap between two consecutive canonical correlations is required to obtain a non-trivial upper bound.}
\end{remark}

Note that multiplications with a signed diagonal matrix do not break structural assumptions; that is, $\Qb\pAmk\in\Kcal_m(\tau_1,\tau_2)$
and $(\Qb\Bb_{i1}^{\star}\Qb\cdots\Qb\Bb_{ip}^{\star}\Qb){\top}\in\Kcal_B(s,\alpha)$. 
Furthermore, we have $f(\Ab,\Bb)=f(\{\Qb\Amk\},\{\Qb\Bik(\Ib_{p-1}\otimes\Qb^\top)\})$ and $h_i(\Bik)=h_i(\Qb\Bik(\Ib_{p-1}\otimes\Qb^\top))$ for any $\Qb$. Therefore, it would be cumbersome to write out $\Qb$ explicitly for the rest of the analysis. We abuse the notation here by writing $\Qb\pAmk$ as $\pAmk$  and $\Qb\pBik\Qb^\top$ as $\pBik$. 
Define $\ptBik=(-\Bb_{i1}^{\star}\cdots-\Bb_{ip}^{\star})\in\RR^{k\times pk}$, where $\Bb_{ii}^\star = \Ib_k$. 
\begin{assumption}\label{assumption:AB}
     For every $\pseq$, $\ptBik $ satisfies
     \[
     0<\nu_b^{1/2}\leq\sinmin{\ptBik }<\sinmax{\ptBik }\leq\rho_b^{1/2}<\infty.
     \]
     For every $\mseq$, $\pAmk $ satisfies
     \[
     0<\nu_a^{1/2}\leq\sinmin{\pAmk }<\sinmax{\pAmk }\leq\rho_a^{1/2}<\infty.
     \]
\end{assumption}
Assumptions~\ref{assumption:cov}--\ref{assumption:AB} guarantee that $f(\setAk,\setpBk)$ is strongly convex and smooth with respect to $\Amk$, $\mseq$, and $f(\setpAk,\setBk)$ is strongly convex and smooth with respect to $\Bik$, $\pseq$. Furthermore, combined with  the result of Theorem~\ref{theorem:initialcca}, it follows that $h_i(\setBk)$ is strongly convex and smooth with respect to $\Bik$ for $\pseq$ with high probability. 

Let $s=\vartheta_1s^\star/2$, $\alpha=\vartheta_2\alpha^\star/2$ for some constants $\vartheta_1$ and $\vartheta_2$. Our theory requires more stringent conditions:  $\vartheta_1$,$\vartheta_2$ are some constants greater than $2$, which implies that $\pBik\in\Kcal_{B}(s^\star,\alpha^\star)\subset\Kcal_{B}(s,\alpha)$ for $\pseq$. 
We state the convergence rate of Algorithm~\ref{alg:initializationB}. 
\begin{lemma}\label{lemma:initial_stepB} 
Suppose that Assumptions~\ref{assumption:s-sparse}--\ref{assumption:disperse} and \ref{assumption:cov}--\ref{assumption:AB} hold.
Let 
$m'=\argmin \sinmintwo{\Amk^{(0)}}$, $\eta_{B_0}= \{\max_{\mseq}\sinmaxtwo{\Ab^{m(0)}}\norm{\hat\Sigmab^{m}}_2\}^{-1}$, 
$N=O(\max_{\mseq} \kappa_m^2 (k_m s^\star +\log M + \log p))$ and $\pi_0=C_\vartheta\cbr{1-{\sinmintwo{\Ab^{m'(0)}}\eta_{B_0}\nu_x}/2}$,
where $C_\vartheta$, $C_\gamma$ and $C_{\delta,i}$ 
are constants whose explicit values are given in Appendix~\ref{ssec:notation}.  
Then, for each $1\leq i\leq p$, after $L$ iterations of Algorithm~\ref{alg:initializationB} we have
\begin{align*}
\norm{\Bb_i^{(L)}-\pBik}_F
\leq \pi_0^L\norm{\pBik}_F
+   \frac{C_\gamma C_\vartheta}{1-\pi_0}\frac{\eta_{B_0}}{M}\sum_{m=1}^M\norm{\Amk^{(0)} - \pAmk}_F
+   \frac{C_{\delta,i}C_\vartheta\eta_{B_0}}{1-\pi_0}
    \sqrt{\frac{\alpha^\star s^\star  k^2}{N}}    ,
\end{align*}
with probability at least $1-3\max_{\mseq}\exp(-k_m s^\star)$. 
\end{lemma}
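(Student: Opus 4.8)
\textbf{Proof plan for Lemma~\ref{lemma:initial_stepB}.}

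The plan is to analyze Algorithm~\ref{alg:initializationB} as a projected gradient descent on the node-separable objective $h_i(\Bb_i)$, using the standard contraction-plus-noise decomposition for projected gradient descent on a (restricted) strongly convex and smooth objective. First I would establish that, with $\Amk^{(0)}$ plugged in as the transformation operator, the quadratic $h_i$ is restricted strongly convex and restricted smooth over the cone of matrices supported on at most $O(s)$ blocks: this is where Assumptions~\ref{assumption:cov}--\ref{assumption:AB} enter, giving uniform control $\nu_x \leq \sigma_{\min}^2 \leq \sigma_{\max}^2 \leq \rho_x$ on the relevant Gram matrices, perturbed by the $O(\sqrt{k_m s^\star/N})$-type error of $\Amk^{(0)}$ from Theorem~\ref{theorem:initialcca} (this is what makes the required sample size $N = O(\max_m \kappa_m^2(k_m s^\star + \log M + \log p))$ appear — the log factors come from a union bound over the $p$ nodes and $M$ modalities to make the RSC/RSM constants hold simultaneously). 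The step-size choice $\eta_{B_0} = \{\max_m \sigma_{\max}^2(\Ab^{m(0)})\|\hat\Sigmab^m\|_2\}^{-1}$ is exactly $1/(\text{smoothness})$, which yields the per-iteration contraction factor $\pi_0 = C_\vartheta\{1 - \sigma_{\min}^2(\Ab^{m'(0)})\eta_{B_0}\nu_x/2\} < 1$; the constant $C_\vartheta > 1$ absorbs the expansion incurred by the hard-thresholding projection $\Tcal_s\circ\Hcal_\alpha$ onto the nonconvex set $\Kcal_B(s,\alpha)$, which is only an \emph{approximate} projection (here I would invoke the standard fact, as in the nonconvex-optimization references the paper cites, that hard thresholding onto a top-$s$ structured set satisfies a quasi-projection inequality with a constant slightly larger than $1$, and the condition $\vartheta_1,\vartheta_2 > 2$ guarantees $\pi_0 < 1$).

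Second, I would unroll the recursion. Writing $\Bb_{0,i}^{(t+1)} = (\Tcal_s\circ\Hcal_\alpha)(\Bb_{0,i}^{(t)} - \eta_{B_0}\nabla h_i(\Bb_{0,i}^{(t)}))$ and using that $\pBik \in \Kcal_B(s^\star,\alpha^\star) \subset \Kcal_B(s,\alpha)$ is feasible, the one-step bound reads
\[
\norm{\Bb_{0,i}^{(t+1)} - \pBik}_F \leq \pi_0 \norm{\Bb_{0,i}^{(t)} - \pBik}_F + \eta_{B_0}\, (\text{projection-amplified}) \norm{\nabla h_i(\pBik)}_{F,\mathrm{restricted}},
\]
and iterating from $\Bb_{0,i}^{(0)} = \mathbf{0}$ (so the initial term is $\pi_0^L\|\pBik\|_F$) plus summing the geometric series $\sum_{t<L}\pi_0^t \leq 1/(1-\pi_0)$ gives the stated three terms. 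The middle term $\frac{C_\gamma C_\vartheta}{1-\pi_0}\frac{\eta_{B_0}}{M}\sum_m \norm{\Amk^{(0)} - \pAmk}_F$ comes from bounding the part of $\nabla h_i(\pBik)$ caused by using $\Amk^{(0)}$ rather than $\pAmk$ inside $h_i$; the gradient is Lipschitz in $\Amk$ with a constant $C_\gamma$ governed by the data covariances. The last term $\frac{C_{\delta,i}C_\vartheta \eta_{B_0}}{1-\pi_0}\sqrt{\alpha^\star s^\star k^2/N}$ is the statistical noise: even at $\Amk = \pAmk$, the population residual $\Wcal_i$ (Corollary~\ref{corollary:partialcovariance_multi}) makes $\nabla h_i$ nonzero in finite samples, and its restricted Frobenius norm concentrates at the $\sqrt{\alpha^\star s^\star k^2/N}$ rate — the $\alpha^\star s^\star$ counting the at most $s^\star$ active blocks each with at most $\alpha^\star k$ nonzeros per row/column (i.e., $\alpha^\star k^2$ effective parameters per block, but the row/column sparsity from $\Hcal_\alpha$ sharpens the per-block count), and $k^2$ the ambient block size; this step uses a sub-Gaussian/sub-exponential concentration bound for the empirical cross-moments of the functional scores, with the failure probability $3\max_m \exp(-k_m s^\star)$ coming from a union bound over the active support patterns.

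The main obstacle is the joint handling of the nonconvex hard-thresholding projection and the plugged-in error of $\Amk^{(0)}$: one must verify that the composition $\Tcal_s\circ\Hcal_\alpha$ really does yield a bona fide quasi-projection onto $\Kcal_B(s,\alpha)$ with a controllable expansion constant $C_\vartheta$ (composing two separate greedy truncations is subtler than a single top-$s$ projection, and requires the margin $\vartheta_1,\vartheta_2 > 2$ to ensure the true support is retained in the sense needed for the contraction), and that the RSC/RSM constants for $h_i$ survive the $O(\sqrt{k_m s^\star/N})$ perturbation of $\Amk^{(0)}$ uniformly over all $p$ nodes — which is precisely why the lemma's sample-size requirement carries the extra $\log p + \log M$ terms beyond the $k_m s^\star$ dimension count. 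Everything else (unrolling the geometric recursion, the concentration bound for $\nabla h_i(\pBik)$) is routine given the earlier results, particularly Theorem~\ref{theorem:initialcca}.
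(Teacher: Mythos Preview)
Your proposal is correct and follows essentially the same route as the paper: establish strong convexity and smoothness of $h_i$ with the fixed $\Amk^{(0)}$ plugged in, derive a one-step contraction for the composite projected gradient step (absorbing the hard-thresholding expansion into $C_\vartheta$ via the quasi-projection inequalities for $\Tcal_s$ and $\Hcal_\alpha$), decompose the gradient at $\pBik$ into an $\Amk^{(0)}$-versus-$\pAmk$ term and a statistical-noise term, and telescope from $\Bb_{0,i}^{(0)}=\mathbf 0$. One small clarification: the paper does \emph{not} obtain the convexity/smoothness constants by perturbing from $\pAmk$ via Theorem~\ref{theorem:initialcca}; it proves them directly for $h_i$ with $\Amk^{(0)}$ as a fixed input (the constants are $\min_m\tfrac12\nu_x\sigma_{\min}^2(\Amk^{(0)})$ and $\max_m\sigma_{\max}^2(\Amk^{(0)})\|\hat\Sigmab^m\|_2$), so the sample-size condition with $\log p+\log M$ comes only from the covariance concentration needed for these constants and for the noise term, not from any closeness of $\Amk^{(0)}$ to $\pAmk$.
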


\begin{remark}
The quality of the initial guess of $\pBik$ depends on  $M^{-1}\sum_{m=1}^M\norm{\Amk^{(0)} - \pAmk}_F$. While $s^\star k^2$ in the second term seems large at first glance, it is worth pointing out that the maximum degree of a node in functional graphical model scales with $s^\star k^2 \ll k^2p$. 
\end{remark}


\subsection{Convergence Analysis of Algorithm~\ref{alg:update}}\label{sec:theory_convergence}
We show the minimum number of iterations required for Algorithm~\ref{alg:update} to produce a useful result. Additionally, Algorithm~\ref{alg:update} converges in a linear rate under mild assumptions. 

The following assumption constrains the space of the local region where Algorithm~\ref{alg:update} takes place. The condition is further fulfilled by Theorem~\ref{theorem:initialcca} and Lemma~\ref{lemma:initial_stepB}.
\begin{assumption}\label{assumption:coef}
There exist constants $C_A$ and $C_B$ such that the initial guesses satisfy $\norm{\Ab^{m(0)}-\pAmk}_2\leq C_A\norm{\pAmk}_2$, $\norm{\Bb_i^{(0)}-\pBik}\leq C_B\norm{\pBik}$ for any unitarily invariant norm and $\norm{\Bb_i^{(0)\top}-\pBikT}_1\leq C_B\norm{\pBikT}_1$
for $\mseq$ and $\pseq$.
\end{assumption}
Let $C_\alpha, C_\beta$, $C_{\eta_A}, C_{\eta_B}$ be constants depending on the data, as detailed in Appendix~\ref{ssec:notation}. 
We introduce the statistical error $\Xi:=
    \sqrt{N^{-1}(\Xi_1+\Xi_2)}$ with 
 \begin{align*}
    &\Xi_1 
    := 
    C_\alpha
    \max_{\mseq}{k(k_m+k+\log M)},
    \quad
    \Xi_2 
    :=
    C_\beta
    {s^\star(k^2+\log p)},
\end{align*}
where $\Xi_1$ is induced by the statistical error of estimating $\Amk$ and $\Xi_2$ is by the statistical error of estimating $\Bik$. 
Under Assumption~\ref{assumption:coef}, we define $R_0$ as the smallest positive real number satisfying
$
    \max_{\mseq}\norm{\Ab^{m(0)} - \Amk}_F^2 
    + 
    \max_{\pseq}\norm{\Bb_i^{(0)} - \Bik^{\star}}_F^2
    \leq
        R_0^2, 
$ and
\begin{align*}
\delta_0&:=\exp\sbr{-\{(s^\star+1)\min_{\mseq} k_m\wedge (\alpha k)^2\}},\\
\pi &:= \sbr{4
    \cbr{
                    1
                    -
                    p(s^\star+1)C_{\eta_A}\eta_A
                }
    \vee
    C_\vartheta\rbr{
                1
                -
                MC_{\eta_B}\eta_B
            }
    }.
\end{align*}
We now state the convergence guarantee for Algorithm~\ref{alg:update}, with proof in Appendix~\ref{ssec:theorem_convergence}.
\begin{theorem}\label{theorem:convergence}
Suppose that Assumptions~\ref{assumption:s-sparse}--\ref{assumption:disperse},\ref{assumption:cov}--\ref{assumption:coef} hold and that, $\Xi^2\leq C_1 (C_A^2\rho_a+C_B^2\rho_b)$, and $\SCtwo$. 
The step sizes $\eta_A$ and $\eta_B$ are selected to satisfy
(i) $\eta_A\leq{C_2}\{p(s^\star+1)\}^{-1}$ and
$\eta_B\leq{C_3}{M}^{-1}$, 
(ii) $p\eta_A=C_\vartheta M\eta_B$, and (iii) $\pi<1$, 
where $\{C_i\}_{i=1}^3$ are positive  constants.
After $L$ iterations of Algorithm~\ref{alg:update}, we have
\begin{align*}
    \max_{\mseq}\norm{\Ab^{m(L)} - \Amk^{\star}}_F^2 
    + 
    \max_{\pseq}\norm{\Bb_i^{(L)} - \Bik^{\star}}_F^2\leq
    \pi^LR_0^2+
    \frac{1}{1-\pi} \Xi^2,
\end{align*}
with probability at least $1-10\delta_0$.

\end{theorem}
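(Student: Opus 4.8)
The plan is to establish a one-step contraction inequality for the alternating projected gradient descent and then iterate it. The core object to track is the coupled error $e_t^2 := \max_{\mseq}\norm{\Ab^{m(t)} - \Amk^{\star}}_F^2 + \max_{\pseq}\norm{\Bb_i^{(t)} - \Bik^{\star}}_F^2$. Because of the product structure in $f$ (each term is $\norm{\Amk\Ykmn{i} - \sum_j \Bijk\Amk\Ykmn{j}}_F^2$), the function is biconvex but not jointly convex, so I cannot invoke standard convex PGD analysis directly; instead I follow the nonconvex-optimization template of \citet{zhang2018unified}. The two sub-updates must be analyzed separately: fixing $\Bb$, the map $\Amk \mapsto f$ is strongly convex and smooth by Assumptions~\ref{assumption:cov} and \ref{assumption:AB} (this is exactly what was asserted after Assumption~\ref{assumption:AB}), and the projection $\Pcal_{m,\tau}$ onto the convex-ish set $\Kcal_m(\tau_1,\tau_2)$ is (firmly) nonexpansive, or at worst contractive after accounting for the fact that $\pAmk \in \Kcal_m(\tau_1,\tau_2)$; symmetrically, fixing $\Ab$, the map $\Bik \mapsto f$ is strongly convex and smooth, and $\Tcal_s \circ \Hcal_\alpha$ is the hard-thresholding projection onto the (nonconvex) sparse set $\Kcal_B(s,\alpha)$, for which one uses the standard hard-thresholding contraction lemma with the blow-up factor controlled by the ratio $s/s^\star$ and $\alpha/\alpha^\star$ — this is why the theorem needs $\vartheta_1,\vartheta_2 > 2$.

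The key steps, in order: (1) For the $\Ab$-update, write $\Ab^{m(t+1)} = \Pcal_{m,\tau}(\Ab^{m(t)} - \eta_A \nabla_{\Amk} f(\cdot, \Bb^{(t)}))$; using strong convexity/smoothness of $f$ in $\Amk$ and nonexpansiveness of the projection, derive $\norm{\Ab^{m(t+1)} - \pAmk}_F^2 \leq (1 - p(s^\star+1)C_{\eta_A}\eta_A)\norm{\Ab^{m(t)} - \pAmk}_F^2 + (\text{cross term in }\Bb\text{-error}) + (\text{noise term})$; the cross term arises because the true minimizer of $f(\cdot,\Bb^{(t)})$ in $\Amk$ differs from $\pAmk$ when $\Bb^{(t)}\neq\pBik$, and is bounded via the Lipschitz dependence of the $\Amk$-minimizer on $\Bb$ (another consequence of the two-sided conditioning in Assumption~\ref{assumption:AB}), and the noise term is $N^{-1/2}\Xi_1$-sized from concentration of $\frac1N\sum_n \ykmn{i}\ykmn{j}{}^\top$ around its expectation — this needs a sub-exponential/matrix-Bernstein bound over the relevant $k,k_m$-dimensional blocks, with the $\log M$ and union bound producing $\delta_0$. (2) Symmetrically for the $\Bb$-update, with the extra subtlety that $\Tcal_s\circ\Hcal_\alpha$ is not a projection onto a convex set; invoke the hard-thresholding lemma to get $\norm{\Bb_i^{(t+1)} - \pBik}_F^2 \leq C_\vartheta(1 - MC_{\eta_B}\eta_B)\norm{\Bb_i^{(t)} - \pBik}_F^2 + (\text{cross term in }\Ab\text{-error}) + N^{-1/2}\Xi_2$, where $C_\vartheta$ collects the thresholding blow-up and $\Xi_2 = C_\beta s^\star(k^2+\log p)$ reflects that only $s^\star$ blocks of size $k\times k$ are active. (3) Combine the two via the balance condition $p\eta_A = C_\vartheta M\eta_B$ so that the cross terms telescope into the contraction factor rather than blowing it up — this is the algebraic heart and gives $e_{t+1}^2 \leq \pi e_t^2 + \frac{1}{...}\Xi^2$ with $\pi = 4\max\{1 - p(s^\star+1)C_{\eta_A}\eta_A,\ C_\vartheta(1 - MC_{\eta_B}\eta_B)\}$; the factor $4$ is the cost of Young's inequality when absorbing cross terms. (4) Iterate the one-step bound $L$ times to get $e_L^2 \leq \pi^L R_0^2 + (1-\pi)^{-1}\Xi^2$, using $\pi < 1$ (condition (iii)) for the geometric sum. (5) Track probabilities: every concentration event is intersected, giving $1 - 10\delta_0$ after the union bound over the $M$ modalities, $p$ nodes, and the sparse supports.

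The main obstacle I expect is two-fold. First, showing that the \emph{one-step iterates stay in the local region} where strong convexity and the statistical-error bounds are valid — i.e., that $e_t \leq R_0$ is preserved for all $t$, not just at $t=0$ (Assumption~\ref{assumption:coef} and the definition of $R_0$). This requires the condition $\Xi^2 \leq C_1(C_A^2\rho_a + C_B^2\rho_b)$ and $\SCtwo$ so that the contraction plus the bounded noise term never pushes the iterate outside the basin; establishing this needs a careful inductive argument coupling the iterate-radius bound with the contraction bound. Second, handling the \emph{cross-term coupling} rigorously: because $f$ is not jointly convex, the "minimizer of $f$ in $\Amk$ given $\Bb^{(t)}$" is itself a moving target, and bounding its distance from $\pAmk$ in terms of $\norm{\Bb^{(t)} - \pBik}_F$ requires an implicit-function-type perturbation argument that uses both the strong convexity (invertibility of the Hessian block) and an upper bound on the off-diagonal Hessian block $\nabla^2_{\Amk,\Bik} f$ — this is where $\rho_a, \rho_b$ and the constants $C_{\eta_A}, C_{\eta_B}, C_\alpha, C_\beta$ enter, and getting the dependence clean enough that the balance condition (ii) actually closes the recursion is the delicate part. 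The hard-thresholding blow-up constant $C_\vartheta$ interacting with the strong-convexity modulus (needing $\vartheta_1,\vartheta_2$ large enough that $C_\vartheta(1 - MC_{\eta_B}\eta_B) < 1$ and then $\pi < 1$) is the final thing that must be checked for consistency.
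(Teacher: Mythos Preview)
Your plan is essentially the paper's approach: establish a one-step contraction via separate analyses of the $\Ab$- and $\Bb$-updates (Lemmas for $\Ical_{1,m},\Ical_{2,m},\Ical_{3,i},\Ical_{4,i}$), combine them under the balance condition $p\eta_A = C_\vartheta M\eta_B$ into a single contraction, then telescope with an inductive check that the iterates remain in the local basin. Two technical points are mis-attributed in your write-up, though neither is fatal.

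First, the constraint set $\Kcal_m(\tau_1,\tau_2)$ is \emph{not} convex---it is a row-wise annulus (each row norm lies between $\tau_1^{1/2}\sigma_{\min}$ and $(\tau_2/k_m)^{1/2}\sigma_{\max}$). The projection $\Pcal_{m,\tau}$ therefore does not enjoy nonexpansiveness; the paper proves instead that $\norm{\Pcal_{m,\tau}(\Xb)-\Yb}_F\le 2\norm{\Xb-\Yb}_F$ for any $\Yb\in\Kcal_m$, and it is this squared factor that produces the $4$ in $\rho_1=4\{1-p(s^\star+1)C_{\eta_A}\eta_A\}$, not Young's inequality. You will need to prove or cite such a lemma explicitly.

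Second, the paper does \emph{not} track the moving minimizer of $f(\cdot,\Bb^{(t)})$ or use an implicit-function argument. Instead it decomposes the gradient directly:
\[
\nabla_{\Amk}f(\Ab,\Bb)=\underbrace{\nabla_{\Amk}f(\Ab,\Bb)-\nabla_{\Amk}f(\Ab,\Bb^\star)}_{\text{cross term}}+\underbrace{\nabla_{\Amk}f(\Ab,\Bb^\star)-\nabla_{\Amk}f(\Ab^\star,\Bb^\star)}_{\text{strong convexity}}+\underbrace{\nabla_{\Amk}f(\Ab^\star,\Bb^\star)}_{\text{statistical error}},
\]
and similarly for $\nabla_{\Bik}f$. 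The cross term is bounded by expanding $\tBikT\tBik-\ptBikT\ptBik$ and using the sparsity structure ($\bar\Scal_i$, $\barNi$) together with the incoherence bound from $\Kcal_m$; this is more elementary than a minimizer-perturbation argument and avoids any smoothness requirement on the argmin map. Your implicit-function route would also work in principle, but the direct decomposition is cleaner and is what actually makes the constants $\Upsilon_1,\ldots,\Upsilon_9$ close.
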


\begin{remark}
From the theorem it follows that Algorithm~\ref{alg:update} converges to a local optimum that is close to the population parameters with distance of up to $2(1-\pi)^{-1}\Xi^2$ if the number of iterations $L$ exceeds $o((-\log(1-\pi)+2\log\Xi-2\log R_0)/\log\pi)$.  
\end{remark}

\subsection{Graph Recovery}\label{ssec:graph_recovery}

The final step is to combine Theorem~\ref{theorem:initialcca}, \ref{theorem:convergence}, and Lemma~\ref{lemma:initial_stepB} together and choose the appropriate parameters and the number of iterations. 
We have the following main result.
\begin{theorem}\label{theorem:main} Suppose that Assumptions~\ref{assumption:compact:L}--\ref{assumption:AB} hold and that $N$, $\eta_A$, $\eta_B$, $\eta_{B_0}$ satisfy the conditions stated in~Theorem~\ref{theorem:convergence} and Lemma~\ref{lemma:initial_stepB}. Additionally, $N$ is large enough so that $\Xi^2\leq (1-\pi)2^{-3}\Lambda^{2}(k)$ and there exists a $k_m$  such that $\opnorm{\Lscr^{m,r}}{}=o(1/\sqrt{k_m})$. 
We set $\Amk^{(0)}$ as~\eqref{eq:cca_model}, run Algorithm~\ref{alg:initializationB} with the number of iterations $-C_1\log\{(k^2+\log s^\star)/{N}\}$ and then run Algorithm~\ref{alg:update} with $(-\log(1-\pi)+2\log\Xi-2\log R_0)/\log\pi$ iterations.
Let $\hat{\Nscr}_i$ be the estimated edges based on the outcomes from Algorithm~\ref{alg:update} with the edge selection threshold $\epsilon_0=2^{-1}\Lambda(k)$ defined in~\eqref{eq:edge_thre}. Then,  we have 
$
\PP(\hat{\Nscr}_i=\pNi; 1\leq i\leq p)\geq 1- 14\delta_0.
$
\end{theorem}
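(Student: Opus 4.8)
The plan is to chain together the three main ingredients already established---Theorem~\ref{theorem:initialcca} for the quality of $\Amk^{(0)}$, Lemma~\ref{lemma:initial_stepB} for the quality of $\Bik^{(0)}$ produced by Algorithm~\ref{alg:initializationB}, and Theorem~\ref{theorem:convergence} for the linear convergence of Algorithm~\ref{alg:update} to a stationary point---and then translate the resulting parameter-estimation bound into an edge-selection guarantee via the margin condition on $\Lambda(k)$. First I would verify that the hypotheses of Theorem~\ref{theorem:convergence} are actually met by the initializer: plugging the assumed scaling $\opnorm{\Lscr^{m,r}}{}=o(1/\sqrt{k_m})$ into~\eqref{eq:errorofA} shows the truncation term is of the same order as the statistical term, so $\norm{\Amk^{(0)}-\pAmk}_F = O(\sqrt{k(k_1+k_2)/N})$; feeding this into Lemma~\ref{lemma:initial_stepB} with $L = -C_1\log\{(k^2+\log s^\star)/N\}$ iterations makes the geometric term $\pi_0^L\norm{\pBik}_F$ negligible, leaving $\norm{\Bb_i^{(0)}-\pBik}_F = O(\sqrt{k(k_1+k_2)/N}+\sqrt{\alpha^\star s^\star k^2/N})$. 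This is exactly the smallness required for Assumption~\ref{assumption:coef}, i.e. the initial tuple lands in the basin of attraction, so Theorem~\ref{theorem:convergence} applies.

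Next I would run Algorithm~\ref{alg:update} for the stated number of iterations $L' = (-\log(1-\pi)+2\log\Xi-2\log R_0)/\log\pi$, which is precisely the crossover point at which the geometric term $\pi^{L'}R_0^2$ in Theorem~\ref{theorem:convergence} drops below $\Xi^2/(1-\pi)$; hence after $L'$ steps
\[
\max_{\pseq}\norm{\Bb_i^{(L')}-\Bik^\star}_F^2 \leq \frac{2}{1-\pi}\Xi^2.
\]
Combining with the hypothesis $\Xi^2 \leq (1-\pi)2^{-3}\Lambda^2(k)$ gives $\norm{\Bb_i^{(L')}-\Bik^\star}_F \leq \Lambda(k)/2 = \epsilon_0$ for every $i$, on the intersection of the good events, whose probability is at least $1-14\delta_0$ after a union bound over the failure probabilities of Theorem~\ref{theorem:initialcca}, Lemma~\ref{lemma:initial_stepB}, and Theorem~\ref{theorem:convergence} (absorbing the $\exp(-\sum k_m)$ and $\exp(-k_m s^\star)$ terms into $\delta_0$).

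Finally I would convert this $\ell_2$-accuracy on the blocks into exact neighborhood recovery. For $j\in\pNi$, by definition $\norm{\Bb_{ij}^{k\star}}_F = \norm{\beta_{ij}^{k\star}}_{HS} \geq \Lambda(k) = 2\epsilon_0$, so by the triangle inequality $\norm{\hat\Bb_{ij}}_F \geq 2\epsilon_0 - \epsilon_0 = \epsilon_0$, meaning the edge is selected; for $j\notin\pNi$, $\Bb_{ij}^{k\star}=0$ so $\norm{\hat\Bb_{ij}}_F \leq \epsilon_0$, borderline---here I would note that Assumption~\ref{assumption:disperse} via~\eqref{eq:def_pNi2} actually gives a strict separation (non-neighbors have $\norm{\beta_{ij}^{k\star}}_{HS} = 0$, well below $\epsilon_0$, not merely $\leq \epsilon_0$), so with the block error strictly below $\epsilon_0$ on the good event the threshold at $\epsilon_0$ in~\eqref{eq:edge_thre} separates the two cases cleanly and $\hat\Nscr_i = \pNi$; the AND/OR operation is irrelevant since the bound holds for all ordered pairs simultaneously. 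The main obstacle I anticipate is the bookkeeping in the first paragraph: one must check that the parameter choices ($s = \vartheta_1 s^\star/2$, $\alpha = \vartheta_2\alpha^\star/2$, the step sizes, and the sample-size lower bounds from both Lemma~\ref{lemma:initial_stepB} and Theorem~\ref{theorem:convergence}) are mutually compatible and that the constants $C_A, C_B$ in Assumption~\ref{assumption:coef} are genuinely implied---not just plausibly implied---by the initialization bounds, which requires tracking how $\Xi$, $R_0$, and $\pi$ depend on these choices so that the final condition $\Xi^2\leq(1-\pi)2^{-3}\Lambda^2(k)$ is achievable by taking $N$ large; the edge-recovery step itself is then a short deterministic argument.
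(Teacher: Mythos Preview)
Your proposal is correct and follows essentially the same route as the paper's proof: verify Assumption~\ref{assumption:coef} via Theorem~\ref{theorem:initialcca} and Lemma~\ref{lemma:initial_stepB} (the paper's Step~1), invoke Theorem~\ref{theorem:convergence} with the stated iteration count to get $\norm{\Bb_{ij}^{(L')}-\pBijk}_F\le\tfrac12\Lambda(k)$ (the paper's Step~2), and finish with the same triangle-inequality argument to separate $j\in\pNi$ from $j\notin\pNi$; the paper multiplies the conditional probabilities $(1-10\delta_0)(1-4\delta_0)\ge 1-14\delta_0$ rather than union-bounding, but the arithmetic and the overall structure are the same.
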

\begin{remark}
The theorem states that we can recover the edge set $\pNi$  with probability at least $1- 14\delta_0$. Note that to establish graph recovery with a reasonable sample size, the truncated latent signal $\Lambda(k)$ cannot be too small. 
As the minimum sample size scales with $\Lambda^{-2}(k)$, a sufficiently large sample size is required to meet the condition $\Xi^2\leq (1-\pi)2^{-3}\Lambda^{2}(k)$.
\end{remark}

\section{Related Work}

We review relevant research on multimodal estimation and functional graphical models.

The low-rank latent space assumption has been widely explored in multimodal estimation~\citep{zhou2015group,yang2016non}.
JIVE~\citep{lock2013joint} jointly estimates shared low-rank components and individual low-rank components and has been successfully applied to biomedical data~\citep{o2016r}. 
Our work differs in the way that we find the low-rank model of the original data generation process, while previous work finds a low-dimensional representation of the data, providing a better interpretation of the underlying mechanism. 
Multimodal integration framework has been widely used in joint prediction tasks. \citet{li2018integrative} constructed linear additive low-rank predictors for joint multitask regression. \citet{li2021integrative}~proposed a statistical inference procedure to select significant modalities for integrative linear regression models. Another methodology is to stack multimodal data in a tensor and perform low-rank tensor regression~\citep{zhou2013tensor}. The predictive power of various models was improved by using rich multimodal data in applications, including clinical diagnoses~\citep{wolfers2015estimating} and biomarker detection~\citep{mimitou2021scalable}. 

There is increasing interest in functional graphical models. \citet{qiao2019functional} established the penalized maximum log-likelihood framework for Gaussian functional data. \citet{qiao2020doubly} successively extended it to the discrete sampling setting and further considered estimation of time-varying graphs. \citet{Zapata2019functional} developed a separability condition for the multivariate covariance operator and applied it to learning functional graphical models. 
\citet{zhao2021high} developed a function-on-function regression model, as the functional data version of the neighborhood regression method. While the aforementioned work focuses on the Gaussian distribution, another active line of work develops functional graphical model estimators under the non-Gaussian setting. \citet{li2018nonparametric} and~\citet{lee2021nonparametric} used additive conditional independence~\citep{li2014additive} and proposed nonparametric estimators. 
\citet{moysidis2021joint} studied the joint estimation of multiple functional Gaussian graphical models by building a hierarchical structure on inverse covariance. Although this procedure is most closely related to our work, there are considerable differences in both methodology and theory. \citet{moysidis2021joint} extended the approach of \citet{guo2011joint} to the functional data setting. They imposed a hierarchical structure on the inverse covariance, 
and jointly estimated multiple graphs. Our method, in contrast, takes a generative perspective and studies a ``single'' latent graph shared over multiple views. 


\section{Simulations}

The simulations focus on comparisons of sparse precision matrices, where we test four types of sparse graphs. First, we compare our models with three existing methods by computing the Area under the ROC Curve (AUC). Second, we compute the convergence distance with respect to the sample size, verifying the results from Theorem~\ref{theorem:convergence}. Details of data generation processes are presented in Appendix~\ref{ssec:dgp}, where we synthesize four types of different graph structures and two noise models.

\subsection{Comparison}
\begin{figure}[t!]
    \centering    \includegraphics[width=.9\textwidth]{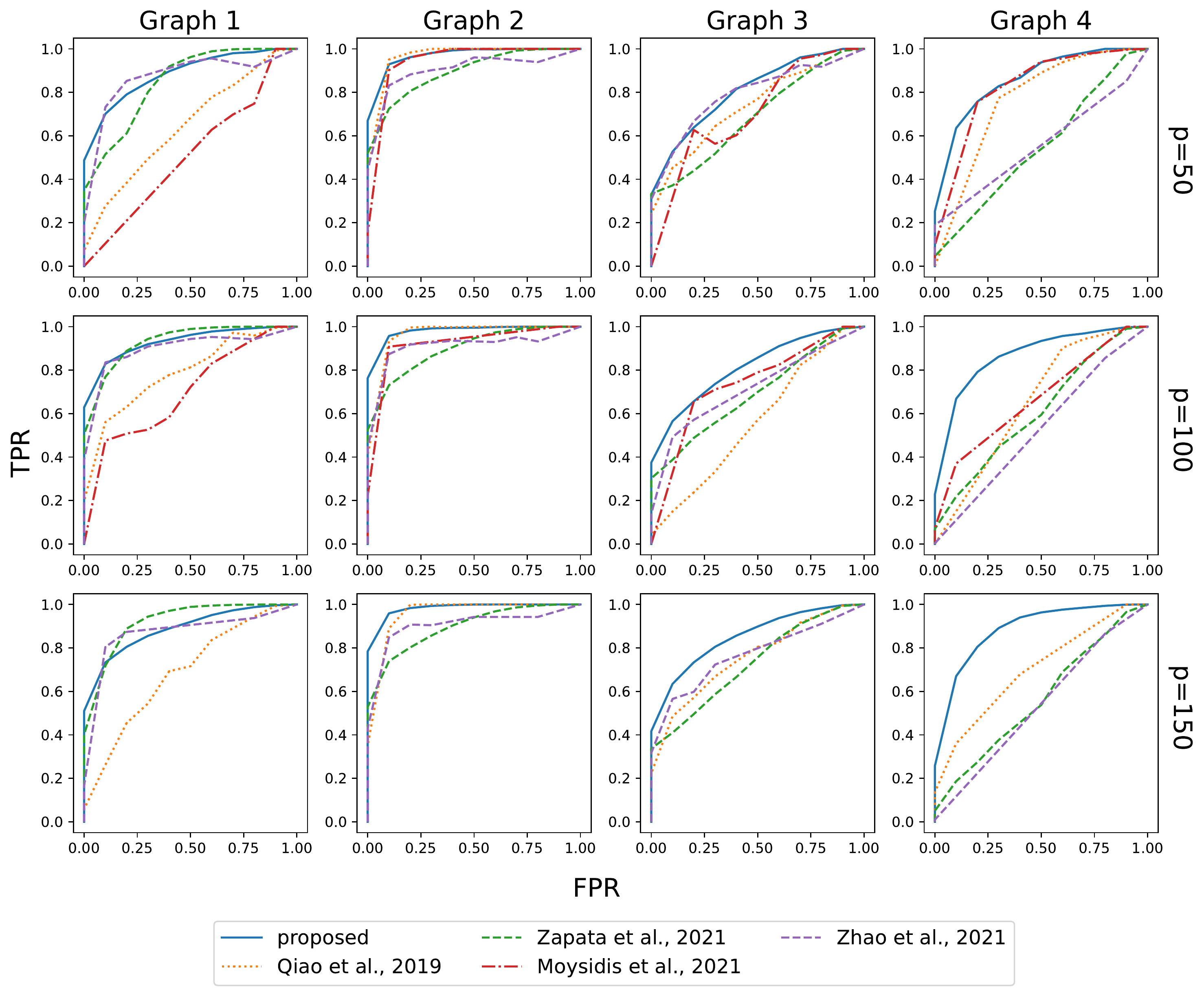}
    \caption{The ROC curves of Graph 1–4. The additive noise is generated from noise
model 2. The AUC is discussed in Table~\ref{tab:auc}. The proposed method has consistent performance across four graphs and $p=\{50,100,150\}$. }
    \label{fig:noise_model3}
\end{figure}
\begin{figure}[h!]
    \centering
\includegraphics[width=.9\textwidth]{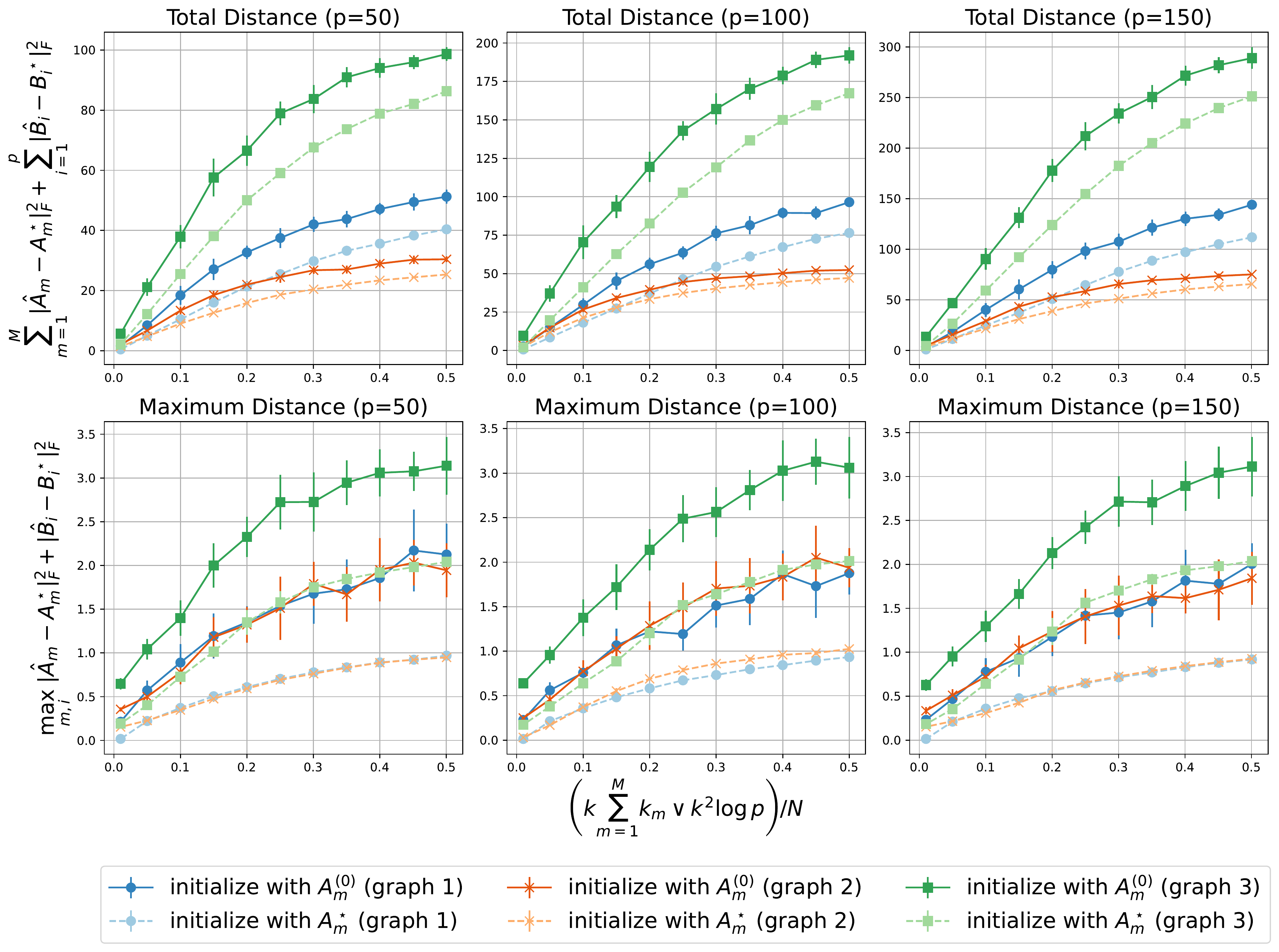}
    \caption{Distance v.s. statistical error. The solid lines denote the proposed algorithm. The dashed lines denote the case where $\Amk$ is first initiated with true $\pAmk$ for $\mseq$ and then refined by Algorithm~\ref{alg:update}--\ref{alg:initializationB} subsequently.}
    \label{fig:distance_sample}
\end{figure}
We compare the model with 3 other methods: FGGM~\citep{qiao2019functional}, PSFGGM~\citep{Zapata2019functional}, FPCA~\citep{zhao2021high}, and JFGGM~\citep{moysidis2021joint}. Since those candidates were not originally designed to learn latent graphs, we apply them to learn the graph of each modality separately and then take the intersection of the graphs as the surrogate of the latent graph. Let $\hat E^{1},\hat E^{2}$ be the edge set estimated by one of the algorithms, then we construct the latent graph as $
\hat E^{z}=\{ (i,j): (i,j)\in\hat E^{1}, (i,j)\in\hat E^{2}\}
$. 
We fix $k$, $k_m$ to be the true values and set $N=100$, $\Tcal=[0,1]$ and $p=\{50,100,150\}$. Since JFGGM requires computing an eigen-decomposition at each iterate, which is computationally expensive when $p$ is large, we omit testing JFGGM under the setting of $p=150$.

To evaluate the performance of competing methods, we vary the sparsity parameters and plot the Receiver Operating Characteristic (ROC) curves, where the x-axis denotes the
False Positive Rate (FPR) of the estimated edges and the y-axis denotes the True 
Positive Rate (TPR) of the estimated edges. Under the ``AND'' operation of edge selection and $i\neq j$, the TPR and FPR are defined respectively as:
\begin{align*}
\text{TPR}=\frac{|\{(i,j):(j,i),(i,j)\in E^{z}\cap\hat E^{z}\}|}{|\{(i,j):(i,j)\in E^{z}\}|},
\quad
\text{FPR}=
    \frac{|\{(i,j):(i,j),(j,i)\in \hat E^{z}\backslash E^{z}\}|}{|\{(i,j):(i,j)\not\in E^{z}\}|}.
\end{align*}
To plot the ROC curve, we vary the sparsity level $s$ for the proposed method and vary the sparse regularization coefficient $\lambda$ for the competing methods. Each point in the ROC plot is $(\text{FPR}(s), \text{TPR}(s))$ or $(\text{FPR}(\lambda), \text{TPR}(\lambda))$ for $1\leq s\leq p$ and $\lambda$ is chosen from a grid of values $[0,\lambda_{\max}]$, where $\lambda_{\max}=\max_{i\neq j}N^{-1}\norm{\Ykmnfree{i}(\Ykmnfree{j})^\top}_F$ is sufficient as discussed in Proposition~1 in~\citet{zhao2021high}. Furthermore, we test the proposed method with step size $\eta_A$, $\eta_B$, and $\eta_{B_0}$ chosen from $\{0.1,0.01,0.001,0.0001\}$ and select the combination of step size $(\eta_A, \eta_B, \eta_{B_0})$ that has the largest AUC. We run the simulation for $10$ runs and take the average of the results. The simulation result is plotted in Figure~\ref{fig:noise_model3} and the corresponding AUC is presented in Table~\ref{tab:auc} in Appendix~\ref{ssec:roc:noise1}. The proposed method achieves the best or comparable performance for Graph 1--4. \textcolor{black}{Appendix~\ref{sec:addsimulation} includes more simulation results along with discussion on the comparing methods.}

\subsection{Distance v.s. sample size}\label{ssec:dis_vs_sample}

Empirically, we demonstrate that the metric $\max_{\mseq}\norm{\Amk-\pAmk}_F^2+\max_{\pseq}\norm{\Bik-\pBik}_F^2$ at convergence scales almost linearly with respect to $(k\sum_{m=1}^Mk_m\vee k^2\log p)/N$, which matches the result of Theorem~\ref{theorem:convergence}.\footnote{We assume that $s^\star$ is a constant here.} We test the graph model 1--3 with the magnitude of the off-diagonal entries further scaled down by half. For additive noise, we use noise model 1. For each simulation, we fix $k$, $p$, and $k_m$ for $\mseq$ and vary $N$.
We take the average of the results for $20$ runs of simulations. From Figure~\ref{fig:distance_sample}, we see that when $(k\sum_{m=1}^Mk_{m}\vee k^2\log p)/N$ is smaller than $0.45$,  $\sum_{m=1}^M\norm{\Amk-\pAmk}_F^2+\sum_{i=1}^p\norm{\Bik-\pBik}_F^2$ and $\max_{\mseq}\norm{\Amk-\pAmk}_F^2+\max_{\pseq}\norm{\Bik-\pBik}_F^2$ scale almost linearly with $(k\sum_{m=1}^Mk_m\vee k^2\log p)/N$ 
for most tasks. Note that the solid lines converge to the dashed lines as the x-axis approaches zero. This implies that the CCA initialization~\eqref{eq:cca_model} is a consistent estimator, matching the result of Theorem~\ref{theorem:initialcca}.

\section{Brain Network Estimation with EEG-fMRI}\label{sec:experiment:brain}

We apply our proposed model to estimate the connectivity patterns of brain networks using concurrent measurements of EEG-fMRI~\citep{morillon2010neurophysiological,sadaghiani2010intrinsic}. The dataset includes $23$ test subjects in each of the two sessions. {Both EEG and fMRI are first source-localized~\citep{wirsich2020concurrent} and then parcellated according to the Desikan-Killiany cortical atlas~\citep{desikan2006automated}, which has $68$ parcels}. Each recorded session is $600$ seconds. In the first $300$ seconds, all subjects are synced to watch a short movie clip; in the following $300$ seconds, all the subjects are resting. It is believed that brain networks display similar patterns across subjects during movie-watching as compared to the resting-state~\citep{vanderwal2019movies}. Hence, we only use the first $300$ seconds for estimation and testing. Our goal is to predict the brain
networks of session 2 using the data from session 1. 
Specifically, we first learn the precision matrices using the data from session 1. Then, we apply the results to predict the brain network of session 2, where the log-likelihood is used for evaluation. The preprocessing pipeline is presented in Appendix~\ref{ssec:pp}.

Since it is difficult to directly evaluate the prediction error of the latent graph, we design a surrogate task to demonstrate the effectiveness
of the proposed model. We used the estimated edge set $\hat E^{z}$ as auxiliary information to learn the graph of EEG and fMRI.
After obtaining the estimated edge set $\hat{E}_z$ through the proposed algorithm, we use it to estimate the precision matrices of fMRI and EEG. Let $m=1$ denote the index for the fMRI modality and $m=2$ denote the index for the EEG modality. Given an edge set $E$, we denote $\Scal^+_E(k_m)$ be the set of $pk_m\times pk_m$ positive definite matrices with support associated to $E$: $\Scal^+_E(k_m):=\{{\bm\Omega}\in \Scal^+: \omega_{i,j}=0,\;(i,j)\not\in E\}$, where $\omega_{i,j}\in\RR^{k_m\times k_m}$ is the $(i,j)$-th submatrix of ${\bm\Omega}$. Let $\lambda\geq0$. We estimate the graph by solving the problem:
\begin{equation}\label{eq:graph_learning}
\hat{\bm\Omega}^{m} = \argmin_{{\bm\Omega}^{m}\in\Scal^+_E(k_m)} \tr(\hat\Sigmab^{m}{\bm\Omega}^{m})-\log\det({\bm\Omega}^{m})+\lambda\sum_{i\neq j,(i,j)\in E}\norm{\omega_{i,j}^m}_{1,1}.
\end{equation}
When $\lambda=0$, this is a standard procedure to estimate the graph after obtaining an edge set through neighborhood regression~\citep{ma2016joint}. When $\lambda>0$, this means that the submatrix $\omega_{i,j}$ is sparse, following a similarly structured assumption of the second condition of $\Kcal_B(s,\alpha)$. We then use the skggm package~\citep{laska_narayan_2017_830033} to optimize~\eqref{eq:graph_learning}. The parameter $\lambda$ is selected using 5-fold cross-validation with the BIC metric: $-2\hat L_N+s\log N$, where $\hat L_N$ is the log-likelihood and $s$ is the number of parameters. Let $\hat E_{1}$, $\hat E^{2}$ be the edge sets obtained by running the regression type algorithm, as detailed in Appendix~\ref{ssec:details_nr}, separately for fMRI and EEG data. We validate the effectiveness of $\hat E^{z}$ by testing three types of edge sets used in~\eqref{eq:graph_learning}: individual edge set, latent edge set, and fused edge set. The individual edge set is $E=\hat E^{1}$ for the fMRI modality or $E=\hat E^{2}$ for the EEG modality; the latent edge set is $E=\hat E^{z}$ for both modalities;  and the fused edge set is $E=\hat E^{1}\cup \hat E^{z}$ for the fMRI modality or $E=\hat E^{2}\cup \hat E^{z}$ for the EEG modality. Then, for the edge candidate and each data modality, we refit~\eqref{eq:graph_learning} to estimate the precision matrix. 

To address the small sample size, we employ sampling with replacement and repeat the experiments for 5 runs. The result is shown in Table~\ref{tab:experiment_result}, where we use the log-likelihood as the prediction score. The result indicates that, with the auxiliary edge information, the fused edge set has maximum in-sample log-likelihood and out-of-sample log-likelihood for both data modalities. Note that, with the latent edge set alone, we might neglect the individual graph structure for each modality, and hence the result is suboptimal. 

 
 \begin{table}[t!]
    \spacingset{1}
    \centering
    \resizebox{\textwidth}{!}{%
    \begin{tabular}{llll}
    \toprule
    \midrule
    Data modality & Edge candidate & In-sample Log-likelihood & Out-of-sample Log-likelihood \\
    \midrule
         \multirow{2}{*}{\shortstack[l]{Modality 1\\ (fMRI)} }      & Individual edge set 
     & $-118411.88(2543.47)$
     & $
 -124217.55 (315.57)
     $
     \\
        & Latent edge set  &
        $-115916.02(3515.06)$&
        $-122497.21(377.08)$\\
      &  Fused edge set &
      $-104131.66(1928.12)$
      &$-117812.87(248.49)$\\
      \midrule
    \multirow{3}{*}{\shortstack[l]{Modality 2\\ (EEG)} } & Individual edge set & $-22417.26 (1189.99)$
    &$ -34798.19(401.08)$\\
     & Latent edge set&
     $-28523.50(2059.87)$&
     $-36820.28(321.84)$
     \\
        & Fused edge set&
        $-21023.02(701.56)$
        &
        $-32842.76(447.48)$
        \\
    \midrule
    \bottomrule
    \end{tabular}%
    }
    \caption{The in-sample and out-of-sample log-likelihood with $3$ different edge set candidates. The result is the average over $5$ runs and the value in each parenthesis denotes the standard deviation.}
    \label{tab:experiment_result}
\end{table}

\section{Discussion}

We have developed a new procedure for integrating multimodal functional data to estimate the underlying latent graph, providing a new statistical solution to answer scientific questions. 
\textcolor{black}{The functions are assumed to be continuous whereas we might only have access to the discrete samples in practice. We included a discussion to address this issue in Appendix~\ref{appendix:treatment}. }
We studied the setting where data across modalities share the same set of nodes but have heterogeneous temporal characteristics. 
In practice, many applications might have a different number of nodes across modalities. For example, the spatial resolution of the fMRI data is much higher than that of the EEG data. Therefore, fMRI is known to capture richer spatial information. On the contrary, the temporal resolution of the EEG data is much higher than that of the fMRI data, offering more expressive temporal information. It is left for future work to develop new statistical models and estimation procedures that accommodate more complex spatial and temporal discrepancies.



\section{Acknowledgement}
We thank Thomas Alderson and Sepideh Sadaghiani for their help with data pre-processing and insightful discussion. 
We thank Benjamin Morillon, Katia Lehongre, and  Anne-Lise Giraud for their generosity of sharing the original EEG-fMRI raw data. 
We thank Jonathan Wirsich for providing source-localized pre-processed EEG-fMRI data.  
We thank Percy Zhai for sharing the code of the implementation of the functional Gaussian graphical model~\citep{qiao2019functional} and functional neighborhood regression~\citep{zhao2021high}. The research project is funded by the National Science Foundation (NSF) Graduate Research Fellowships Program,  NSF 2046795, 1909577, 1934986, 2216912, and NIFA award 2020-67021-32799.


\putbib[bu1]
\end{bibunit}

\appendix
\pagebreak
\begin{bibunit}[my-plainnat]
\section{Conditional Covariance Operator and Partial Covariance Operator}

This section shows the analysis of Section~\ref{ssec:latent_process}. Appendix~\ref{ssec:A:notation} introduces the notation. The main result is presented in in Appendix~\ref{ssec:main_results}. We start by discussing the univariate case and the extension to multivariate case follows similarly. Appendix~\ref{ssec:theorem:proof:partialcov} shows the proof of Theorem~\ref{theorem:partialcovariance_multi}; Appendix~\ref{ssec:proof:corollary:pcm} shows the proof of Corollary~\ref{corollary:partialcovariance_multi}; Appendix~\ref{ssec:A:auxiliary} presents the proofs of auxiliary lemmas.
\subsection{Notation}\label{ssec:A:notation}
Let $\HH$  be a separable Hilbert space and $\Bscr(\cdot)$  be the Borel set. Define $L^2(\Omega,\Fcal,\mu)$ as the space of square-integrable random element endowed with inner product $\EE[\dotp{\cdot}{\cdot}_\HH]$. Let $\mathfrak{B}(\HH,\HH)$ be the space of bounded linear operator from $\HH$ to $\HH$. Since $\HH$ is a complete normed space, it follows that $\mathfrak{B}(\HH,\HH)$ is a Banach space. Given a linear operator $\Tscr$, we say that $\Tscr$ is in the equivalence class of a zero operator if $\opnorm{\Tscr}{}=0$.
\subsection{Main Results}\label{ssec:main_results}
In this section, we connect the relation of the partial covariance operator, defined later in~\eqref{eq:partialoperator} with the covariance operator defined in~\citet{fukumizu2007kernel}. We first start with the simple case that discusses conditioning on the single random element. Consider $\Zcal_{1}$, $\Zcal_2$ be mean-squared integrable random elements in a separable Hilbert space. 
Let $\Kscr_{12}$ be the covariance operator for $\Zcal_{1}$ $\Zcal_{2}$ and $\Kscr_1$, $\Kscr_1$ be the variance operator of $\Zcal_{1}$ and $\Zcal_{2}$ respectively. It is known~\citep{baker1973joint} that there exists a unique bounded linear operator $\Vscr_{12}$ with norm $\opnorm{\Vscr_{12}}{}\leq 1$ such that $\Kscr_{12}=\Kscr_1^{1/2}\Vscr_{12}\Kscr_2^{1/2}$, $\Image(\Vscr_{12})\subset\overline{\Image(\Kscr_{1})}$ and $\ker(\Vscr_{12})^\perp\subset\overline{\Image(\Kscr_{2})}$. Then the conditional covariance operator is defined as 
 \begin{align}\label{eq:cond_op}
      \Vscr_{12\mid 3}=\Vscr_{12} - \Vscr_{13}\Vscr_{32},
 \end{align}
 which could be applied to measure conditional inference. 
\citet{fukumizu2004dimensionality} shows that $\Vscr_{12\mid 3}=0$ if and only if $\Zcal_{1}\indep\Zcal_{2}\mid\Zcal_{3}$ when $\HH$ is a RKHS. We will show that this property also holds under the Gaussian assumption in any separable Hilbert space. 
 Before establishing the theoretical results, we introduce the definition of the Gaussian random element. First, we introduce the definition of Gaussian linear space.
 \begin{definition}[Gaussian linear space, adopted from Definition~1.2 in \citet{janson1997gaussian}]
 A Gaussian linear space is a real linear space of random
variables, defined on  $(\Omega,\Fcal,\mu)$, such that each variable
in the space is centered Gaussian.
 \end{definition}
 Furthermore, by Theorem~1.5 in~\citet{janson1997gaussian}, we know that any set of random variables in linear Gaussian space is jointly normal.
 Next, we introduce the definition of Gaussian random element.
 \begin{definition}
 Let $\Zcal\in L^2(\Omega,\Fcal,\mu)$ be a centered random element taking values in $(\HH,\Bscr(\HH))$, we say $\Zcal$ is a Gaussian random element if  $\{\dotp{\Zcal}{f}_{\HH}:f\in\HH\}$ forms a Gaussian linear space.
 \end{definition}
 
Consider three centered Gaussian random elements $\Zcal_{i}\in L^2(\Omega,\Fcal,\mu)$ for $i=1,2,3$ taking values in $\HH$. Our goal is to measure the conditional independence of $\Zcal_{1}$ and $\Zcal_{2}$ given the random element $\Zcal_{3}$.
Let $L^2(\Omega,\sigma(\Zcal_3),\mu)$ be a closed linear subspace of $L^2(\Omega,\Fcal,\mu)$, where $\sigma(\Zcal_3)$ denote the smallest $\sigma$-algebra generated by $\Zcal_3$.
Let $\tilde\beta_{i3}\in\mathfrak{B}(\HH,\HH)$ be a bounded linear operator, where $\mathfrak{B}(\HH,\HH)$ is a Banach space. Since $\tilde\beta_{i3}$ is continuous, it is measurable and hence 
$\tilde\beta_{i3}\Zcal_3\in L^2(\Omega,\sigma(\Zcal_3),\mu)$. Additionally, $\Zcal_i-\tilde\beta_{i3}\Zcal_3$ is the residual of $\Zcal_{i}$ after regressing on $\Zcal_{3}$ for $i=1,2$. The regression procedure is defined as:
\begin{equation}\label{eq:definition_reg}
\tilde\beta_{i3}=\argmin_{\substack{
\beta\in\mathfrak{B}(\HH,\HH)
}}\EE\sbr{\norm{\Zcal_i-\beta\Zcal_3}^2_{\HH}},\quad i=1,2.
\end{equation}
 Since $\Zcal_i$ for $i=1,2,3$ are centered Gaussian random elements, the mean element of $\Zcal_i-\tilde\beta_{i3}\Zcal_3$ is a zero element of $\HH$ for $i=1,2$. The partial cross covariance operator $ \Kscr_{12\sbullet3}$ is defined as
 \begin{equation}\label{eq:partialoperator}
 \dotp{\Kscr_{12\sbullet3}f}{g} = \EE\sbr{
 \dotp{\Zcal_1-\tilde\beta_{13}\Zcal_3}{g}_\HH\dotp{\Zcal_2-\tilde\beta_{23}\Zcal_3}{f}_\HH}, \quad f,g\in\HH,
 \end{equation}
 which are bounded linear operators, as shown in Lemma~\ref{lemma:bdd_partial}. We define the partial cross covariance operator $\Kscr_{21\sbullet3}$ as the adjoint operator of $\Kscr_{12\sbullet3}$.
 We establish the following properties.

 \begin{theorem}\label{theorem:partialcov_formal}Let $\Zcal_i\in L^2(\Omega,\Fcal,\mu)$ for $i=1,2,3$ be centered Gaussian random elements taking values in $(\HH,\Bscr(\HH))$. $\Kscr_{12\sbullet3}$ and $\Kscr_{21\sbullet3}$ are defined in~\eqref{eq:partialoperator}. The following properties hold.
 \begin{enumerate}
     \item $ \Kscr_{12\sbullet3}$ and $\Kscr_{21\sbullet3}$ are in the equivalence class of the zero operator  if and only if $\Vscr_{12\mid 3}$ and $\Vscr_{21\mid 3}$ are in the equivalence class of the zero operator.
     \item $\Zcal_1$ and $\Zcal_2$ are conditionally independent given  $\Zcal_{3}$,  denoted as $\Zcal_{1}\indep\Zcal_{2}\mid\Zcal_{3}$,  if and only if $\Vscr_{12\mid 3}$ and $\Vscr_{21\mid 3}$ are in the equivalence class of zero operator.
 \end{enumerate}
 \end{theorem}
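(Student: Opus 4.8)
The plan is to prove the two properties of Theorem~\ref{theorem:partialcov_formal} in sequence, first establishing the algebraic link between the partial cross covariance operator $\Kscr_{12\sbullet3}$ and the conditional covariance operator $\Vscr_{12\mid 3}$, then using the Gaussianity to convert the vanishing of $\Vscr_{12\mid3}$ into conditional independence. For Part~1, I would first identify the regression operator $\tilde\beta_{i3}$ from~\eqref{eq:definition_reg}. The first-order optimality condition for the least-squares problem gives $\EE[(\Zcal_i - \tilde\beta_{i3}\Zcal_3)\otimes \Zcal_3] = 0$, i.e. $\Kscr_{i3} = \tilde\beta_{i3}\Kscr_3$, so on $\overline{\Image(\Kscr_3)}$ we have $\tilde\beta_{i3} = \Kscr_{i3}\Kscr_3^{-1}$ (in the appropriate generalized-inverse sense), which by the Baker factorization $\Kscr_{i3} = \Kscr_i^{1/2}\Vscr_{i3}\Kscr_3^{1/2}$ can be rewritten as $\tilde\beta_{i3} = \Kscr_i^{1/2}\Vscr_{i3}\Kscr_3^{-1/2}$. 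Expanding the definition~\eqref{eq:partialoperator} of $\Kscr_{12\sbullet3}$ and plugging in these expressions, I expect to obtain
\[
\Kscr_{12\sbullet3} = \Kscr_1^{1/2}\bigl(\Vscr_{12} - \Vscr_{13}\Vscr_{32}\bigr)\Kscr_2^{1/2} = \Kscr_1^{1/2}\,\Vscr_{12\mid3}\,\Kscr_2^{1/2},
\]
using the defining identity~\eqref{eq:cond_op}. From this representation, the "only if" direction of Part~1 is immediate once we recall the range conditions on $\Vscr_{12}$ (namely $\Image(\Vscr_{12})\subset\overline{\Image(\Kscr_1)}$ and $\ker(\Vscr_{12})^\perp\subset\overline{\Image(\Kscr_2)}$), which force $\Vscr_{12\mid3}$ to be supported where $\Kscr_1^{1/2}$ and $\Kscr_2^{1/2}$ are injective, so that $\Kscr_1^{1/2}\Vscr_{12\mid3}\Kscr_2^{1/2}=0$ implies $\Vscr_{12\mid3}=0$; the "if" direction is trivial. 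The adjoint statement for $\Kscr_{21\sbullet3}$ and $\Vscr_{21\mid3}$ follows by taking adjoints throughout.

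For Part~2, I would work in the Gaussian linear span. Since $\Zcal_1,\Zcal_2,\Zcal_3$ are jointly Gaussian (their coordinate functionals generate a Gaussian linear space, hence are jointly normal by Theorem~1.5 of~\citet{janson1997gaussian}), the residuals $\Wcal_i := \Zcal_i - \tilde\beta_{i3}\Zcal_3$, $i=1,2$, are themselves centered Gaussian random elements, and crucially $\dotp{\Wcal_i}{f}_\HH$ is uncorrelated with $\dotp{\Zcal_3}{g}_\HH$ for all $f,g$ — this is exactly the optimality condition $\Kscr_{i3} = \tilde\beta_{i3}\Kscr_3$ again. In a jointly Gaussian system, uncorrelated implies independent, so $\Wcal_i$ is independent of $\Zcal_3$, and hence $\Wcal_i$ is a version of $\Zcal_i$ conditioned on $\Zcal_3$ in the sense that $\Zcal_i = \tilde\beta_{i3}\Zcal_3 + \Wcal_i$ with the second summand independent of the first. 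Then $\Zcal_1\indep\Zcal_2\mid\Zcal_3$ holds if and only if $\Wcal_1\indep\Wcal_2$; and since $(\Wcal_1,\Wcal_2)$ is jointly Gaussian, this in turn holds if and only if they are uncorrelated, i.e. if and only if the cross covariance operator of $\Wcal_1,\Wcal_2$ — which is precisely $\Kscr_{12\sbullet3}$ by definition~\eqref{eq:partialoperator} — vanishes. Combining with Part~1 gives the equivalence with $\Vscr_{12\mid3}$ being zero.

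\textbf{Main obstacle.} I expect the delicate point to be the manipulation involving $\Kscr_3^{-1/2}$ and $\tilde\beta_{i3}$, since in infinite dimensions $\Kscr_3$ need not have a bounded inverse and $\tilde\beta_{i3}$ as defined is merely the minimizer over \emph{bounded} operators — one must be careful that the argmin in~\eqref{eq:definition_reg} is attained (or that the residual cross covariance is well-defined even if it is only approached), and that the formal identity $\Kscr_{12\sbullet3} = \Kscr_1^{1/2}\Vscr_{12\mid3}\Kscr_2^{1/2}$ is justified without inverting $\Kscr_3$ directly. The clean way around this is to avoid ever writing $\Kscr_3^{-1/2}$ in isolation: work only with the products $\tilde\beta_{i3}\Kscr_3^{1/2} = \Kscr_i^{1/2}\Vscr_{i3}$, which are genuinely bounded, and verify the factorization identity by testing against arbitrary $f,g\in\HH$ using the Baker factorizations of $\Kscr_{13}$, $\Kscr_{23}$, and $\Kscr_{12}$ together with the partial-isometry-type range/kernel constraints. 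The auxiliary boundedness of $\Kscr_{12\sbullet3}$ itself is deferred to Lemma~\ref{lemma:bdd_partial}, so I may invoke it freely. The jointly-Gaussian-implies-independent step in Part~2 is standard but should be stated carefully at the level of the generated $\sigma$-algebras, since independence of the random \emph{elements} (not just finite collections of coordinates) requires that uncorrelatedness holds for all pairs of functionals simultaneously, which it does here.
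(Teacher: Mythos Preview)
Your proposal is correct and follows essentially the same route as the paper: for Part~1 you establish the identity $\Kscr_{12\sbullet3}=\Kscr_1^{1/2}\Vscr_{12\mid3}\Kscr_2^{1/2}$ using the normal equation $\tilde\beta_{i3}\Kscr_3=\Kscr_{i3}$ and the Baker range/kernel constraints (the paper packages this as Lemma~\ref{lemma:equivalence_of_partial_and_condition}, handling the $\Kscr_3^{-1/2}$ issue via the projection $\Pscr_3=\Kscr_3^{1/2}(\Kscr_3^{1/2})^\dagger$ rather than your bounded-product formulation, but the content is the same); for Part~2 your residual-independence argument is a mild repackaging of the paper's iterated-expectation argument, both resting on the fact that $\tilde\beta_{i3}\Zcal_3=\EE[\Zcal_i\mid\Zcal_3]$ under Gaussianity (Lemma~\ref{lemma:linear_conditional}).
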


To evaluate the conditional independence of two Gaussian random elements in the Hilbert space, one could check the partial covariance operator, realized by computing~\eqref{eq:definition_reg}.
Theorem~\ref{theorem:partialcov_formal} shows the conditional independence properties when $\Zcal_1$, $\Zcal_2$ and $\Zcal_3$ are random elements.
Oftentimes, we encounter the situation when $\Zcal_3$ is not a single element, but a set of elements. For instance, this is the case for estimating Gaussian graphical models. Let $\Zcal=(\Zcal_{1},\ldots,\Zcal_{p})^\top\in\HH^p$ be centered Gaussian random vector with $\Zcal_{i}\in L^2(\Omega,\Fcal, \mu)$ for $i=1,\ldots,p$.  We could easily generalize Theorem~\ref{theorem:partialcov_formal} to a $p$-dimensional centered random vector  by studying the linear subspace $L^2(\Omega, \sigma(\Zcal_{i}:i=1,\ldots,p),\mu)$. The result is stated in Theorem~\ref{theorem:partialcovariance_multi}.

\subsection{Proof of Theorem~\ref{theorem:partialcov_formal}}\label{ssec:theorem:proof:partialcov}

It suffices to verify the properties of $\Kscr_{12,\sbullet 3}$, and similar properties hold for $\Kscr_{21,\sbullet 3}$. We begin with showing the first statement. Notice that since $\Image(\Vscr_{ij})\subset\overline{\Image(\Kscr_{i})}$ and $\ker(\Vscr_{ij})^\perp\subset\overline{\Image(\Kscr_{j})}$ for $i,j\in\{1,2,3\}$,  $\Vscr_{12\mid 3}=0$ if and only if $\Kscr_1^{1/2}\Vscr_{12\mid 3}\Kscr_2^{1/2}=0$. Therefore, from the result of Lemma~\ref{lemma:equivalence_of_partial_and_condition}, we could establish the following equivalence: for all $f,g\in\HH$, 
\[
 \Vscr_{12\mid 3}(f,g)=0, \; 
 \Vscr_{21\mid 3}(f,g)=0
  \Longleftrightarrow
 \Kscr_{12\sbullet3}(f,g)=0,\;
 \Kscr_{21\sbullet3}(f,g)=0,
 \]
 which proves the first statement. 
 It remains to show that, under the Gaussian setting, the following equivalence holds
 \[
 \Zcal_{1}\indep\Zcal_{2}\mid\Zcal_{3}\Longleftrightarrow
 \Kscr_{12\sbullet3},
 \Kscr_{21\sbullet3} \text{ are in the equivalent class of the zero operator}.
 \]
 
 Note that under the Gaussian assumption, by Lemma~\ref{lemma:linear_conditional}, the conditional estimator is the linear estimator. Then, 
 one could look at residual element $\Zcal_{i\sbullet 3}$ as the residual as $\Zcal_{i\sbullet 3}=\Zcal_{i}-\EE[\Zcal_{i}\mid\Zcal_{3}]$ for $i=1,2$. In addition, since $\Zcal_{i}$ for $i=1,2,3$ are zero mean, the mean elements $\mathfrak{m}_{i\sbullet 3}$ for $i=1,2$ are zero elements. Consequently, for any $f,g\in\HH$, we could write 
\begin{align}
\dotp{f}{\Kscr_{12\sbullet 3} g}_\HH
&=
    \EE\sbr{
        \dotp{\Zcal_{1}-\EE[\Zcal_{1}\mid\Zcal_{3}]}{f}_{\HH}
        \dotp{\Zcal_{2}-\EE[\Zcal_{2}\mid\Zcal_{3}]}{g}_{\HH}
    }\notag\\
&=
\EE_{\Zcal_{3}}\big[\underbrace{\EE_{\Zcal_{1},\Zcal_{2}\mid\Zcal_{3}}\sbr{
\dotp{\Zcal_{1}-\EE[\Zcal_{1}\mid\Zcal_{3}]}{f}_{\HH}
        \dotp{\Zcal_{2}-\EE[\Zcal_{2}\mid\Zcal_{3}]}{g}_{\HH}
\mid\Zcal_{3}}
}_{=:\Tscr(f,g)}\big].\label{eq:conditional_covariance}
\end{align}
By definition, if $\Zcal_{1}\indep\Zcal_{2}\mid\Zcal_{3}$, for any $f,g\in\HH$, we have $\Tscr(f,g)=0$, which implies that $\Kscr_{12\sbullet 3}=0$. Conversely, under the Gaussian setting, the conditional covariance does not depend on the value of $\Zcal_{3}$. Therefore, we can ignore the outer expectation of~\eqref{eq:conditional_covariance} and see that $\Kscr_{12\sbullet 3}=0$ implies $\Zcal_{1}\indep\Zcal_{2}\mid\Zcal_{3}$.

\subsection{Proof of Theorem~\ref{theorem:partialcovariance_multi}}

The proof step follows similarly to the proof step of the second statement of Theorem~\ref{theorem:partialcov_formal}. Noting that, we could generalize the argument of Lemma~\ref{lemma:congruence:l2andlinear} to a set of random element $\{\Zcal_1,\ldots,\Zcal_d\}$ and show that the space
\begin{multline*}
    \cbr{\sum_{j'\in\{1,\ldots,p\}\backslash\{i,j\}}\beta_{j'}\Zcal_{j'}:\beta_{j'}\in\mathfrak{B}(\HH,\HH), j'\in\{1,\ldots,p\}\backslash\{i,j\}} \\
    \subseteq
L^2(\Omega,\sigma(\Zcal_{j'};j'\in\{1,\ldots,p\}\backslash\{i,j\}),\mu),
\end{multline*}
is linear and closed, following a similar proof step of Lemma~\ref{lemma:congruence:l2andlinear}. By the orthogonality principle of Hilbert space (see~\citet[Theorem~2.6]{alma99954915694905899} for an exposition), we have 
$$
\cbr{\Zcal_i-\sum_{j'\in\{1,\ldots,p\}\backslash\{i,j\}}\tilde\beta_{ij'}\Zcal_{j'}}
\perp\Zcal_{j'},\quad j\in\{1,\ldots,p\}\backslash\{i,j\}.$$  Then, under the jointly Gaussian assumption, we could write 
\begin{align*}
    &\Zcal_i-\sum_{j'\in\{1,\ldots,p\}\backslash\{i,j\}}\tilde\beta_{ij'}\Zcal_{j'}=\Zcal_i-\EE[\Zcal_i\mid\Zcal_{-(i,j)}];\\
    &\Zcal_j-\sum_{j'\in\{1,\ldots,p\}\backslash\{i,j\}}\tilde\beta_{jj'}\Zcal_{j'}=\Zcal_i-\EE[\Zcal_j\mid\Zcal_{-(i,j)}],
\end{align*}
 following similar argument of Lemma~\ref{lemma:linear_conditional}. Then, the remaining step is the same as the second part of the proof of Theorem~\ref{theorem:partialcov_formal}, where it is left to verify the condition: for any $f,g\in\HH$
 $$
\EE\sbr{
\dotp{\Zcal_i-\EE[\Zcal_i\mid\Zcal_{-(i,j)}]}{f}_{\HH}
\dotp{\Zcal_i-\EE[\Zcal_j\mid\Zcal_{-(i,j)}]}{g}_{\HH}\mid Z_{j'},j'\in\{1,\ldots,p\}\backslash\{i,j\}}=0,
$$
similar as~\eqref{eq:conditional_covariance}. Therefore, following Definition~\ref{def:conditional_graph}, we can show that $\Kscr_{ij{\sbullet}}$ is in the equivalence class of a zero operator if and only if $\Zcal_i\indep\Zcal_j\mid\Zcal_{-(i,j)}$ for $(i,j)\not\in E$. 

\subsection{Proof of Corollary~\ref{corollary:partialcovariance_multi}}\label{ssec:proof:corollary:pcm}
The first statement is easy to verify as we can apply the orthogonality principle of the Hilbert space and obtain
\[
\Zcal_i = \sum_{j'\in\pni}\beta_{ij'}^\star\Zcal_{j'}+\Wscr_i,\quad \Zcal_j = \sum_{j'\in\{1,\ldots,p\}\backslash\{j\}}\beta_{jj'}^\star\Zcal_{j'}+\Wscr_j,
\]
where $\Wscr_i$ is independent of $\Zcal_{j'}$ for $j'\in\pni$ and $\Wscr_j$ is independent of $\Zcal_{j'}$ for $j'\in\{1,\ldots,p\}\backslash\{j\}$. Furthermore, $\Wscr_i$ and $\Wscr_j$ are centered and independent of each other. 

We then prove the second statement. Here we prove an equivalent statement: $\beta_{ij}^\star$ and $\beta_{ji}^\star$ are in the equivalent class of zero operator if and only if $\Zcal_i\indep\Zcal_j\mid\Zcal_{-(i,j)}$. 
We first show the forward direction: if $\beta_{ij}^\star$ and $\beta_{ji}^\star$ are in the equivalent class of zero operator, then
$\Zcal_i\indep\Zcal_j\mid\Zcal_{-(i,j)}$.
By Theorem~\ref{theorem:partialcovariance_multi}, we only need to show that $\Kscr_{ij\sbullet}$ is in the equivalence class of zero operator. 
For any $f,g\in\HH$, we can write 
\begin{align*}
\dotp{\Kscr_{ij{\sbullet}}f}{g}_\HH &= \EE\sbr{
\dotp{\beta_{ij}^\star \Zcal_j+\Wscr_i}{f}_\HH
\dotp{\beta_{ji}^\star\Zcal_i+\Wscr_j}{g}_\HH
}\\
&=
\EE\sbr{
\dotp{\Wscr_i}{f}_\HH
\dotp{\Wscr_j}{g}_\HH
}=0.
\end{align*}

To show the other direction, it suffices to show that $\Kscr_{ij\sbullet}$ is in the equivalence class of a zero operator implies that $\beta_{ij}^\star$ in the equivalence class of a zero operator, which is equivalent to show that $$\dotp{\EE[\Zcal_{i}\mid\Zcal_{-(i,j)}]}{f}_\HH=\dotp{\EE[\Zcal_{i}\mid\Zcal_{j\in\pni}]}{f}_\HH.$$ 

First, we denote the residuals of $\Zcal_i$ and $\Zcal_j$ on $\Zcal_{j'}$ for $j'\in\{1,\ldots,p\}\backslash\{i,j\}$ as $r_i$ and $r_j$, respectively. Then for any $f\in\HH$, we have
\begin{align*}
\dotp{\EE[\Zcal_{i}\mid \Zcal_{j\in\pni}]}{f}_\HH
&= \dotp{\EE[\Zcal_{i}\mid \EE[Z_j\mid \Zcal_{-(i,j)}]+r_j, \Zcal_{-(i,j)}]}{f}_\HH\\
&= \dotp{\EE[\Zcal_{i}\mid r_j, \Zcal_{-(i,j)}]}{f}_\HH\\
&=\dotp{\EE[\EE[\Zcal_{i}\mid\Zcal_{-(i,j)}]\mid r_j, \Zcal_{-(i,j)}]}{f}_\HH
+\dotp{\EE[r_i\mid r_j, \Zcal_{-(i,j)}]}{f}_\HH.
\intertext{Apply the fact that $\EE[\Zcal_{i}\mid\Zcal_{-(i,j)}]$ is independent of $r_j$ and $r_i$ is independent of $\Zcal_{-(i,j)}$, the above display is equivalent as}
&=\dotp{\EE[\Zcal_{i}\mid \Zcal_{-(i,j)}]}{f}_\HH
+\dotp{\EE[r_i\mid r_j]}{f}_\HH.
\intertext{Since $\Kscr_{ij\sbullet}$ is in the equivalence class of zero operator, we know that the residual $r_i$ and $r_j$ are independent. Therefore, the above display is equivalent as}
&=\dotp{\EE[\Zcal_{i}\mid \Zcal_{-(i,j)}]}{f}_\HH
+\dotp{\EE[r_i]}{f}_\HH\\
&=\dotp{\EE[\Zcal_{i}\mid \Zcal_{-(i,j)}]}{f}_\HH.
\end{align*}
Hence, we complete the proof.
\subsection{Auxiliary Lemmas}\label{ssec:A:auxiliary}
\begin{lemma}\label{lemma:equivalence_of_partial_and_condition}
 Under the conditions of Theorem~\ref{theorem:partialcov_formal}, for any  $f,g\in\HH$, the properties hold $$\dotp{\Kscr_{12\sbullet3}g}{f}_{\HH}=\dotp{\Kscr_1^{1/2}\Vscr_{12\mid 3}\Kscr_2^{1/2}g}{f}_{\HH},\quad\dotp{\Kscr_{21\sbullet3}f}{g}_{\HH}=\dotp{\Kscr_2^{1/2}\Vscr_{21\mid 3}\Kscr_1^{1/2}f}{g}_{\HH}.$$
\end{lemma}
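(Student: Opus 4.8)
The plan is to expand the defining bilinear form of $\Kscr_{12\sbullet 3}$ in~\eqref{eq:partialoperator} into four covariance-type terms, collapse two of them using the normal equations for the operator regression~\eqref{eq:definition_reg}, and identify the surviving term via Baker's factorization $\Kscr_{ij}=\Kscr_i^{1/2}\Vscr_{ij}\Kscr_j^{1/2}$ together with the range/kernel inclusions $\Image(\Vscr_{ij})\subseteq\overline{\Image(\Kscr_i)}$ and $\ker(\Vscr_{ij})^{\perp}\subseteq\overline{\Image(\Kscr_j)}$. By bilinearity of the inner product and linearity of the expectation, for all $f,g\in\HH$,
\begin{align*}
\dotp{\Kscr_{12\sbullet 3}\,g}{f}_\HH
&=\EE\sbr{\dotp{\Zcal_1}{f}_\HH\dotp{\Zcal_2}{g}_\HH}
-\EE\sbr{\dotp{\Zcal_1}{f}_\HH\dotp{\tilde\beta_{23}\Zcal_3}{g}_\HH}\\
&\quad-\EE\sbr{\dotp{\tilde\beta_{13}\Zcal_3}{f}_\HH\dotp{\Zcal_2}{g}_\HH}
+\EE\sbr{\dotp{\tilde\beta_{13}\Zcal_3}{f}_\HH\dotp{\tilde\beta_{23}\Zcal_3}{g}_\HH}.
\end{align*}
Rewriting each term through the (cross-)covariance operators of $\Zcal_1,\Zcal_2,\Zcal_3$ and the adjoints $\tilde\beta_{i3}^{*}$ — all expectations being finite by mean-square integrability and boundedness of $\tilde\beta_{i3}$ — this reads $\dotp{\Kscr_{12\sbullet 3}\,g}{f}_\HH=\dotp{\bigl(\Kscr_{12}-\Kscr_{13}\tilde\beta_{23}^{*}-\tilde\beta_{13}\Kscr_{32}+\tilde\beta_{13}\Kscr_3\tilde\beta_{23}^{*}\bigr)g}{f}_\HH$.

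Next I would invoke the orthogonality principle for~\eqref{eq:definition_reg}: perturbing the minimizer $\tilde\beta_{i3}$ by bounded rank-one operators $f\otimes h$ forces $\EE\bigl[\dotp{\Zcal_i-\tilde\beta_{i3}\Zcal_3}{f}_\HH\dotp{\Zcal_3}{h}_\HH\bigr]=0$ for all $f,h$, i.e. the residual $\Zcal_i-\tilde\beta_{i3}\Zcal_3$ has vanishing cross-covariance with $\Zcal_3$, so $\Kscr_{i3}=\tilde\beta_{i3}\Kscr_3$ and, taking adjoints, $\Kscr_{3i}=\Kscr_3\tilde\beta_{i3}^{*}$ for $i=1,2$. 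Substituting $\Kscr_{32}=\Kscr_3\tilde\beta_{23}^{*}$ makes the third term coincide with the fourth, so they cancel, leaving $\Kscr_{12\sbullet 3}=\Kscr_{12}-\Kscr_{13}\tilde\beta_{23}^{*}$ as bilinear forms. To finish, write $\Kscr_{13}=\Kscr_1^{1/2}\Vscr_{13}\Kscr_3^{1/2}$ and $\Kscr_{32}=\Kscr_3^{1/2}\Vscr_{32}\Kscr_2^{1/2}$; combining the latter with $\Kscr_{32}=\Kscr_3\tilde\beta_{23}^{*}=\Kscr_3^{1/2}\bigl(\Kscr_3^{1/2}\tilde\beta_{23}^{*}\bigr)$ gives $\Kscr_3^{1/2}\bigl(\Kscr_3^{1/2}\tilde\beta_{23}^{*}-\Vscr_{32}\Kscr_2^{1/2}\bigr)=0$, so $\Image\bigl(\Kscr_3^{1/2}\tilde\beta_{23}^{*}-\Vscr_{32}\Kscr_2^{1/2}\bigr)\subseteq\ker(\Kscr_3)$; but this range also lies in $\overline{\Image(\Kscr_3^{1/2})}=\overline{\Image(\Kscr_3)}=\ker(\Kscr_3)^{\perp}$ since $\Image(\Vscr_{32})\subseteq\overline{\Image(\Kscr_3)}$, hence it is $\{0\}$ and $\Kscr_3^{1/2}\tilde\beta_{23}^{*}=\Vscr_{32}\Kscr_2^{1/2}$. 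Therefore $\Kscr_{13}\tilde\beta_{23}^{*}=\Kscr_1^{1/2}\Vscr_{13}\Vscr_{32}\Kscr_2^{1/2}$, and by~\eqref{eq:cond_op},
\[
\dotp{\Kscr_{12\sbullet 3}\,g}{f}_\HH=\dotp{\Kscr_1^{1/2}(\Vscr_{12}-\Vscr_{13}\Vscr_{32})\Kscr_2^{1/2}g}{f}_\HH=\dotp{\Kscr_1^{1/2}\Vscr_{12\mid 3}\Kscr_2^{1/2}g}{f}_\HH.
\]
The second identity comes from the same computation with the indices $1$ and $2$ exchanged, using $\Vscr_{ji}=\Vscr_{ij}^{*}$ and the fact that $\Kscr_{21\sbullet 3}$ is by definition the adjoint of $\Kscr_{12\sbullet 3}$.

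The step I expect to be the main obstacle is the careful handling of the operator-valued normal equations: justifying that the minimizer $\tilde\beta_{i3}$ in~\eqref{eq:definition_reg} exists as a genuine bounded operator and that its optimality yields the \emph{exact} operator identity $\Kscr_{i3}=\tilde\beta_{i3}\Kscr_3$ (rather than merely an identity on a dense subspace), and then extracting $\Kscr_3^{1/2}\tilde\beta_{23}^{*}=\Vscr_{32}\Kscr_2^{1/2}$ from Baker's factorization without ever inverting the possibly non-invertible operator $\Kscr_3$ — the range inclusions are doing all the work there. Once these points are secured, the remaining manipulations are routine bilinear-form bookkeeping, and the supporting facts about $\tilde\beta_{i3}$, its adjoint, and the residuals can be imported from the auxiliary lemmas of this section.
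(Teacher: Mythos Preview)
Your proof is correct and shares the same overall architecture as the paper's: expand the bilinear form into four covariance terms, collapse via the normal equations $\Kscr_{i3}=\tilde\beta_{i3}\Kscr_3$ (which is precisely Lemma~\ref{lemma:partial} in the paper, proved by the orthogonality principle on the closed subspace $\{\beta\Zcal_3:\beta\in\mathfrak{B}(\HH,\HH)\}$, so your flagged concern is addressed there), and then identify the surviving cross-term with $\Kscr_1^{1/2}\Vscr_{13}\Vscr_{32}\Kscr_2^{1/2}$.

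The one genuine difference is in that last identification. The paper collapses to $\Kscr_{12}-\beta_1\Kscr_{32}$ and then introduces the projection $\Pscr_3=\Kscr_3^{1/2}(\Kscr_3^{1/2})^\dagger$ onto $\overline{\Image(\Kscr_3)}$, uses $\Pscr_3\Vscr_{32}=\Vscr_{32}$, and threads the Moore--Penrose inverse through the chain $\Kscr_1^{1/2}\Vscr_{13}\Kscr_3^{1/2}(\Kscr_3^{1/2})^\dagger\Vscr_{32}\Kscr_2^{1/2}=\beta_1\Kscr_{32}$. You instead collapse to $\Kscr_{12}-\Kscr_{13}\tilde\beta_{23}^{*}$ and argue directly that $\Kscr_3^{1/2}\bigl(\Kscr_3^{1/2}\tilde\beta_{23}^{*}-\Vscr_{32}\Kscr_2^{1/2}\bigr)=0$ forces the parenthesized operator to vanish, since its range sits in $\ker(\Kscr_3^{1/2})\cap\overline{\Image(\Kscr_3^{1/2})}=\{0\}$. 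Both routes hinge on the same Baker inclusion $\Image(\Vscr_{32})\subseteq\overline{\Image(\Kscr_3)}$; yours is marginally more elementary in that it never writes down a pseudo-inverse, while the paper's makes the role of the projection onto $\overline{\Image(\Kscr_3)}$ more explicit.
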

 \begin{proof}[Proof of Lemma~\ref{lemma:equivalence_of_partial_and_condition}]
 Recall the definition of the residual element $\Zcal_{i\sbullet 3}$ in~\eqref{eq:definition_reg} and based on Lemma~\ref{lemma:partial}, we could write $\Zcal_{i\sbullet 3}$ as $\Zcal_{i}-\beta_i\Zcal_{i}$ for some  $\beta_i$ that satisfies  $\Kscr_{i3}-\beta_i\Kscr_{3}=0$  for $i=1,2$.  
Since $\Zcal_{i}$ for $i=1,2,3$ are zero-mean, it follows that the residual mean elements $\mathfrak{m}_{i\cdot 3}$ for $i=1,2$ are zero elements. Then, for any $f,g\in\HH$, write
\begin{align*}
    \dotp{f}{\Kscr_{12,\sbullet 3}g}_{\HH}
    &=
    \EE[
    \dotp{\Zcal_{1\sbullet 3}}{f}_{\HH}
    \dotp{\Zcal_{2\sbullet 3}}{g}_{\HH} 
    ]\\
    &=
    \EE[
    \dotp{\Zcal_{1}-\beta_1\Zcal_{3}}{f}_{\HH}
    \dotp{\Zcal_{2}-\beta_2\Zcal_{3}}{g}_{\HH} 
    ]\\
    &=\dotp{f}{(\Kscr_{12}+\beta_1\Kscr_{3}\beta_2^*-\beta_1\Kscr_{32}-\Kscr_{13}\beta_2^*)g}_{\HH},
    \intertext{where $\beta_2^*$ is the adjoint operator of $\beta_2$. Apply the fact that $\beta_1\Kscr_3=\Kscr_{13}$. The above display is equivalent as}
    &= \dotp{f}{(\Kscr_{12}-\beta_1\Kscr_{32})g}_{\HH}.
\end{align*}
Next, write $\Kscr_1^{1/2}\Vscr_{12\mid 3}\Kscr_2^{1/2}= \Kscr_{12}-\Kscr_1^{1/2}\Vscr_{13}\Vscr_{32}\Kscr_2^{1/2}$. Then, for any $f,g\in\HH$, we have
\begin{align*}
    \dotp{f}{\Kscr_1^{1/2}\Vscr_{12\mid 3}\Kscr_2^{1/2}g}_{\HH}
    =
    \dotp{f}{\Kscr_{12}g}_{\HH}
    -
    \dotp{f}{\Kscr_1^{1/2}\Vscr_{13}\Vscr_{32}\Kscr_2^{1/2}g}_{\HH}.
\end{align*}
Therefore, to show the first statement, it suffices to show that $\dotp{f}{\Kscr_1^{1/2}\Vscr_{13}\Vscr_{32}\Kscr_2^{1/2}g}_{\HH}=\dotp{f}{\beta_1\Kscr_{32}g}_{\HH}$ for any $f,g\in\HH$. 

Recall that $\Image(\Vscr_{32})\subset\overline{\Image(\Kscr_{3})}=\overline{\Image(\Kscr_{3}^{1/2})}$ and define the projection operator to 
$\overline{\Image(\Kscr_{3}^{1/2})}$ as $\Pscr_3=\Kscr_{3}^{1/2}(\Kscr_3^{1/2})^\dagger$. Then, we have $\Pscr_3\Vscr_{32}=\Vscr_{32}$. Therefore, we can write
\begin{align*}
    \dotp{f}{\Kscr_1^{1/2}\Vscr_{13}\Pscr_3\Vscr_{32}\Kscr_2^{1/2}g}_{\HH}
    &=
    \dotp{f}{\Kscr_1^{1/2}\Vscr_{13}\Kscr_{3}^{1/2}(\Kscr_3^{1/2})^\dagger\Vscr_{32}\Kscr_2^{1/2}g}_{\HH}\\
    &=
    \dotp{f}{\Kscr_{13}(\Kscr_3^{1/2})^\dagger\Vscr_{32}\Kscr_2^{1/2}g}_{\HH}.
    \intertext{Applying the fact that $\Kscr_{13}=\beta_1\Kscr_{3}$, the above display yields}
    &=
     \dotp{f}{\beta_1\Kscr_{3}^{1/2}\Kscr_{3}^{1/2}(\Kscr_3^{1/2})^\dagger\Vscr_{32}\Kscr_2^{1/2}g}_{\HH}\\
     &=
      \dotp{f}{\beta_1\Kscr_{3}^{1/2}\Pscr_3\Vscr_{32}\Kscr_2^{1/2}g}_{\HH}\\
      &=
      \dotp{f}{\beta_1\Kscr_{3}^{1/2}\Vscr_{32}\Kscr_2^{1/2}g}_{\HH}\\
      &=\dotp{f}{\beta_1\Kscr_{32}g}_{\HH}.
\end{align*}
Hence, we complete the proof.
 \end{proof}
\begin{lemma}\label{lemma:congruence:l2andlinear}
Let $\Zcal\in L^2(\Omega,\Fcal,\mu)$ be a centered random element taking values in the measurable Hilbert space $(\HH,\Bscr(\HH))$, where $\HH$ is a separable Hilbert space. Define the space
\[
V=\{\beta\Zcal:\beta\in\mathfrak{B}(\HH,\HH)\},
\]
where $\mathfrak{B}(\HH,\HH)$ is the space of all linear bounded operator mapping to its own Hilbert space $\HH$. 
Then, $V$ is a closed linear subspace of  $L^2(\Omega,\sigma(\Zcal),\mu)$.
\end{lemma}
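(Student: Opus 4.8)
The plan is to check three things: (i) each $\beta\Zcal$ with $\beta\in\mathfrak{B}(\HH,\HH)$ actually belongs to $L^2(\Omega,\sigma(\Zcal),\mu)$; (ii) $V$ is a linear subspace; and (iii) $V$ is closed in the $L^2$ norm. Parts (i) and (ii) are routine, and the content lies in (iii).

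For (i): a bounded linear operator is continuous, hence Borel measurable, so $\beta\circ\Zcal$ is measurable with respect to $\sigma(\Zcal)$, and $\EE\norm{\beta\Zcal}_\HH^2\le\opnorm{\beta}{}^2\,\EE\norm{\Zcal}_\HH^2<\infty$ since $\Zcal\in L^2$. For (ii): for $\beta_1,\beta_2\in\mathfrak{B}(\HH,\HH)$ and scalars $a,b$ we have $a\beta_1\Zcal+b\beta_2\Zcal=(a\beta_1+b\beta_2)\Zcal$ with $a\beta_1+b\beta_2\in\mathfrak{B}(\HH,\HH)$, and the zero operator gives $0\in V$.

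For (iii) I would route the argument through the covariance operator $\Kscr:=\EE[\Zcal\otimes\Zcal]$, which is trace class, and its square root $\Kscr^{1/2}$. From the Karhunen--Lo\`eve representation $\Zcal=\sum_\ell\sqrt{\lambda_\ell}\,\zeta_\ell\,\phi_\ell$, with $\{\phi_\ell\}$ the eigenfunctions of $\Kscr$, $\lambda_\ell\ge 0$ the eigenvalues, and $\{\zeta_\ell\}$ uncorrelated with mean zero and unit variance, one obtains the key identity $\EE\norm{\beta\Zcal}_\HH^2=\sum_\ell\lambda_\ell\norm{\beta\phi_\ell}_\HH^2=\hsnorm{\beta\Kscr^{1/2}}^2$ for every $\beta\in\mathfrak{B}(\HH,\HH)$. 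Hence, if $\{\beta_n\Zcal\}$ is Cauchy in $L^2$ with limit $X$, then $\{\beta_n\Kscr^{1/2}\}$ is Cauchy in Hilbert--Schmidt norm and converges to some Hilbert--Schmidt operator $\Tscr$, which (by operator-norm convergence) vanishes on $\ker(\Kscr^{1/2})=\Image(\Kscr)^\perp$. Because $\Zcal$ takes values in $\overline{\Image(\Kscr)}$ almost surely, only the restriction of an operator to that subspace influences the random element it produces, so the candidate limit is $\beta:=\Tscr(\Kscr^{1/2})^\dagger$ on $\Image(\Kscr^{1/2})\oplus\Image(\Kscr^{1/2})^\perp$ and $\beta:=0$ on the complementary directions. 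If this $\beta$ is bounded, then $\EE\norm{\beta\Zcal-X}_\HH^2=\lim_n\hsnorm{(\beta-\beta_n)\Kscr^{1/2}}^2=\lim_n\hsnorm{\Tscr-\beta_n\Kscr^{1/2}}^2=0$, so $X=\beta\Zcal\in V$, proving closedness.

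The step I expect to be the main obstacle is exactly the boundedness of $\beta=\Tscr(\Kscr^{1/2})^\dagger$: for general $\Kscr$ the generalized inverse $(\Kscr^{1/2})^\dagger$ is unbounded, so $\Tscr(\Kscr^{1/2})^\dagger$ need not define an element of $\mathfrak{B}(\HH,\HH)$, and one must bring in extra structure. I would close this using the same finite-rank regime invoked elsewhere in the paper (the discussion following Assumption~\ref{assumption:compact:L} and Corollary~1.3.3 of~\citet{groetsch1984theory}): when $\Kscr$ has closed range, $\Kscr^{1/2}$ restricted to $\ker(\Kscr^{1/2})^\perp$ is bounded below, so $\norm{\Tscr g}_\HH\le c^{-1}\opnorm{\Tscr}{}\,\norm{\Kscr^{1/2}g}_\HH$ and $\beta$ is bounded, while $\Image(\Kscr^{1/2})=\overline{\Image(\Kscr^{1/2})}$ makes its domain all of $\HH$. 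An alternative route, when the $\beta_n$ may be taken with $\sup_n\opnorm{\beta_n}{}<\infty$, is to extract a weak-operator-convergent subsequence $\beta_n\to\beta$ with $\opnorm{\beta}{}\le\liminf_n\opnorm{\beta_n}{}$ and verify directly that $\beta\Zcal=X$.
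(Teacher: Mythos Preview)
Your approach differs substantially from the paper's, which argues closedness in a single line: take a sequence $\beta_n$ converging to $\beta$; since $\mathfrak{B}(\HH,\HH)$ is a Banach space, $\beta\in\mathfrak{B}(\HH,\HH)$; hence $\beta_n\Zcal\to\beta\Zcal\in V$. That argument shows continuity of $\beta\mapsto\beta\Zcal$ from operator norm to $L^2$, but it does not address the actual closedness question---an $L^2$-Cauchy sequence $\{\beta_n\Zcal\}$ need not come from an operator-norm Cauchy sequence $\{\beta_n\}$, so completeness of $\mathfrak{B}(\HH,\HH)$ is not directly applicable.

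Your route through the identity $\EE\norm{\beta\Zcal}_\HH^2=\hsnorm{\beta\Kscr^{1/2}}^2$ is the right diagnostic, and you have located the genuine obstruction the paper's argument skips: the Hilbert--Schmidt limit $\Tscr$ of $\beta_n\Kscr^{1/2}$ need not factor as $\beta\Kscr^{1/2}$ with bounded $\beta$ unless $\Kscr$ has closed range. Indeed, under the hypotheses of the lemma alone $V$ is not closed in general: with eigenvalues $\lambda_\ell\downarrow0$ and scalars $c_\ell\uparrow\infty$ chosen so that $\sum_\ell\lambda_\ell c_\ell^2<\infty$, the finite-rank diagonal operators $\beta_n:\phi_\ell\mapsto c_\ell\phi_\ell$ for $\ell\le n$ give $\beta_n\Zcal$ Cauchy in $L^2$, yet (comparing coefficients against the independent Gaussian scores $\zeta_\ell$) any bounded $\beta$ matching the limit would have diagonal entries $c_\ell$. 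Your fix via the finite-rank/closed-range regime---so that $(\Kscr^{1/2})^\dagger$ is bounded and $\beta:=\Tscr(\Kscr^{1/2})^\dagger\in\mathfrak{B}(\HH,\HH)$---is exactly what is needed to make the conclusion true, and it is consistent with the structure the paper invokes elsewhere. The uniform-boundedness alternative is less natural here, since one has no a~priori control of $\sup_n\opnorm{\beta_n}{}$ from the $L^2$ data alone.
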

\begin{proof}[Proof of Lemma~\ref{lemma:congruence:l2andlinear}]
First, we verify that $V\subseteq L^2(\Omega,\sigma(\Zcal),\mu)$. 
Since $\beta$ is a bounded linear operator from a complete normed space to itself, it follows that $\beta$ is a continuous operator and hence measurable. Therefore, the composite $\beta\Zcal$ is a measurable operator from $(\Omega,\sigma(\Zcal))$ to $(\HH,\Bscr(\HH))$. Moreover, we have $\EE[\norm{\beta\Zcal}_\HH^2]\leq\opnorm{\beta}{}^2\EE[\norm{\Zcal}_\HH^2]<\infty$. Therefore, we have $V\subseteq L^2(\Omega,\sigma(\Zcal),\mu)\subset L^2(\Omega,\Fcal,\mu)$. 

It is easy to see that $V$ forms a linear subspace and hence what is left is to verify that $V$ is closed.  Let $\{\beta_n\}_{n\in\NN}$ be a sequence of indexed bounded linear operators $\beta_n\in\mathfrak{B}(\HH,\HH)$  for $n\in\NN$ converging to $\beta$. Since $\mathfrak{B}(\HH,\HH)$ is a Banach space, it follows that $\beta\in\mathfrak{B}(\HH,\HH)$. Therefore, we have $\{\beta_n\Zcal\}_{n\geq 1}$ converging to $\beta\Zcal\in V$. Hence $V$ is closed.


\end{proof}

\begin{lemma}\label{lemma:linear_conditional}
Let $\Zcal_1,\Zcal_2\in L^2(\Omega,\Fcal,\mu)$ be centered Gaussian random elements taking values in $(\HH,\Bscr(\HH))$, where $\HH$ is a separable Hilbert space. 
Let 
 $$\hat\beta
=
\argmin_{
\beta\in\mathfrak{B}(\HH,\HH)} \EE\sbr{\norm{\Zcal_1-\beta\Zcal_2}_{\HH}^2}.$$
Then, we have
\[
\EE[\Zcal_1\mid \Zcal_2]=\hat\beta\Zcal_2.
\]
\end{lemma}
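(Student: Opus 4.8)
The plan is to realize both $\hat\beta\Zcal_2$ and $\EE[\Zcal_1\mid\Zcal_2]$ as orthogonal projections in $L^2(\Omega,\Fcal,\mu)$ and then invoke Gaussianity to show the two projections coincide. First, by Lemma~\ref{lemma:congruence:l2andlinear} applied to $\Zcal_2$, the set $V=\{\beta\Zcal_2:\beta\in\mathfrak{B}(\HH,\HH)\}$ is a closed linear subspace of $L^2(\Omega,\sigma(\Zcal_2),\mu)\subset L^2(\Omega,\Fcal,\mu)$; in particular the minimizer $\hat\beta$ is well defined and $\hat\beta\Zcal_2$ is the orthogonal projection of $\Zcal_1$ onto $V$. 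By the orthogonality principle, the residual $\Wcal:=\Zcal_1-\hat\beta\Zcal_2$ satisfies $\EE[\dotp{\Wcal}{\beta\Zcal_2}_\HH]=0$ for every $\beta\in\mathfrak{B}(\HH,\HH)$. Specializing to the rank-one operators $\beta=f\otimes g$, $f,g\in\HH$, this reads $\EE[\dotp{g}{\Zcal_2}_\HH\dotp{\Wcal}{f}_\HH]=0$ for all $f,g$, i.e. the cross-covariance operator of $\Wcal$ and $\Zcal_2$ is in the equivalence class of the zero operator.

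Next I would upgrade this to independence. Since $\Zcal_1$ and $\Zcal_2$ are jointly Gaussian (all finite linear combinations of $\{\dotp{\Zcal_1}{f}_\HH,\dotp{\Zcal_2}{g}_\HH:f,g\in\HH\}$ are Gaussian), and $\Wcal$ is the image of $(\Zcal_1,\Zcal_2)$ under the bounded linear map $(u,v)\mapsto u-\hat\beta v$, the pair $(\Wcal,\Zcal_2)$ is a Gaussian random element of $\HH\times\HH$: every finite family $\dotp{\Wcal}{f_1}_\HH,\dots,\dotp{\Zcal_2}{g_r}_\HH$ is a linear image of a Gaussian vector. Because the cross-covariance operator of $\Wcal$ and $\Zcal_2$ vanishes, the covariance operator of $(\Wcal,\Zcal_2)$ on $\HH\times\HH$ is block diagonal, so the joint characteristic functional factorizes and $\Wcal\indep\Zcal_2$. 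Also $\EE[\Wcal]=\EE[\Zcal_1]-\hat\beta\,\EE[\Zcal_2]=0$, using boundedness of $\hat\beta$.

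Finally, $\hat\beta$ is bounded, hence continuous, hence Borel measurable, so $\hat\beta\Zcal_2$ is $\sigma(\Zcal_2)$-measurable and pulls out of the conditional expectation:
\[
\EE[\Zcal_1\mid\Zcal_2]=\EE[\hat\beta\Zcal_2+\Wcal\mid\Zcal_2]=\hat\beta\Zcal_2+\EE[\Wcal\mid\Zcal_2]=\hat\beta\Zcal_2+\EE[\Wcal]=\hat\beta\Zcal_2,
\]
where the penultimate equality uses $\Wcal\indep\Zcal_2$. I expect the main obstacle to be the second step: making rigorous, in the infinite-dimensional setting, both the joint Gaussianity of $(\Wcal,\Zcal_2)$ and the implication ``vanishing cross-covariance $\Rightarrow$ independence'', which must be argued through the factorization of the characteristic functional (or coordinatewise in a Karhunen--Lo\`eve basis) rather than via a finite-dimensional density argument.
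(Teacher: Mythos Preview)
Your proposal is correct and follows essentially the same route as the paper: decompose $\Zcal_1=\hat\beta\Zcal_2+\Wcal$, use the orthogonality principle to get vanishing cross-covariance, invoke joint Gaussianity to upgrade uncorrelatedness to independence, and then pull $\hat\beta\Zcal_2$ out of the conditional expectation. Your version is in fact more careful than the paper's: you establish that $\Wcal$ is independent of $\Zcal_2$ itself (via rank-one test operators and factorization of the characteristic functional), whereas the paper only records $\EE[\dotp{e}{\hat\beta\Zcal_2}_\HH]=0$ before conditioning on $\Zcal_2$.
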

This is a classical result when $\Zcal_1,\Zcal_2$ are Gaussian random variables taking values in the real line. The proof follows by the fact that uncorrelatedness implies independence under the jointly Gaussian setting. The extension to Gaussian random element in the Hilbert space is straightforward and yet we provide the statement for clarity.
\begin{proof}[Proof of Lemma~\ref{lemma:linear_conditional}]
By Lemma~\ref{lemma:partial}, we can write
\[
\Zcal_1=e+\hat\beta\Zcal_2,
\]
where we have $\EE[\dotp{e}{\hat\beta\Zcal_2}_\HH]=0$ by the orthogonality principle of the Hilbert space. Since ${e}$ and ${\hat\beta\Zcal_2}$ are uncorrelated, they are independent under the joint Gaussian assumption. Therefore, given $\Zcal_2=f$, $\hat\beta\Zcal_2$ is deterministic and the randomness $\Zcal_2\mid\Zcal_1$ only comes from $e$. As a result, $\Zcal_1\mid\Zcal_2\sim\Ncal(\hat\beta\Zcal_2, \Kscr_e)$, where $\Kscr_e$ is the covariance operator of $e$. Note that $\hat\beta\Zcal_2$ is the mean of $\Zcal_1\mid\Zcal_2$ and hence we have $\EE[\Zcal_1\mid\Zcal_2]=\hat\beta\Zcal_2$.
\end{proof}
\begin{lemma}\label{lemma:bdd_partial}
Let $\Zcal_i\in L^2(\Omega,\Fcal,\mu)$ for $i=1,2,3$ random elements taking values in $(\HH,\Bscr(\HH))$. We define $\Zcal_{1\sbullet 3}$, $\Zcal_{2\sbullet 3}$ be the residuals of $\Zcal_{1}$ and $\Zcal_{2}$ after regressing on $\Zcal_{3}$, respectively, where recall the definition of regression in~\eqref{eq:definition_reg}. 
The mean elements of $\Zcal_{1\sbullet 3}$, $\Zcal_{2\sbullet 3}$ are defined as $\mathfrak{m}_{1\sbullet 3}$ and $\mathfrak{m}_{2\sbullet3}$ respectively.
 Define the partial cross covariance operator
 \begin{equation}\label{eq:partialoperator_nonzero}
 \dotp{\Kscr_{12\sbullet3}f}{g} = 
 \EE\sbr{
 \dotp{(\Zcal_{1\sbullet 3}-\mathfrak{m}_{1\sbullet 3})}{g}_\HH
\dotp{\Zcal_{2\sbullet 3}-\mathfrak{m}_{2\sbullet 3}}{f}_\HH},\quad f,g\in\HH,
 \end{equation}
which is linear and bounded. Let $\Kscr_{21\sbullet 3}$ be the adjoint operator of $\Kscr_{12\sbullet3}$. Then, $\Kscr_{21\sbullet 3}$ is linear and bounded.

If $\Zcal_i\in L^2(\Omega,\Fcal,\mu)$ for $i=1,2,3$ are zero-mean. Then the mean elements $\mathfrak{m}_{1\sbullet 3}$ and $\mathfrak{m}_{2\sbullet3}$ are zero elements. Hence \eqref{eq:partialoperator_nonzero} is equivalent as \eqref{eq:partialoperator}.

\end{lemma}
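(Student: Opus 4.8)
The plan is to proceed in three stages: first, establish that the residuals $\Zcal_{1\sbullet3}$ and $\Zcal_{2\sbullet3}$ are square-integrable random elements, so that the bilinear form on the right of~\eqref{eq:partialoperator_nonzero} is finite and well-defined; second, deduce from two applications of the Cauchy--Schwarz inequality that this form is bounded, and invoke the Riesz representation theorem to extract the operator $\Kscr_{12\sbullet3}$ together with a norm bound; third, treat the adjoint and the zero-mean reduction, both of which are immediate once the first two stages are in place.

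For the first stage, note that by the formulation of the regression in~\eqref{eq:definition_reg} the operator $\tilde\beta_{i3}$ lies in $\mathfrak{B}(\HH,\HH)$ (its existence being guaranteed by Lemma~\ref{lemma:partial}), so it is continuous, hence measurable, and $\Zcal_{i\sbullet3}=\Zcal_i-\tilde\beta_{i3}\Zcal_3$ is a measurable $\HH$-valued random element; moreover $\EE\norm{\tilde\beta_{i3}\Zcal_3}_\HH^2\leq\opnorm{\tilde\beta_{i3}}{}^2\EE\norm{\Zcal_3}_\HH^2<\infty$, and combining this with $\EE\norm{\Zcal_i}_\HH^2<\infty$ via the triangle inequality in $L^2(\Omega,\Fcal,\mu)$ gives $\EE\norm{\Zcal_{i\sbullet3}}_\HH^2<\infty$ for $i=1,2$. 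Consequently the mean elements $\mathfrak{m}_{i\sbullet3}$ exist as Bochner integrals and $\sigma_i^2:=\EE\norm{\Zcal_{i\sbullet3}-\mathfrak{m}_{i\sbullet3}}_\HH^2\leq\EE\norm{\Zcal_{i\sbullet3}}_\HH^2<\infty$.

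For the second stage, Cauchy--Schwarz in $\HH$ applied to each inner product, followed by Cauchy--Schwarz for the $L^2(\Omega)$ pairing of the two scalar variables $\norm{\Zcal_{i\sbullet3}-\mathfrak{m}_{i\sbullet3}}_\HH$, yields, for all $f,g\in\HH$,
\[
\abr{\dotp{\Kscr_{12\sbullet3}f}{g}_\HH}
\leq
\norm{f}_\HH\norm{g}_\HH\,\EE\sbr{\norm{\Zcal_{1\sbullet3}-\mathfrak{m}_{1\sbullet3}}_\HH\,\norm{\Zcal_{2\sbullet3}-\mathfrak{m}_{2\sbullet3}}_\HH}
\leq
\sigma_1\sigma_2\,\norm{f}_\HH\norm{g}_\HH .
\]
Thus $(f,g)\mapsto\dotp{\Kscr_{12\sbullet3}f}{g}_\HH$ is a bounded bilinear form on $\HH\times\HH$; for each fixed $f$ the map $g\mapsto\dotp{\Kscr_{12\sbullet3}f}{g}_\HH$ is a bounded linear functional, so Riesz representation produces a unique element, which we name $\Kscr_{12\sbullet3}f$, with $\norm{\Kscr_{12\sbullet3}f}_\HH\leq\sigma_1\sigma_2\norm{f}_\HH$; linearity of the form in $f$ makes $f\mapsto\Kscr_{12\sbullet3}f$ linear, so $\Kscr_{12\sbullet3}\in\mathfrak{B}(\HH,\HH)$ with $\opnorm{\Kscr_{12\sbullet3}}{}\leq\sigma_1\sigma_2$. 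For the adjoint, $\Kscr_{21\sbullet3}:=\Kscr_{12\sbullet3}^*$ is bounded and linear, since the Hilbert-space adjoint of a bounded operator is bounded with the same operator norm. Finally, if $\Zcal_1,\Zcal_2,\Zcal_3$ are centered, then because a bounded operator commutes with the Bochner integral, $\mathfrak{m}_{i\sbullet3}=\EE[\Zcal_i]-\tilde\beta_{i3}\EE[\Zcal_3]=0$ in $\HH$ for $i=1,2$, so the mean subtractions in~\eqref{eq:partialoperator_nonzero} are vacuous and it coincides with~\eqref{eq:partialoperator}.

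I do not expect a genuine obstacle: this is standard functional-analytic bookkeeping. The only step needing a little care is confirming that the bilinear form is finite, which rests on $\tilde\beta_{i3}$ being a bona fide bounded operator so that $\tilde\beta_{i3}\Zcal_3\in L^2$; this is precisely how~\eqref{eq:definition_reg} is posed, and the rest is Cauchy--Schwarz together with Riesz representation.
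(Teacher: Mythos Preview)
Your proposal is correct and follows essentially the same route as the paper: bound the bilinear form via Cauchy--Schwarz and extract the operator via the Riesz representation theorem. The only cosmetic difference is that the paper applies Cauchy--Schwarz in $L^2(\Omega)$ to the scalar random variables $\langle\tilde\Zcal_i,f\rangle_\HH$ and then invokes the regression orthogonality to obtain the sharper bound $\|\Kscr_1^{1/2}f\|_\HH\|\Kscr_2^{1/2}g\|_\HH$ in terms of the covariance operators of the original $\Zcal_i$, whereas you pull out $\|f\|_\HH,\|g\|_\HH$ first and bound by the residual second moments $\sigma_1\sigma_2$; either bound is finite and suffices for the lemma.
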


\begin{proof}[Proof of Lemma~\ref{lemma:bdd_partial}]
For notation simplicity, we define $\tilde{\Zcal}_i=\Zcal_{i\sbullet 3}-\mathfrak{m}_{i\sbullet 3}$, the marginal probability measure of $\tilde{\Zcal}_i$ as $\tilde{\mu}_i$ for $i=1,2$ and the joint probability measure of $\tilde{\Zcal}_{1}$ and $\tilde{\Zcal}_{2}$ as $\tilde{\mu}_{12}$.
First, for any $f,g\in\HH$, we define a bilinear functional $\tilde{G}:\HH\otimes\HH\rightarrow \RR$:
\[
\tilde{G}(f,g)
=
\int_{\Omega\times\Omega}
\dotp{
    \tilde{\Zcal}_1}{f}_{\HH}
    \dotp{
    \tilde{\Zcal}_2}{g}_{\HH}
    d\tilde{\mu}_{12}.
\]
We show that $\tilde{G}(f,g)$ is bounded for any bounded $f,g\in\HH$ by applying Cauchy-Schwarz inequality:
\begin{align}
    \abr{\tilde{G}(f,g)}^2&=
    \rbr{\int_{\Omega\times\Omega}
\dotp{
    \tilde{\Zcal}_1}{f}_{\HH}
    \dotp{
    \tilde{\Zcal}_2}{g}_{\HH}
    d\tilde{\mu}_{12}}^2\notag\\
    &\leq
    \cbr{
    \int_\Omega
    (\dotp{
    \tilde{\Zcal}_1}{f}_{\HH})^2
    d\tilde{\mu}_1
    }
    \cbr{
    \int_\Omega
    (\dotp{
    \tilde{\Zcal}_2}{g}_{\HH})^2
    d\tilde{\mu}_2
    }.\notag
    \intertext{Using the fact that $\Zcal_{i}=\tilde{\Zcal}_i+(\Zcal_{i}-\tilde{\Zcal}_i)$ and $E[\dotp{\tilde{\Zcal}_i}{\Zcal_{i}-\tilde{\Zcal}_i}_{\HH}]=0$ for $i=1,2$, we could upper bound the above display as}
    &\leq 
    \cbr{
    \int_\Omega
    (\dotp{
    {\Zcal}_1}{f}_{\HH})^2
    d{\mu}_1
    }
    \cbr{
    \int_\Omega
    (\dotp{
    {\Zcal}_2}{g}_{\HH})^2
    d{\mu}_2
    }\notag\\
    &=\norm{\Kscr_{1}^{1/2}f}_{\HH}^2
   \norm{\Kscr_{2}^{1/2}g}_{\HH}^2<\infty.\label{eq:bounded}
\end{align}
Therefore, for any fixed $g\in\HH$, by Riesz' representation theorem, there exists a unique element $q_g$ in $\HH$ such that $G(f,g)=\dotp{f}{q_g}$. Then, we define $\Kscr_{12\sbullet 3}:\HH\rightarrow\HH$ by $\Kscr_{12\sbullet 3}g=q_g$. It is clear that $\Kscr_{12\sbullet 3}$ is linear and bounded by~\eqref{eq:bounded}:
\begin{align*}
    \norm{\Kscr_{12\sbullet 3}g}_{\HH}^2=\sup_{f\in\HH}\frac{|\tilde{G}(f,g)|^2}{\norm{f}_{\HH}^2}\leq 
    \sup_{f\in\HH}\frac{\norm{\Kscr_1^{1/2}f}_{\HH}^2}{\norm{f}_{\HH}^2}\norm{\Kscr_2^{1/2}g}_{\HH}^2
    \leq \norm{\Kscr_1}\norm{\Kscr_2}\norm{g}_{\HH}^2,
\end{align*}
which implies that $\norm{\Kscr_{12\sbullet 3}}\leq \norm{\Kscr_1}\norm{\Kscr_2}<\infty$. $\Kscr_{12\sbullet 3}$ is defined everywhere in $\HH$. 
\end{proof}
\begin{lemma}\label{lemma:partial} Under the same condition of Lemma~\ref{lemma:bdd_partial}, define  covariance operator as $\Kscr_{12}$. Define
\[
\hat\beta
=
\argmin_{\substack{
\beta\in\mathfrak{B}(\HH,\HH).
}} \EE\sbr{\norm{\Zcal_1-\beta\Zcal_2}_{\HH}^2}.
\]
Then, $\hat\beta$ satisfies $\Kscr_{12}=\hat\beta\Kscr_{2}$.
\end{lemma}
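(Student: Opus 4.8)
The identity $\Kscr_{12} = \hat\beta\Kscr_2$ is the Hilbert-space version of the normal equations for a least-squares regression, so the plan is to extract the first-order optimality condition for the quadratic objective $\beta \mapsto \EE[\norm{\Zcal_1 - \beta\Zcal_2}_\HH^2]$ and then test it against rank-one operators.

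First I would record the variational characterization of $\hat\beta$. For any $\beta\in\mathfrak{B}(\HH,\HH)$ and $t\in\RR$ the operator $\hat\beta + t\beta$ is again in $\mathfrak{B}(\HH,\HH)$, and the map
\[
\varphi(t) := \EE\sbr{\norm{\Zcal_1 - (\hat\beta + t\beta)\Zcal_2}_\HH^2}
\]
is a finite quadratic polynomial in $t$ (finiteness follows from $\EE\norm{\Zcal_i}_\HH^2<\infty$ and $\norm{\Zcal_1 - (\hat\beta+t\beta)\Zcal_2}_\HH^2 \leq 2\norm{\Zcal_1}_\HH^2 + 2(\opnorm{\hat\beta}{}+|t|\opnorm{\beta}{})^2\norm{\Zcal_2}_\HH^2$, together with Cauchy--Schwarz for the cross term). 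Since $\hat\beta$ is a minimizer, $\varphi$ attains its minimum at $t=0$, so $\varphi'(0)=0$; differentiating under the expectation (justified because the coefficients of the polynomial are integrable) gives the orthogonality condition
\[
\EE\sbr{\dotp{\Zcal_1 - \hat\beta\Zcal_2}{\beta\Zcal_2}_\HH} = 0 \qquad\text{for every } \beta\in\mathfrak{B}(\HH,\HH).
\]

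Next I would specialize to the rank-one operators $\beta = g\otimes f$ with $f,g\in\HH$; these are bounded, and $(g\otimes f)\Zcal_2 = \dotp{f}{\Zcal_2}_\HH\, g$. Substituting yields $\EE[\dotp{\Zcal_2}{f}_\HH\dotp{\Zcal_1 - \hat\beta\Zcal_2}{g}_\HH] = 0$ for all $f,g\in\HH$. Now I identify the two terms with the covariance operators: by definition of the cross-covariance operator, $\dotp{\Kscr_{12}f}{g}_\HH = \EE[\dotp{\Zcal_2}{f}_\HH\dotp{\Zcal_1}{g}_\HH]$, and using self-adjointness of $\Kscr_2$ and the adjoint identity,
\[
\dotp{\hat\beta\Kscr_2 f}{g}_\HH = \dotp{\Kscr_2 f}{\hat\beta^* g}_\HH = \EE\sbr{\dotp{\Zcal_2}{f}_\HH\dotp{\Zcal_2}{\hat\beta^* g}_\HH} = \EE\sbr{\dotp{\Zcal_2}{f}_\HH\dotp{\hat\beta\Zcal_2}{g}_\HH}.
\]
Hence the previous display reads $\dotp{(\Kscr_{12} - \hat\beta\Kscr_2)f}{g}_\HH = 0$ for all $f,g\in\HH$, and letting $g$ range over $\HH$ first gives $(\Kscr_{12}-\hat\beta\Kscr_2)f = 0$ for every $f$, i.e.\ $\Kscr_{12} = \hat\beta\Kscr_2$.

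Two points need care. First, the existence of the minimizer $\hat\beta$ over the Banach space $\mathfrak{B}(\HH,\HH)$ is not automatic in infinite dimensions; I would treat it as granted by the hypothesis (in the applications it is guaranteed because the relevant $\Zcal_2$ has a covariance operator bounded below after truncation). Second, the identification in the last step uses that $\Zcal_1,\Zcal_2$ are centered --- which holds in every invocation of this lemma --- so that $\Kscr_{12}$ and $\Kscr_2$ coincide with the second-moment operators produced by the optimality condition; without centering one would instead obtain the analogous identity for the second-moment operators. I expect making precise the sense in which $\hat\beta$ is well defined to be the only genuinely delicate step; the rest is the routine rank-one-testing argument.
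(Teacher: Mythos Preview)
Your argument is correct and arrives at exactly the same orthogonality condition and rank-one testing step as the paper. The only difference is that the paper, instead of treating existence of $\hat\beta$ as a hypothesis, first shows (via an auxiliary lemma) that $V=\{\beta\Zcal_2:\beta\in\mathfrak{B}(\HH,\HH)\}$ is a closed linear subspace of $L^2(\Omega,\Fcal,\mu)$ and then invokes the Hilbert-space projection/orthogonality principle, which yields both existence of the minimizer and the identity $\EE[\dotp{\Zcal_1-\hat\beta\Zcal_2}{f}_\HH\dotp{\Zcal_2}{g}_\HH]=0$ in one stroke; your direct differentiation reaches the same point but leaves existence as an assumption.
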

since $\Image(\Kscr_{21})\subseteq\Image(\Kscr_2)$, it follows that $\hat\beta=(\Kscr_2^\dagger\Kscr_{21})^*$.
\begin{proof}[Proof of Lemma~\ref{lemma:partial}]
Define the space
$
V=\{\beta\Zcal_2:\beta\in\mathfrak{B}(\HH,\HH)\}
$. By Lemma~\ref{lemma:congruence:l2andlinear}, we see that $V$ is a closed subspace of $L^2(\Omega,\Fcal,\mu)$.

Then, by the orthogonality principle of the Hilbert space (see~\citet[Theorem~2.6]{alma99954915694905899} for an exposition), we have $\mathbb{E}[(\Zcal_1-\hat\beta\Zcal_2) \otimes \Zcal_2]=0$. This implies that for any $f,g\in\HH$, we have
\begin{align*}
    \EE[\dotp{\Zcal_1-\hat\beta\Zcal_2}{f}_\HH\dotp{\Zcal_2}{g}_{\HH}]
    &=\dotp{(\Kscr_{12}-\hat\beta\Kscr_2)g}{f}_\HH=0.
\end{align*}
Therefore, we can conclude that $\Kscr_{12}-\hat\beta\Kscr_2=0$. 
\end{proof}
\section{Truncation in Finite Dimensional Space}
In this section, we first show the proof of Proposition~\ref{prop:infinite_program}, which transforms the optimization problem~\eqref{eq:infinite_obj} in the Hilbert space into an optimization problem of sequences of real numbers. Then, in Appendix~\ref{ssec:proof:finite:program}, we derive and propose a finite dimensional optimization problem that computing resources can realize. 
\subsection{Proof of Proposition~\ref{prop:infinite_program}}
Recall that for $\mseq$ and $\pseq$ we have
\[
{\Ascr^m} = \sum_{\ell,\ell''\in\NN} a_{\ell\ell''}^m\phi^m_\ell\otimes\phi_{\ell''}^m,\quad \beta_{ij}=\sum_{\ell,\ell'\in\NN}b_{ij,\ell\ell'}\phi_{\ell}\otimes\phi_{\ell'},\quad 
{\Ycal_{i}^m}=\sum_{\ell''\in\NN} y_{i,\ell''}^m\phi_{\ell''}^m.
\]
Therefore, we can write
\begin{align*}
    &\EE\sbr{\bignorm{{\Ascr^m}{\Ycal_{i}^m}-\sum_{j\in\pni}\beta_{ij}{\Ascr^m}\Ycal_{j}^m }_\HH^2}\\
    &\quad=\EE\Bigg[\Bigg\|
    \rbr{\sum_{\ell,\ell''\in\NN} a_{\ell\ell''}^m\phi^m_\ell\otimes\phi_{\ell''}^m}
    \rbr{\sum_{\ell''\in\NN} y_{i,\ell''}^m\phi_{\ell''}^m}\\
    &\quad\quad\quad-
    \sum_{j\in\pni}
    \rbr{\sum_{\ell,\ell'\in\NN}b_{ij,\ell\ell'}\phi_{\ell}\otimes\phi_{\ell'}}
    \rbr{\sum_{\ell',\ell''\in\NN} a_{\ell'\ell''}^m\phi^m_{\ell'}\otimes\phi_{\ell''}^m}
    \rbr{\sum_{\ell''\in\NN} y_{j,\ell''}^m\phi_{\ell''}^m}
    \Bigg\|_\HH^2\Bigg]\\
    &\quad=\EE\Bigg[\Bigg\|
    \rbr{\sum_{\ell,\ell''\in\NN} a_{\ell\ell''}^my_{i,\ell''}^m\phi_\ell }-
    \sum_{j\in\pni}
    \rbr{\sum_{\ell,\ell',\ell''\in\NN}b_{ij,\ell\ell'} a_{\ell'\ell''}^m y_{j,\ell''}^m\phi_{\ell}}
    \Bigg\|_\HH^2\Bigg]\\
    &\quad=\EE\sbr{\sum_{\ell\in\NN}
 \rbr{\sum_{\ell''\in\NN}a_{\ell\ell''}^m y_{i,\ell''}^m
 - 
 \sum_{j\in\pni}\sum_{\ell',\ell''\in\NN} b_{ij,\ell\ell'}a_{\ell'\ell''}^my_{j,\ell''}^m
}^2}.
\end{align*}
Hence we complete the proof. 
\subsection{Derivation of Equation~\eqref{eq:population:finite:program}}\label{ssec:proof:finite:program}
 Similar to the latent space, we define the $ k_m$-dimensional approximation of $\chi_{i}^m$ as
 \[
 \chi_{i}^{k_m}:=\sum_{\ell=1}^{k_m}
x_{i,\ell}^m \phi_{\ell}^m,\quad\pseq,\;\mseq,
 \]
where $x_{i,\ell}^m=\dotp{\chi_{i}^m}{\phi_{\ell}^m}$ for $\ell=1,\ldots,k_m$. 
Then, we can represent the truncated $\zb_i=(z_{i,1},\ldots,z_{i,k})^\top\in\RR^{k}$ from the infinite dimensional vector $(z_{i,1},z_{i,2},\ldots)$ in the following form
  \begin{align}
 \zb_i&=\begin{pmatrix}z_{i,1}\\\vdots\\z_{i,k}\end{pmatrix}=\sum_{j\in[p]\backslash \{i\}}
 \begin{pmatrix}b_{ij,11}^\star&\ldots&b_{ij,1k}^\star\\&\ddots&\\b_{ij,k1}^\star&\ldots&b_{ij,kk}^\star\end{pmatrix}
 \begin{pmatrix}z_{j,1}\\\vdots\\z_{j,k}\end{pmatrix}
 +\rb_i+\wb_i=\sum_{j\in[p]\backslash \{i\}}\pBijk\zb_j+\rb_i+\wb_i,\label{eq:z_truncate}
 \end{align}
 where 
$ r_{i,\ell}=\sum_{j\in\pni}\sum_{\ell'=k+1}^\infty \dotp{b_{ij,\ell\ell'}^\star\phi_{\ell'}}{\Zcal_{ j}}$    
 , $w_{i,\ell}=\dotp{\Wcal_i}{\phi_{\ell}}$ for $\ell=1,\ldots,k$ 
and $\rb_{i}=(r_{i,1},\ldots,r_{i,k})^\top\in\RR^{k}$ denotes the residual vector. Let $\EE(\rb_i\rb_i^{\top})=\Sigmab_i^{\rb}$ and  $\EE(\wb_i\wb_i^{\top})=\Sigmab_i^{\wb}$ for $\pseq$.

Following~\eqref{eq:relationofA}, we can also express~\eqref{eq:z_truncate} as
 \begin{align}\label{eq:z_A}
 \Zcal^{k}_{i} = \sum_{\ell=1}^k\dotp{{\Ascr^m}(\chi_{i}^m)}{\phi_{\ell}}\phi_{\ell} 
 &=\sum_{\ell'=1}^\infty
 \sum_{\ell=1}^k
 \dotp{a_{\ell\ell'}^{m\star} \phi_{\ell'}^m}{\chi_{i}^m}\phi_{\ell}\notag\\
 &=\sum_{\ell=1}^k\rbr{  \sum_{\ell'=1}^{k_m}
 \dotp{a_{\ell\ell'}^{m\star} \phi_{\ell'}^m}{\chi_{i}^{k_m}}\phi_{\ell}
 +\sum_{\ell'=k_m+1}^{\infty}
 \dotp{a_{\ell\ell'}^{m\star}
 \phi_{\ell'}^m}{\chi_{i}^m}\phi_{\ell}}.
 \end{align}
 To simplify the analysis, we assume that the second term is $0$. 
Then, Equation~\eqref{eq:z_A} has the following equivalent expression in matrix form
 \begin{equation}
 \zb_i
 =
 \begin{pmatrix}\label{eq:z}
 z_{i,1}\\\vdots\\z_{i,k}
 \end{pmatrix}
 =
    \begin{pmatrix}
        a_{11}^{m\star}&\ldots& a_{1k_m}^{m\star}\\
        &\ddots&\\
        a_{k_m1}^{m\star}&\ldots&a_{kk_m}^{m\star}
    \end{pmatrix}
        \begin{pmatrix}
        x_{i,1}^m\\\vdots\\x_{i,k_m}^m\end{pmatrix}
    +
    \begin{pmatrix}
    u_{i,1}^m\\\vdots\\u_{i,k}^m
    \end{pmatrix}
    =\pAmk \xkmfree{i} +\ub_{i}^m,
 \end{equation}
 where
 $
 u_{i,\ell}^m=\sum_{\ell'=k+1}^\infty \dotp{a_{\ell\ell'}^{m\star}\phi_{\ell'}^m}{\chi_{i}^m}
 $ for $\ell=1,\ldots,k$
 and $\ub_{i}^m=(u_{i,1}^m,\ldots, u_{i,k}^m)^\top\in\RR^{k}$ is the truncation error in the observation space. 
 Let $\ub^m=(\ub_{1}^{m\top},\ldots,\ub_{p}^{m\top})^\top\in\RR^{kp}$ and $\EE(\ub^m\ub^{m\top})=\Sigmab^{m,\ub}$. Combining ~\eqref{eq:z_truncate} and ~\eqref{eq:z} together, we then have the relationship
\begin{equation}\label{eq:neighbour_reg}
    \pAmk \xkmfree{i} + \ub_{i}^m - \cbr{\sum_{j\in\pni}\pBijk\rbr{\pAmk \xkmfree{j}
    +\ub_{j}^m} +\rb_i+\wb_i}={\bf 0},
\end{equation}
  where $\wb_i$ follows independently from $\Ncal({\bf 0},\Sigmab^{\wb}_i)$. Let ${\qb}_{i}^m=(q_{i,1}^m,\ldots,q_{i,k_m}^m)^\top\in\RR^{k_m}$ where $q_{i,j}^m=\dotp{\xi_{i}^m}{\phi_{j}^m}$ for $j=1,\ldots k_m$, we define the finite truncation of the observed random vector as
\begin{equation}\label{eq:observation}
\yb_{i}^m = \xb_{i}^m + \qb_{i}^m,\qquad\mseq,\pseq,
\end{equation}
where $\qb_{i}^m$ is the noise in the observation space and follows the distribution $\Ncal({\bf 0}, \Sigmab^{m,\qb}_i)$. Therefore, under the assumption that $k_m$ is large enough such that $\qb_{i}^m$ are small in magnitude, we can approximate~\eqref{eq:neighbour_reg} as
\begin{equation*}
    \pAmk \ykmfree{i} + \ub_{i}^m - \cbr{\sum_{j\in\pni}\pBijk\rbr{\pAmk \ykmfree{j}
    +\ub_{j}^m} +\rb_i+\wb_i}\approx{\bf 0}.
\end{equation*}
Therefore, a reasonable choice of the finite dimensional realization of the original programming~\eqref{eq:obj_infinite} is
\[
\min_{\{\Amk\},\{\Bijk\}} \sum_{m=1}^M\sum_{i=1}^p\EE\sbr{\bignorm{\Amk\ykmfree{i}-\sum_{j\in\pni}\Bijk\Amk\ykmfree{j}}_F^2}.
\]
\section{Notations}\label{ssec:notation}

We define the following quantities
\begin{align*}
&\Upsilon_{1} = \frac{3\nu_x\nu_b\rho_x\rho_b}{16},\quad 
\Upsilon_{2}= \frac{\nu_x\nu_b+3\rho_x\rho_b}{4};\\ 
&\Upsilon_{3} = \norm{\sCov}_2(C_B+1)\max_{\mseq}\norm{\Ab^{m(0)}}_2\max_{\pseq}(\norm{\pBik }_F\vee \norm{\ptBik }_1);\\
&\Upsilon_{4} =2\sqrt{(1+\vartheta_1)(1+\vartheta_2)\tau\alpha^\star},\quad 
\Upsilon_{5}=4(1+C_A)^2(2+C_B)^2\rho_b\rho_a\max_{\mseq}\norm{\sCov}_2^2;\\
&\Upsilon_{6} = \frac{\nu_x\nu_a}{2},\quad 
\Upsilon_{7} = \rho_a^2\max_{\mseq}\norm{\sCov}_2^2;\\ 
& \Upsilon_{8}= 
    C_0\rho_x\rho_b
        \cbr{
        \rho_b\rbr{
        \max_{\mseq}
        \norm{\Sigmab^{m,\ub}}_2+
        \rho_a\norm{\Sigmab^{m,\qb}}
        }
        +
        \max_{\pseq}\rbr{
            \norm{\Sigmab_i^{\rb}}_2
            +
            \norm{\Sigmab_i^{\wb}}_2
        }
    };\\
&\Upsilon_{9}
 = 
    C_1C_\vartheta(1+\vartheta_1)\rho_x\rho_a
    \cbr{
        \rho_b\rbr{
        \max_{\mseq}\norm{\Sigmab^{m,\ub}}_2+\rho_a
        \norm{\Sigmab^{m,\qb}}
        }
        +
        \max_{\pseq}\rbr{
            \norm{\Sigmab_i^{\rb}}_2
            +
            \norm{\Sigmab_i^{\wb}}_2
        }}
    ;\\
&C_\vartheta = \rbr{1+\sqrt{\frac{2}{\vartheta_1-2}}}^2\rbr{1+\sqrt{\frac{4}{\vartheta_2-2}}}^2,
\quad C_\alpha=\frac{\Upsilon_8}{\Upsilon_1},
\quad C_\beta = \frac{\Upsilon_9}{\Upsilon_7};\\
& C_\gamma = \max_{\mseq}\rho_b^{1/2}
\norm{\sCov}_2
\cbr{\sinmax{\Amk^{(0)}}+\rho_a^{1/2}}
\sqrt{(1+\vartheta_1)s^\star};\\
&C_{\delta,i}=C_2\sqrt{(1+\vartheta_2)(1+\vartheta_1)}\max_{\mseq}\rho_x^{1/2}\rho_a^{1/2}\{\rho_b^{1/2}\norm{\Sigmab^{m,\ub}}_2^{1/2}\\
&\qquad\qquad\qquad\qquad\qquad\qquad\qquad\qquad+
(\rho_a\rho_b)^{1/2}\norm{\Sigmab^{m,\qb}}_2^{1/2}+\norm{\Sigmab_i^\rb}_2^{1/2}+\norm{\Sigmab_i^\wb}_2^{1/2}\};\\
&C_{\gamma_k,\nu,\rho} =C_3\bigg\{\gamma_k^{-1/2}(\rho_x^{1/2}\nu_x^{-2}+\nu_x^{-3/2})+\rho_x^2\nu_x^{-5/2}(\gamma_k^{-1/2}+\gamma_k^{-2})(\rho_x^{1/2}\nu_x^{-1/2}+1)\bigg\}\\
&C_{\eta_A} = \frac{3\Upsilon_1}{4\Upsilon_2},
\quad C_{\eta_B} = \frac{\Upsilon_6}{4},\\
\end{align*}
where $C_0, C_1, C_2,C_3,C_4,C_5$ are absolute constants, \textcolor{black}{which
 are independent of data parameters: $M, N, p, s^\star,\alpha$,
upper and lower bounds of the singular values: $\nu_x,\rho_x,\nu_b,\rho_b,\nu_a,\rho_a$, model parameters: $k,k_m, C_A,C_B$, and tuning parameters: $\gamma_1,\gamma_2,\tau_1,\tau_2$.}

\section{Analysis of Edge Estimation}
The section discusses the analysis of edge estimation, where the results are presented in Theorem~\ref{theorem:convergence}--\ref{theorem:main}. We begin with introducing some notations and math formulations that will assist the analysis in Section~\ref{ssec:edge_notation}. In Section~\ref{ssec:edge_preonestep}, we discuss the shrinkage of the distance metric $\max_{\mseq}\norm{\Amk-\pAmk}_F^2$ and $\max_{\pseq}\norm{\Bik-\pBik}_F^2$ per iterate of Algorithm~\ref{alg:update} along with the statistical error induced by the finite sample size setting. In section~\ref{ssec:edge_onestep}, we combine the results from Section~\ref{ssec:edge_preonestep} to state the contraction of the distance metric $\max_{\mseq}\norm{\Amk-\pAmk}_F^2+\max_{\pseq}\norm{\Bik-\pBik}_F^2$ per iterate of Algorithm~\ref{alg:update} and the underlying conditions. With results from Section~\ref{ssec:edge_onestep} in hand, we are able to apply the telescoping technique to show convergence of Algorithm~\ref{alg:update}. The analysis is presented in Section~\ref{ssec:theorem_convergence}. Section~\ref{ssec:edge_auxlemma1}--\ref{ssec:edge_auxlemma2} present the proof of lemmas used in Section~\ref{ssec:edge_preonestep}--\ref{ssec:theorem_convergence}.

\subsection{Notations}\label{ssec:edge_notation}
For analysis simplicity, we can write the objective function~\eqref{eq:f_n} as
 \begin{align}
 f(\setAk,\setBk) 
 &=
 \sum_{i=1}^p\sum_{m=1}^M\frac{1}{2N}
 \bignorm{
    \Amk \Ykmnfree{i}
    -
    \sum_{j\in\pni}
    \Bijk\Amk \Ykmnfree{j} 
}_F^2\notag\\
&=
\sum_{i=1}^p\sum_{m=1}^M\frac{1}{2N}
\norml{
    \Amk \Ykmnfree{i}
    -
    \Bik (\Ib_{p-1} \otimes \Amk )
    \Ykmnfree{\noi}
}_F^2\notag\\
&=
\sum_{i=1}^p\sum_{m=1}^M
\frac{1}{2N}
\norm{\tBik (\Ib_{p} \otimes \Amk )\Ykmn}_F^2,\label{eq:main_ver2}
\end{align}
where $\otimes$ denote the Kronecker product. Define $$
\sCov =N^{-1}\Ykmn(\Ykmn)^\top, \quad\sCovfree{\cdot}{\noi}=N^{-1}\Ykmn(\Ykmnfree{\noi})^\top.
$$
Then, for any matrix $\Delta\in\RR^{k\times k_m}$, we can write
 \begin{equation}\label{eq:dpot_gradAm}
     \dotp{\nabla_{\Amk }f(\setAk,\setBk)}{\Delta} = \sum_{i=1}^p\dotp{\tBikT \tBik (\Ib_p\otimes \Amk )\sCov}{(\Ib_p\otimes \Delta)}
 \end{equation}
 for $\mseq$. Moreover, the gradient of $f(\setAk,\setBk)$ with respect to $\Amk $ for $\mseq$ is
\begin{equation}\label{eq:gradAm}
\nabla_{\Amk }f(\setAk,\setBk)
=
\sum_{j=1}^p
(\eb_j^\top\otimes \Ib_k)
\cbr{\sum_{i=1}^p\tBikT \tBik (\Ib_p\otimes \Amk )\sCov}
(\eb_j\otimes \Ib_{k_m}).
\end{equation}

Similarly, for any matrix $\Delta\in\RR^{k\times k(p-1)}$, we can write
\begin{equation}\label{eq:dpot_gradBij}
\dotp{\nabla_{\Bb_{i}}f(\setAk,\setBk)}{\Delta} = \sum_{m=1}^M \dotp{-\tBik (\Ib_p\otimes \Amk )
\sCovfree{\cdot}{\noi}
(\Ib_{p-1}\otimes \Amk )^\top}{\Delta}
\end{equation}
for every $\pseq$ and $j\in\pni$.

\subsection{Preliminaries: One-Step Update and statistical quantities}\label{ssec:edge_preonestep}
We show Theorem~\ref{theorem:convergence} in several steps. We begin with showing the contraction of distance metric per iterate of Algorithm~\ref{alg:update}. 
First, given the current estimate $\Amk$ for $\mseq$, we define an update of $\Amk$ from Algorithm~\ref{alg:update} as
\begin{align*}
    \Ab^{m+}
    =
    \Pcal_{\tau}(\Amk-\eta_A\nabla_{\Amk} f(\setAk,\setBk)),\quad\mseq;
\end{align*}
Similarly, for every $\pseq$, we define the update of $\Bik$ as $\Bb_i^+$. Let
\begin{align}\label{eq:defsupport}
    \Scal_i^\star=\supp(\pBik ),\quad \Scal_i=\supp(\Bik ),\quad  \Scal^+_i=\supp(\Bb_i^{+}),\quad
    \bar\Scal_i = \Scal^\star_i\cup\Scal_i\cup\Scal_i^+.
\end{align}
By definition of $\pBik$, we have $\abr{\Scal_i^\star}\leq s^\star(\alpha^\star k)^2$ and $\abr{\Scal_i},\abr{\Scal_i^+}\leq s(\alpha k)^2=(\vartheta_1s^\star)(\vartheta_2\alpha^\star k)^2/8$. Taking the union bound, it follows that $
\abr{\bar\Scal_i}\leq (1+\vartheta_1)s^\star(1+\vartheta_2)^2(\alpha^\star k)^2$. 

 Similarly, we define the corresponding sets of node indices as  
 \begin{align}\label{eq:defneighbour}
     &\pNi=\{j:\norm{\pBijk}_F>\epsilon_0\},\quad
     \eNi=\{j:\norm{\Bijk}_F>\epsilon_0\};\;\notag\\
     &\eNinext=\{j:\norm{\Bb_{i,j}^+}_F>\epsilon_0\},\quad
     \barNi=\pNi\cup\eNi\cup\eNinext,
 \end{align}
    with $|\barNi|\leq (1+\vartheta_1)s^\star$ for $\pseq$. 

By definition, we have
\begin{align}
    \Bb_i^{+} 
    &=
        \Hcal_\alpha\circ\Tcal_s\rbr{
            \Bik  
            - 
            \eta_B\nabla_{\Bik }
            f(\setAk,\setBk)
        }
    \notag\\
    &= 
        \Hcal_\alpha\circ\Tcal_s\rbr{
            \Bik  
            - 
            \eta_B[\nabla_{\Bik }
            f(\setAk,\setBk)]_{\bar\Scal_i}
        },\label{eq:proj_supp}
\end{align}
for $\pseq$.

With definitions of $\Amk^{+}$ for $\mseq$ and $\Bik^{+}$ for $\pseq$, Lemma~\ref{lemma:onestep_Am} states the contraction of the distance $\norm{\Amk^{+}-\pAmk}_F^2$ with respect to $\norm{\Amk-\pAmk}_F^2$ for $\mseq$ while Lemma~\ref{lemma:onestep_Bij} states the contraction of $\norm{\Bb_i^{+}-\pBik }_F^2$ with respect to $\norm{\Bb_i-\pBik }_F^2$ for $\pseq$ per iterate of Algorithm~\ref{alg:update}. We leave the details of the proof in Section~\ref{ssec:edge_auxlemma1}.

\begin{lemma}[One-Step Update of $\Amk$]\label{lemma:onestep_Am} Recall that $\Upsilon_j$ for $j=1,\ldots,6$ are constants defined in Section~\ref{ssec:notation}.
Suppose that $k\leq\min_{\mseq} k_m$ and Assumption~\ref{assumption:cov}--\ref{assumption:AB} hold.
There exists constants $C_A$ and $C_B$ such that $\norm{\Amk-\pAmk}_2\leq C_A\norm{\pAmk}_2$, $\norm{\Bik-\pBik}\leq C_B\norm{\pBik}$ for any unitarily invariant norm and $\norm{\BikT-\pBikT}_1\leq C_B\norm{\pBikT}_1$
for $\mseq,\;\pseq$.
 Let $\defcdnm$,  $N=O(\cdnm^2 (\max_{\mseq} k_ms^\star+\log p +\log M))$  and $\eta_A\leq {2}/\{3p(s^\star+1)\Upsilon_{2}\}$, then
\begin{align}
    \max_{\mseq}
    \norm{\Ab^{m+}-\pAmk }_F^2
    &\leq4\bigg[ 
        \cbr{
            1
            -
            \eta_A\frac{3p(s^\star+1)\Upsilon_{1}}{2\Upsilon_{2}}
            +
            \eta_A\frac{2ps^\star\Upsilon_{4}^2\Upsilon_{3}^2}{\Upsilon_{6}}
        }
        \max_{\mseq}
        \norm{\Amk -\pAmk }_F^2\notag\\
    &\quad +
        k\eta_A\cbr{
            \frac{4\Upsilon_{2}}{p(s^\star+1)\Upsilon_{1}}
            +
            3\eta_A
        }
        \max_{\mseq}
        \norm{\gradAstar}_2^2\notag\\
    &\quad +
        p\eta_A\cbr{
            \frac{\Upsilon_{6}}{2}
            +
            3p(s^\star+1)\Upsilon_{5}\eta_A
        }\max_{\pseq}
        \norm{\Bik -\pBik }_F^2\bigg],
    \label{eq:onestepAm_r}
\end{align}
with probability at least $1-\max_{\mseq}6\exp\{-k_m(s^\star+1)\}$.
\end{lemma}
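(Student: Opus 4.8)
Here is the plan. The argument follows the standard projected-gradient template for nonconvex problems: strip off the projection with an elementary inequality, expand the squared distance to first order, split the gradient into a ``noise'' part evaluated at the truth, a strongly-convex/smooth ``signal'' part, and a perturbation caused by using the current $\Bik$ in place of $\pBik$, and then reassemble with Young's inequality under the prescribed step-size cap.

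Since $\Ab^{m+}=\Pcal_\tau(\Amk-\eta_A\nabla_{\Amk}f(\setAk,\setBk))$ is the Euclidean projection onto $\Kcal_m(\tau_1,\tau_2)$ and $\pAmk\in\Kcal_m(\tau_1,\tau_2)$ by the structural assumptions on $\pAmk$ (cf.\ the remarks following Theorem~\ref{theorem:initialcca}), minimality of the projection gives $\norm{\Ab^{m+}-(\Amk-\eta_A\nabla_{\Amk}f(\setAk,\setBk))}_F\le\norm{\pAmk-(\Amk-\eta_A\nabla_{\Amk}f(\setAk,\setBk))}_F$, so the triangle inequality yields
\[
\norm{\Ab^{m+}-\pAmk}_F^2\le 4\,\norm{\Amk-\eta_A\nabla_{\Amk}f(\setAk,\setBk)-\pAmk}_F^2 ,
\]
which is the source of the leading factor $4$. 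Expanding the right-hand side gives $\norm{\Amk-\pAmk}_F^2-2\eta_A\dotp{\nabla_{\Amk}f(\setAk,\setBk)}{\Amk-\pAmk}+\eta_A^2\norm{\nabla_{\Amk}f(\setAk,\setBk)}_F^2$, and I would write $\nabla_{\Amk}f(\setAk,\setBk)=\gradAstar+\Delta_1^m+\Delta_2^m$ with $\gradAstar=\nabla_{\Amk}f(\setpAk,\setpBk)$, $\Delta_1^m=\nabla_{\Amk}f(\setAk,\setpBk)-\nabla_{\Amk}f(\setpAk,\setpBk)$, and $\Delta_2^m=\nabla_{\Amk}f(\setAk,\setBk)-\nabla_{\Amk}f(\setAk,\setpBk)$; the term $\gradAstar$ is exactly the statistical error that the present lemma leaves unbounded.

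The technical heart is controlling $\Delta_1^m$ and $\Delta_2^m$. Because $f(\cdot,\setpBk)$ is quadratic in $\Amk$ with gradient given by~\eqref{eq:gradAm}, I would show it is strongly convex and smooth in $\Amk$ \emph{on the restricted set} of matrices whose columns are supported on the active coordinates: each node $i$ contributes curvature $\sinmintwo{\ptBik}\cdot\sigma_{\min}([\sCov]_{\bar\Scal_i})$ and smoothness $\sinmaxtwo{\ptBik}\cdot\sigma_{\max}([\sCov]_{\bar\Scal_i})$, and summing over the $p$ nodes with $\abr{\bar\Scal_i}\le(1+\vartheta_1)s^\star(1+\vartheta_2)^2(\alpha^\star k)^2$ produces the $p(s^\star+1)$ scaling and the constants $\Upsilon_1$ (curvature) and $\Upsilon_2$ (smoothness); this yields $\dotp{\Delta_1^m}{\Amk-\pAmk}\ge C\,p(s^\star+1)\tfrac{\Upsilon_1}{\Upsilon_2}\norm{\Amk-\pAmk}_F^2$ and $\norm{\Delta_1^m}_F\le C'\,p(s^\star+1)\Upsilon_2\norm{\Amk-\pAmk}_F$. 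For $\Delta_2^m$ I would expand~\eqref{eq:gradAm} in $\tBik-\ptBik$, whose only nonzero block is $\Bik-\pBik$, and use the iterate norm bounds from Assumption~\ref{assumption:coef} (the $C_A,C_B$ region) to get $\norm{\Delta_2^m}_F\le C''\sqrt{s^\star}\,\Upsilon_4\Upsilon_3\,\max_{\pseq}\norm{\Bik-\pBik}_F$, where $\Upsilon_3$ collects $\norm{\sCov}_2$ with the norm bounds on the iterates and $\Upsilon_4$ carries the sparsity budget $\sqrt{\tau\alpha^\star}$. All of these bounds require the empirical $\sCov$ to satisfy restricted-eigenvalue bounds comparable to $\nu_x,\rho_x$ on supports of size $O((1+\vartheta_1)s^\star\max_{\mseq}k_m)$, which I would obtain by sub-exponential concentration for the Gaussian functional scores together with a union bound over supports; under $N=O(\cdnm^2(\max_{\mseq}k_m s^\star+\log p+\log M))$ this is the event carrying the probability $1-\max_{\mseq}6\exp\{-k_m(s^\star+1)\}$.

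Finally I would reassemble. The inner-product term gives the negative drift $-\eta_A\tfrac{3p(s^\star+1)\Upsilon_1}{2\Upsilon_2}\norm{\Amk-\pAmk}_F^2$ after absorbing the small curvature loss incurred by working with the current $\setBk$ (this loss is exactly the $+\eta_A\tfrac{2ps^\star\Upsilon_4^2\Upsilon_3^2}{\Upsilon_6}\norm{\Amk-\pAmk}_F^2$ term), plus, via Young's inequality with weights taken to be multiples of $\Upsilon_2/\{p(s^\star+1)\Upsilon_1\}$ and $\Upsilon_6$, the terms $k\eta_A\tfrac{4\Upsilon_2}{p(s^\star+1)\Upsilon_1}\norm{\gradAstar}_2^2$ (using $\norm{\gradAstar}_F^2\le k\norm{\gradAstar}_2^2$, as $\gradAstar$ has $k$ rows) and $p\eta_A\tfrac{\Upsilon_6}{2}\norm{\Bik-\pBik}_F^2$. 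The quadratic term $\eta_A^2\norm{\nabla_{\Amk}f(\setAk,\setBk)}_F^2$, split by $\norm{a+b+c}_F^2\le 3(\norm{a}_F^2+\norm{b}_F^2+\norm{c}_F^2)$ on the same three-way decomposition, contributes $3k\eta_A^2\norm{\gradAstar}_2^2$, $3p(s^\star+1)\Upsilon_5\eta_A^2\norm{\Bik-\pBik}_F^2$, and a smoothness term of order $(p(s^\star+1)\Upsilon_2)^2\eta_A^2\norm{\Amk-\pAmk}_F^2$ that is dominated by part of the negative drift once $\eta_A\le 2/\{3p(s^\star+1)\Upsilon_2\}$. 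Carrying the projection factor $4$ and taking $\max_{\mseq}$ on both sides then reproduces~\eqref{eq:onestepAm_r}. The main obstacle is this middle step: pinning down the restricted strong-convexity/smoothness constants of the empirical quadratic with the sharp $p(s^\star+1)$ dependence while controlling $\sCov$ only on low-dimensional supports, and simultaneously verifying that $\Delta_2^m$ (equivalently, the curvature loss from $\Bik\neq\pBik$) is genuinely lower order inside the local region of Assumption~\ref{assumption:coef}; everything else is routine bookkeeping with Young's inequality and the step-size bound.
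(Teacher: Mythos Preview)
Your overall strategy matches the paper: the projection inequality gives the factor $4$, you expand to first order, and you make the same three-way split of the gradient into $\gradAstar$, $\Delta_1^m=\nabla_{\Amk}f(\setAk,\setpBk)-\nabla_{\Amk}f(\setpAk,\setpBk)$, and $\Delta_2^m=\nabla_{\Amk}f(\setAk,\setBk)-\nabla_{\Amk}f(\setAk,\setpBk)$. The treatment of $\gradAstar$ and of $\Delta_2^m$ via Young's inequality with weights proportional to $\Upsilon_2/\{p(s^\star+1)\Upsilon_1\}$ and $\Upsilon_6$ is also essentially what the paper does in Lemmas~\ref{lemma:I_m1}--\ref{lemma:I_m2}.

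The gap is in how you dispose of the $\norm{\Delta_1^m}_F^2$ contribution coming from $\eta_A^2\norm{\nabla_{\Amk}f(\setAk,\setBk)}_F^2$. You propose to bound $\norm{\Delta_1^m}_F\le C'p(s^\star+1)\Upsilon_2\norm{\Amk-\pAmk}_F$ by smoothness and then claim the resulting $3\eta_A^2\bigl(p(s^\star+1)\Upsilon_2\bigr)^2\norm{\Amk-\pAmk}_F^2$ is ``dominated by part of the negative drift once $\eta_A\le 2/\{3p(s^\star+1)\Upsilon_2\}$.'' This is not correct: at the maximal step size that quadratic piece is of order $\eta_A\,p(s^\star+1)\Upsilon_2$, while the drift is of order $\eta_A\,p(s^\star+1)\Upsilon_1/\Upsilon_2$, and since $\Upsilon_2^2\ge 4\Upsilon_1$ always (AM--GM on $\nu_x\nu_b$ and $3\rho_x\rho_b$), the quadratic overwhelms the drift whenever the condition number is nontrivial. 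The paper instead uses the \emph{full} co-coercivity inequality (Lemma~\ref{lemma:convexbound}, i.e.\ Nesterov's Theorem~2.1.12): with $\mu,L$ the strong-convexity and smoothness constants of $f(\cdot,\setpBk)$ from Lemma~\ref{lemma:scsm_coef},
\[
\dotp{\Delta_1^m}{\Amk-\pAmk}\ \ge\ \tfrac{\mu L}{\mu+L}\,\norm{\Amk-\pAmk}_F^2\ +\ \tfrac{1}{\mu+L}\,\norm{\Delta_1^m}_F^2 .
\]
The extra $\tfrac{1}{\mu+L}\norm{\Delta_1^m}_F^2$ enters the inner-product term with coefficient $-\tfrac{2\eta_A}{\mu+L}$; combined with $+3\eta_A^2\norm{\Delta_1^m}_F^2$ from the quadratic, the net coefficient is $\eta_A\bigl(-\tfrac{2}{\mu+L}+3\eta_A\bigr)\le 0$ \emph{exactly} under the stated step-size cap $\eta_A\le 2/\{3p(s^\star+1)\Upsilon_2\}$, since $\mu+L=p(s^\star+1)\Upsilon_2$. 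The entire $\norm{\Delta_1^m}_F^2$ block is then dropped, leaving the clean contraction coefficient $1-\eta_A\tfrac{3p(s^\star+1)\Upsilon_1}{2\Upsilon_2}$ with no residual from the quadratic. A minor side point: the restricted-eigenvalue event for $\sCov$ is taken only on the $p$ \emph{fixed} true neighborhoods $\pNi\cup\{i\}$ (size at most $s^\star+1$), not over all supports, so no union-over-supports argument is needed here.
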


\begin{lemma}[One-Step Update of $\Bb_{ij}$]\label{lemma:onestep_Bij}Recall that $\Upsilon_j$ for $j=3,\ldots,7$ are constants defined in Section~\ref{ssec:notation}.
Suppose that $k\leq\min_{\mseq} k_m$ and Assumption~\ref{assumption:s-sparse}--\ref{assumption:disperse},~\ref{assumption:cov}--\ref{assumption:AB} hold.
There exists constants $C_A$ and $C_B$ such that $\norm{\Amk-\pAmk}_2\leq C_A\norm{\pAmk}_2$, $\norm{\Bik-\pBik}\leq C_B\norm{\pBik}$ for any unitarily invariant norm and $\norm{\BikT-\pBikT}_1\leq C_B\norm{\pBikT}_1$
for $\mseq,\;\pseq$.
 Let $\defcdnm$ and  $N=O(\cdnm^2 (\max_{\mseq} k_ms^\star+\log p +\log M))$.
Then, we have
\begin{align}
    \max_{\pseq}
    \norm{\Bb_i^{+}-\pBik }_F^2
    &\leq 
       C_\vartheta \bigg\{
        \rbr{
            1
            -
            \eta_B M\Upsilon_{6}
            +
            3\eta_B^2M^2\Upsilon_{7}
        }
        \max_{\pseq}
        \norm{\Bik -\pBik }_F^2\notag\\
    &\quad + 
        Ms^\star\eta_B\rbr{
            \frac{2\Upsilon_{4}^2\Upsilon_{3}^2}{\Upsilon_{6}}
            +   
            3\Upsilon_{5}M\eta_B
        }
        \max_{\mseq}
        \norm{\Amk -\pAmk }_F^2\notag\\
    &\quad+(1+\vartheta_1)s^\star\eta_B\rbr{\frac{2}{M\Upsilon_{6}}+3\eta_B}\max_{\pseq}\norm{[\gradBstar]_{\bar\Scal_i}}^2_{r(k,\infty)}
    \bigg\},\label{eq:onestepBi_r}
\end{align}
with probability at least $1-\max_{\mseq}\exp(-k_m s^\star)$.
\end{lemma}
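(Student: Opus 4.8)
The plan is to treat the composite update $\Bb_i^{+}=\Hcal_\alpha\circ\Tcal_s(\Bik-\eta_B[\nabla_{\Bik}f(\setAk,\setBk)]_{\bar\Scal_i})$ from~\eqref{eq:proj_supp} in two stages: first bound how much a plain gradient step shrinks the distance to $\pBik$, then absorb the nonconvex projection $\Hcal_\alpha\circ\Tcal_s$ via its near-projection property. Since $\Bik$, $\Bb_i^{+}$, and $\pBik$ are all supported on $\bar\Scal_i$, only the restriction of everything to $\bar\Scal_i$ enters. Because $s=\vartheta_1 s^\star/2$ and $\alpha=\vartheta_2\alpha^\star/2$ with $\vartheta_1,\vartheta_2>2$, we have $\pBik\in\Kcal_B(s^\star,\alpha^\star)\subset\Kcal_B(s,\alpha)$, and each of the best-$s$-group-sparse truncation $\Tcal_s$ and the best-$\alpha$-fraction truncation $\Hcal_\alpha$ obeys a near-projection bound $\norm{\mathcal{P}(v)-\pBik}_F^2\le(1+\sqrt{c/(\vartheta-2)})^2\norm{v-\pBik}_F^2$ (with $c=2$, $\vartheta=\vartheta_1$ for $\Tcal_s$ since $s/s^\star=\vartheta_1/2$, and the row/column structure of $\Hcal_\alpha$ producing the $4/(\vartheta_2-2)$ factor). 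Composing the two yields $\norm{\Bb_i^{+}-\pBik}_F^2\le C_\vartheta\norm{\Bik-\eta_B[\nabla_{\Bik}f(\setAk,\setBk)]_{\bar\Scal_i}-\pBik}_F^2$ with $C_\vartheta$ as in Appendix~\ref{ssec:notation}, so it remains to bound the gradient-step error on the right.

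I would then expand $\norm{\Bik-\eta_B g-\pBik}_F^2=\norm{\Bik-\pBik}_F^2-2\eta_B\dotp{g}{\Bik-\pBik}+\eta_B^2\norm{g}_F^2$ for $g=[\nabla_{\Bik}f(\setAk,\setBk)]_{\bar\Scal_i}$, and decompose $g=g_1+g_2+g_3$ with $g_1=[\nabla_{\Bik}f(\setpAk,\setBk)-\gradBstar]_{\bar\Scal_i}$ (the curvature part), $g_2=[\nabla_{\Bik}f(\setAk,\setBk)-\nabla_{\Bik}f(\setpAk,\setBk)]_{\bar\Scal_i}$ (the error from using $\setAk$ rather than $\setpAk$), and $g_3=[\gradBstar]_{\bar\Scal_i}$, the finite-sample gradient of $f$ in $\Bik$ at the true parameters. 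For $g_1$: the Hessian of $\Bik\mapsto f(\setpAk,\setBk)$ on $\bar\Scal_i$ is a Gram-type matrix built from $\pAmk$ and the sub-block of $\sCov$ indexed by the neighbors of $i$; once $\sCov$ is close to $\Sigmab^m$, its eigenvalues lie between constant multiples of $M\nu_a\nu_x$ and $M\rho_a\rho_x$, so $\dotp{g_1}{\Bik-\pBik}\ge M\Upsilon_6\norm{\Bik-\pBik}_F^2$ and $\norm{g_1}_F\le C M\sqrt{\Upsilon_7}\norm{\Bik-\pBik}_F$ for an absolute constant $C$. For $g_2$: the explicit form~\eqref{eq:dpot_gradBij} is quadratic in $\Amk$, so $g_2$ is a difference of quadratic forms in $\Amk$ versus $\pAmk$, and using $\norm{\Amk-\pAmk}_2\le C_A\norm{\pAmk}_2$, $\norm{\Bik-\pBik}\le C_B\norm{\pBik}$, the spectral bounds on $\pAmk,\pBik$, the $\Kcal_m(\tau_1,\tau_2)$ constraint, and $|\barNi|\le(1+\vartheta_1)s^\star$ one obtains $\norm{g_2}_F\le C M\Upsilon_3\Upsilon_4\sqrt{s^\star}\max_m\norm{\Amk-\pAmk}_F$, with the squared-gradient contribution of $g_2$ governed by $\Upsilon_5$. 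For $g_3$: Cauchy--Schwarz in the $r(k,1)$--$r(k,\infty)$ pairing with $|\barNi|\le(1+\vartheta_1)s^\star$ gives $\dotp{g_3}{\Bik-\pBik}\le\sqrt{(1+\vartheta_1)s^\star}\,\norm{[\gradBstar]_{\bar\Scal_i}}_{r(k,\infty)}\norm{\Bik-\pBik}_F$ and $\norm{g_3}_F\le\sqrt{(1+\vartheta_1)s^\star}\,\norm{[\gradBstar]_{\bar\Scal_i}}_{r(k,\infty)}$. Substituting the three bounds, applying AM--GM so that the $M\Upsilon_6$ curvature gain dominates the $g_2$ and $g_3$ cross terms and the $\eta_B^2\norm{g}_F^2$ remainder, and multiplying through by $C_\vartheta$, produces an inequality of exactly the shape~\eqref{eq:onestepBi_r}; matching the powers of $M$ and $s^\star$ and the $\Upsilon_j$-combinations is routine, and taking $\max_i$ at the end is harmless since all bounds are uniform in $i$.

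The main obstacle is the concentration underlying the curvature bounds for $g_1$ and the control of $\norm{[\gradBstar]_{\bar\Scal_i}}_{r(k,\infty)}$. I would show, uniformly over $1\le i\le p$ and over candidate supports of cardinality $|\bar\Scal_i|\le(1+\vartheta_1)(1+\vartheta_2)^2 s^\star(\alpha^\star k)^2$: (i) a restricted-eigenvalue statement that $\sCov$ is close to $\Sigmab^m$ in the relevant restricted operator norm, so that the effective strong-convexity and smoothness constants ($\Upsilon_6,\Upsilon_7$ in the bound) hold with high probability; and (ii) a Hanson--Wright / quadratic-form tail bound showing that $\gradBstar$, a centered quadratic form in the truncated sub-Gaussian scores $\yb_i^m$, is small on $\bar\Scal_i$ in the $r(k,\infty)$ norm. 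Union-bounding over $i$ and over supports yields the $\log p$, $\log M$, and $k_m s^\star$ terms in $N=O(\kappa_m^2(\max_{\mseq}k_m s^\star+\log p+\log M))$ and the failure probability $\max_{\mseq}\exp(-k_m s^\star)$. Because Lemma~\ref{lemma:onestep_Am} rests on the same events, I would reuse those estimates; once they hold, choosing $\eta_B$ small enough that the $\eta_B^2\norm{g_1}_F^2$ term is dominated by half of the $M\Upsilon_6$ curvature gain closes the argument.
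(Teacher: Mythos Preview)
Your approach is essentially the paper's: the near-projection property of $\Hcal_\alpha\circ\Tcal_s$ (Lemmas~\ref{lemma:disperse_coef}--\ref{lemma:groupsparse}) yields the $C_\vartheta$ factor, and your decomposition $g=g_1+g_2+g_3$ matches exactly the paper's $\Ical_{32,i},\Ical_{31,i},\Ical_{33,i}$ in Lemma~\ref{lemma:I_m3} and $\Ical_{42,i},\Ical_{41,i},\Ical_{43,i}$ in Lemma~\ref{lemma:I_m4}, with the same Young's-inequality balancing. One small point: for \emph{this} lemma you do not need to bound $\norm{[\gradBstar]_{\bar\Scal_i}}_{r(k,\infty)}$ or impose any condition on $\eta_B$, since that quantity sits on the right-hand side of~\eqref{eq:onestepBi_r} and is controlled separately in Lemma~\ref{lemma:staterror_Bij}; only the restricted-eigenvalue concentration of $\hat\Sigmab^m$ on $\barNi$ (your item~(i), the event $\bar\Ecal_i$ in the paper) is needed to deliver the stated probability.
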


Observe that the right hand sides of~\eqref{eq:onestepAm_r} and~\eqref{eq:onestepBi_r} has the term of $\norm{\gradAstar}_2^2$ and $\norm{[\gradBstar]_{\bar\Scal_i}}^2_{r(k,\infty)}$, respectively. By law of large number, ideally, these quantities would be negligible as the number of sample size goes to infinity. Specifically, $(\setpAk, \setpBk)$ is a stationary point of $\EE f(\setpAk, \setpBk)$. Hence, we expect that $\norm{\gradAstar}_2^2$ and $\norm{[\gradBstar]_{\bar\Scal_i}}^2_{r(k,\infty)}$ to be small when we have some reasonable number of sample size. Lemma~\ref{lemma:staterror_Am} states the upper bound of the $\norm{\gradAstar}_2$ for $\mseq$ while Lemma~\ref{lemma:staterror_Bij} provides the upper bound of $\norm{[\gradBstar]_{\bar\Scal_i}}^2_{r(k,\infty)}$ for $\pseq$.
\begin{lemma}[statistical error bound of $\gradAstar$]\label{lemma:staterror_Am} Under conditions of Lemma~\ref{lemma:onestep_Am} and assume that $N=O(C_\epsilon^2(k+k_m+\log M))$. Let $C_\epsilon=C_3\max_{\pseq}\rho_x^{1/2}\rho_b^{1/2}\big\{
\rho_b^{1/2}
        \norm{\Sigmab^{m,\ub}}_2^{1/2}
        +
            (\rho_b\rho_a)^{1/2}
            \norm{\Sigmab^{m,\qb}}_2^{1/2}
        +
            \norm{\Sigmab_i^{\rb}}_2^{1/2}
        +
            \norm{\Sigmab_i^{\wb}}_2^{1/2}\big\}$ for some absolute constant $C_3>0$.

Then, we have
\[
        \frac{1}{p (s^\star+1)}
        \norm{\gradAstar}_2
        >
        C_\epsilon\sqrt{\frac{k_m+k+\log M}{N}},
\]
    with probability smaller than $M^{-1}\exp\{-(k_m+k)\}$.
\end{lemma}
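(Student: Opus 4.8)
\textbf{Proof proposal for Lemma~\ref{lemma:staterror_Am}.} The plan is to write $\gradAstar$ as a centred i.i.d.\ average over the $N$ samples and to control its spectral norm by a net-based Bernstein argument whose variance proxy works out to be exactly $C_\epsilon^2$. \emph{Step 1 (an i.i.d.\ representation).} Starting from the closed form of the gradient in~\eqref{eq:gradAm} together with $\sCov=N^{-1}\Ykmn(\Ykmn)^\top$, I would write $\gradAstar=N^{-1}\sum_{n=1}^N\mathbf{D}_n$, where each summand $\mathbf{D}_n$ is a sum over $i\in\{1,\dots,p\}$ and over $j=i$ and those $j$ with $\pBijk\neq{\bf 0}$, of matrices obtained by multiplying a fixed linear image of $\Gb_i^{m,(n)}$ on the right by $(\yb_j^{m,(n)})^\top$. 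Here, evaluated at the true parameters, the fitted residual $\ptBik(\Ib_p\otimes\pAmk)\yb^{m,(n)}$ equals the node-$i$ regression residual $\Gb_i^{m,(n)}$ identified in~\eqref{eq:neighbour_reg}, i.e.\ the sum of the observation-space truncation error $\ub^m_{\cdot}$, the latent-space truncation error $\rb_i$, the intrinsic innovation $\wb_i$, and the measurement-noise term $\pAmk\qb^m_{\cdot}$. By Assumption~\ref{assumption:s-sparse}, $\ptBik$ has at most $s^\star+1$ nonzero $k$-blocks, so $\mathbf{D}_n$ is a sum of at most $p(s^\star+1)$ terms; this is the origin of the $p(s^\star+1)$ prefactor.

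\emph{Step 2 (centring and the variance proxy).} Since $(\setpAk,\setpBk)$ is a stationary point of $\EE f$ (as noted just before the lemma), $\EE[\gradAstar]=0$, so $\gradAstar=N^{-1}\sum_n(\mathbf{D}_n-\EE\mathbf{D}_n)$ is a centred average of i.i.d.\ matrices. Because $\Zcal$ is Gaussian, the latent scores, both truncation residuals and the innovation $\wb_i$ are jointly Gaussian, and together with the (sub-Gaussian) measurement noise every coordinate of $\Gb_i^{m,(n)}$ and $\yb_j^{m,(n)}$ is sub-Gaussian; hence each $\mathbf{D}_n$ is a sub-exponential random matrix. The main computation in this step is to bound $\|\EE[\mathbf{D}_n\mathbf{D}_n^\top]\|_2$ and $\|\EE[\mathbf{D}_n^\top\mathbf{D}_n]\|_2$: restricting to the $O(s^\star)$-sparse active index set, each cross-moment factors into a product of $\opnorm{\pAmk}{}\le\rho_a^{1/2}$, $\opnorm{\ptBik}{}\le\rho_b^{1/2}$, the covariance norm $\rho_x$ of $\yb^m$, and one of $\norm{\Sigmab^{m,\ub}}_2$, $\norm{\Sigmab^{m,\qb}}_2$, $\norm{\Sigmab_i^{\rb}}_2$, $\norm{\Sigmab_i^{\wb}}_2$, and collecting these terms reproduces $C_\epsilon^2$ up to an absolute constant absorbed into $C_3$.

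\emph{Step 3 (concentration).} With the variance proxy in hand, I would bound $\norm{\gradAstar}_2$ by covering the product of unit spheres $S^{k-1}\times S^{k_m-1}$ with a $\tfrac14$-net of cardinality at most $e^{c(k+k_m)}$, applying a Bernstein inequality for sums of i.i.d.\ sub-exponentials to the scalar $u^\top\gradAstar v$ at each net point, and then union-bounding over the net and over the $M$ modalities; the modality union bound is exactly what produces the $\log M$ term in the rate and the $M^{-1}$ factor in the failure probability. Under the stated sample-size condition, the deviation stays in the sub-Gaussian $\sqrt{(\cdot)/N}$ regime rather than the heavier-tailed one, which yields $\norm{\gradAstar}_2\le p(s^\star+1)\,C_\epsilon\sqrt{(k_m+k+\log M)/N}$ on an event of probability at least $1-M^{-1}\exp\{-(k_m+k)\}$; dividing through by $p(s^\star+1)$ finishes the proof.

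The hard part is the interplay of Steps 2 and 3. The residual $\Gb_i^m$ is a tangled linear combination of four sources with rather different dependence structures, so obtaining the clean proxy $C_\epsilon^2$ requires splitting $\Gb_i^m$ and estimating every resulting cross-moment, and the dimension bookkeeping must consistently exploit the $s^\star$-sparsity of $\ptBik$ and operate on the intrinsic $(k+k_m)$-dimensional coordinate blocks — a naive spectral bound on the ambient $pk\times pk_m$ matrix would cost an extra factor of $p$ in the rate. A secondary care point is verifying the sub-Gaussianity (or at least sub-exponential tails) of the measurement noise under the modelling assumptions in force, since this is what licenses the Bernstein step.
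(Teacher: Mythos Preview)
Your proposal is correct and follows the same route as the paper: net $S^{k-1}\times S^{k_m-1}$, write the bilinear form $\ub^\top\gradAstar\db$ as an i.i.d.\ average of products of sub-Gaussians (the residual $\vb_i^{m,(n)}=\ptBik(\Ib_p\otimes\pAmk)\yb^{m,(n)}$ against the regressor block $\yb_j^{m,(n)}$), use the $(s^\star{+}1)$-block-sparsity of $\ptBik$ to extract the $p(s^\star{+}1)$ prefactor, and finish with scalar Bernstein plus a union bound over the net. The only presentational difference is that the paper bounds the sub-exponential parameter directly via Orlicz norms---$\|\,\cdot\,\|_{\psi_1}\le\|\,\cdot\,\|_{\psi_2}\,\|\,\cdot\,\|_{\psi_2}$ (Vershynin, Lemma~2.7.7)---rather than the matrix second moments $\EE[\mathbf D_n\mathbf D_n^\top]$ you mention in Step~2; since you revert to scalar Bernstein in Step~3 anyway, the Orlicz route is the more direct path to the same constant $C_\epsilon$.
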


\begin{lemma}[statistical error bound of $\gradBstar$]\label{lemma:staterror_Bij} 
Let $C_4, C_5>0$ be some absolute constants and 
$$C_\zeta=C_4(1+\vartheta_2)\max_{\mseq}\rho_x^{1/2}\rho_a^{1/2}
\{
\rho_b^{1/2}\norm{\Sigmab^{m,\ub}}_2^{1/2}
+
(\rho_b\rho_a)^{1/2}\norm{\Sigmab^{m,\qb}}_2^{1/2}
+
\norm{\Sigmab_i^{\rb}}_2^{1/2}
+
\norm{\Sigmab_i^{\wb}}_2^{1/2}
\}.$$ Assume conditions of Lemma~\ref{lemma:onestep_Bij} hold and sample size $N=O(C_\zeta^2\{(\alpha^\star k)^2+\log p\})$. For any support set $\bar{\Scal}_i$ defined in~\eqref{eq:defsupport} satisfying $|\bar{\Scal}_i|\leq 
 C_5 k^2s^\star$,  we have
\[
    \frac{1}{M}
    \norm{[\gradBstar]_{\bar\Scal_i}}_{r(k,\infty)}
    >
    C_\zeta
    \sqrt{\frac{(\alpha^\star k)^2+\log p}{N}},
\]
    with probability smaller than $p^{-1}\exp\{-(\alpha^\star k)^2\}$.
\end{lemma}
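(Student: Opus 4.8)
The plan is to recognize $\gradBstar = \nabla_{\Bik}f(\setpAk,\setpBk)$ as a centered empirical cross-covariance matrix and to control its $r(k,\infty)$-norm on the support $\bar\Scal_i$ by Bernstein-type concentration combined with a covering argument and a union bound over the candidate neighbors. First I would use the gradient formula~\eqref{eq:dpot_gradBij} at the truth, together with $\sCovfree{\cdot}{\noi}=N^{-1}\Ykmn(\Ykmnfree{\noi})^\top$, to write
\[
\gradBstar
=-\frac{1}{N}\sum_{m=1}^M
\bigl[\ptBik(\Ib_p\otimes\pAmk)\Ykmn\bigr]
\bigl[(\Ib_{p-1}\otimes\pAmk)\Ykmnfree{\noi}\bigr]^\top,
\]
so that $\gradBstar$ is an average (over $m$ and the $N$ samples) of outer products of the \emph{residual} $\ptBik(\Ib_p\otimes\pAmk)\Ykmn$ with the \emph{projected neighbors} $(\Ib_{p-1}\otimes\pAmk)\Ykmnfree{\noi}$. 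Substituting the generative identity~\eqref{eq:neighbour_reg} and $\yb=\xb+\qb$ from~\eqref{eq:observation}, the residual reduces to a linear combination of the truncation term $\ub_{i}^m$, the tail term $\rb_i$, the independent latent part $\wb_i$, and the observation noise $\qb_{i}^m$ (plus the neighbor versions weighted by $\pBijk$). Each is (sub-)Gaussian — $\wb_i$ is Gaussian by Corollary~\ref{corollary:partialcovariance_multi} and the others are linear images of Gaussians — so every entry of $\gradBstar$ is an average of products of sub-Gaussian variables, hence sub-exponential. Because $(\setpAk,\setpBk)$ is a stationary point of $\EE f$, these entries are mean zero, and their sub-exponential scale is governed by the residual variance ($\lesssim\rho_a\{\norm{\Sigmab^{m,\ub}}_2+\rho_a\norm{\Sigmab^{m,\qb}}_2+\norm{\Sigmab_i^\rb}_2+\norm{\Sigmab_i^\wb}_2\}$) times the neighbor variance ($\lesssim\rho_x\rho_a$), exactly the combination appearing in $C_\zeta$.

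Next I would reduce the matrix norm to scalar concentration. By Definition~\ref{def:rnorm}, $M^{-1}\norm{[\gradBstar]_{\bar\Scal_i}}_{r(k,\infty)}=M^{-1}\max_{j\in\pni}\norm{[\gradBstar]_{\bar\Scal_i,j}}_F$; by the triangle inequality over the $M$ modalities it suffices to control, for each modality and each neighbor block $j$, the Frobenius norm of the per-modality cross-covariance restricted to the support $T_j\subseteq\bar\Scal_i$, where $|T_j|\lesssim(\alpha^\star k)^2$ by the per-block row/column sparsity enforced by $\Hcal_\alpha$ and recorded in~\eqref{eq:defsupport}. For fixed $m$, $j$, and $T_j$, I would use the variational formula $\norm{\cdot}_F=\sup_{\norm{V}_F=1,\ \supp(V)\subseteq T_j}\dotp{\cdot}{V}$ and discretize the restricted unit Frobenius sphere with a $1/2$-net of cardinality at most $e^{c|T_j|}$. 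For each fixed $V$ in the net, $\dotp{[\gradBstar]_j}{V}$ is an $N$-sample average of mean-zero sub-exponential terms, so Bernstein's inequality gives, in the regime $N\gtrsim(\alpha^\star k)^2+\log p$ where the sub-Gaussian (quadratic) part dominates, a tail of the form $\exp(-cNt^2/C_\zeta^2)$.

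Finally I would assemble the pieces. For a fixed support the net union bound inflates the exponent by $c|T_j|\lesssim(\alpha^\star k)^2$, and a union bound over the $p-1$ candidate blocks (and over the $M$ modalities, a small constant in the motivating application, otherwise a dominated $\log M$ factor) supplies the $\log p$ term. Calibrating $t=C_\zeta\sqrt{((\alpha^\star k)^2+\log p)/N}$, with $C_\zeta$ carrying exactly the residual/neighbor variance factors identified in the first step, drives the per-block failure probability to at most $p^{-2}\exp\{-(\alpha^\star k)^2\}$, so that the union over the $p-1$ blocks yields the claimed $p^{-1}\exp\{-(\alpha^\star k)^2\}$. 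The main obstacle is twofold. First, one must track the Bernstein sub-exponential constant through the residual decomposition so that the stated $C_\zeta$ emerges with the correct dependence on $\rho_x$, $\rho_a$, $\rho_b$ and the truncation/noise covariances $\Sigmab^{m,\ub}$, $\Sigmab^{m,\qb}$, $\Sigmab_i^\rb$, $\Sigmab_i^\wb$. Second, although the statement fixes $\bar\Scal_i$, the effective per-block dimension must be held at $(\alpha^\star k)^2$ rather than the full $k^2$ (this is precisely what sharpens the rate); and when the lemma is later invoked with the \emph{data-dependent} support produced by $\Hcal_\alpha$, the covering argument must remain valid — the cleanest route being to prove the bound uniformly over all admissible supports of size $\le C_5 k^2 s^\star$ and absorb the enumeration cost into the $(\alpha^\star k)^2$ and $\log p$ terms.
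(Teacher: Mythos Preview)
Your proposal is correct and follows essentially the same route as the paper: write $\gradBstar$ as an empirical cross-covariance between the residual $\vb_i^{m,(n)}=\ptBik(\Ib_p\otimes\pAmk)\ykmn$ and the projected neighbors, bound the sub-exponential norm of the bilinear form via the sub-Gaussian norms of the residual and neighbor factors (yielding exactly the $\rho_x^{1/2}\rho_a^{1/2}\{\rho_b^{1/2}\|\Sigmab^{m,\ub}\|_2^{1/2}+\cdots\}$ structure in $C_\zeta$), then apply Bernstein with a $1/2$-net over the support-restricted Frobenius ball of dimension $\lesssim(1+\vartheta_2)^2(\alpha^\star k)^2$ and a union bound over the at most $p$ neighbor blocks. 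The only minor deviation is that the paper treats $M^{-1}\sum_m(\cdot)$ as a single sub-exponential variable (bounding its $\psi_1$-norm by the max over $m$) rather than splitting via the triangle inequality, which avoids the $\log M$ union-bound cost you mention; your flag about the data-dependent support $\bar\Scal_i$ is apt and indeed not made fully explicit in the paper's argument.
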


\subsection{One-step Update}\label{ssec:edge_onestep}
With results from Lemma~\ref{lemma:onestep_Am}--\ref{lemma:staterror_Bij} introduced in Section~\ref{ssec:edge_preonestep} as the building blocks, we are able to construct the analysis of the one-step update of Algorithm~\ref{alg:update}. Noting that the sample complexity required in Lemma~\ref{lemma:onestep_Am}--\ref{lemma:staterror_Bij} depend on some condition numbers $\kappa_m$, $C_\epsilon$ and $C_\zeta$. We assume that  $\kappa_m^2$, $C_\epsilon^2$ and $C_\zeta^2$ are not large and see them as constants. Therefore, we could drop the terms and simplify the notations by focusing the rates of convergence with respect to $\alpha^\star$, $k$, $s^\star$,$p$, $M$ and $k_m$ for $\mseq$.
Lemma~\ref{lemma:onestep} states that with proper initial condition and step size $\eta_A$ and $\eta_B$, the distance $\max_{\pseq}\norm{\Bb_i^{+}-\pBik }_F^2
        +
        \max_{\mseq}\norm{\Ab^{m+}-\pAmk }_F^2$ is contracted at each iteration.
\begin{lemma}\label{lemma:onestep}[One-Step Update] Suppose that the conditions of Lemma~\ref{lemma:onestep_Bij} hold.
Let   
$N=O(s^\star\max_{\mseq}k_m+(\alpha^\star k)^2+\log p+ \log M)$
. Define 
    \[
        \rho_1 = 4\cbr{
                1
                -
                \frac{3p(s^\star+1)\Upsilon_{1}}{4\Upsilon_{2}}\eta_A
            },
            \quad \rho_2 =
           C_\vartheta \rbr{
            1
            -
            \frac{M\Upsilon_{6}}{4}\eta_B
        },
    \]
    and $\delta_0:=\exp\{- (s^\star+1)\min_{\mseq}k_m\}\vee\exp\{-(\alpha^\star k)^2\}$.
 Suppose that 
    \begin{align*} \alpha^\star&\leq\frac{\Upsilon_{1}\Upsilon_{6}}{32(1+\vartheta_1)(1+\vartheta_2)\tau_2\Upsilon_{2}\Upsilon_{3}^2},\\
    \eta_A&\leq\frac{1}{3p(s^\star+1)}\rbr{\frac{\Upsilon_{6}}{8\Upsilon_{5}}\wedge\frac{2}{\Upsilon_{2}}},
    \quad\eta_B\leq\frac{1}{12M}\rbr{\frac{\Upsilon_{1}}{\Upsilon_{5}\Upsilon_{2}}\wedge\frac{\Upsilon_{6}}{2\Upsilon_{7}}},
    \end{align*}
    and $\eta_A$ and $\eta_B$ are selected such that $\rho_1,\rho_2< 1$  and satisfy
    $4p\eta_A=C_\vartheta M\eta_B$.
    Then, 
    \begin{multline*}
        \max_{\pseq}\norm{\Bb_i^{+}-\pBik }_F^2
        +
        \max_{\mseq}\norm{\Ab^{m+}-\pAmk }_F^2\\
        \leq
            \rho_1\max_{\mseq}\norm{\Amk -\pAmk }_F^2,
        +
        \rho_2\max_{\pseq}\norm{\Bik -\pBik }_F^2
        +
        \frac{\Upsilon_{8}}{\Upsilon_{1}}\max_{\mseq}\frac{k(k_m+k+\log M)}{N}\\
        +
        \frac{\Upsilon_{9}}{\Upsilon_{7}}
        \frac{s^\star\{(\alpha^\star k)^2+\log p\}}{N},
\end{multline*}
    with probability at least $1-10\delta_0$.
\end{lemma}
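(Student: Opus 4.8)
The plan is to obtain the one-step contraction by adding the bounds of Lemma~\ref{lemma:onestep_Am} and Lemma~\ref{lemma:onestep_Bij}, substituting the statistical-error bounds of Lemma~\ref{lemma:staterror_Am} and Lemma~\ref{lemma:staterror_Bij} for the two residual-gradient terms, and then showing that the step-size and sparsity hypotheses collapse the coefficients of $\max_{\mseq}\norm{\Amk-\pAmk}_F^2$ and $\max_{\pseq}\norm{\Bik-\pBik}_F^2$ to $\rho_1$ and $\rho_2$. First I would intersect the high-probability events of Lemmas~\ref{lemma:onestep_Am}--\ref{lemma:onestep_Bij}; their sample-size requirements, after absorbing $\kappa_m^2$ into the constant, union to $N=O(s^\star\max_{\mseq}k_m+\log p+\log M)$, and adding those of Lemmas~\ref{lemma:staterror_Am}--\ref{lemma:staterror_Bij} (using $k\leq k_m$) gives the stated $N=O(s^\star\max_{\mseq}k_m+(\alpha^\star k)^2+\log p+\log M)$. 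To apply Lemma~\ref{lemma:staterror_Bij} I would record that the merged support $\bar\Scal_i$ of~\eqref{eq:defsupport} satisfies $\abr{\bar\Scal_i}\leq(1+\vartheta_1)(1+\vartheta_2)^2 s^\star(\alpha^\star k)^2\leq C_5 k^2 s^\star$, as already computed in Section~\ref{ssec:edge_notation}. A further union bound over $m$ and over $i$ turns the four failure probabilities into (crude) multiples of $\delta_0=\exp\{-(s^\star+1)\min_{\mseq}k_m\}\vee\exp\{-(\alpha^\star k)^2\}$, summing to at most $10\delta_0$.

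After substitution the right-hand side is a linear combination of $\max_{\mseq}\norm{\Amk-\pAmk}_F^2$, $\max_{\pseq}\norm{\Bik-\pBik}_F^2$, and the two noise terms $\max_{\mseq}k(k_m+k+\log M)/N$ and $s^\star((\alpha^\star k)^2+\log p)/N$. The noise coefficients are read off directly: matching $k\eta_A\{4\Upsilon_2/(p(s^\star+1)\Upsilon_1)+3\eta_A\}C_\epsilon^2$ against $\Upsilon_8/\Upsilon_1$ and $(1+\vartheta_1)s^\star\eta_B\{2/(M\Upsilon_6)+3\eta_B\}C_\zeta^2$ against $\Upsilon_9/\Upsilon_7$ fixes the absolute constants hidden in $\Upsilon_8,\Upsilon_9$ (here $C_\epsilon,C_\zeta$ are the constants from Lemmas~\ref{lemma:staterror_Am}--\ref{lemma:staterror_Bij}, treated as bounded). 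For the contraction factors I would group the self-term of $\Amk$ from Lemma~\ref{lemma:onestep_Am} with the cross-term $C_\vartheta Ms^\star\eta_B(2\Upsilon_4^2\Upsilon_3^2/\Upsilon_6+3\Upsilon_5 M\eta_B)$ from Lemma~\ref{lemma:onestep_Bij}, and the self-term of $\Bik$ from Lemma~\ref{lemma:onestep_Bij} with the cross-term $4p\eta_A(\Upsilon_6/2+3p(s^\star+1)\Upsilon_5\eta_A)$ from Lemma~\ref{lemma:onestep_Am}.

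The crux is then elementary algebra organized around the balancing identity $4p\eta_A=C_\vartheta M\eta_B$, which makes the two cross-terms commensurate with the leading descent terms. Substituting it, the requirement ``$\Amk$-coefficient $\leq\rho_1$'' reduces to $16 s^\star\Upsilon_4^2\Upsilon_3^2/\Upsilon_6+12 s^\star\Upsilon_5 M\eta_B\leq 3(s^\star+1)\Upsilon_1/\Upsilon_2$, which holds because $\Upsilon_4^2$ is linear in $\alpha^\star$ with $\alpha^\star\leq\Upsilon_1\Upsilon_6/\{32(1+\vartheta_1)(1+\vartheta_2)\tau_2\Upsilon_2\Upsilon_3^2\}$ and $\eta_B\leq\Upsilon_1/(12M\Upsilon_5\Upsilon_2)$; similarly ``$\Bik$-coefficient $\leq\rho_2$'' reduces to $3M\eta_B\Upsilon_7+3p(s^\star+1)\Upsilon_5\eta_A\leq\Upsilon_6/4$, which follows from $\eta_B\leq\Upsilon_6/(24M\Upsilon_7)$ and $\eta_A\leq\Upsilon_6/(24p(s^\star+1)\Upsilon_5)$ — precisely the ranges assumed. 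Finally $\rho_1,\rho_2<1$ is a hypothesis, and $\eta_A\leq 2/\{3p(s^\star+1)\Upsilon_2\}$ (implied by the stated $\eta_A$ bound) supplies the premise of Lemma~\ref{lemma:onestep_Am}. The main obstacle is purely this constant-chasing: keeping the many $\Upsilon_j$'s straight while splitting the descent budget $3(s^\star+1)\Upsilon_1\eta_A/\Upsilon_2$ between the $\alpha^\star$-induced term and the $\eta_B$-induced term (and likewise the budget $\Upsilon_6/4$ on the $\Bik$ side) so that each slack is spent once, and invoking $4p\eta_A=C_\vartheta M\eta_B$ at exactly the points where the cross-terms must cancel; there is no conceptual difficulty beyond this.
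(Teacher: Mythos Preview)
Your proposal is correct and follows essentially the same approach as the paper: sum the bounds of Lemmas~\ref{lemma:onestep_Am}--\ref{lemma:onestep_Bij}, collect into four coefficients (the paper labels them $T_1$--$T_4$), use the balancing identity $4p\eta_A=C_\vartheta M\eta_B$ together with the $\alpha^\star$ and step-size hypotheses to reduce $T_1,T_2$ to $\rho_1,\rho_2$, and plug Lemmas~\ref{lemma:staterror_Am}--\ref{lemma:staterror_Bij} into $T_3,T_4$. One small imprecision: when you write the noise coefficient as $k\eta_A\{4\Upsilon_2/(p(s^\star+1)\Upsilon_1)+3\eta_A\}C_\epsilon^2$, you have dropped the factor $(p(s^\star+1))^2$ coming from Lemma~\ref{lemma:staterror_Am} (whose bound is on $\norm{\gradAstar}_2/\{p(s^\star+1)\}$, not on $\norm{\gradAstar}_2$); in the paper this factor cancels against the $1/\{p(s^\star+1)\}^2$ produced after bounding $T_3$ via the $\eta_A$ hypothesis and the elementary inequality $\Upsilon_2^2\ge 2\Upsilon_1$, and similarly $\Upsilon_7\ge\Upsilon_6^2$ is used for $T_4$.
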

\begin{proof}[Proof of Lemma~\ref{lemma:onestep}] First, we combine the results from Lemma~\ref{lemma:onestep_Am} and Lemma~\ref{lemma:onestep_Bij}, we can obtain
\begin{align}
        \max_{\pseq}\norm{\Bb_i^{+}-\pBik }_F^2
        &+
        \max_{\mseq}\norm{\Ab^{m+}-\pAmk }_F^2
\notag\\
    &\leq
    T_1\max_{\mseq}\norm{\Amk -\pAmk }_F^2+T_2\max_{\pseq}\norm{\Bik -\pBik }_F^2\notag\\
    &\quad+
    T_3\max_{\mseq}\norm{\gradAstar}_2^2
    \notag\\
    &\quad+
    T_4\max_{\pseq}
    \norm{[\gradBstar]_{\bar\Scal_i}}^2_{r(k,\infty)}
    \label{eq:onestep}
\end{align}    
where
\begin{align*}
    T_1
    &=
        4\cbr{1
        -
        \eta_A\frac{3p(s^\star+1)\Upsilon_{1}}{2\Upsilon_{2}}
        +
        \eta_A\frac{2ps^\star\Upsilon_{4}^2\Upsilon_{3}^2}{\Upsilon_{6}}}
        +
       C_\vartheta Ms^\star\eta_B\rbr{
            \frac{2\Upsilon_{4}^2\Upsilon_{3}^2}{\Upsilon_{6}}
            +
            3\Upsilon_{5}M\eta_B};\\
    T_2
    &=
       C_\vartheta {\rbr{
            1
            -
            \eta_B M\Upsilon_{6}
            +
            3\eta_B^2M^2\Upsilon_{7}}
        }
    + 
        4p\eta_A\rbr{
            \frac{\Upsilon_{6}}{2}
            +
            3p(s^\star+1)\Upsilon_{5}\eta_A
        };\\
    T_3
    &=
        4k\eta_A\rbr{
            \frac{4\Upsilon_{2}}{p(s^\star+1)\Upsilon_{1}}
            +
            3\eta_A
        };\\
    T_4
    &=
       C_\vartheta 
            (1+\vartheta_1)s^\star\eta_B\rbr{\frac{2}{M\Upsilon_{6}}+3\eta_B}.
\end{align*}
First, we want to simplify the term of $T_1$. With the choices of $\alpha^\star$, $\eta_B$ in the statement, we can obtain
\begin{align*}
    \frac{2ps^\star\Upsilon_{4}^2\Upsilon_{3}^2}{\Upsilon_{6}}\leq\frac{p(s^\star+1)\Upsilon_{1}}{4\Upsilon_{2}},\quad \rbr{\frac{2\Upsilon_{4}^2\Upsilon_{3}^2}{\Upsilon_{6}}+3\Upsilon_{5}M\eta_B}\leq\frac{\Upsilon_{1}}{2\Upsilon_{2}}.
\end{align*} 
Since $4p\eta_A =C_\vartheta M\eta_B$, we can conclude 
\begin{align*}
    T_1\leq 4\cbr{1-\frac{3p(s^\star+1)\Upsilon_{1}}{4\Upsilon_{2}}\eta_A}.
\end{align*}

Next, we want to simplify the term $T_2$. Similarly, with the choices of $\alpha^\star$, $\eta_B$, and $\eta_A$, we have
\begin{align*}
    3\eta_BM^2\Upsilon_{7}\leq\frac{M\Upsilon_{6}}{8}, \quad 3p(s^\star+1)\Upsilon_{5}\eta_A\leq \frac{\Upsilon_{6}}{8};
\end{align*}
Then, using the fact that $p\eta_A =C_\vartheta \eta_B$, we can conclude that 
\begin{align*}
    T_2\leq 
   C_\vartheta 
    \cbr{1 - \frac{M\Upsilon_{6}}{4}\eta_B}.
\end{align*}
With the choice of $\eta_A$, we have
\begin{align*}
    T_3 &\leq \frac{32}{3}\frac{k}{\Upsilon_1\{p(s^\star+1)\}^2} + \frac{16}{3}\frac{k}{\Upsilon_2^2\{p(s^\star+1)\}^2}.
\end{align*}
Using the fact that $\Upsilon_2^2\geq 2\Upsilon_1$, we can further bound the above display as
\begin{align*}
    T_3\leq 
    \frac{14 k}{\Upsilon_{1}\{p(s^\star+1)\}^2}
    \leq
    \frac{14 k}{\Upsilon_{1}(ps^\star)^2}.
\end{align*}
Recall the definition of $\Upsilon_8$ in Section~\ref{ssec:notation}. In fact, by Lemma~\ref{lemma:staterror_Am} and take the union bound over $\mseq$, we can verify that
\begin{align}\label{eq:even_staterrorA}
    T_3\max_{\mseq}\norm{\gradAstar
    }_2^2
    \leq   \frac{\Upsilon_{8}}{\Upsilon_{1}}        \max_{\mseq}k\frac{k+k_m+\log M}{N},
\end{align}
with probability at least $1 - \max_{\mseq}\exp\{-(k_m+k)\}\geq 1-\delta_0$, where we define this event as $\Ecal_A$. 

We find the upper bound of $T_4$ in a similar way as we do for $T_3$. Note that by Lemma~\ref{lemma:ci}, We have $\Upsilon_7\geq \Upsilon_6^2$. Then, with the choice of $\eta_B$, we have 
\begin{align*}
    &T_4
    \leq C_\vartheta(1+\vartheta_1)s^\star\rbr{\frac{1}{12M^2\Upsilon_7}+\frac{3}{576}\frac{1}{M^2\Upsilon_7}}
    \leq 
    \frac{C_\vartheta(1+\vartheta_1)s^\star}{M^2\Upsilon_{7}}. 
\end{align*}
Recall the definition of $\Upsilon_9$ in Section~\ref{ssec:notation}.
Then, apply Lemma~\ref{lemma:staterror_Bij} and take the union bound over $\pseq$, we can obtain that
\begin{equation}\label{eq:even_staterrorB}
    T_4\max_{\pseq}
    \norm{[\gradBstar]_{\bar\Scal_i}}_{r(k,\infty)}^2
    \leq
    \frac{\Upsilon_{9}}{\Upsilon_{7}}
    \frac{s^\star\{(\alpha^\star k)^2+\log p\}}{N},
\end{equation}
with probability at least $1-\exp\{-(\alpha^\star k)^2\}\geq 1-\delta_0$ and we define such event as $\Ecal_B$.



Let $\Ecal$ be the event when Lemma~\ref{lemma:onestep_Am}--\ref{lemma:onestep_Bij} hold. Therefore, by De Morgan's law,  $\Ecal\cap\Ecal_A\cap\Ecal_B$ happens with probability at least 
$1-10\delta_0$.
\end{proof}
\subsection{Proof of Theorem~\ref{theorem:convergence}}\label{ssec:theorem_convergence}
First, we apply Lemma~\ref{lemma:onestep} for one iteration and obtain that with probability at least $1-10\delta_0$:
\begin{multline*}
        \max_{\pseq}\norm{\Bb_i^{(1)}-\pBik }_F^2
        +
        \max_{\mseq}\norm{\Ab^{m(1)}-\pAmk }_F^2
        \\
        \leq
            \rho_1\max_{\mseq}\norm{\Ab^{m(0)} -\pAmk }_F^2,
        +
        \rho_2\max_{\pseq}\norm{\Bb_i^{(0)} -\pBik }_F^2
        +
        \Xi^2,
\end{multline*}
where the right hand side is bounded by $(C_1+1) (C_A^2\rho_a+C_B^2\rho_b)$ because (i) $\rho_1,\rho_2<1$ and (ii) $\Xi^2\leq C_1 (C_A^2\rho_a+C_B^2\rho_b)$ under the condition of Theorem~\ref{theorem:convergence}. Therefore, we apply Lemma~\ref{lemma:onestep} again to obtain the upper bound for $\max_{\pseq}\norm{\Bb_i^{(2)}-\pBik }_F^2
        +
        \max_{\mseq}\norm{\Ab^{m(2)}-\pAmk }_F^2$:
\begin{multline*}
        \max_{\pseq}\norm{\Bb_i^{(2)}-\pBik }_F^2
        +
        \max_{\mseq}\norm{\Ab^{m(2)}-\pAmk }_F^2
        \\
        \leq
            \rho_1\max_{\mseq}\norm{\Ab^{m(1)} -\pAmk }_F^2,
        +
        \rho_2\max_{\pseq}\norm{\Bb_i^{(1)} -\pBik }_F^2
        +
        (1+\pi)\Xi^2,
\end{multline*}
where $\pi=\rho_1\vee\rho_2$ and with probability at least $1-10\delta_0$.

Then, iterate for $L$ times and by the telescoping technique, we complete  the proof:
\begin{align*}
    \max_{\mseq}\norm{\Ab^{m(L)} - \pAmk}_F^2 
    + 
    \max_{\pseq}\norm{\Bb_i^{(L)} - \pBik}_F^2\leq
    \pi^LR_0^2+
    \frac{1}{1-\pi} \Xi^2,
\end{align*}
with probability at least $1-10\delta_0$.

\subsection{Proof of Lemma~\ref{lemma:onestep_Am}--\ref{lemma:staterror_Bij}}\label{ssec:edge_auxlemma1}
\begin{proof}[Proof of Lemma~\ref{lemma:onestep_Am}]
We study one iteration for updating $\Amk $ for $\mseq$ via Algorithm~\ref{alg:update} and we have
\begin{align*}
\norm{\Ab^{m+}-\pAmk }_F^2 
&= 
    {\norm{\Pcal_\tau\rbr{\Amk  - \eta_A\nabla_{\Amk }f(\setAk,\setBk)}-\pAmk }_F^2}.\\
    \intertext{
    First, we claim that any row of $\pAmk$ is in $\Kcal_m(\tau_1,\tau_2)$ defined by~\eqref{eq:constraint_a} with probability at least $1-5\exp(-\sum_{m=1}^2 k_m)$. Then, apply Lemma~\ref{lemma:proj_Km} and the above display could be further bounded as}
&\leq 
    4{\norm{\Amk  - \eta_A\nabla_{\Amk }f(\setAk,\setBk)-\pAmk }_F^2}\\
&=
    4\{\norm{\Amk -\pAmk }_F^2-2\eta_A\Ical_{1,m}+\eta_A^2\Ical_{2,m}\},
\end{align*}
where 
\[
\Ical_{1,m}=\dotp{\nabla_{\Amk }f(\setAk,\setBk)}{\Amk -\pAmk },\quad 
\Ical_{2,m}=\norm{\nabla_{\Amk }f(\setAk,\setBk)}_F^2.
\]

Now we justify the claim that any row of $\pAmk$ is in $\Kcal_m(\tau_1,\tau_2)$. For each row $j=1,\ldots,k$, we could apply triangle inequality and obtain that
\begin{align*}
\norm{\Ab_{j\cdot}^{m\star}}_2
&\leq 
\norm{\Ab_{j\cdot}^{m\star}-\Ab_{j\cdot}^{m(0)}}_2
+
\norm{\Ab_{j\cdot}^{m(0)}}_2\\
&\leq
\norm{\Ab_{j\cdot}^{m\star}-\Ab_{j\cdot}^{m(0)}}_2
+
\norm{\Ab^{m(0)}}_2.
\end{align*}
Under Assumption~\ref{assumption:disperse} and $N=O(\max_{\mseq} \cdnm^2 k_m)$, by Theorem~\ref{theorem:initialcca}, we know that $\norm{\Ab_{j\cdot}^{m\star}-\Ab_{j\cdot}^{m(0)}}_2$ is bounded by some small constant with probability at least $1-5\exp(-\sum_{m=1}^M k_m)$, and we define such event as $\Ecal_I$. Therefore, there exists some finite $\tau_2>0$ such that $\norm{\Ab_{j\cdot}^{m\star}}_2\leq\tau_2k_m^{-1/2}\norm{\Ab^{m(0)}}_2$ for $j=1,\ldots,k$. The existence of a valid $\tau_1$ can be justified in a similar way and hence we omit the 
step.

In order to find the upper bound of $\norm{\Ab^{m+}-\pAmk }_F^2$, it suffices to find the lower bound of $\Ical_{1,m}$ and the upper bound of $\Ical_{2,m}$, which are presented in the Lemma~\ref{lemma:I_m1} and Lemma~\ref{lemma:I_m2}, respectively.

On the event of $\Ecal_I$, combining the results of Lemma~\ref{lemma:I_m1}--\ref{lemma:I_m2} and taking the maximum over $\mseq$ on both sides, we obtain that
\begin{align}\label{eq:lA}
    \frac{1}{4}\max_{\mseq}
    \norm{\Ab^{m+}-\pAmk }_F^2
    &\leq 
        \rbr{
            1
            -
            \eta_A\frac{3p(s^\star+1)\Upsilon_{1}}{2\Upsilon_{2}}
            +
            \eta_A\frac{2ps^\star\Upsilon_{4}^2\Upsilon_{3}^2}{\Upsilon_{6}}
        }
        \max_{\mseq}
        \norm{\Amk -\pAmk }_F^2\notag\\
    &\quad +
        k\eta_A\rbr{
            \frac{4\Upsilon_{2}}{p(s^\star+1)\Upsilon_{1}}
            +
            3\eta_A
        }
        \max_{\mseq}
        \norm{\gradAstar}_2^2\notag\\
    &\quad +
        p\eta_A\rbr{
            \frac{\Upsilon_{6}}{2}
            +
            3p(s^\star+1)\Upsilon_{5}\eta_A
        }\max_{\pseq}
        \norm{\Bik -\pBik }_F^2\notag\\
    &\quad +
        \eta_A\rbr{
            -\frac{2}{p(s^\star+1)\Upsilon_{2}}
            +
            3\eta_A
        }\notag\\
    &\quad\times\cbr{
        \max_{\mseq}\norm{\nabla_{\Amk }f(\setAk,\setpBk)-\nabla_{\Amk }f(\setpAk,\setpBk)}_F^2
    }.
\end{align}
with probability at least $1-\max_{\mseq}\exp\{-k_m(s^\star+1)\}$. 
By conditions of $\eta_A$, $-{2}/\{p(s^\star+1)\Upsilon_{2}\}+3\eta_A<0$ and hence we can drop the last term of~\eqref{eq:lA}.

Then, marginalizing over the event $\Ecal_I$,~\eqref{eq:onestepAm_r} holds with probability at least
\begin{multline*}
\cbr{1-5\exp\rbr{-\sum_{m=1}^M k_m}}\sbr{1-\max_{\mseq}\exp\{-k_m(s^\star+1)\}}\\
\geq 1-\max_{\mseq}6\exp\{-k_m(s^\star+1)\}.
\end{multline*}

\end{proof}

\begin{proof}[Proof of Lemma~\ref{lemma:onestep_Bij}]
We study one iteration for updating $\Bik $ for  $i\in[p]$ via Algorithm~\ref{alg:update}. 
Recall that $s=\vartheta_1s^\star/2$ and $\alpha=\vartheta_2\alpha^\star/2$. 
Then, apply  Lemma~\ref{lemma:disperse_coef}--\ref{lemma:groupsparse} and~\eqref{eq:proj_supp}, we have
\begin{multline}\label{eq:B:update}
    \norm{\Bb_i^{+}-\pBik }_F^2 
    = 
        \norm{
            \Hcal_{\alpha}\circ\Tcal_s\cbr{
                \Bik  
                - 
                \eta_B[\nabla_{\Bik }f(\setAk,\setBk)]_{\bar\Scal_i}
            } 
            - 
        \pBik }_F^2\\
    \leq
        C_\vartheta
        \big\{
            \norm{\Bb_{i} - \Bb_{i}^\star}_F^2 
            - 
            2\eta_B \Ical_{3,i} 
            + 
            \eta_B^2 \Ical_{4,i}
            \big\},
\end{multline}
where
\[
    \Ical_{3,i} 
    = 
        \dotp{[
            \nabla_{\Bb_{i}} f(\setAk,\setBk)
        ]_{\bar\Scal_i}}{
            \Bb_{i}-\Bb_{i}^\star
        },
    \quad 
    \Ical_{4,i} 
    = 
        \norm{[
            \nabla_{\Bb_{i}} f(\setAk, \setBk)
        ]_{\bar\Scal_i}}_F^2.
\]
Next, we will apply Lemma~\ref{lemma:I_m3} for the lower bound of $\Ical_{3,i}$ and apply Lemma~\ref{lemma:I_m4} for the upper bound of $\Ical_{4,i}$ for $\pseq$.

Combining Lemma~\ref{lemma:I_m3}--\ref{lemma:I_m4} and taking the maximum over $\pseq$ yields,
\begin{align*}
    \max_{\pseq}
    \norm{\Bb_i^{+}-\pBik }_F^2
    &\leq 
        C_\vartheta\bigg\{
        \rbr{
            1
            -
            \eta_B M\Upsilon_{6}
            +
            3\eta_B^2M^2\Upsilon_{7}
        }
        \max_{\pseq}
        \norm{\Bik -\pBik }_F^2\notag\\
    &\quad + 
        Ms^\star\eta_B\rbr{
            \frac{2\Upsilon_{4}^2\Upsilon_{3}^2}{\Upsilon_{6}}
            +   
            3\Upsilon_{5}M\eta_B
        }
        \max_{\mseq}
        \norm{\Amk -\pAmk }_F^2\notag\\
    &\quad+(1+\vartheta_1)s^\star\eta_B\rbr{\frac{2}{M\Upsilon_{6}}+3\eta_B}\max_{\pseq}\norm{[\gradBstar]_{\bar\Scal_i}}^2_{r(k,\infty)}
    \bigg\},
\end{align*}
with probability at least $1-\max_{\mseq}\exp(-k_m s^\star)$.

\end{proof}

Before showing the proofs of Lemma~\ref{lemma:staterror_Am}--\ref{lemma:staterror_Bij}, we introduce the notion of Orlicz norm, where we only consider the sub-exponential and sub-Gaussian cases here. Assume that $X$ is a sub-exponential random variable, and  sub-exponential norm is defined as
\begin{align*}
    \|X\|_{\psi_1}=\inf\{t>0:\EE\exp(|X|/t)\leq 2\}.
\end{align*}

Assume that $X$ is a sub-Gaussian random variable, and  sub-Gaussian norm is defined as
\begin{align*}
    \|X\|_{\psi_2}=\inf\{t>0:\EE\exp(|X|^2/t^2)\leq 2\}.
\end{align*}

\begin{proof}[Proof of Lemma~\ref{lemma:staterror_Am}]

    Let $\Ncal_1$ be the $1/8$-net of $\mathbb{S}^{k-1}=\{\vb\in\RR^{k}:\norm{\vb}_2=1\}$ and $\Ncal_2$ be the $1/8$-net of $\mathbb{S}^{k_m-1}=\{\vb\in\RR^{k_m}:\norm{\vb}_2=1\}$. Then, by Lemma~\ref{lemma:epsilon_net}, we have
    \begin{align*}
        \norm{\gradAstar}_2&=\sup_{\ub\in\mathbb{S}^{k-1}}\sup_{\db\in\mathbb{S}^{k_m-1}} \abr{\ub^\top\gradAstar\db}\\
        &\leq 2\max_{\ub\in\Ncal_1}\max_{\db\in\Ncal_2}
        \abr{\ub^\top\gradAstar\db}
    \end{align*}
    Define
    \begin{align}
    \vb_{i}^{m,(n)}
    &=
    \ptBik (\Ib_p\otimes \pAmk )\ykmn\notag\\
    &=    \pAmk \ykmnfree{i}{n} -\pBik (\Ib_{p-1}\otimes \pAmk )\ykmnfree{\noi}{n} \notag\\
    &=
    \ptBik\cbr{{\ub^m}^{(n)}+
    \bigAs{p}{\pAmk}{\qb^m}^{(n)}}
    +\rb_i^{(n)}+\wb_i^{(n)}\label{eq:def_v},
    \end{align}
    for $\pseq$, $\mseq$ and $n=1,\ldots,N$.
    Using~\eqref{eq:gradAm}, we can write
    \begin{align*}
        \gradAstar 
        &= 
            \sum_{j=1}^p
            (\eb_j^\top\otimes \Ib_k)
            \cbr{
                \sum_{i=1}^p
                \ptBikT \ptBik 
                (\Ib_p\otimes \pAmk )
                \sCov
            }
            (\eb_j\otimes \Ib_{k_m})\\
        &= 
            \sum_{j=1}^p
            (\eb_j^\top\otimes \Ib_k)
            \sbr{
                \sum_{i=1}^p
                \ptBikT 
                \cbr{
                    \frac{1}{N}\sum_{n=1}^N    \vb_{i}^{m,(n)}(\ykmn)^\top
                }
            }
            (\eb_j\otimes\Ib_{k_m}).
    \end{align*}
    Then, combining the above two results, we observe that for any $\ub\in\Ncal_1$ and $\db\in\Ncal_2$
    \[
        \ub^\top\sbr{
            \sum_{j=1}^p
            (\eb_j^\top\otimes \Ib_k)
            \cbr{
                \sum_{i=1}^p
                \ptBikT \vb_{i}^m(\ykmfree{i})^\top
            }
            (\eb_j\otimes\Ib_{k_m})
        }\db,
    \]
    is a sub-exponential random variable. Therefore, we can obtain the upper bound of Orlicz norm as
\begin{align}
    \bignorm{
        \sum_{j=1}^p
        (\eb_j^\top\otimes \Ib_k)&
        \cbr{
            \sum_{i=1}^p
            \ptBikT
            \vb_{i}^m
            (\ykmfree{i})^\top
        }(\eb_j\otimes\Ib_{k_m})}_{\psi_1}\notag\\
    &\leq
        \sum_{i=1}^p
        \bignorm{
            \sum_{j=1}^p
            (\eb_j^\top\otimes \Ib_k)
            \cbr{
                \ptBikT \vb_{i}^m
                (\ykmfree{i})^\top
            }
            (\eb_j\otimes\Ib_{k_m})
        }_{\psi_1}.\notag\\
    \intertext{ Note that if $j\not\in\pNi\cup\{i\}$, then $(\eb_j^\top \otimes \Ib_k)\ptBikT={\bf 0}$. Therefore, the above display is equivalent as}
    &=
        \sum_{i=1}^p
        \bignorm{
            \sum_{j\in\pNi\cup\{i\}}
            (\eb_j^\top\otimes \Ib_k)
            \cbr{
                \ptBikT \vb_{i}^m(\ykmfree{i})^\top
            }
            (\eb_j\otimes\Ib_{k_m})
        }_{\psi_1}\notag.\\
    \intertext{By Lemma~2.7.7 in~\citep{vershynin2018high}, we can further bound the above display as}
    &\leq 
        \sum_{i=1}^p \sum_{j\in\pNi\cup\{i\}}
        \bignorm{
            (\eb_j^\top\otimes \Ib_k)
            \ptBikT 
            \vb_{i}^m
        }_{\psi_2}
        \bignorm{
            (\ykmfree{i})^\top
            (\eb_j\otimes \Ib_{k_m})
        }_{\psi_2}.\label{eq:orcliz_bound1}
\end{align}
Using the fact that $|\pNi\cup\{i\}|\leq s^\star+1$ and the fact that
\begin{align*}
    \norm{\ptBikT\vb_{i}^m}_{\psi_2} 
    &=    
        \bignorm{
        \ptBikT\cbr{\ptBik
        \rbr{{\ub^m}+
        \bigAs{p}{\pAmk}{\qb^m}
        }
        +\rb_i+\wb_i}
        }_{\psi_2}\\  
    &\leq
        \norm{\ptBikT\ptBik{\ub^m}}_{\psi_2}  
        +
        \norm{\ptBikT\bigAs{p}{\pAmk}{\qb^m}}_{\psi_2}
        +
            \norm{\ptBikT\rb_i}_{\psi_2}  
        +
            \norm{\ptBikT\wb_i}_{\psi_2}  \\
    &\leq
        \norm{\ptBik}_2
        (
            \norm{\ptBik}_2\norm{\Sigmab^{m,\ub}}_2^{1/2}
        +
        \norm{\ptBik}_2\norm{\pAmk}_2\norm{\Sigmab^{m,\qb}}_2^{1/2}
        +
            \norm{\Sigmab_i^{\rb}}_2^{1/2}
        +
            \norm{\Sigmab_i^{\wb}}_2^{1/2}
        ),
\end{align*}
for $\pseq$, we can further bound the right hand side of~\eqref{eq:orcliz_bound1} as
\begin{align*}
   \bignorm{
        \sum_{j=1}^p
        (\eb_j^\top\otimes \Ib_k)
        &\cbr{
            \sum_{i=1}^p
            \ptBikT
            \vb_{i}^m
            (\ykmfree{i})^\top
        }(\eb_j\otimes\Ib_{k_m})}_{\psi_1}\\
    &\leq 
        p(s^\star+1)\norm{\Sigmab^m}_2^{1/2}
        \max_{\pseq}
        \norm{\ptBik}_2
        (
            \norm{\ptBik}_2 \norm{\Sigmab^{m,\ub}}_2^{1/2}
           + \norm{\ptBik}_2\norm{\pAmk}\norm{\Sigmab^{m,\qb}}_2^{1/2}\\
           &\quad\quad\quad\quad\quad\quad\quad\quad\quad\quad\quad\quad\quad+
            \norm{\Sigmab_i^{\rb}}_2^{1/2}
            +
            \norm{\Sigmab_i^{\wb}}_2^{1/2}
        )\\
        &\leq p(s^\star+1)\rho_x^{1/2}\rho_b^{1/2}\big\{
\rho_b^{1/2}
        \norm{\Sigmab^{m,\ub}}_2^{1/2}
        +
            (\rho_b\rho_a)^{1/2}
            \norm{\Sigmab^{m,\qb}}_2^{1/2}\\
        &\quad\quad\quad\quad\quad\quad\quad\quad\quad+\max_{\pseq}(
            \norm{\Sigmab_i^{\rb}}_2^{1/2}
        +
            \norm{\Sigmab_i^{\wb}}_2^{1/2})\big\},
\end{align*}
where the last line follows by Assumption~\ref{assumption:cov}--\ref{assumption:AB}.

Next, let $C_\epsilon=C_3\max_{\pseq}\rho_x^{1/2}\rho_b^{1/2}\big\{
\rho_b^{1/2}
        \norm{\Sigmab^{m,\ub}}_2^{1/2}
        +
            (\rho_b\rho_a)^{1/2}
            \norm{\Sigmab^{m,\qb}}_2^{1/2}
        +
            \norm{\Sigmab_i^{\rb}}_2^{1/2}
        +
            \norm{\Sigmab_i^{\wb}}_2^{1/2}\big\}$ 
for some constant $C_3>0$. 
Under the condition of the sample size $N=O(C_\epsilon^2(k+k_m+\log M))$,
apply Bernstein's inequality (see Theorem 2.8.1 in\citet{vershynin2018high}) and take the union bound over $\Ncal_1$ and $\Ncal_2$, where $\abr{\Ncal_1}\leq 17^k$ and $\abr{\Ncal_2}\leq 17^{k_m}$, we can obtain
\[
    \frac{1}{p(s^\star+1)}\norm{\gradAstar}_2
    \geq 
        C_\epsilon
        \sqrt{\frac{k_m+k+\log M}{N}},
\]
with probability smaller than $M^{-1}\exp\{-(k_m+k)\}$.
\end{proof}

\begin{proof}[Proof of Lemma~\ref{lemma:staterror_Bij}]
We first expand $\gradBstar$ and obtains
\[
    \gradBstar 
    = 
        -\frac{1}{N}\sum_{m=1}^M\sum_{n=1}^N
        \{
            \pAmk \ykmnfree{i}{n} 
            -
            \pBik \bigAs{p-1}\pAmk \ykmnfree{\noi}{n}
        \}
        (\ykmnfree{\noi}{n})^\top
        \bigAs{p-1}{\pAmkT}.
\]
Replacing 
\begin{equation*}
    \pAmk \ykmnfree{i}{n} 
    - 
    \pBik \bigAs{p-1}{\pAmk}
    \ykmnfree{\noi}{n}
    =
        \vb_{i}^{m,(n)},
\end{equation*} 
for $\nseq$ and $\mseq$ yields
\[
    \gradBstar 
    = 
        -\frac{1}{N}\sum_{m=1}^M
        \sum_{n=1}^N
        \vb_{i}^{m,(n)}
        (\ykmnfree{\noi}{n})^\top
        \bigAs{p-1}{\pAmkT}.
\]
Define $\tb^{m,(n)}_{i}\in\RR^{k\times (p-1)}$ which transforms the original matrix
$
(\ykmnfree{\noi}{n})^\top\bigAs{p-1}{\pAmkT}
\in
    \RR^{1\times k(p-1)}
$ to 
\begin{equation}\label{eq:deft}
    \tb^{m,(n)}_{i}
    =
        (\pAmk \ykmnfree{1}{n}
        ,\ldots,\pAmk \ykmnfree{p}{n}
        )
        \in
            \RR^{k\times (p-1)},
\end{equation}
for $\mseq$ and $\nseq$. Then, we can write 
\[
    \bignorm{
        \frac{1}{N}\sum_{m=1}^M\sum_{n=1}^N
        [\vb_{i}^{m,(n)}
        (\ykmnfree{\noi}{n})^\top
        \bigAs{p-1}{\pAmkT}]_{\bar\Scal_i}
    }_{r(k,\infty)}
    =
        \bignorm{
            \frac{1}{N}\sum_{m=1}^M\sum_{n=1}^N
            [\vb_{i}^{m,(n)}\otimes \tb^{m,(n)}_{i}]_{\bar\Scal_i'}
        }_{2,\infty},
\]
 where we denote $\bar\Scal_i'$ as the support set whose elements are translated from $\bar\Scal_i$ due to the transformation from a matrix of dimension $k\times k(p-1)$  to ${k^2\times (p-1)}$.

\emph{Step 1}.
Consider a fixed $\bar{\Scal}_i$
and 
recall the definition of the neighborhood set $\barNi$ and the relation of $\bar\Scal_i$ and $\barNi$ defined in~\eqref{eq:defneighbour}. 
Since $|\barNi|\leq(1+\vartheta_1)s^\star$, we know that there are at most $(1+\vartheta_1)s^\star$ nonzero columns in 
$\sum_{m=1}^M\sum_{n=1}^N
            [\vb_{i}^{m,(n)}\otimes \tb^{m,(n)}_{i}]_{\bar\Scal_i'}$.
Similarly, for each column of $\sum_{m=1}^M\sum_{n=1}^N
 [\vb_{i}^{m,(n)}\otimes \tb^{m,(n)}_{i}]_{\bar\Scal_i'}$, there are at most $\{(1+\vartheta_2)\alpha^\star k\}^2$ nonzero entries. 
Define the subset of the support set 
\begin{align*}
    \Scal_{\cdot,i}'&=\{u:(u,i)\in \Scal'\};\\
    \Scal_{\cdot,(i-1)k+1:ik}&=\{(u,v):(u,v)\in \Scal,v=(i-1)k+1,\ldots,ik\}.
\end{align*}
Let 
\begin{align*}
\Ucal(\Scal')&=\{\ub\in R^{k^2}: \supp{(\ub)}\subset \{\Scal_{\cdot,i}',i=1,\ldots,(p-1)\},\norm{\ub}_2\leq 1\};\\
\Dcal(\Scal)&=\{\Db\in\RR^{k\times k}: \supp(\Db)\subset\{\Scal_{\cdot,(i-1)k+1:ik},i=1,\ldots,(p-1)\}, \norm{\Db}_F\leq 1\}.
\end{align*}
Therefore, we can express the above equation as
\begin{align}\label{eq:statB_norm}
    \bignorm{
        \frac{1}{N}\sum_{m=1}^M\sum_{n=1}^N
        [\vb_{i}^{m,(n)}\otimes \tb^{m,(n)}_{i}]_{\bar\Scal_i'}
    }_{2,\infty} 
    &= 
        \max_{j\in\barNi}
        \sup_{\ub\in\Ucal(\bar\Scal'_i)} 
        \abr{
        \frac{1}{N}
        \sum_{m=1}^M
        \sum_{n=1}^N
        \ub^\top[\vb_{i}^{m,(n)}\otimes \tb^{m,(n)}_{i}]_{\bar\Scal_i'}\eb_j}\notag\\
    &\leq
        \max_{j\in\barNi}
        \sup_{\ub\in\Ucal(\bar\Scal'_i)} 
        \abr{\frac{1}{N}
        \sum_{m=1}^M\sum_{n=1}^N
        \ub^\top
        \vb_{i}^{m,(n)}\otimes \tb^{m,(n)}_{i}
        \eb_j}
\end{align}
where $\eb_j\in\RR^{p-1}$ denotes the $j$-th standard canonical basis.

For matrices $\Eb$, $\Fb$, $\Gb$, $\Hb$, applying the fact 
$
\tr(\Eb^\top\Fb\Gb\Hb^\top)
=
    \VEC\rbr{\Eb}^\top(\Hb\otimes\Fb)\VEC\rbr{\Gb}
$ to the right hand side of~\eqref{eq:statB_norm} we can obtain
\begin{align*}
    \max_{j\in\barNi}
        \sup_{\ub\in\Ucal(\bar\Scal_i')} \frac{1}{N}&
        \sum_{m=1}^M\sum_{n=1}^N
        \ub^\top
        \vb_{i}^{m,(n)}\otimes \tb^{m,(n)}_{i}
        \eb_j\\
    &=
        \max_{j\in\barNi}
        \sup_{\Db\in\Dcal(\bar\Scal_i)} 
        \frac{1}{N}
        \sum_{m=1}^M\sum_{n=1}^N
        \tr\cbr{\Db^\top\tb^{m,(n)}_{i}\eb_j\vb_{i}^{m,(n)\top}}\\
    &=
        \max_{j\in\barNi}
        \sup_{\Db\in\Dcal(\bar\Scal_i)} 
        \frac{1}{N}\sum_{m=1}^M\sum_{n=1}^N
        (\Db\vb_{i}^{m,(n)})^\top\tb^{m,(n)}_{i}\eb_j.
\end{align*}
Define $\Ncal$ to be the $\frac{1}{2}$-net of the set $\Dcal(\bar\Scal_i)$, we have
\[
    \max_{j\in\barNi}
    \sup_{\Db\in\Dcal} \frac{1}{N}
    \sum_{m=1}^M\sum_{n=1}^N
    (\Db\vb_{i}^{m,(n)})^\top\tb^{m,(n)}_{i}\eb_j
    \leq 
        2 \max_{j\in\barNi}
        \max_{\Db\in\Ncal} \frac{1}{N}\sum_{m=1}^M\sum_{n=1}^N
        (\Db\vb_{i}^{m,(n)})^\top\tb^{m,(n)}_{i}\eb_j.
\]

We consider the case that
elements of $\{\{\ykmn\}_{m=1}^M\}_{n=1}^{N}$ are pairwise independent. This implies that $(\Db\vb_{i}^{m,(n)})^\top\tb^{m,(n)}_{i}\eb_j$ and $(\Db\vb_{i}^{m',(n')})^\top\yb^{m',(n')}_{i}\eb_j$ are pairwise independent for every $n\neq n'$. Given $n$ being fixed,  $(\Db\vb_{i}^{m,(n)})^\top\tb^{m,(n)}_{i}\eb_j$ and $(\Db\vb_{i}^{m',(n)})^\top\yb^{m',(n)}_{i}\eb_j$ might not be independent for $m\neq m'$. Observe that 
$(\Db\vb_{i}^{m,(n)})^\top\tb^{m,(n)}_{i}\eb_j$ are sub-exponentially distributed. Under this setting, we can bound the Orlicz norm $\norm{\cdot}_{\psi_1}$ of the random variable $M^{-1}\sum_{m=1}^M(\Db\vb_{i}^m)^\top\tb_{i}^m\eb_j$ as the following. By writing $\vb_{i}^m$ in terms of~\eqref{eq:def_v} and $\tb_i$ in terms of~\eqref{eq:deft}, we have 
\begin{align*}
    \bigg\|\frac{1}{M}\sum_{m=1}^M&(\Db\vb_{i}^m)^\top\tb_{i}^m\eb_j\bigg\|_{\psi_1}\\
    &\leq \max_{\mseq }\norm{\pAmk }_2
        \norm{\Sigmab^{m}}_2^{1/2}
    \big(
            \norm{\ptBik}_2
            \norm{\Sigmab^{m,\ub}}_2^{1/2}
            +
            \norm{\ptBik}_2
            \norm{\pAmk}
            \norm{\Sigmab^{m,\qb}}_2^{1/2}\\
            &\quad\quad\quad\quad\quad\quad\quad\quad\quad\quad\quad\quad
            +
            \norm{\Sigmab_i^{\rb}}_2^{1/2}
            +
            \norm{\Sigmab_i^{\wb}}_2^{1/2}
        \big)
        \\
        &\leq \max_{\mseq}\rho_x^{1/2}\rho_a^{1/2}
\{
\rho_b^{1/2}\norm{\Sigmab^{m,\ub}}_2^{1/2}
+
(\rho_b\rho_a)^{1/2}\norm{\Sigmab^{m,\qb}}_2^{1/2}
+
\norm{\Sigmab_i^{\rb}}_2^{1/2}
+
\norm{\Sigmab_i^{\wb}}_2^{1/2}
\},
\end{align*}
where we recall that $\norm{\pBik}\leq\rho_b^{1/2}$ for $\pseq$, $\norm{\pAmk}\leq \rho_a^{1/2}$ and $\norm{\Sigmab^m}\leq \rho_x$ for $\mseq$.

 \emph{Step 2}. 
 We define 
$$C_\zeta = (1+\vartheta_2)C_4\max_{\mseq}\rho_x^{1/2}\rho_a^{1/2}
\{
\rho_b^{1/2}\norm{\Sigmab^{m,\ub}}_2^{1/2}
+
(\rho_b\rho_a)^{1/2}\norm{\Sigmab^{m,\qb}}_2^{1/2}
+
\norm{\Sigmab_i^{\rb}}_2^{1/2}
+
\norm{\Sigmab_i^{\wb}}_2^{1/2}
\},$$ for some small constant $C_4>0$.
Then, under the condition of the sample size $N=O( C_\zeta^2\{(\alpha^\star k)^2+\log p\})$, apply the Bernstein's inequality with 
\[
t =
        C_\zeta
        \sqrt{\frac{(\alpha^\star k)^2+\log p}{N}},
\]
 and take the union bound over $j\in\barNi$ with $|\barNi|\leq(1+\vartheta_1)s^\star\leq p$ and $\Db\in\Ncal$ with $|\Ncal|\leq 5^{(1+\vartheta_2)^2(\alpha^\star)^2 k^2}$, we obtain 
\begin{align}\label{eq:staterrB:1}
    \PP\bigg\{\frac{1}{M}        \bignorm{
            \frac{1}{N}\sum_{m=1}^M\sum_{n=1}^N&
            [\vb_{i}^{m,(n)}\otimes \tb^{m,(n)}_{i}]_{\bar\Scal_i'}
        }_{2,\infty}>t\bigg\}\notag\\
    &\leq
    \PP\cbr{
        2 \max_{j\in\barNi}
        \max_{\Db\in\Ncal} \frac{1}{MN}\sum_{m=1}^M\sum_{n=1}^N
        (\Db\vb_{i}^{m,(n)})^\top\tb^{m,(n)}_{i}\eb_j>t
    }      \notag  \\
   &\leq \sum_{j\in\barNi}\sum_{\Db\in\Ncal}
    \PP\cbr{
        2 \frac{1}{MN}\sum_{m=1}^M\sum_{n=1}^N
        (\Db\vb_{i}^{m,(n)})^\top\tb^{m,(n)}_{i}\eb_j>t
    }\notag \\
    &\leq
   p^{-1}\exp\{-(\alpha^\star k)^2\}.
\end{align}


\end{proof}

\subsection{Lemma~\ref{lemma:I_m1}--~\ref{lemma:I_m4} and Their Proofs}
\label{ssec:edge_auxlemma2}

In this section we introduce Lemma~\ref{lemma:I_m1} --~\ref{lemma:I_m4} in order. The proof is stated following to the introduction of each lemma. We start with stating Lemma~\ref{lemma:I_m1}.

\begin{lemma}\label{lemma:I_m1}
Let $\Upsilon_1,\Upsilon_2,\Upsilon_3,\Upsilon_4,\Upsilon_6$ be constants defined in Section~\ref{ssec:notation}.
Under the conditions of Lemma~\ref{lemma:onestep_Am}, we have

\begin{align*}
    \Ical_{1,m}
    &\geq
        \rbr{
            \frac{3p(s^\star+1)\Upsilon_{1}}{4\Upsilon_{2}}
            -
            \frac{ps^\star\Upsilon_{4}^2\Upsilon_{3}^2}{\Upsilon_{6}}
        }\norm{\Amk -\pAmk }_F^2 
        - 
        \frac{\Upsilon_{6}}{4}\sum_{i=1}^p
        \norm{\Bb_{i}-\Bb_{i}^\star}_F^2\\
    &\quad - 
        \frac{2k\Upsilon_{2}}{p(s^\star+1)
        \Upsilon_{1}}\norm{\gradAstar}_{2}^2\\
    &\quad + 
        \frac{1}{p(s^\star+1)\Upsilon_{2}}
        \norm{
            \nabla_{\Amk }f(\setAk,\setpBk)
            -
            \nabla_{\Amk }f(\setpAk,\setpBk)
        }_2^2,
\end{align*}
with probability at least $\pkms$.
\end{lemma}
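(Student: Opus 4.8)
The plan is to lower-bound $\Ical_{1,m}=\dotp{\nabla_{\Amk }f(\setAk,\setBk)}{\Amk -\pAmk }$ by telescoping through the two intermediate gradients $\nabla_{\Amk }f(\setAk,\setpBk)$ and $\gradAstar=\nabla_{\Amk }f(\setpAk,\setpBk)$. Writing
\begin{align*}
\Ical_{1,m}
&=
\underbrace{\dotp{\nabla_{\Amk }f(\setAk,\setpBk)-\nabla_{\Amk }f(\setpAk,\setpBk)}{\Amk -\pAmk }}_{(\mathrm{I})}\\
&\quad+
\underbrace{\dotp{\gradAstar}{\Amk -\pAmk }}_{(\mathrm{II})}
+
\underbrace{\dotp{\nabla_{\Amk }f(\setAk,\setBk)-\nabla_{\Amk }f(\setAk,\setpBk)}{\Amk -\pAmk }}_{(\mathrm{III})},
\end{align*}
term $(\mathrm{I})$ supplies curvature of $f$ in the $\Amk $ direction with the $\Bik $'s frozen at the truth, $(\mathrm{II})$ is the statistical error, and $(\mathrm{III})$ is the perturbation caused by $\Bik \neq\pBik $. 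The target bound is obtained by pairing a co-coercivity inequality for $(\mathrm{I})$ with Young-type bounds for $(\mathrm{II})$ and $(\mathrm{III})$.

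For $(\mathrm{I})$, I would make quantitative the strong convexity and smoothness of $\Amk \mapsto f(\setAk,\setpBk)$ that the text already asserts from Assumptions~\ref{assumption:cov}--\ref{assumption:AB}. Using the bilinear form~\eqref{eq:dpot_gradAm}, $(\mathrm{I})$ is a quadratic form in $\Amk -\pAmk $ whose kernel is $\sum_{i=1}^p\ptBikT \ptBik $ sandwiched around $\sCov $; bounding $\sCov $ between $\nu_x$ and $\rho_x$ (Assumption~\ref{assumption:cov}, with the eigenvalue bounds transferred from $\Sigmab^m$ to $\sCov $ by a Bernstein-type concentration bound valid with probability $\pkms$ under the sample-size condition of Lemma~\ref{lemma:onestep_Am}) and bounding $\sum_i\ptBikT \ptBik $ through $\nu_b,\rho_b$ and the factor $p(s^\star+1)$ (each node has at most $s^\star+1$ active blocks) yields strong-convexity modulus $\mu$ and smoothness $L$ of orders $p(s^\star+1)\nu_x\nu_b$ and $p(s^\star+1)\rho_x\rho_b$. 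I would then invoke the interpolation inequality for $\mu$-strongly-convex, $L$-smooth functions,
\[
\dotp{\nabla g(x)-\nabla g(y)}{x-y}\geq\tfrac{\mu L}{\mu+L}\norm{x-y}_F^2+\tfrac{1}{\mu+L}\norm{\nabla g(x)-\nabla g(y)}_F^2,
\]
and weaken the Frobenius norm on the last term to the spectral norm, producing the $\tfrac{3p(s^\star+1)\Upsilon_{1}}{4\Upsilon_{2}}\norm{\Amk -\pAmk }_F^2$ contribution together with the nonnegative $\tfrac{1}{p(s^\star+1)\Upsilon_{2}}\norm{\nabla_{\Amk }f(\setAk,\setpBk)-\nabla_{\Amk }f(\setpAk,\setpBk)}_2^2$ contribution.

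For $(\mathrm{II})$, trace duality gives $\abr{\dotp{\gradAstar}{\Amk -\pAmk }}\le\norm{\gradAstar}_2\norm{\Amk -\pAmk }_{*}\le\sqrt{k}\,\norm{\gradAstar}_2\norm{\Amk -\pAmk }_F$ (rank at most $k$), and Young's inequality with weight tuned to spend at most $\tfrac{p(s^\star+1)\Upsilon_{1}}{4\Upsilon_{2}}\norm{\Amk -\pAmk }_F^2$ leaves the $-\tfrac{2k\Upsilon_{2}}{p(s^\star+1)\Upsilon_{1}}\norm{\gradAstar}_{2}^2$ term; this is exactly the $1/4$ that trims the leading coefficient of $(\mathrm{I})$ from $1$ to $3/4$. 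For $(\mathrm{III})$, I would write $\Amk =\pAmk +(\Amk -\pAmk )$ and expand $\tBikT \tBik -\ptBikT \ptBik =\tBikT (\tBik -\ptBik )+(\tBikT -\ptBikT )\ptBik $ into pieces linear in $\Bik -\pBik $, use that the neighborhood sparsity restricts the node sum to $O((1+\vartheta_1)s^\star)$ blocks and that Assumption~\ref{assumption:disperse} makes each block $\alpha$-row/column-sparse so the relevant operator norms are controlled by $\Upsilon_{3}$ and $\Upsilon_{4}$, and then apply Cauchy--Schwarz over the node index and Young's inequality to split $(\mathrm{III})$ into a $-\tfrac{\Upsilon_{6}}{4}\sum_{i=1}^p\norm{\Bik -\pBik }_F^2$ piece and a $-\tfrac{ps^\star\Upsilon_{4}^2\Upsilon_{3}^2}{\Upsilon_{6}}\norm{\Amk -\pAmk }_F^2$ piece absorbed into the leading coefficient. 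Summing $(\mathrm{I})$--$(\mathrm{III})$ yields the stated inequality with probability $\pkms$, inherited from the single concentration event used in $(\mathrm{I})$.

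The main obstacle is the bookkeeping in $(\mathrm{III})$: one must simultaneously exploit both levels of sparsity --- the $s^\star$-sparsity of $\pBik $ (Assumption~\ref{assumption:s-sparse}) and the $\alpha^\star$-row/column-sparsity of each block (Assumption~\ref{assumption:disperse}) --- so that the perturbation of $\sum_i\tBikT \tBik $ is governed by $\Upsilon_{3},\Upsilon_{4}$ rather than a crude $p^2$-type factor, and the Young's-inequality weights in $(\mathrm{II})$ and $(\mathrm{III})$ must be chosen consistently so the residual $\norm{\Amk -\pAmk }_F^2$ coefficient lands exactly at $\tfrac{3p(s^\star+1)\Upsilon_{1}}{4\Upsilon_{2}}-\tfrac{ps^\star\Upsilon_{4}^2\Upsilon_{3}^2}{\Upsilon_{6}}$. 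A secondary subtlety is making the moduli $\mu,L$ in $(\mathrm{I})$ quantitative with precisely the constants $\Upsilon_{1},\Upsilon_{2}$ rather than order-of-magnitude constants; this is where $\sCov $ must inherit the eigenvalue bounds of Assumption~\ref{assumption:cov} with the claimed probability, and hence where the sample-size condition $N=O(\kappa_m^2(\max_{\mseq}k_m s^\star+\log p+\log M))$ enters.
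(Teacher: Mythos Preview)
Your proposal is correct and follows essentially the same approach as the paper: the identical three-term decomposition, with co-coercivity (the paper's Lemma~\ref{lemma:convexbound}, via Lemma~\ref{lemma:scsm_coef} for the moduli $\mu,L$) handling your term~$(\mathrm{I})$, trace duality plus Young's inequality handling~$(\mathrm{II})$, and the sparsity-aware Cauchy--Schwarz/Young argument handling~$(\mathrm{III})$, with the Young weights $C_{12}=\tfrac{2\Upsilon_4\Upsilon_3\sqrt{s^\star}}{\Upsilon_6}$ and $C_{13}=\tfrac{1}{2\sqrt{2k}}\tfrac{p(s^\star+1)\Upsilon_1}{\Upsilon_2}$ chosen exactly as you describe. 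The only cosmetic discrepancy is that the paper bounds the rank of $\Amk-\pAmk$ by $2k$ rather than $k$, which is what produces the factor $2k$ (not $k$) in the $\norm{\gradAstar}_2^2$ coefficient of the lemma statement.
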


\begin{proof}[Proof of Lemma~\ref{lemma:I_m1}] 

Let $\Mscr_i^\star=\pNi\cup\{i\}$, where $\pNi$ is defined in~\eqref{eq:def_pNi}, with $\abr{\Mscr_i}\leq s^\star +1$ and let 
\begin{align}\label{eq:eventmi}
    \Ecal_{i}^m(\Mscr_i^\star) = \cbr{
        \frac{\nu_x}{2}
        \leq
        \sinmin{\sCovfree{\Mscr_i^\star}{\Mscr_i^\star} } 
        \leq 
        \sinmax{\sCovfree{\Mscr_i^\star}{\Mscr_i^\star} } 
        \leq 
        \frac{3\rho_x}{2}    
    }.
\end{align}
Under the conditions of the lemma, we have
\[
\PP\rbr{\bigcap_{i=1}^p\Ecal_{i}^m} \geq 1-M^{-1}\exp\{-k_m(s^\star+1)\},
\]
using Lemma~\ref{lem:covariance:concentration:neighborhoods}.
We will work on the event 
\begin{equation}\label{eq:eventm}
\Ecal^m = \bigcap_{i=1}^p\Ecal_{i}^m.
\end{equation}

By definition of $\Ical_{1,m}$  we can write
\begin{align*}
    \Ical_{1,m}
    &= 
        \dotp{\nabla_{\Amk }f(\setAk,\setBk)}{\Amk -\pAmk }\\
    &=
        \dotp{\nabla_{\Amk }f(\setAk,\setBk) -\nabla_{\Amk }f(\setAk,\setpBk)}{\Amk  - \pAmk }\\
    &\quad+ 
        \dotp{\nabla_{\Amk }f(\setAk,\setpBk) - \gradAstar}{\Amk  - \pAmk }\\
    &\quad + 
        \dotp{\gradAstar }{\Amk  - \pAmk }\\
    &= 
        \Ical_{11,m} + \Ical_{12,m} + \Ical_{13, m}.
\end{align*}
We proceed to lower-bound the terms separately. 

Using \eqref{eq:dpot_gradAm}, for $\Ical_{11,m}$ we have
\begin{align}
    \Ical_{11,m} 
    &= 
        \dotp{\nabla_{\Amk }f(\setAk,\setBk) - \nabla_{\Amk }f(\setAk,\setpBk)}{\Amk  - \pAmk }\notag\\
    & = 
        \sum_{i=1}^p\dotp{(\tBikT \tBik -\ptBikT \ptBik)(\Ib_p\otimes \Amk )\sCov}{\Ib_p\otimes (\Amk  - \pAmk )}\notag\\
    & = 
        \sum_{i=1}^p
        \dotp{\tBik -\ptBik }{
            \ptBik \bigAd{p}{\Amk-\pAmk}\sCov \bigAs{p}{\AmkT}
        }\notag\\
    &\quad +
        \sum_{i=1}^p
        \dotp{\tBik -\ptBik }{\tBik \bigAs{p}{\Amk}\sCov \bigAd{p}{\AmkT-\pAmkT}}.\label{eq:I11}
\end{align}
Let 
$
    \bar{\Scal}_{i,u\cdot}
    =
        \{v:(u,v)\in\bar{\Scal}_{i}\}
$, where $\bar{\Scal}_{i}$ is defined in~\eqref{eq:proj_supp}. By definition, for each $u=1,\ldots,k$, $\abr{\bar{\Scal}_{i,u\cdot}}\leq (1+\vartheta_1)s^\star(1+\vartheta_2)\alpha^\star k$. For the first term in~\eqref{eq:I11}, we apply the fact that the $i$-th block matrix $\Bb_{ii}=\Ib_k\in\RR^{k\times k}$ in $\tBik$ and $\Bb_{ii}^\star=\Ib_k\in\RR^{k\times k}$ in $\ptBik$ are equivalent. Hence the $i$-th block matrix $\tBik -\ptBik$ is a zero matrix. Therefore, we can write
\begin{align*}
    \langle&{\tBik -\ptBik }, {
        \ptBik\bigAd{p}{\Amk-\pAmk}\sCov
        \bigAs{p}{\AmkT}
    }\rangle\\
    &=
        \dotpl{-[\Bik -\pBik ]_{\bar\Scal_i}}{
            [\ptBik \bigAd{p}{\Amk -\pAmk}\sCovfree{\cdot}{\noi}
            \bigAs{p-1}{\AmkT}
        ]_{\bar\Scal_i}}\\
    &\leq
        \norml{\Bik -\pBik }_F\norml{
            [\ptBik \bigAd{p}{\Amk -\pAmk}\sCovfree{\cdot}{\noi}\bigAs{p-1}{\AmkT}
        ]_{\bar\Scal_i}}_F && (\text{by Cauchy-Schwarz}).
\end{align*}
Furthermore, we have
\begin{align*}
    &\norml{[\ptBik 
        \bigAd{p}{\Amk - \pAmk} 
        \sCovfree{\cdot}{\noi}
        \bigAs{p-1}{\AmkT}
    ]_{\bar\Scal_i}}_F^2\notag\\
    & =
        \sum_{(u,v)\in\bar\Scal_i}
        \left|(
        \ptBik
        \bigAd{p}{\Amk - \pAmk}
        \sCovfree{\cdot}{\noi}
        \bigAs{p-1}{\AmkT}
    )_{u,v}\right|^2. \notag\\
    &\leq
        \sum_{(u,v)\in\bar\Scal_i}
        \norml{
            \rbr{\ptBik
            \bigAd{p}{\Amk - \pAmk}
            \sCovfree{\cdot}{\noi}}_{u,\cdot}}_2^2
        \norml{
            \bigAs{p-1}{\AmkT}_{\cdot,v}
        }_2^2\notag  \\
    &\leq 
        \max_{v\in [k]}
        \norml{(\Amk )_{v,\cdot}}_2^2
        \sum_{u\in[k]}\sum_{v\in\bar\Scal_{i,u\cdot}}
        \norml{(
            \ptBik
            \bigAd{p}{\Amk - \pAmk}
            \sCovfree{\cdot}{\noi}
        )_{u,\cdot}}^2_2.\notag\\
    \intertext{Since $\Amk \in \Kcal_m(\tau_1,\tau_2)$, defined in~\eqref{eq:constraint_a}, by the projection operator $\Pcal_{m,\tau}$ in Algorithm~\ref{alg:update}, we have
    $\max_{v\in [k]}\norm{(\Amk )_{v,\cdot}}_2^2\leq
    \frac{\tau_2}{k_m}
        \norm{\Ab^{m(0)}}^2_2$.
    Furthermore, using the fact that $|\bar\Scal_{i,u\cdot}|\leq(1+\vartheta_1)(1+\vartheta_2)s^\star\alpha^\star k$, the above display can be bounded as}
    &\leq 
        \frac{\tau_2}{k_m}
        \norml{\Ab^{m(0)}}^2_2
        (1+\vartheta_1)(1+\vartheta_2)s^\star\alpha^\star k 
        \norml{
            \ptBik
            \bigAd{p}{\Amk - \pAmk}
            \sCovfree{\cdot}{\noi}
        }_F^2\notag\\    
    &\leq 
    (1+\vartheta_1)(1+\vartheta_2)s^\star\alpha^\star {\tau_2} 
        \norml{\Ab^{m(0)}}_{2}^2
        \norml{\ptBik}_F^2
        \norml{ \Amk  - \pAmk }_F^2
        \norml{\sCovfree{\cdot}{\noi}}_2^2
        && (k_m\geq k) \notag\\
    &\leq 
    (1+\vartheta_1)(1+\vartheta_2)s^\star\alpha^\star {\tau_2} 
        \norml{\Ab^{m(0)}}_{2}^2
        \norml{\ptBik}_F^2
        \norml{ \Amk  - \pAmk }_F^2
        \norml{\sCov}_2^2        
        && (\norm{\sCovfree{\cdot}{\backslash\{i\}}}_2
        \leq 
        \norm{\sCov}_2).
\end{align*}
Combining the last two results, we can bound the first term in~\eqref{eq:I11} as
\begin{align}\label{eq:I11_1f}
    \norm{[
        \ptBik
        &\bigAd{p}{\Amk - \pAmk}
        \sCovfree{\cdot}{\noi}\bigAs{p-1}{\AmkT}
    ]_{\bar\Scal_i}}_F\norml{\Bik-\pBik}_F\notag\\
    &\leq 
        \sqrt{
            (1+\vartheta_1)(1+\vartheta_2)s^\star\alpha^\star\tau_2 
        }
        \norml{\sCov}_2    \norml{\Ab^{m(0)}}_2\norml{\ptBik}_F
        \norm{\Amk -\pAmk}_F\norm{\Bik-\pBik}_F.
\end{align}

For the second term in~\eqref{eq:I11}, we have 
\begin{align*}
& \left\langle
{\tBik -\ptBik},
{\tilde\Bb_{i} \bigAs{p}{\Amk} \sCov \bigAd{p}{\AmkT - \pAmkT}}\right\rangle \\
&=
-\dotpl{[\Bik -\pBik]_{\bar{\Scal}_i}}{\tilde\Bb_{i}\bigAs{p}{\Amk}\sCovfree{\cdot}{\noi} \bigAd{p-1}{\AmkT - \pAmkT}}  &&  (\tilde\Bb_{ii}=\tilde\Bb_{ii}^\star=\Ib_k)     \\
&\leq 
        \norml{\Bik -\pBik }_F
        \norml{[
            \tilde\Bb_{i}\bigAs{p}{\Amk}\sCovfree{\cdot}{\noi}
            \bigAd{p-1}{\AmkT - \pAmkT}
        ]_{\bar{\Scal}_i}}_F
         && (\text{by Cauchy-Schwarz}).
\end{align*}
Let $\bar{\Scal}_{i,\cdot v}=\{u:(u,v)\in\bar{\Scal}_{i}\}$.
Note that if $(u,v)\in\bar\Scal_i$, then $\lceil{v/k}\rceil\in\barNi$. Let the index set $\noi=\{1,\ldots,p\}\backslash\{i\}$, we have
\begin{align}
    &\norml{[\tBik \bigAs{p}{\Amk} \sCovfree{\cdot}{\noi} 
    \bigAd{p-1}{\AmkT - \pAmkT}
    ]_{\bar{\Scal}_i}}_F^2\notag\\
    & = 
        \sum_{(u,v)\in\bar{\Scal}_i} 
        \abr{
            (\tBik \bigAs{p}{\Amk} \sCovfree{\cdot}{\barNi}
            \bigAd{|\barNi|}{\AmkT - \pAmkT}
        )_{u,v}}^2\notag\\
    & \leq 
        \sum_{(u,v)\in\bar{\Scal}_i}
        \norm{(\tBik \bigAs{p}{\Amk})_{u\cdot}}_2^2
        \norm{(\sCovfree{\cdot}{\barNi}
            \bigAd{|\barNi|}{\AmkT - \pAmkT})_{\cdot,v}}_2^2\notag\\
    & \leq 
        \norm{\bigAs{p}{\Amk^\top}\tBikT }_{2,\infty}^2
        \sum_{v\in[k(p-1)]}\sum_{u\in\bar{\Scal}_{i,\cdot v}} 
        \norm{(\sCovfree{\cdot}{\barNi}
            \bigAd{|\barNi|}{\AmkT - \pAmkT})_{\cdot,v}}_2^2\notag\\
    & \leq 
        (1+\vartheta_2)\alpha^\star k \norm{(\bigAs{p}{\AmkT}\tBikT }_{2,\infty}^2
        \norm{
            \sCovfree{\cdot}{\barNi}
            \bigAd{|\barNi|}{\AmkT - \pAmkT}
        }_F^2\notag.
    \intertext{By Lemma~F.9 in~\citet{na2021estimating}, for any two matrices $\Xb$, $\Yb$, we have $\norm{\Xb\Yb}_{2,\infty}\leq \norm{\Xb}_{2,\infty}\norm{\Yb}_1$, where $\norm{\Yb}_1$ denotes the maximum $\ell_1$-norm of the column of $\Yb$. Then, the above display is abounded as}
    &\leq
        (1+\vartheta_2)\alpha^\star k \norm{\bigAs{p}{\AmkT}}_{2,\infty}^2
        \norm{\tBikT }_{1}^2
        \norm{
            \sCovfree{\cdot}{\barNi}
            \bigAd{|\barNi|}{\AmkT -\pAmkT}
        }_F^2\notag\\
    &=
        (1+\vartheta_2)\alpha^\star k \norm{\AmkT}_{2,\infty}^2
        \norm{\tBikT }_{1}^2
        \norm{
            \sCovfree{\cdot}{\barNi}
            \bigAd{|\barNi|}{\AmkT -\pAmkT}
        }_F^2.\notag\\
        \intertext{Apply the condition of $\Kcal_m(\tau_1,\tau_2)$, we can bound the above display as}
    &\leq 
    (1+\vartheta_2)\alpha^\star\tau_2
        \norm{\Ab^{m(0)}}_2^2\norm{\tBikT }_{1}^2
        \norm{
            \sCovfree{\cdot}{\barNi}
            \bigAd{|\barNi|}{\AmkT -\pAmkT}
        }_F^2\notag.\\
    \intertext{Since $|\barNi|\leq(1+\vartheta_1)s^\star$, and hence
$\norm{\bigAd{\abr{\barNi}}{\AmkT - \pAmkT}}_F\leq \sqrt{(1+\vartheta_1)s^\star}\norm{\AmkT - \pAmkT}_F$. Along with Cauchy interlacing theorem, we can bound the above display as}
    &\leq 
        (1+\vartheta_1)(1+\vartheta_2)s^\star\alpha^\star\tau_2
        \norm{\Ab^{m(0)}}_2^2
        \norm{\tBikT}_1^2
        \norm{\Amk -\pAmk }_F^2\norm{\sCov}_2^2.\notag\\
    \intertext{Finally, under assumption of Lemma~\ref{lemma:onestep_Am}, we have  
    $\norm{\tBikT-\ptBikT}_1=\norm{\BikT-\pBikT}_1\leq C_B\norm{\pBikT}_1\leq C_B\norm{\ptBikT}_1$. Therefore, apply triangle inequality, we can further obtain the bound}
    &\leq 
        (1+\vartheta_1)(1+\vartheta_2)(1+C_B)^2s^\star\alpha^\star\tau_2
        \norm{\Ab^{m(0)}}_2^2
        \norm{\ptBikT}_1^2
        \norm{\Amk -\pAmk }_F^2\norm{\sCov}_2^2,
    \label{eq:I11_2f}
\end{align}

Taking the results of~\eqref{eq:I11_1f}--\eqref{eq:I11_2f} and plugging back to~\eqref{eq:I11}
, we can bound $\Ical_{11,m}$ with Young's inequality and obtains
\begin{align*}
    \Ical_{11,m}
    &\geq 
        -\Upsilon_{4}\Upsilon_{3}\sqrt{s^\star}
        \sum_{i=1}^p\norm{\Amk -\pAmk }_F\norm{\Bik -\pBik }_F\\
    &\geq 
        -\Upsilon_{4}\Upsilon_{3}\sqrt{s^\star }
        \sum_{i=1}^p
            \rbr{
                \frac{1}{2C_{12}}\norm{\Bik -\pBik }_F^2
            +
                \frac{C_{12}}{2}\norm{\Amk -\pAmk }_F^2
            },
\end{align*}
where we have
\begin{align*}
    \Upsilon_{3} &= \norm{\sCov}_2(C_B+1)\max_{\mseq}\norm{\Ab^{m(0)}}_2\max_{\pseq}(\norm{\pBik }_F\vee \norm{\ptBikT}_1);\\
    \Upsilon_{4}&=2\sqrt{(1+\vartheta_1)(1+\vartheta_2)\tau_2\alpha^\star},
\end{align*}
 and $C_{12}$ is a constant defined later.


Under the event of ${\Ecal^m}$ defined in~\eqref{eq:eventm},
by Lemma~\ref{lemma:scsm_coef}, we know that $f(\setAk,\setpBk)$ is $\mu$-strongly convex and $L$-smooth with respect to $\Amk $ with probability at least $\pkms$, where
\[
\mu = \frac{ p(s^\star+1)\nu_{x}\nu_b }{4},\quad L=\frac{3p (s^\star+1) \rho_{x}\rho_p}{4}.
\]
Define 
\begin{align*}
    \Upsilon_{1} &= \frac{3\nu_x\nu_b\rho_x\rho_b}{16}, \quad\Upsilon_{2}= \frac{\nu_x\nu_b+3\rho_x\rho_b}{4}.
\end{align*}
Therefore, by Lemma~\ref{lemma:convexbound}, we have
\begin{align*}
   \Ical_{12,m} 
   &= 
        \dotp{\nabla_{\Amk }f(\setAk,\setpBk) -\nabla_{\Amk }f(\setpAk,\setpBk)}{\Amk  - \pAmk }\\
   &\geq
        \frac{p(s^\star+1)\Upsilon_{1}}{\Upsilon_{2}}
        \norm{\Amk  - \pAmk }_F^2\\
   &\quad+
        \frac{1}{p (s^\star+1)\Upsilon_{2}}
        \norm{
            \nabla_{\Amk }f(\setAk,\setpBk)
            -
                \nabla_{\Amk }f(\setpAk,\setpBk)
        }_F^2,
\end{align*}
with probability at least $\pkms$, the probability that ${\Ecal^m}$ happens.

Define $\norm{\cdot}_*$ to be the nuclear norm. In the last step, we apply H\"older's inequality and obtain,
\begin{align*}
    \Ical_{13, m}
    &=
        \dotp{\gradAstar}{\Amk  - \pAmk } \\
    &\geq - 
        \norm{\gradAstar }_{2}
        \norm{\Amk  - \pAmk }_* \\
    &\geq - 
        \sqrt{2k} 
        \norm{\gradAstar}_{2}
        \norm{\Amk  - \pAmk }_F\\
    &\geq - 
        \sqrt{2k} 
        \cbr{
            \frac{1}{2C_{13}}
            \norm{\gradAstar}_{2}^2 
            + 
                \frac{C_{13}}{2}
                \norm{\Amk  - \pAmk }_F^2
        },
\end{align*}
and the second last step follows by the fact $\|\Zb\|_*\leq\sqrt{r}\|\Zb\|_F$, where $r$ is the rank of $\Zb$. Since $k\leq k_m$ for all $\mseq$ and hence the rank of $\Amk -\pAmk $ is at most $2k$. $C_{13}>0$ will be defined below.

Setting $C_{12}$ and $C_{13}$ as 
\[
C_{12}
=
    \frac{2\Upsilon_{4}\Upsilon_{3}\sqrt{s^\star}}{\Upsilon_{6}}
, \quad 
C_{13}
= 
    \frac{1}{2\sqrt{2k}}
    \frac{p (s^\star+1)\Upsilon_{1}}{\Upsilon_{2}},
\]
where $\Upsilon_{6} = \frac{\nu_x\nu_a}{2}$.

Combining results of $\Ical_{11,m}$, $\Ical_{12,m}$, and $\Ical_{13, m}$, we can obtain that
\begin{align*}
    \Ical_{1,m}
    &\geq
        \rbr{
            \frac{3p (s^\star+1)\Upsilon_{1}}{4\Upsilon_{2}}
            -
            \frac{ps^\star\Upsilon_{4}^2\Upsilon_{3}^2}{\Upsilon_{6}}
        }\norm{\Amk -\pAmk }_F^2 
        - 
            \frac{\Upsilon_{6}}{4}\sum_{i=1}^p
            \norm{\Bb_{i}-\Bb_{i}^\star}_F^2\\
    &\quad - 
        \frac{2k\Upsilon_{2}}{p (s^\star+1)\Upsilon_{1}}
        \norm{\gradAstar}_{2}^2\\
    &\quad +  
        \frac{1}{p (s^\star+1)\Upsilon_{2}}
        \norm{
                \nabla_{\Amk }f(\setAk,\setpBk)
            -
                \nabla_{\Amk }f(\setpAk,\setpBk)
        }_F^2,
\end{align*}
with probability at least $1-M^{-1}\exp\{-k_m (s^\star+1)\}$.
\end{proof}

\begin{lemma}\label{lemma:I_m2}
Let $\Upsilon_5$ be a constant defined in Section~\ref{ssec:notation}. Under conditions of Lemma~\ref{lemma:onestep_Am}, we have
\begin{align*}
    \Ical_{2,m}&\leq  3\bigg\{ \norm{\nabla_{\Amk }f(\setAk,\setpBk) -\nabla_{\Amk }f(\setpAk,\setpBk)}_F^2\\
   &\quad  + p(s^\star+1)\Upsilon_{5}\sum_{i=1}^p\norm{\Bb_{i} - \Bb_{i}^\star}_F^2 + k\norm{\gradAstar}_2^2\bigg\}.
\end{align*}
\end{lemma}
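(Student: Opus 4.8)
The plan is to split the gradient $\nabla_{\Amk}f(\setAk,\setBk)$ around the population parameters by a two-step telescoping and then apply the elementary inequality $\norm{X_1+X_2+X_3}_F^2\le 3\rbr{\norm{X_1}_F^2+\norm{X_2}_F^2+\norm{X_3}_F^2}$. Concretely, I would write
\[
\nabla_{\Amk}f(\setAk,\setBk)
=
\underbrace{\rbr{\nabla_{\Amk}f(\setAk,\setBk)-\nabla_{\Amk}f(\setAk,\setpBk)}}_{X_1}
+
\underbrace{\rbr{\nabla_{\Amk}f(\setAk,\setpBk)-\gradAstar}}_{X_2}
+
\underbrace{\gradAstar}_{X_3},
\]
so that $\Ical_{2,m}=\norm{X_1+X_2+X_3}_F^2\le 3\rbr{\norm{X_1}_F^2+\norm{X_2}_F^2+\norm{X_3}_F^2}$. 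The middle term $\norm{X_2}_F^2$ is precisely the first term on the right-hand side of the claim, so nothing further is needed there. For $X_3$, since $k\le k_m$ under the hypotheses of Lemma~\ref{lemma:onestep_Am}, the matrix $\gradAstar\in\RR^{k\times k_m}$ has rank at most $k$, hence $\norm{X_3}_F^2=\norm{\gradAstar}_F^2\le k\norm{\gradAstar}_2^2$ (using $\norm{\cdot}_F^2\le(\text{rank})\norm{\cdot}_2^2$); this is the third term. The whole task therefore reduces to showing $\norm{X_1}_F^2\le p(s^\star+1)\Upsilon_5\sum_{i=1}^p\norm{\Bik-\pBik}_F^2$.

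To bound $X_1$ I would use the gradient formula~\eqref{eq:gradAm}, which yields
\[
X_1=\sum_{j=1}^p(\eb_j^\top\otimes\Ib_k)\cbr{\sum_{i=1}^p\rbr{\tBikT\tBik-\ptBikT\ptBik}(\Ib_p\otimes\Amk)\sCov}(\eb_j\otimes\Ib_{k_m}),
\]
and then run essentially the same chain of matrix-norm estimates already carried out for $\Ical_{11,m}$ in the proof of Lemma~\ref{lemma:I_m1}. The central algebraic step is the identity $\tBikT\tBik-\ptBikT\ptBik=(\tBik-\ptBik)^\top\tBik+\ptBikT(\tBik-\ptBik)$ together with the fact that the $i$-th block of both $\tBik$ and $\ptBik$ is $\Ib_k$, so $\tBik-\ptBik$ is supported off block $i$ with $\norm{\tBik-\ptBik}_F=\norm{\Bik-\pBik}_F$. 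Submultiplicativity, the constraint $\Amk\in\Kcal_m(\tau_1,\tau_2)$ with $\norm{\Amk-\pAmk}_2\le C_A\norm{\pAmk}_2$ (giving $\norm{\Amk}_2\le(1+C_A)\rho_a^{1/2}$ under Assumption~\ref{assumption:AB}), the bounds $\norm{\ptBik}_2\le\rho_b^{1/2}$ and $\norm{\tBik}_2\le(1+C_B)\rho_b^{1/2}$ (from Assumption~\ref{assumption:AB} and $\norm{\Bik-\pBik}_2\le C_B\norm{\pBik}_2$), and $\norm{\sCovfree{\cdot}{\noi}}_2\le\norm{\sCov}_2$, then produce the bound with the constant $\Upsilon_5=4(1+C_A)^2(2+C_B)^2\rho_b\rho_a\max_{\mseq}\norm{\sCov}_2^2$.

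The main obstacle is the block-sparsity bookkeeping in this last step: one must exploit that each summand over $i$ in $\sum_{i=1}^p(\tBikT\tBik-\ptBikT\ptBik)$ is supported on only $O(s^\star)$ row/column blocks — those indexed by $\pNi$, $\eNi$, and $i$ — and combine this with the outer sum over the $p$ diagonal-block extractions indexed by $j$ so as to land on precisely the factor $p(s^\star+1)$ and on the aggregated quantity $\sum_{i=1}^p\norm{\Bik-\pBik}_F^2$, rather than on a cruder bound involving $\max_i\norm{\Bik-\pBik}_F^2$. No probabilistic argument is needed here: because $\norm{\sCov}_2$ enters $\Upsilon_5$ explicitly, the inequality holds deterministically under the stated conditions, in contrast with Lemma~\ref{lemma:I_m1}.
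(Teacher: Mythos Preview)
Your proposal is correct and follows essentially the same route as the paper: the same three-term telescoping (the paper's middle split $\nabla_{\Amk}f(\setAmk\cup\setpAnmk,\setpBk)$ coincides with your $\nabla_{\Amk}f(\setAk,\setpBk)$ since the gradient in $\Amk$ does not involve the other $\Ab^{m'}$), the rank bound $\norm{\gradAstar}_F^2\le k\norm{\gradAstar}_2^2$, and for $X_1$ the identity on $\tBikT\tBik-\ptBikT\ptBik$ together with the row-block sparsity of $\tBikT$ and $\tBikT-\ptBikT$ to restrict the $j$-sum to $\barNi\cup\{i\}$, followed by Cauchy--Schwarz to produce the $p(s^\star+1)$ factor. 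One small remark: the membership $\Amk\in\Kcal_m(\tau_1,\tau_2)$ is not actually needed here---the bound $\norm{\Amk}_2\le(1+C_A)\rho_a^{1/2}$ comes directly from $\norm{\Amk-\pAmk}_2\le C_A\norm{\pAmk}_2$, which you also invoke.
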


\begin{proof}[Proof of Lemma~\ref{lemma:I_m2}]
By definition of $\Ical_{2,m}$, we can write
\begin{align*}
    \Ical_{2,m} 
    &= 
        \norm{\nabla_{\Amk }f(\setAk,\setBk)}_F^2\\
    &\leq 
        3\bigg\{ 
        \norm{
            \nabla_{\Amk }f(\setAk,\setBk) 
            - 
            \nabla_{\Amk }f(\setAmk\cup \setpAnmk,\setpBk) 
        }_F^2\\
   &\quad \quad  + 
        \norm{\nabla_{\Amk }f(\setAmk\cup \setpAnmk,\setpBk) - \gradAstar}_F^2\\
   &\quad \quad + 
        \norm{\gradAstar}_F^2\bigg\}.
\end{align*}

By~\eqref{eq:gradAm} and triangle inequality, we can first decompose $\Ical_{21,m}$ as
\begin{align*}
    \Ical_{21,m} 
    &=
        \norm{
            \nabla_{\Amk }f(\setAk,\setBk) 
            - 
            \nabla_{\Amk }f(\setAmk\cup \setpAnmk,\setpBk) 
        }_F \\
    &=
        \norml{
            \sum_{i=1}^p\sum_{j=1}^p
            (\eb_j^\top\otimes\Ib_k)
            \{
                \tBikT (\tBik -\ptBik )
                +
                (\tBikT -\ptBikT )\ptBik
            \}
            \bigAs{p}{\Amk}\sCov
            (\eb_j\otimes\Ib_{k_m})
        }_F\notag.\\
        \intertext{Since $ (\eb_j^\top\otimes\Ib_k)\tBikT={\bf 0}$ and $(\eb_j^\top\otimes\Ib_k)(\tBikT -\ptBikT )={\bf 0}$ for $j\not\in\barNi\backslash\{i\}$, we can write the above display as}
    &=
        \norml{
            \sum_{i=1}^p\sum_{j\in\barNi\cup\{i\}}
            (\eb_j^\top\otimes\Ib_k)
            \{
                \tBikT (\tBik -\ptBik )
                + 
                (\tBikT -\ptBikT )\ptBik
            \}
            \bigAs{p}{\Amk}\sCov
            (\eb_j\otimes\Ib_{k_m})
        }_F\notag\\
    &\leq 
        \sqrt{(1+\vartheta_1)(s^\star+1)}
        \sum_{i=1}^p
        \norm{
            \{
                \tBikT (\tBik -\ptBik )
                +
                (\tBikT -\ptBikT )\ptBik
            \}
            (\Ib_p\otimes \Amk )\sCov
        }_F,
\end{align*}
where the last inequality follows by $|\barNi\cup\{i\}|\leq(1+\vartheta_1)(s^\star+1)$.
Now, consider single $i\in\{1,\ldots,p\}$ and apply triangle inequality once more, we have
\begin{align*}
    \Ical_{211,m,i}
    &=
        \norm{
            \{
                \tBikT (\tBik -\ptBik )
                + 
                (\tBikT -\ptBikT )\ptBik
            \}
            (\Ib_p\otimes \Amk )\sCov
        }_F\\
    &\leq 
        (
            \norm{\tBik }_2
            +
            \norm{\ptBik}_2
        )
        \norm{\bigAs{p}{\Amk}}_2
        \norm{\sCov}_2
        \norm{\tBik  - \ptBik}_F.\\
    \intertext{Using the fact that for a pair of matrices $\Xb$, $\Yb$, $\norm{\Xb\otimes\Yb}_2 = \norm{\Xb}_2\norm{\Yb}_2$, the above display is equivalent as}
    &=
        (
            \norm{\tBik }_2
            +
            \norm{\ptBik}_2)
        \norm{ \Amk }_2
        \norm{\sCov}_2
        \norm{\tBik  - \ptBik}_F\\
    &\leq 
        (1+C_A)\norm{\pAmk }_2
        (
            \norm{\tBik }_2
            +
            \norm{\ptBik}_2
        )
        \norm{\sCov}_2
        \norm{\tBik  - \ptBik}_F,
\end{align*}
where the last step follows by the assumption stated in Lemma~\ref{lemma:onestep_Am}: $\norm{\Amk-\pAmk}_2\leq C_A\norm{\pAmk}_2$. 

Note that, by triangle inequality and the definition of spectral norm , we can write 
\begin{align}
    \norm{\tBik }_2 
    \leq 
        \norm{\tBik  - \ptBik }_2
        +
        \norm{\ptBik}_2 
    & = 
        \norm{\Bik  - {\Bb}_i^\star}_2
        +
        \norm{\tBik }_2\notag\\
    &\leq 
        C_B\norm{\pBik }_2
        +
        \norm{\ptBik}_2
    \leq 
        (C_B+1)\norm{\ptBik}_2\label{eq:tildeB}.
\end{align}

Plugging~\eqref{eq:tildeB} back to $\Ical_{211,m,i}$ yields
\begin{align}
    \Ical_{211,m,i}
    &\leq 
        (1+C_A)(2+C_B)
        \norm{\sCov}_2
        \norm{\ptBik }_2
        \norm{\pAmk }_2
        \norm{{\Bb}_i-{\Bb}_i^\star}_F,\label{eq:I111}
\end{align}
where it follows by the fact that $\norm{\tBik -\ptBik }_F = \norm{{\Bb}_i-{\Bb}_i^\star}_F$.

Then, sum up $\Ical_{211,m,i}$ for $\pseq$, we have
\begin{align*}
    \Ical_{21,m}
    &\leq 
        \Upsilon_{5}^{1/2}
        \sqrt{s^\star+1}
        \sum_{i=1}^p
        \norm{\Bik -\pBik }_F,
\end{align*}
where $\Upsilon_{5}=(1+C_A)^2(2+C_B)^2\norm{\sCov}_2^2\rho_b\rho_a$.

Then, applying Cauchy-Schwarz inequality yields
\[
\Ical_{21,m}^2
\leq 
    p(s^\star+1)\Upsilon_{5}
    \sum_{i=1}^p
    \norm{\Bik  - \pBik }_F^2.
\]
Therefore, we can conclude that
\begin{align*}
    \Ical_{2,m}
    &\leq  
        3\bigg\{ 
            \norm{
                \nabla_{\Amk }f(\setAk,\setpBk) 
                -
                \nabla_{\Amk }f(\setpAk,\setpBk)
            }_F^2\\
            &\quad  +  
                \Upsilon_{5}p(s^\star+1)
                \sum_{i=1}^p
                \norm{\Bb_{i} - \Bb_{i}^\star}_F^2 
            + 
                k\norm{\gradAstar}_2^2
        \bigg\}.
\end{align*}

\end{proof}

\begin{lemma}\label{lemma:I_m3}
The constants $\Upsilon_3,\Upsilon_4,\Upsilon_6$ are defined in Section~\ref{ssec:notation}.
Under the conditions of Lemma~\ref{lemma:onestep_Bij}, we have

\begin{multline*}
    \Ical_{3,i}
    \geq
        \frac{M\Upsilon_{6}}{2}
        \norm{\Bb_{i}-\Bb_{i}^\star}_F^2
        -
        \frac{s^\star\Upsilon_{4}^2\Upsilon_{3}^2}{\Upsilon_{6}}
        \sum_{m=1}^M\norm{\Amk  - \pAmk }_F^2\\
    -
        \frac{(1+\vartheta_1)s^\star}{M\Upsilon_{6}}\norm{\gradBstar }_{r(k,\infty)}^2.
\end{multline*}
with probability at least $\ppkms$.
\end{lemma}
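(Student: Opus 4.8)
The plan is to mirror the structure of the proof of Lemma~\ref{lemma:I_m1}, but now differentiating with respect to $\Bik$ instead of $\Amk$. Recall $\Ical_{3,i}=\dotp{[\nabla_{\Bb_i}f(\setAk,\setBk)]_{\bar\Scal_i}}{\Bb_i-\pBik}$, and that by~\eqref{eq:proj_supp} the gradient is supported on $\bar\Scal_i$, so I may drop the restriction and work with the full gradient. First I would split
\[
\Ical_{3,i}
=\dotp{\nabla_{\Bb_i}f(\setAk,\setBk)-\nabla_{\Bb_i}f(\setpAk,\setBk)}{\Bb_i-\pBik}
+\dotp{\nabla_{\Bb_i}f(\setpAk,\setBk)-\gradBstar}{\Bb_i-\pBik}
+\dotp{\gradBstar}{\Bb_i-\pBik},
\]
call these $\Ical_{31,i}$, $\Ical_{32,i}$, $\Ical_{33,i}$.

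For $\Ical_{32,i}$: conditioning on the covariance concentration event $\Ecal^m$ (the analogue of~\eqref{eq:eventm}, for which Lemma~\ref{lem:covariance:concentration:neighborhoods} gives probability at least $1-M^{-1}\exp\{-k_m(s^\star+1)\}$, hence taking a union over $m$ yields the $\ppkms$ in the statement), Lemma~\ref{lemma:scsm_coef} or the appropriate restricted strong-convexity lemma shows that $f(\setpAk,\setBk)$ is strongly convex and smooth in $\Bik$ with modulus proportional to $M\nu_x\nu_a$ and $M\rho_x\rho_a$; then Lemma~\ref{lemma:convexbound} gives $\Ical_{32,i}\geq c\,M\Upsilon_6\norm{\Bb_i-\pBik}_F^2 + (\text{nonnegative gradient-difference term})$, where $\Upsilon_6=\nu_x\nu_a/2$. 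I would keep the leading quadratic term (with constant $\tfrac12$ after accounting for what is absorbed elsewhere) and discard the nonnegative remainder.

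For $\Ical_{31,i}$: expand the gradient difference using~\eqref{eq:dpot_gradBij}; the difference $\nabla_{\Bb_i}f(\setAk,\setBk)-\nabla_{\Bb_i}f(\setpAk,\setBk)$ is linear in $\Amk-\pAmk$ (two terms, one from each factor of $\Amk$ in the bilinear form). Using that $\tBik$ and $\ptBik$ agree on their $i$-th block, Cauchy--Schwarz, the incoherence/constraint bound $\Amk\in\Kcal_m(\tau_1,\tau_2)$, the sparsity bounds $|\bar\Scal_{i,u\cdot}|\leq(1+\vartheta_1)(1+\vartheta_2)s^\star\alpha^\star k$, and $\norm{\sCovfree{\cdot}{\noi}}_2\leq\norm{\sCov}_2$, I get $\Ical_{31,i}\geq -\Upsilon_4\Upsilon_3\sqrt{s^\star}\sum_m\norm{\Amk-\pAmk}_F\norm{\Bb_i-\pBik}_F$; then Young's inequality with a carefully chosen constant converts this into $-\tfrac{s^\star\Upsilon_4^2\Upsilon_3^2}{\Upsilon_6}\sum_m\norm{\Amk-\pAmk}_F^2$ minus a multiple of $M\Upsilon_6\norm{\Bb_i-\pBik}_F^2$ small enough to be covered by the quadratic term from $\Ical_{32,i}$. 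Finally $\Ical_{33,i}$ is handled by Hölder: $\Ical_{33,i}\geq -\norm{\gradBstar}_{r(k,\infty)}\norm{\Bb_i-\pBik}_{r(k,1)}$; since $\Bb_i-\pBik$ has at most $(1+\vartheta_1)s^\star$ nonzero $k\times k$ blocks (membership in $\barNi$), $\norm{\cdot}_{r(k,1)}\leq\sqrt{(1+\vartheta_1)s^\star}\norm{\cdot}_F$, and another Young's inequality gives $-\tfrac{(1+\vartheta_1)s^\star}{M\Upsilon_6}\norm{\gradBstar}_{r(k,\infty)}^2$ minus a further small multiple of the quadratic term. Collecting the three bounds and bookkeeping the quadratic-term constants so that what remains is $\tfrac{M\Upsilon_6}{2}\norm{\Bb_i-\pBik}_F^2$ yields the claim.

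The main obstacle is the $\Ical_{31,i}$ estimate: tracking the support restrictions correctly so that the sparsity levels $(s,\alpha)$ enter with the right powers (getting $\sqrt{s^\star}$ rather than $s^\star$ in front of the cross term, which is what makes the final step-size conditions in Lemma~\ref{lemma:onestep} work out), and making sure the ``$\tBik$ and $\ptBik$ share their $i$-th block'' cancellation is invoked at the right place so that no spurious $\Ib_k$ term survives. The strong-convexity input for $\Ical_{32,i}$ is the other delicate point, since it requires the restricted eigenvalue bound on $\sCov$ over the relevant $(1+\vartheta_1)s^\star$-sized neighborhood sets, which is exactly where the probability bound $\ppkms$ comes from.
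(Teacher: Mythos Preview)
Your decomposition into $\Ical_{31,i}+\Ical_{32,i}+\Ical_{33,i}$ and your treatments of $\Ical_{31,i}$ and $\Ical_{33,i}$ match the paper essentially verbatim, including the use of the support bound $|\barNi|\leq(1+\vartheta_1)s^\star$, the incoherence constraint $\Kcal_m(\tau_1,\tau_2)$, Lemma~\ref{lemma:dualnormrpq} for the $r(k,\infty)$--$r(k,1)$ duality, and Young's inequality with tuned constants. One small correction: the gradient $\nabla_{\Bb_i}f$ is \emph{not} supported on $\bar\Scal_i$; what is true is that $\Bb_i-\pBik$ is, so the inner product is unchanged whether or not you restrict the gradient.

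The real discrepancy is your handling of $\Ical_{32,i}$. You propose to invoke Lemma~\ref{lemma:convexbound} after establishing strong convexity and smoothness of $f(\setpAk,\cdot)$ in $\Bik$, by analogy with $\Ical_{12,m}$ in Lemma~\ref{lemma:I_m1}. The paper does \emph{not} do this. Because $f(\setpAk,\cdot)$ is \emph{quadratic} in $\Bik$, the paper computes the inner product directly:
\[
\Ical_{32,i}=\frac{1}{N}\sum_{m=1}^M\bigl\|[\Bik-\pBik]_{\bar\Scal_i}\bigAs{p-1}{\pAmk}\Ykmnfree{\noi}\bigr\|_F^2
\geq \sum_{m=1}^M\sinmintwo{\pAmk}\sinmin{\sCovfree{\barNi}{\barNi}}\|\Bik-\pBik\|_F^2
\geq M\Upsilon_6\|\Bik-\pBik\|_F^2,
\]
on the covariance-concentration event $\bar\Ecal_i$. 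This delivers the full strong-convexity constant $M\Upsilon_6$. Your route via Lemma~\ref{lemma:convexbound} would only give $\frac{\mu L}{\mu+L}<\mu$, so after subtracting the two $\tfrac{M\Upsilon_6}{4}$ contributions from the Young steps in $\Ical_{31,i}$ and $\Ical_{33,i}$ you would \emph{not} recover the coefficient $\tfrac{M\Upsilon_6}{2}$ in the statement (nor the specific constants $\frac{s^\star\Upsilon_4^2\Upsilon_3^2}{\Upsilon_6}$ and $\frac{(1+\vartheta_1)s^\star}{M\Upsilon_6}$, which are calibrated to exactly exhaust that budget). Also note that Lemma~\ref{lemma:scsm_coef} is stated for $\Amk$, not $\Bik$; the paper sidesteps the need for a $\Bik$-analogue precisely by the direct quadratic computation above.
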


\begin{proof}[Proof of Lemma~\ref{lemma:I_m3}]
By definition of $\Ical_{3,i}$, we can write
\begin{align*}
    \Ical_{3,i} 
    &= 
        \dotp{[\nabla_{\Bb_{i}} f(\setAk,\setBk)]_{\bar\Scal_i}}{\Bb_{i}-\Bb_{i}^\star} \\
    & = 
        \dotp{
            [\nabla_{\Bb_{i}}
            f(\setAk,\setBk) 
            - 
            \nabla_{\Bb_{i}}
            f(\setpAk,\setBk)]_{\bar\Scal_i}}{
            \Bb_{i}-\Bb_{i}^\star
        } \\
    & + 
        \dotp{
            [\nabla_{\Bb_{i}} f(\setpAk,\setBk)
            - 
            \nabla_{\Bb_{i}} f(\setpAk,\setpBk)]_{\bar\Scal_i} }{
            \Bb_{i}-\Bb_{i}^\star
        } \\
    & + 
        \dotp{
            [\gradBstar]_{\bar\Scal_i} }{
            \Bb_{i}-\Bb_{i}^\star
        }\\
    & = 
        \Ical_{31,i} + \Ical_{32,i} + \Ical_{33,i}.
\end{align*}

First, we express $\Ical_{31,i}$ as~\eqref{eq:dpot_gradBij} and obtain
\begin{align}
    \Ical_{31,i} 
    &= 
    \dotp{[
        \nabla_{\Bb_{i}} f(\setAk,\setBk) 
        -
        \nabla_{\Bb_{i}} f(\setpAk,\setBk)
        ]_{\bar\Scal_i}}{
        \Bb_{i}-\Bb_{i}^\star
    }\notag\\
    &= 
        -\sum_{m=1}^M
        \dotp{[
            \tBik  
            \bigAs{p}{\Amk}
            \sCovfree{\cdot}{\noi} 
            \bigAs{p-1}{\AmkT}
        ]_{\bar\Scal_i}}{
            \Bik -\pBik 
        }\notag\\
    &\quad 
        +\sum_{m=1}^M
        \dotp{[
            \tBik 
            \bigAs{p}{\pAmk}
            \sCovfree{\cdot}{\noi} 
            \bigAs{p-1}{\pAmkT}
        ]_{\bar\Scal_i} }{
            \Bik -\pBik 
        }\notag\\
    &= 
        -\sum_{m=1}^M
        \dotp{[
            \tBik  
            \bigAd{p}{\Amk - \pAmk}
            \sCovfree{\cdot}{\noi} 
            \bigAs{p-1}{\pAmkT}
        ]_{\bar\Scal_i}}{
            \Bik -\pBik 
        }\notag\\
    &\quad 
        -\sum_{m=1}^M
        \dotp{[
            \tBik  
            \bigAs{p}{\Amk} 
            \sCovfree{\cdot}{\noi} 
            \bigAd{p-1}{\AmkT - \pAmkT}
        ]_{\bar\Scal_i}}{
            \Bik -\pBik 
        }.\notag\\
        \intertext{Apply Cauchy-Schwarz inequality, we can lower bound the above display as}
        &\geq 
        -\sum_{m=1}^M
        \norm{[
            \tBik  
            \bigAd{p}{\Amk - \pAmk}
            \sCovfree{\cdot}{\noi} 
            \bigAs{p-1}{\pAmkT}
        ]_{\bar\Scal_i}}_F\norm{
            \Bik -\pBik 
        }_F\notag\\
       &\quad 
        -\sum_{m=1}^M
        \norm{[
            \tBik  
            \bigAs{p}{\Amk} 
            \sCovfree{\cdot}{\noi} 
            \bigAd{p-1}{\AmkT - \pAmkT}
        ]_{\bar\Scal_i}}_F
        \norm{
            \Bik -\pBik 
        }_F     
        \label{eq:l31m_lb}.
\end{align}
We bound the above two terms individually. Following the same proof steps as in~\eqref{eq:I11_1f}, for $\mseq$, we can obtain 
\begin{align}
    \norm{[
        \tBik  
        &\bigAd{p}{\Amk - \pAmk}
        \sCovfree{\cdot}{\noi} 
        \bigAs{p-1}{\pAmkT}
    ]_{\bar\Scal_i}}_F
    \norm{\Bik -\pBik }_F\notag\\
    &\leq 
        \sqrt{(1+\vartheta_1)(1+\vartheta_2)s^\star\alpha^\star\tau_2 }
        \norm{\Ab^{m(0)}}_2
        \norm{\tBik }_F
        \norm{\sCov}_2
        \norm{\Amk -\pAmk }_F
        \norm{\Bik -\pBik }_F\notag\\
    &\leq\sqrt{(1+\vartheta_1)(1+\vartheta_2)s^\star\alpha^\star\tau_2 }(C_B+1)
        \norm{\Ab^{m(0)}}_2
        \norm{\ptBik }_F
        \norm{\sCov}_2
        \norm{\Amk -\pAmk }_F
        \norm{\Bik -\pBik }_F.\label{eq:I31_1f}
\end{align}

We can bound the second term in~\eqref{eq:l31m_lb} similar in the way shown in~\eqref{eq:I11_2f} and obtain
\begin{align}\label{eq:I31_2f}
    \norm{[
        \tBik
        &\bigAs{p}{\Amk}
        \sCovfree{\cdot}{\noi}
        \bigAd{p-1}{\AmkT - \pAmkT}
    ]_{\bar\Scal_i}}_F
    \norm{\Bik -\pBik }_F\notag\\
    &\leq 
        \sqrt{(1+\vartheta_1)(1+\vartheta_2)s^\star\alpha^\star\tau_2}
        \norm{\sCov}_2
        \norm{\Ab^{m(0)}}_2
        \norm{\tBik }_1
        \norm{\Amk -\pAmk }_F
        \norm{\Bik -\pBik }_F\notag\\
    &\leq 
        \sqrt{(1+\vartheta_1)(1+\vartheta_2)s^\star\alpha^\star\tau_2}(1+C_B)
        \norm{\sCov}_2
        \norm{\Ab^{m(0)}}_2
        \norm{\ptBik }_1
        \norm{\Amk -\pAmk }_F
        \norm{\Bik -\pBik }_F.
\end{align}

Combining results from~\eqref{eq:I31_1f}--\eqref{eq:I31_2f}, we can apply Young's inequality and lower bound $\Ical_{31,i}$ as
\begin{align*}
    \Ical_{31,i}
    &\geq - 
        \sum_{m=1}^M
        \Upsilon_{4}\Upsilon_{3}\sqrt{s^\star}\norm{\Amk  - \pAmk }_F\norm{\Bik -\pBik }_F\\
    &\geq -
        \Upsilon_{4}\Upsilon_{3}\sqrt{s^\star}
        \sum_{m=1}^M
        \rbr{\frac{1}{2C_{31}}
        \norm{\Amk  - \pAmk }_F^2 
        + 
        \frac{C_{31}}{2}
        \norm{\Bik  - \pBik }_F^2},
\end{align*}
where $\Upsilon_3$, and $\Upsilon_4$ are defined in Section~\ref{ssec:notation} and $C_{31}$ is defined below.

We define the event $\bar\Ecal_i$:
\begin{equation}\label{eq:even2}
    \bar\Ecal_i 
    := 
        \cap_{m=1}^M{\Ecal_{i}^m}(\barNi), 
\end{equation}
where ${\Ecal_{i}^m}(\cdot)$ for $\mseq$ are defined in~\eqref{eq:eventmi}. Then, taking the union bound, if $N=O(\max_{\mseq}\kappa_m^2 (k_m s^\star+\log M+\log p))$, we see that event $\bar\Ecal_i$ will happen with probability at least $$1-(pM)^{-1}\sum_{m=1}^M\exp\{-k_m s^\star\}\geq \ppkms.$$

To find the lower bound of $\Ical_{32,i}$, we use formula of $\nabla_{\Bik} f$ stated in~\eqref{eq:dpot_gradBij}. Combining with the fact that 
\[
    -(\tBik -\ptBik )\bigAs{p}{\pAmk}\Ykmn 
    = 
        (\Bik -\pBik )
        \bigAs{p-1}{\pAmk}\Ykmnfree{\noi}
\]
for $\mseq$, we can write
\begin{align*}
    \Ical_{32,i} 
    &= 
        \dotp{
            [\nabla_{\Bik} f(\setpAk,\setBk) 
            - 
            \nabla_{\Bik} f(\setpAk,\setpBk)]_{\bar\Scal_i}
        }{
            \Bik-\pBik}\\
    &=
        \sum_{m=1}^M 
        \dotp{
            -(\tBik -\ptBik)
            \bigAs{p}{\pAmk}
            \sCovfree{\cdot}{\noi}
            \bigAs{p-1}{\pAmkT}
        }{
            [\Bik-\pBik]_{\bar\Scal_i}} \\
    &= 
        \sum_{m=1}^M
        \dotp{
            [\Bik - \pBik]_{\bar\Scal_i}
            \bigAs{p-1}{\pAmk}
            \sCovfree{\cdot}{\noi}
            \bigAs{p-1}{\pAmkT}
        }{
            [\Bik-\pBik]_{\bar\Scal_i}}\\
     &= 
        \frac{1}{N}\sum_{m=1}^M
        \norm{
            [\Bik -\pBik ]_{\bar\Scal_i}
            \bigAs{p-1}{\pAmk}
            \Ykmnfree{\noi}
        }_F^2.\\
\end{align*}
Therefore, we can further lower bound $\Ical_{32,i}$ as 
\begin{align*}
     \Ical_{32,i}
     &\geq
        \sum_{m=1}^M
        \sinmintwo{\pAmk }
        \sinmin{\sCovfree{\barNi}{\barNi}}
        \norm{\Bik -\pBik }_F^2.\\
     &\geq 
        \frac{\nu_{x}}{2}
        \sum_{m=1}^M
        \sinmintwo{\pAmk }
        \norm{\Bik  - \pBik }_F^2
    =
        M\Upsilon_{6}\norm{\Bik -\pBik }_F^2,
\end{align*}
with probability at least $\ppkms$. The last inequality is followed by conditioning on the event $\bar\Ecal_i$ in~\eqref{eq:even2}.

Finally, applying Lemma~\ref{lemma:dualnormrpq} yields,
\begin{align*}
    \Ical_{33,i} 
    &= 
        \dotp{[\gradBstar]_{\bar\Scal_i} }{\Bik - \pBik}\\
    &\geq -
        \norm{[\gradBstar]_{\bar\Scal_i}}_{r(k,\infty)}
        \norm{\Bik - \pBik}_{r(k,1)}\\
    &\geq -
        \sqrt{(1+\vartheta_1)s^\star}
        \norm{[\gradBstar]_{\bar\Scal_i}}_{r(k,\infty)}
        \norm{\Bik - \pBik}_{r(k,2)}\\
    &= -
        \sqrt{(1+\vartheta_1)s^\star}
        \norm{[\gradBstar]_{\bar\Scal_i}}_{r(k,\infty)}
        \norm{\Bik - \pBik}_F\\
    & \geq - 
        \sqrt{(1+\vartheta_1)s^\star}
        \cbr{
            \frac{1}{2C_{33}}
            \norm{[\gradBstar]_{\bar\Scal_i}}_{r(k,\infty)}^2 
            + 
            \frac{C_{33}}{2}
            \norm{\Bb_{i}-\Bb_{i}^\star}_{F}^2
        }.
\end{align*}

Setting $C_{31}$, $C_{33}$ as 
\[
    C_{31} 
    = 
        \frac{\Upsilon_{6}}{2\Upsilon_{3}\Upsilon_{4}\sqrt{s^\star}}
    , \quad 
    C_{33}
    = 
        \frac{M\Upsilon_{6}}{2\sqrt{(1+\vartheta_1)s^\star}},
\]
Combining $\Ical_{31,i}$, $\Ical_{32,i}$, $\Ical_{33,i}$, $\Ical_{33,i}$ obtains
\begin{multline*}
    \Ical_{3,i}
    \geq
        \frac{M\Upsilon_{6}}{2}
        \norm{\Bik - \pBik}_F^2
        -
        \frac{s^\star\Upsilon_{4}^2\Upsilon_{3}^2}{\Upsilon_{6}}
        \sum_{m=1}^M
        \norm{\Amk  - \pAmk }_F^2\\
    -
        \frac{(1+\vartheta_1)s^\star}{M\Upsilon_{6}}
        \norm{[\gradBstar]_{\bar\Scal_i}}_{r(k,\infty)}^2.
\end{multline*}
with probability at least $\ppkms$.

\end{proof}

\begin{lemma}\label{lemma:I_m4}
Under the conditions of Lemma~\ref{lemma:onestep_Bij}, we have
\begin{multline*}
    \Ical_{4,i}
    \leq 
        3\bigg\{ 
        \Upsilon_{5} Ms^\star
        \sum_{m=1}^M\norm{\Amk  - \pAmk }_F^2 
        +
        \Upsilon_{7}M^2
        \norm{\Bb_{i} - \Bb_{i}^\star}_F^2\\
        +
        (1+\vartheta_1)s^\star
        \norm{[\gradBstar]_{\bar\Scal_i}}_{r(k,\infty)}^2
    \bigg\}.
\end{multline*}
\end{lemma}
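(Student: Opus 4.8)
The plan is to mirror the proof of Lemma~\ref{lemma:I_m2}, now differentiating with respect to $\Bik$ instead of $\Amk$. Write $\sqrt{\Ical_{4,i}}=\norm{[\nabla_{\Bik}f(\setAk,\setBk)]_{\bar\Scal_i}}_F$, insert the two intermediate gradients $\nabla_{\Bik}f(\setpAk,\setBk)$ and $\nabla_{\Bik}f(\setpAk,\setpBk)$, and apply the triangle inequality together with $(a+b+c)^2\le 3(a^2+b^2+c^2)$ to get $\Ical_{4,i}\le 3(\Ical_{41,i}+\Ical_{42,i}+\Ical_{43,i})$, where $\Ical_{41,i}:=\norm{[\nabla_{\Bik}f(\setAk,\setBk)-\nabla_{\Bik}f(\setpAk,\setBk)]_{\bar\Scal_i}}_F^2$ isolates the error in the $\Amk$'s, $\Ical_{42,i}:=\norm{[\nabla_{\Bik}f(\setpAk,\setBk)-\nabla_{\Bik}f(\setpAk,\setpBk)]_{\bar\Scal_i}}_F^2$ isolates the error in $\Bik$, and $\Ical_{43,i}:=\norm{[\gradBstar]_{\bar\Scal_i}}_F^2$ is the stationarity term. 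No concentration event is required here: only the structural constraints $\Amk\in\Kcal_m(\tau_1,\tau_2)$, $\Bik\in\Kcal_B(s,\alpha)$, the initialization bounds through $C_A,C_B$, and deterministic matrix-norm inequalities enter, which is why the statement carries no probability.

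The easy pieces are $\Ical_{43,i}$ and $\Ical_{42,i}$. For $\Ical_{43,i}$, the restricted matrix $[\gradBstar]_{\bar\Scal_i}$ has at most $|\barNi|\le(1+\vartheta_1)s^\star$ nonzero $k\times k$ column blocks (by the definitions of $\bar\Scal_i$ and $\barNi$ in~\eqref{eq:defsupport}--\eqref{eq:defneighbour}), so $\norm{\cdot}_F=\norm{\cdot}_{r(k,2)}\le\sqrt{(1+\vartheta_1)s^\star}\,\norm{\cdot}_{r(k,\infty)}$ gives $\Ical_{43,i}\le(1+\vartheta_1)s^\star\norm{[\gradBstar]_{\bar\Scal_i}}_{r(k,\infty)}^2$. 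For $\Ical_{42,i}$, since $\Bik$ enters $f$ only through $\tBik$, formula~\eqref{eq:dpot_gradBij} yields
\[
\nabla_{\Bik}f(\setpAk,\setBk)-\nabla_{\Bik}f(\setpAk,\setpBk)=-\sum_{m=1}^M(\tBik-\ptBik)\bigAs{p}{\pAmk}\sCovfree{\cdot}{\noi}\bigAs{p-1}{\pAmkT};
\]
restricting to $\bar\Scal_i$ only decreases the Frobenius norm, and then submultiplicativity together with $\norm{\tBik-\ptBik}_F=\norm{\Bik-\pBik}_F$, $\norm{\pAmk}_2^2\le\rho_a$, $\norm{\sCovfree{\cdot}{\noi}}_2\le\norm{\sCov}_2$ bounds the $m$-th summand by $\rho_a\norm{\sCov}_2\norm{\Bik-\pBik}_F$; Cauchy--Schwarz over $m$ then gives $\Ical_{42,i}\le M^2\Upsilon_{7}\norm{\Bik-\pBik}_F^2$ (this is just the $L$-smoothness of $f(\setpAk,\cdot)$ in $\Bik$).

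The main work, and the only genuinely delicate step, is $\Ical_{41,i}$. Expanding the difference with~\eqref{eq:dpot_gradBij} and telescoping $\bigAs{p}{\Amk}\sCovfree{\cdot}{\noi}\bigAs{p-1}{\AmkT}-\bigAs{p}{\pAmk}\sCovfree{\cdot}{\noi}\bigAs{p-1}{\pAmkT}$ into the two terms $\bigAd{p}{\Amk-\pAmk}\sCovfree{\cdot}{\noi}\bigAs{p-1}{\AmkT}$ and $\bigAs{p}{\pAmk}\sCovfree{\cdot}{\noi}\bigAd{p-1}{\AmkT-\pAmkT}$, one must keep the $\Amk-\pAmk$ factor in Frobenius norm while avoiding spurious factors of $p$ or $k$. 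This is handled exactly as in the derivation of~\eqref{eq:I111} and of the $\Ical_{31,i}$ bound in Lemma~\ref{lemma:I_m3}: $\tBik$ has at most $s+1\le(1+\vartheta_1)s^\star$ nonzero $k\times k$ column blocks, so $\norm{\tBik\bigAd{p}{\Amk-\pAmk}}_F\le\sqrt{(1+\vartheta_1)s^\star}\,\norm{\tBik}_2\norm{\Amk-\pAmk}_F$ and similarly $\norm{\tBik\bigAs{p}{\pAmk}}_F\le\sqrt{(1+\vartheta_1)s^\star}\,\norm{\tBik}_2\rho_a^{1/2}$; combining with $\norm{\tBik}_2\le(1+C_B)\norm{\ptBik}_2\le(1+C_B)\rho_b^{1/2}$, $\norm{\Amk}_2\le(1+C_A)\rho_a^{1/2}$, $\norm{\sCovfree{\cdot}{\noi}}_2\le\norm{\sCov}_2$, and $\norm{[\,\cdot\,]_{\bar\Scal_i}}_F\le\norm{\cdot}_F$ bounds the $m$-th summand by a constant multiple of $\sqrt{s^\star}\,\norm{\Amk-\pAmk}_F$; a final Cauchy--Schwarz over $m$ produces $\Ical_{41,i}\le\Upsilon_{5}Ms^\star\sum_{m=1}^M\norm{\Amk-\pAmk}_F^2$ once the $(1+C_A)^2(2+C_B)^2\rho_b\rho_a\max_m\norm{\sCov}_2^2$ constants are collected into $\Upsilon_{5}$, the leading $4$ in $\Upsilon_{5}$ absorbing the two telescoping terms. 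Substituting the three bounds into the split gives the claimed inequality. The expected obstacle is purely bookkeeping: extracting the $s^\star$ (and no $k$) out of the block/sparsity structure of $\tBik$ and matching the numerical constants to the definitions of $\Upsilon_{5}$ and $\Upsilon_{7}$ in Appendix~\ref{ssec:notation}.
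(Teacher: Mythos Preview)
Your decomposition $\Ical_{4,i}\le 3(\Ical_{41,i}+\Ical_{42,i}+\Ical_{43,i})$ and your treatment of $\Ical_{42,i}$ and $\Ical_{43,i}$ match the paper's proof exactly. The handling of $\Ical_{41,i}$ is also the same in outline (telescope in the $\Amk$-argument, pull out $\sqrt{s^\star}$, Cauchy--Schwarz over $m$), but one of your intermediate claims is wrong and, as written, the bound does not close.

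The problematic step is the second telescoped piece. You assert $\norm{\tBik\,\bigAs{p}{\pAmk}}_F\le\sqrt{(1+\vartheta_1)s^\star}\,\norm{\tBik}_2\,\rho_a^{1/2}$ and then multiply by $\norm{\sCovfree{\cdot}{\noi}\,\bigAd{p-1}{\AmkT-\pAmkT}}_2$. But the block-sparsity argument you use for the first piece gives $\norm{\tBik\,\bigAs{p}{\pAmk}}_F\le\norm{\tBik}_2\norm{I_{|\eNi|+1}\otimes\pAmk}_F=\sqrt{|\eNi|+1}\,\norm{\tBik}_2\,\norm{\pAmk}_F$, i.e.\ $\norm{\pAmk}_F$ rather than $\norm{\pAmk}_2$, which costs an extra $\sqrt{k}$ that $\Upsilon_5$ does not absorb. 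More fundamentally, once you drop the restriction via $\norm{[\,\cdot\,]_{\bar\Scal_i}}_F\le\norm{\cdot}_F$, the rightmost factor $\bigAd{p-1}{\AmkT-\pAmkT}$ carries $\sqrt{p-1}$, not $\sqrt{s^\star}$, and nothing in the remaining operator norms compensates.

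The paper fixes this by \emph{keeping} the support restriction for that term: since $(u,v)\in\bar\Scal_i$ forces $\lceil v/k\rceil\in\barNi$, one may replace $\sCovfree{\cdot}{\noi}\,\bigAd{p-1}{\AmkT-\pAmkT}$ by $\sCovfree{\cdot}{\barNi}\,\bigAd{|\barNi|}{\AmkT-\pAmkT}$ before bounding, which yields $\sqrt{|\barNi|}\le\sqrt{(1+\vartheta_1)s^\star}$ times $\norm{\Amk-\pAmk}_F$ with $\norm{\pAmk}_2$ (not $\norm{\pAmk}_F$) on the left. This is the same mechanism used in~\eqref{eq:I11_2f}, and it is what the $\Ical_{31,i}$ bound you cite actually does. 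With this correction your argument is the paper's proof.
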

\begin{proof}[Proof of Lemma~\ref{lemma:I_m4}]
We want to find the upper bound of $\Ical_{4,i}$. By definition of $\Ical_{4,i}$, we have
\begin{align*}
    \Ical_{4,i} 
    &= 
        \norm{[
            \nabla_{\Bb_{i}} f(\setAk,\setBk)
        ]_{\bar\Scal_i}}_F^2\\
    &\leq 
        3\bigg\{ 
            \norm{[
                \nabla_{\Bb_{i}} f(\setAk,\setBk) 
                -
                \nabla_{\Bb_{i}} f(\setpAk,\setBk)
            ]_{\bar\Scal_i}}_F^2 \\
        &\quad \quad + 
            \norm{[
                \nabla_{\Bb_{i}} f(\setpAk,\setBk)
                - 
                \gradBstar 
            ]_{\bar\Scal_i}}_F^2\\
        &\quad \quad + 
            \norm{[\gradBstar]_{\bar\Scal_i}}_F^2
        \bigg\}\\
    &\leq 
        3(\Ical_{41,i}+\Ical_{42,i}+\Ical_{43,i}).
\end{align*}
We further upper bound $\Ical_{41,i}$ and $\Ical_{42,i}$, respectively below.

First, we can write $\Ical_{41,i}$ using the result in~\eqref{eq:l31m_lb}:
\begin{align}
    \Ical_{41,i}
    &=
        \norm{[
            \nabla_{\Bb_{i}} f(\setAk,\setBk) 
            -
            \nabla_{\Bb_{i}} f(\setpAk,\setBk)
        ]_{\bar\Scal_i}}_F^2\notag\\
    &=
        \bigg\|
            \sum_{m=1}^M[
              \tBik\bigAs{p}{\Amk} \sCovfree{\cdot}{\noi} \bigAd{p-1}{\Amk-\pAmk}^\top]_{\bar\Scal_i}\notag\\
            &\quad\quad    +
            [\tBik\bigAd{p}{\Amk-\pAmk} \sCovfree{\cdot}{\noi} \bigAs{p-1}{\pAmkT}
            ]_{\bar\Scal_i}
        \bigg\|_F^2.\notag\\
    \intertext{Apply Cauchy-Schwarz inequality, the above display can be bounded as}
    &\leq 
        2M\sum_{m=1}^M
        \norm{[
            \tBik\bigAs{p}{\Amk} \sCovfree{\cdot}{\noi} \bigAd{p-1}{\AmkT-\pAmkT}
        ]_{\bar\Scal_i}}_F^2\notag\\
        &\quad
        +2M\sum_{m=1}^M
        \norm{
        [
         \tBik\bigAd{p}{\Amk-\pAmk} \sCovfree{\cdot}{\noi} \bigAs{p-1}{\pAmkT}
        ]_{\bar\Scal_i}}_F^2.\label{eq:I41_i}
\end{align}
Then, we bound the above two terms separately.

Since $\tBik\bigAs{p}{\Amk} \sCovfree{\cdot}{\noi} \bigAd{p-1}{\AmkT-\pAmkT}$ is evaluated on the support $\bar\Scal_i$, for any $(u,v)\in\bar\Scal_i$, we have $\lceil v/k \rceil\in\barNi$. Therefore, we have
\begin{align}
    \sum_{m=1}^M
        \norm{[
            \tBik\bigAs{p}{\Amk} &\sCovfree{\cdot}{\noi} \bigAd{p-1}{\AmkT-\pAmkT}
        ]_{\bar\Scal_i}}_F^2\notag\\
    &= 
        \sum_{m=1}^M
        \norm{[
            \tBik\bigAs{p}{\Amk} \sCovfree{\cdot}{\barNi} \bigAd{|\barNi|}{\AmkT-\pAmkT}
        ]_{\bar\Scal_i}}_F^2.\notag\\   
    &\leq 
         (1+\vartheta_1)s^\star \norm{\tBik}_2^2
        \sum_{m=1}^M
        \norm{\Amk}_2^2\norm{\sCov}_2^2
        \norm{\Amk  - \pAmk }_F^2\notag\\
    &\leq (1+\vartheta_1)s^\star(1+C_A)^2(2+C_B)^2\norm{\ptBik}_2^2\sum_{m=1}^M
        \norm{\pAmk}_2^2\norm{\sCov}_2^2
        \norm{\Amk  - \pAmk }_F^2,\label{eq:I41_i1}
\end{align}
where the last line follows by the condition of Lemma~\ref{lemma:onestep_Bij}.

Similarly, we can write the second term of~\eqref{eq:I41_i} as
\begin{align}
\sum_{m=1}^M
        \norm{
        [
         \tBik\bigAd{p}{\Amk-\pAmk} &\sCovfree{\cdot}{\noi} \bigAs{p-1}{\pAmkT}
        ]_{\bar\Scal_i}}_F^2\notag\\
        &=
            \sum_{m=1}^M
        \norm{
        [
         \tBikfree{\eNi}\bigAd{|\eNi|}{\Amk-\pAmk} \sCovfree{\eNi}{\noi} \bigAs{p-1}{\pAmkT}
        ]_{\bar\Scal_i}}_F^2.\notag\\
        \intertext{Using the fact that $|\eNi|\leq\vartheta_1s^\star/2\leq (1+\vartheta_1)s^\star$, we have $\norm{\bigAd{|\eNi|}{\Amk-\pAmk}}_F^2\leq(1+\vartheta_1)s^\star\norm{\Amk-\pAmk}_F^2$. Therefore, the above display can be further bounded as}
        &\leq
        (1+\vartheta_1)s^\star \norm{\tBik}_2^2
        \sum_{m=1}^M
        \norm{\pAmk}_2^2\norm{\sCov}_2^2
        \norm{\Amk  - \pAmk }_F^2\notag\\
        &\leq  (1+\vartheta_1)s^\star(2+C_B)^2 \norm{\ptBik}_2^2
        \sum_{m=1}^M
        \norm{\pAmk}_2^2\norm{\sCov}_2^2
        \norm{\Amk  - \pAmk }_F^2.\label{eq:I41_i2}
\end{align}

Plugging the results of~\eqref{eq:I41_i1}--\eqref{eq:I41_i2} back to~\eqref{eq:I41_i}, we could conclude that
\[
\Ical_{41,i}\leq M\Upsilon_5\sum_{m=1}^Ms^\star\norm{\Amk  - \pAmk }_F^2,
\]
where $\Upsilon_5=4(1+C_A)^2(2+C_B)^2\rho_b\rho_a\max_{\mseq}\norm{\sCov}_2^2$.

Similarly, expanding $\Ical_{42,ij}$ obtains
\begin{align*}
    \Ical_{42,i} 
    &\leq
        \norm{
            \nabla_{\Bb_{i}} f(\setpAk,\setBk)
            - 
            \nabla_{\Bb_{i}} f(\setpAk,\setpBk)
        }_F^2\\
    &= 
        \norml{
            \sum_{m=1}^M
            (\Bik - \pBik)
            \bigAs{p-1}{\pAmk} \sCovfree{\backslash\{i\}}{\backslash\{i\}}
            \bigAs{p-1}{\pAmkT}
        }_F^2\\
    &\leq 
        M\sum_{m=1}^M
        \norm{\pAmk}_2^4
        \norm{\sCovfree{\backslash\{i\}}{\backslash\{i\}}}_2^2
        \norm{\Bik -\pBik }_F^2
    \leq 
        \Upsilon_{7}M^2\norm{\Bik -\pBik }_F^2,
\end{align*}
where $\Upsilon_{7} =\max_{\mseq} \rho_a^2\norm{\sCov}_2^2$.

Next, we can write $\Ical_{43,i}$ as
\[
    \Ical_{43,i} 
    = 
        \norm{[\gradBstar]_{\bar\Scal_i}}_F^2
    \leq 
        (1+\vartheta_1)s^\star
        \norm{[\gradBstar]_{\bar\Scal_i}}_{r(k,\infty)}^2.
\]

Combining results of $\Ical_{41,i}$ and $\Ical_{42,i}$, we can conclude that 
\begin{multline*}
    \Ical_{4,i}
    \leq 
        3\bigg\{ 
            \Upsilon_{5} M s^\star
            \sum_{m=1}^M
            \norm{\Amk  - \pAmk }_F^2 
            + 
            M^2\Upsilon_{7}
            \norm{\Bik - \pBik}_F^2\\
            +
            (1+\vartheta_1)s^\star
            \norm{[\gradBstar]_{\bar\Scal_i}}_{r(k,\infty)}^2
    \bigg\}.
\end{multline*}
\end{proof}

\section{Analysis of the Initialization}
Section~\ref{ssec:proof_theoreminicca} discusses the proof of Theorem~\ref{theorem:initialcca}. Section~\ref{ssec:auxiliary_initialcca} introduces auxiliary lemmas for the proofs of Theorem~\ref{theorem:initialcca}. 
Section~\ref{ssec:proof_lemmastepB} shows the proof of Lemma~\ref{lemma:initial_stepB}.

\subsection{Proof of Theorem~\ref{theorem:initialcca}}
\label{ssec:proof_theoreminicca}

We prove the result for one modality. 
Let $\cpAmk=(\Lb^{\star})^\dagger$ be the pseudoinverse of $\Lb^{m\star}$, where $\Lb^{m\star}\in\RR^{k_m\times k}$ be the matrix realization of ${\Lscr^m}$ such that the $\ell\ell'$-th entry of  $\Lb^{m\star}$ is $L_{\ell\ell'}^m=\dotp{{\Lscr^m}\phi_{\ell'}^m}{\phi_{\ell}^m}_{\HH_m}$. It should be noted that $\cpAmk\in\RR^{k\times k_m}$ is different from $\pAmk\in\RR^{k\times k_m}$: the former is first truncating ${\Lscr^m}$ and then taking pseudoinverse while the later is first taking the pseudoinverse of ${\Lscr^m}$ and then conducting finite truncation. However, the difference of the two will become small as $k$ and $k_m$ are selected large enough and the remaining terms are small in magnitude.
For the estimate in \eqref{eq:cca_model}, we have the following 
\begin{equation}\label{eq:QAA}
\norm{\Amk^{(0)} -\Qb\pAmk}_F=\norm{\Qb\Amk^{(0)}-\pAmk}_F\leq\norm{\Qb\Amk^{(0)}-\cpAmk}_F+\norm{{\cpAmk-\pAmk}}_F,\qquad m=1,2,
\end{equation}
where $\Qb$ is a diagonal matrix with diagonal entries taking values in $\{-1,1\}$ that aligns columns of the matrix $\Vb^m$ with the corresponding columns of the population canonical matrix $\Vb^{m\star}$ so that $\dotp{q_{j}\vb_{j}^m}{\vb_{j}^{m\star}}\geq 0$. For the first term of~\eqref{eq:QAA}, we can apply the result of Lemma~\ref{lemma:first_error} to find a valid upper bound. For the second term, we recall that 
\begin{align*}
    &{\Lscr^{k_m,k}}=\sum_{\ell=1}^{k_m}\sum_{\ell'=1}^k\dotp{{\Lscr^m}\phi_{\ell'}}{\phi_{\ell}^m}_{\HH_m}\phi_{\ell}^m\otimes\phi_{\ell'};\\
    &\Ascr_m^{k,k_m}=\sum_{\ell=1}^{k}\sum_{\ell'=1}^{k_m}\dotp{\Ascr_m\phi_{\ell'}^m}{\phi_{\ell}}_{\HH}\phi_{\ell}\otimes\phi_{\ell'}^m.
\end{align*}
 Define ${\Lscr^{m,r}}={\Lscr^m}-{\Lscr^{k_m,k}}$ and write
\begin{align*}
    \norm{\cpAmk-\pAmk}_F
    &\leq
    (2k)^{1/2}\norm{\cpAmk-\pAmk}_2\\
    &=
    (2k)^{1/2}
    \opnorm{({\Lscr^{k_m,k}})^\dagger - \Ascr_m^{k,k_m}}{}.\\
    \intertext{Apply Lemma~\ref{lemma:truncation_error}, the above display can be bounded as}
    &\leq 2^{3/2}k^{1/2}\opnorm{{\Lscr^m}^\dagger}{}^2\opnorm{{\Lscr^{m,r}}}{}.
\end{align*}
Hence, we complete the proof.
\subsection{Proofs of Lemma~\ref{lemma:first_error}--\ref{lemma:Rxy_tail}}\label{ssec:auxiliary_initialcca}
\begin{lemma}\label{lemma:first_error}
Suppose that Assumptions~\ref{assumption:distinctcca}--\ref{assumption:cov}  hold and $N=O( \max_{\mseq}\kappa_m^2k_m)$ where $\defcdnm$. Then, let $C_{\gamma_k,\nu_x,\rho_x}>0$ be a constant depending on $\gamma_k$, $\nu_x$, and $\rho_x$, we have
\[
\norm{\Qb\Amk^{(0)} -\cpAmk}_F
\leq 
C_{\gamma_k,\nu,\rho}\rbr{1+\max_{j=1,\ldots,k}\max_{j\neq i}\frac{1}{|\gamma_j -\gamma_i|}}
\sqrt{k\frac{\sum_{k=1}^2k_m}{N}},
\]
with probability at least $1-5\exp(-\sum_{m=1}^2 k_m)$.
\end{lemma}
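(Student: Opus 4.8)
\textbf{Proof proposal for Lemma~\ref{lemma:first_error}.}

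The plan is to bound $\norm{\Qb\Amk^{(0)} - \cpAmk}_F$ by expressing both matrices in terms of canonical quantities and then propagating the perturbation of the empirical covariance and cross-covariance matrices through each factor. Recall from \eqref{eq:cca_model} that $\Amk^{(0)} = \hat\Gammab^{-1/2}\hat\Vb^{m\top}(\hat\Sigmab^m_{11})^{-1/2}$, while by Theorem~2 in \citet{bach2005probabilistic} applied at the population level, $\cpAmk$ has the analogous form $\Gammab^{-1/2}\Vb^{m\star\top}(\Sigmab^m_{11})^{-1/2}$ (up to the sign alignment captured by $\Qb$). So the first step is to write the difference as a telescoping sum over the three factors: the whitening matrix $(\hat\Sigmab^m_{11})^{-1/2}$ versus $(\Sigmab^m_{11})^{-1/2}$, the singular vectors $\hat\Vb^m$ versus $\Vb^{m\star}$, and the diagonal scaling $\hat\Gammab^{-1/2}$ versus $\Gammab^{-1/2}$. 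Each term is controlled by a standard matrix-perturbation inequality combined with a concentration bound on the relevant empirical object.

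The key ingredients, carried out in this order, are: (i) concentration of $\hat\Sigmab^m_{11}$ around $\Sigmab^m_{11}$ and of $\hat\Sigmab^{12}_{11}$ around $\Sigmab^{12}_{11}$ in spectral norm at the rate $O(\sqrt{k_m/N})$ — this is where the sample-size requirement $N = O(\max_m \kappa_m^2 k_m)$ enters, guaranteeing with probability at least $1 - 5\exp(-\sum_{m=1}^2 k_m)$ that all empirical eigenvalues stay within a constant factor of the population ones (this is exactly the flavor of \Cref{assumption:cov} and the auxiliary tail bound on $\hat\Rb^{12}$, presumably Lemma~\ref{lemma:Rxy_tail}); (ii) a perturbation bound for the matrix square root and its inverse, using that $\nu_x \le \sigma_{\min}(\Sigmab^m_{11}) \le \sigma_{\max}(\Sigmab^m_{11}) \le \rho_x$, to transfer the $\sqrt{k_m/N}$ rate to $\norm{(\hat\Sigmab^m_{11})^{-1/2} - (\Sigmab^m_{11})^{-1/2}}_2$; (iii) a Davis--Kahan / Wedin-type bound for the top-$k$ singular subspace of $\hat\Rb^{12}$ versus $\Rb^{12}$, where the eigengap $\gamma_k > 0$ and the consecutive-gap quantity $\max_{j\ne i}|\gamma_j - \gamma_i|^{-1}$ appear — note we need the \emph{per-vector} alignment (hence the sign matrix $\Qb$ and the factor $1 + \max_{j\ne i}|\gamma_j-\gamma_i|^{-1}$), not just subspace alignment, since we are aligning individual canonical directions; and (iv) a perturbation bound for $\hat\Gammab^{-1/2}$ using $\gamma_k > 0$ and Weyl's inequality on the singular values of $\hat\Rb^{12}$. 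Multiplying the factor-wise bounds, using boundedness of each factor in spectral norm, and converting from spectral to Frobenius norm via the rank bound (each matrix has rank at most $k$, contributing the $\sqrt{k}$) yields the claimed bound with $C_{\gamma_k,\nu,\rho}$ absorbing the polynomial dependence on $\gamma_k^{-1}$, $\nu_x$, $\rho_x$ as specified in Appendix~\ref{ssec:notation}.

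The main obstacle will be step (iii): getting the \emph{individual} canonical vectors (left singular vectors of $\hat\Rb^{12}$) to align with their population counterparts up to sign, rather than merely the spans. This requires Assumption~\ref{assumption:distinctcca} (distinct canonical correlations) and a careful application of a vector-wise Davis--Kahan bound, which is why the factor $1 + \max_{j\ne i}|\gamma_j - \gamma_i|^{-1}$ is unavoidable — a single small gap between two consecutive canonical correlations can make two nearby directions hard to disentangle. A secondary subtlety is bookkeeping the propagation of errors through the product of three perturbed factors without losing the correct rate; the standard approach is to bound $\norm{ABC - A'B'C'}$ by $\norm{A-A'}\norm{B}\norm{C} + \norm{A'}\norm{B-B'}\norm{C} + \norm{A'}\norm{B'}\norm{C-C'}$ and then use the high-probability two-sided spectral bounds on all empirical factors to replace primed norms by constants. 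Once these are in place, the rest is routine and the $\sqrt{k\sum_{m}k_m / N}$ rate follows (the $\sum_m k_m$ rather than $k_m$ arises because $\hat\Rb^{12}$ involves both modalities' covariances).
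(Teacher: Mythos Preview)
Your proposal is correct and follows essentially the same route as the paper: the same three-term telescoping decomposition over the factors $\hat\Gammab^{-1/2}$, $\hat\Vb^{m}$, and $(\hat\Sigmab^m_{11})^{-1/2}$, the same concentration inputs (including Lemma~\ref{lemma:Rxy_tail} for $\hat\Rb^{12}$), and the same identification of per-vector singular direction alignment as the crux. The only cosmetic difference is that the paper handles step~(iii) via the singular-vector perturbation expansion of Theorem~5.2.2 in \citet{hsing2015theoretical} (restated as Lemma~\ref{lemma:perturb_sinvec}) rather than a generic Davis--Kahan statement, but the resulting bound and gap dependence are the same.
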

\begin{proof}[Proof of Lemma~\ref{lemma:first_error}]
We have 
\begin{align*}
&\left\|\Qb\Amk^{(0)} - \cpAmk\right\|_F \\
&=
\left\|\hat\Gammab^{-1/2}\hat\Vb^{m\top}(\hat\Sigmab_{11}^m)^{-1/2}- \Qb\Gammab^{\star-1/2}\Vb^{m\star\top}(\Sigmab_{11}^m)^{-1/2}\right\|_F && (\Qb, \Gammab \text{ are diagonal matrices})\\
&=
\left\|\hat\Gammab^{-1/2}\hat\Vb^{m\top}(\hat\Sigmab_{11}^m)^{-1/2}- \Gammab^{\star-1/2}\Qb\Vb^{m\star\top}(\Sigmab_{11}^m)^{-1/2}\right\|_F \\
&\leq  
\left\|\hat\Gammab^{-1/2}\hat\Vb^{m\top}\cbr{
        (\hat\Sigmab_{11}^m)^{-1/2}
        -
        (\Sigmab_{11}^m)^{-1/2}
    }\right\|_F    \\
    & \quad +
    \left\|\hat\Gammab^{-1/2}\rbr{
        \hat\Vb^m
        -
        \Vb^{m\star}\Qb
    }^\top
    (\Sigmab_{11}^m)^{-1/2}
    \right\|_F \\
    & \quad +
    \left\|
    \rbr{\hat\Gammab^{-1/2}-\Gammab^{\star-1/2}}
    \Qb\Vb^{m\star\top}(\Sigmab_{11}^m)^{-1/2}
    \right\|_F.
\end{align*}
We then apply Lemma~\ref{lemma:I81}--\ref{lemma:I83} to upper bound the above three terms individually and arrive at
\begin{align*}
\left\|\Qb\Amk^{(0)} - \cpAmk\right\|_F &\leq
C_1\norm{\hat{\Gammab}^{-1/2}}_2(\rho_x^{1/2}\nu_x^{-2}+\nu_x^{-3/2})\sqrt{
\frac{kk_m}{N}
}\\
&\quad+
C_2k^{1/2}\nu_x^{-1/2}\norm{\hat{\Gammab}^{-1/2}}_2(\gamma_k^{-3/2}+1)
\max_{j=1,\ldots,k}\max_{j\neq i}\frac{1}{|\gamma_j-\gamma_i|}\norm{\hat\Rb^{12}-\Rb^{12\star}}_2,
\end{align*}
 with probability at least $1-\exp(-k_m)$.
 
Under the condition that $N=O(\max_{\mseq}\kappa_m^2 k_m)$, we could apply Lemma~\ref{lemma:Rxy_tail} and obtain
\begin{equation*}
\norm{\hat\Rb^{12}-\Rb^{12\star}}_2^2
\leq C_3\rho_x^4\nu_x^{-4}(\rho_x^{1/2}\nu_x^{-1/2}+1)^2\frac{\sum_{m=1}^2k_m}{N}
\end{equation*}
with probability at least $1-3\exp(-\sum_{m=1}^2k_m)$ for some constant $C_3>0$.

Recall that 
$\|\hat\Gammab^{-1/2}\|_2=\norm{\hat\Rb_{12}^{-1/2}}_2$ and by Lemma~\ref{lemma:ci}, we have $\|\hat\Gammab^{-1/2}\|_2\leq (3/2) \gamma_k^{-1/2}$, with probability at least $1-\exp(-k)$.

Taking the union bound, we can conclude that
\[
\norm{\Qb\Amk^{(0)} - \cpAmk}_F
\leq 
C_{\gamma_k,\rho_x,\nu_x}\rbr{1+\max_{j=1,\ldots,k}\max_{j\neq i}\frac{1}{|\gamma_j -\gamma_i|}}
\sqrt{\frac{k\sum_{k=1}^2k_m}{N}},
\]
with probability at least $1-5\exp(-\sum_{m=1}^2 k_m)$ and 
\begin{align*}
C_{\gamma_k,\rho_x,\nu_x}&=C_4\bigg\{\gamma_k^{-1/2}(\rho_x^{1/2}\nu_x^{-2}+\nu_x^{-3/2})+\rho_x^2\nu_x^{-5/2}(\gamma_k^{-1/2}+\gamma_k^{-2})(\rho_x^{1/2}\nu_x^{-1/2}+1)\bigg\},
\end{align*}
for some universal constant $C_4>0$.
\end{proof}

\begin{lemma}\label{lemma:I81}
Under the conditions of Lemma~\ref{lemma:first_error} and an universal constant $C_1>0$, we have
\[
\left\|\hat\Gammab^{-1/2}\hat\Vb^{m\top}\cbr{
        (\hat\Sigmab_{11}^m)^{-1/2}
        -
        (\Sigmab_{11}^m)^{-1/2}
    }\right\|_F \leq 
    C_1\norm{\hat{\Gammab}^{-1/2}}_2(\rho_x^{1/2}\nu_x^{-2}+\nu_x^{-3/2})
\sqrt{
\frac{kk_m}{N}
},
\]
 with probability at least $1-\exp(-k_m)$.
\end{lemma}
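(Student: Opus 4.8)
The plan is to peel off the three factors by submultiplicativity, exploit the orthonormality and the low rank of the middle block to trade the Frobenius norm for $\sqrt{k}$ times an operator norm, and then reduce everything to a matrix perturbation bound for the inverse square root combined with a standard sample-covariance concentration inequality. Write $M := (\hat\Sigmab_{11}^m)^{-1/2} - (\Sigmab_{11}^m)^{-1/2}\in\RR^{k_m\times k_m}$. Since $\hat\Vb^m\in\RR^{k_m\times k}$ has orthonormal columns, $\norm{\hat\Vb^{m\top}}_2\le 1$, and since $\hat\Gammab^{-1/2}\hat\Vb^{m\top}M\in\RR^{k\times k_m}$ has at most $k$ nonzero singular values (recall $k\le k_m$, which follows from $\min\{k_1,k_2\}\ge k$ in Assumption~\ref{assumption:distinctcca}),
\[
\norm{\hat\Gammab^{-1/2}\hat\Vb^{m\top}M}_F
\le \sqrt{k}\,\norm{\hat\Gammab^{-1/2}\hat\Vb^{m\top}M}_2
\le \sqrt{k}\,\norm{\hat\Gammab^{-1/2}}_2\,\norm{M}_2 .
\]
It therefore suffices to show $\norm{M}_2 \lesssim (\rho_x^{1/2}\nu_x^{-2}+\nu_x^{-3/2})\sqrt{k_m/N}$ on an event of probability at least $1-\exp(-k_m)$.

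\textbf{Step 1 (covariance concentration).} The vectors $\yb_1^{m,(n)}$, $n=1,\ldots,N$, are i.i.d.\ centered Gaussian with covariance $\Sigmab_{11}^m$, and $\nu_x\le\sinmin{\Sigmab_{11}^m}\le\sinmax{\Sigmab_{11}^m}\le\rho_x$ by Assumption~\ref{assumption:cov}. By the standard sub-Gaussian sample covariance bound there is an absolute constant $C$ with
\[
\norm{\hat\Sigmab_{11}^m - \Sigmab_{11}^m}_2 \le C\rho_x\Big(\sqrt{\tfrac{k_m}{N}}+\tfrac{k_m}{N}\Big)
\]
with probability at least $1-\exp(-k_m)$; under $N=O(\kappa_m^2 k_m)$ with $\kappa_m=\rho_x/\nu_x$ this simplifies to $\norm{\hat\Sigmab_{11}^m-\Sigmab_{11}^m}_2\le C'\rho_x\sqrt{k_m/N}$, and in particular $\sinmin{\hat\Sigmab_{11}^m}\ge\nu_x/2$ on this event.

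\textbf{Step 2 (perturbation of the inverse square root).} On the same event, use the identity $A^{-1/2}-B^{-1/2} = A^{-1/2}\big(B^{1/2}-A^{1/2}\big)B^{-1/2}$ with $A=\hat\Sigmab_{11}^m$, $B=\Sigmab_{11}^m$, together with the operator-Lipschitz bound for the matrix square root on $[\nu_x/2,\infty)$, $\norm{A^{1/2}-B^{1/2}}_2\le(\sqrt{\sinmin{A}}+\sqrt{\sinmin{B}})^{-1}\norm{A-B}_2\le C\nu_x^{-1/2}\norm{A-B}_2$, and $\norm{A^{-1/2}}_2\le\sqrt{2}\,\nu_x^{-1/2}$, $\norm{B^{-1/2}}_2\le\nu_x^{-1/2}$, to obtain $\norm{M}_2\lesssim\nu_x^{-3/2}\norm{\hat\Sigmab_{11}^m-\Sigmab_{11}^m}_2$. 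Plugging in Step~1 and collecting the powers of $\rho_x,\nu_x$ that appear (including those in the $\nu_x/2$ lower bound on $\sinmin{\hat\Sigmab_{11}^m}$ and in the $\rho_x$-normalized concentration bound) yields the claimed prefactor $\rho_x^{1/2}\nu_x^{-2}+\nu_x^{-3/2}$ up to an absolute constant $C_1$. Finally substitute into the first display.

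\textbf{Main obstacle.} The delicate part is Step~2: obtaining a perturbation bound for $A\mapsto A^{-1/2}$ that is valid only on the high-probability event where $\hat\Sigmab_{11}^m$ is well-conditioned, and bookkeeping the $\nu_x,\rho_x$ dependence so that the final constant matches the stated form rather than a cruder expression. One must also arrange that the covariance-concentration event and the event $\sinmin{\hat\Sigmab_{11}^m}\ge\nu_x/2$ hold simultaneously with probability at least $1-\exp(-k_m)$, which is exactly what the sample-size condition $N=O(\kappa_m^2 k_m)$ buys.
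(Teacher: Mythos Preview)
Your proposal is correct and follows essentially the same route as the paper: peel off $\hat\Gammab^{-1/2}$ and $\hat\Vb^{m\top}$ by submultiplicativity and $\|\hat\Vb^m\|_2\le 1$, use rank $\le k$ to pass from the Frobenius norm to $\sqrt{k}$ times the operator norm, and then combine a perturbation bound for the inverse square root with sub-Gaussian sample-covariance concentration (the paper's Lemma~\ref{lemma:ci} plays the role of your Step~1, giving both $\|\hat\Sigmab_{11}^m-\Sigmab_{11}^m\|_2\lesssim\rho_x\sqrt{k_m/N}$ and $\sinmin{\hat\Sigmab_{11}^m}\ge\nu_x/2$ on a single event of probability at least $1-\exp(-k_m)$).

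The one substantive difference is in Step~2. The paper invokes its Lemma~\ref{lemma:difference_sqrt}, based on the decomposition $A^{-1/2}-B^{-1/2}=A^{-1/2}(B^{3/2}-A^{3/2})B^{-3/2}+(A-B)B^{-3/2}$ together with $\|A^{3/2}-B^{3/2}\|_2\le 3\lambda^{1/2}\|A-B\|_2$; this produces directly the two-term factor $(3\lambda^{1/2}\|A^{-1/2}\|_2+1)\|B^{-3/2}\|_2$ and hence the stated $\rho_x^{1/2}\nu_x^{-2}+\nu_x^{-3/2}$. Your identity $A^{-1/2}-B^{-1/2}=A^{-1/2}(B^{1/2}-A^{1/2})B^{-1/2}$ followed by the square-root Lipschitz bound is equally legitimate, but it delivers a single-term prefactor of order $\rho_x\,\nu_x^{-3/2}$ rather than the two-term form. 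Your closing claim that ``collecting the powers \ldots\ yields the claimed prefactor $\rho_x^{1/2}\nu_x^{-2}+\nu_x^{-3/2}$'' is therefore not literally correct: the two expressions are not comparable up to an absolute constant. This is a cosmetic discrepancy---the bound you obtain is of the same practical order and suffices for Theorem~\ref{theorem:initialcca}---but if you want the exact prefactor in the lemma statement, use the paper's $A^{3/2}$-based decomposition instead.
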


\begin{proof}[Proof of Lemma~\ref{lemma:I81}]
We have
\begin{align*}
\left\|\hat\Gammab^{-1/2}\hat\Vb^{m\top}\cbr{
        (\hat\Sigmab_{11}^m)^{-1/2}
        -
        (\Sigmab_{11}^m)^{-1/2}
    }\right\|_F& = 
\left\|{\hat\Gammab^{-1/2}\hat\Vb^{m\top}\cbr{
        (\hat{\Sigmab}_{11}^m)^{-1/2}
        -
        (\Sigmab_{11}^m)^{-1/2}
    }}\right\|_F\\
&\leq k^{1/2}\left\|\hat\Gammab^{-1/2}\right\|_2\norml{ (\hat{\Sigmab}_{11}^m)^{-1/2}
        -
        (\Sigmab_{11}^m)^{-1/2}}_2,
\intertext{where the inequality follows by the fact that $\norm{\hat\Vb^{m}}_2\leq 1$. Apply Lemma~\ref{lemma:difference_sqrt}, we can further obtain that}
&
\leq 
k^{1/2}\cbr{3\norml{(\hat{\Sigmab}_{11}^m)^{-1/2}}_2\rbr{\norml{\hat{\Sigmab}_{11}^m}_2\vee\norml{\Sigmab_{11}^m}_2}+1}\\
&\quad\times\left\|\hat\Gammab^{-1/2}\right\|_2\norml{(\Sigmab_{11}^m)^{-3/2}}_2
\norml{\hat{\Sigmab}_{11}^m-\Sigmab_{11}^m}_2.
\end{align*}

Recall that $
\sinmin{\Sigmab^{m}_{11}}\geq\sinmin{\Sigmab^{m}}\geq\nu_x$ and $\rho_x\geq\sinmax{\Sigmab^{m}}\geq \sinmax{\Sigmab^m_{11}}
$.
Since $N=O( \max_{\mseq}\kappa_m^2k_m)$, we have 
\begin{equation}\label{eq:intermediate:1}
3\rho_x/2\geq\sinmin{\hat\Sigmab^m_{11}}\geq\nu_x/2,
\end{equation}
with probability at least $1-\exp(-k_m)$, following Lemma~\ref{lemma:ci}. 


Therefore, under Assumption~\ref{assumption:cov}, we have 
\begin{align}\label{eq:cauchyI_inequality}
\cbr{3\norml{(\hat{\Sigmab}_{11}^m)^{-1/2}}_2\rbr{\norml{\hat{\Sigmab}_{11}^m}_2\vee\norml{\Sigmab_{11}^m}_2}+1}\norml{(\Sigmab_{11}^m)^{-3/2}}_2\leq 9\rho_x^{1/2}\nu_x^{-2}+\nu_x^{-3/2},
\end{align}
with probability at least $1-\exp(-k_m)$.
Then, we arrive at the conclusion that
\begin{align*}
\norm{\hat\Gammab^{-1/2}\hat\Vb^{m\top}\rbr{
        (\hat{\Sigmab}_{11}^m)^{-1/2}
        -
        (\Sigmab_{11}^m)^{-1/2}
    }}_F
&\leq 
C_1\norm{\hat{\Gammab}^{-1/2}}_2(\rho_x^{1/2}\nu_x^{-2}+\nu_x^{-3/2})
\sqrt{
\frac{kk_m}{N}
},
\end{align*}
 with probability at least $1-\exp(-k_m)$.
\end{proof}

\begin{lemma}\label{lemma:I82}
Under the conditions of Lemma~\ref{lemma:first_error} and an universal constant $C_1>0$, we have
\begin{multline*}
\norm{\hat\Gammab^{-1/2}\rbr{
        \hat\Vb^m
        -
        \Vb^{m\star}\Qb
    }^\top
    (\Sigmab_{i,i}^m)^{-1/2}
    }_F\\
\leq C_1k^{1/2}\nu_x^{-1/2}\norm{\hat{\Gammab}^{-1/2}}_2\max_{j=1,\ldots,k}\max_{j\neq i}\frac{1}{|\gamma_j-\gamma_i|}
\norm{\hat\Rb^{12}-\Rb^{12\star}}_2.
\end{multline*}
\end{lemma}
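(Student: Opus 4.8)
The plan is to decompose the target quantity using the fact that $\hat\Gammab^{-1/2}$ and $(\Sigmab_{i,i}^m)^{-1/2}$ are bounded in operator norm, reducing everything to controlling $\|\hat\Vb^m - \Vb^{m\star}\Qb\|_F$. First I would write
\[
\norm{\hat\Gammab^{-1/2}\rbr{\hat\Vb^m - \Vb^{m\star}\Qb}^\top (\Sigmab_{i,i}^m)^{-1/2}}_F
\leq
\norm{\hat\Gammab^{-1/2}}_2 \, \norm{(\Sigmab_{i,i}^m)^{-1/2}}_2 \, \norm{\hat\Vb^m - \Vb^{m\star}\Qb}_F,
\]
and then bound $\norm{(\Sigmab_{i,i}^m)^{-1/2}}_2 \leq \nu_x^{-1/2}$ using Assumption~\ref{assumption:cov}. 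This is routine; the real work is in bounding the Frobenius distance between the empirical and population singular vectors of $\hat\Rb^{12}$ and $\Rb^{12\star}$, aligned by the sign matrix $\Qb$.

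Next I would invoke a Davis--Kahan / Wedin-type perturbation bound for singular subspaces. Under Assumption~\ref{assumption:distinctcca}, the top-$k$ singular values $\gamma_1 > \gamma_2 > \cdots > \gamma_k > 0$ are distinct, so each singular vector $\vb_j^{m\star}$ is associated with a one-dimensional singular space whose eigen-gap is $\min_{i\neq j}|\gamma_j - \gamma_i|$ (also accounting for the gap to $\gamma_{k+1}$, which is implicitly controlled since the $\gamma_i$ for $i>k$ are at most $\gamma_k$; more carefully one takes $\delta_j := \min\{\min_{i\neq j, i\le k}|\gamma_j-\gamma_i|,\ \gamma_j - \gamma_{k+1}\}$, but the stated bound absorbs this into $\max_{j\neq i}|\gamma_j-\gamma_i|^{-1}$). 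Applying the column-wise Davis--Kahan theorem to each singular vector $j=1,\dots,k$ gives
\[
\norm{q_j\hat\vb_j^m - \vb_j^{m\star}}_2 \leq \frac{C\,\norm{\hat\Rb^{12}-\Rb^{12\star}}_2}{\delta_j} \leq C\max_{i\neq j}\frac{1}{|\gamma_i-\gamma_j|}\norm{\hat\Rb^{12}-\Rb^{12\star}}_2,
\]
where $q_j$ is chosen (as in the definition of $\Qb$) so that $\langle q_j\hat\vb_j^m,\vb_j^{m\star}\rangle\geq 0$; this sign choice is precisely what makes the single-vector perturbation bound quantitatively sharp (without it the bound would be vacuous since $-\hat\vb_j^m$ is also a singular vector). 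Then summing over $j=1,\dots,k$,
\[
\norm{\hat\Vb^m - \Vb^{m\star}\Qb}_F^2 = \sum_{j=1}^k \norm{q_j\hat\vb_j^m - \vb_j^{m\star}}_2^2 \leq C^2 k \left(\max_{j\neq i}\frac{1}{|\gamma_j-\gamma_i|}\right)^2\norm{\hat\Rb^{12}-\Rb^{12\star}}_2^2,
\]
so $\norm{\hat\Vb^m - \Vb^{m\star}\Qb}_F \leq C\,k^{1/2}\max_{j\neq i}|\gamma_j-\gamma_i|^{-1}\norm{\hat\Rb^{12}-\Rb^{12\star}}_2$. Combining with the operator-norm bounds above yields the claimed inequality.

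The main obstacle is the singular-subspace perturbation step: one must be careful that Davis--Kahan applies to singular vectors (not just symmetric eigenvectors) — this is handled via the standard dilation trick, forming the symmetric matrix $\begin{pmatrix}0 & \Rb^{12}\\ \Rb^{12\top} & 0\end{pmatrix}$ whose eigenvalues are $\pm\gamma_j$ — and that the eigen-gap bookkeeping correctly incorporates the separation of $\gamma_k$ from $\gamma_{k+1}$ and the sign alignment encoded in $\Qb$. The operator-norm bound $\|\hat\Gammab^{-1/2}\|_2 \le (3/2)\gamma_k^{-1/2}$ needed to later substitute into Lemma~\ref{lemma:first_error} is deferred (it follows from Lemma~\ref{lemma:ci} and holds with high probability under $N = O(\max_m \kappa_m^2 k_m)$), so within this lemma $\|\hat\Gammab^{-1/2}\|_2$ is kept as an explicit factor. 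Everything else is bounded by Assumption~\ref{assumption:cov}, and the perturbation term $\|\hat\Rb^{12}-\Rb^{12\star}\|_2$ is controlled separately in Lemma~\ref{lemma:Rxy_tail}.
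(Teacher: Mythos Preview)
Your proposal is correct and follows the same skeleton as the paper: factor out $\|\hat\Gammab^{-1/2}\|_2$ and $\|(\Sigmab_{11}^m)^{-1/2}\|_2\le\nu_x^{-1/2}$, then bound $\|\hat\Vb^m-\Vb^{m\star}\Qb\|_F$ column by column via singular-vector perturbation theory and pick up the $k^{1/2}$ from summing $k$ columns.

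The only substantive difference is the perturbation tool. You invoke a Davis--Kahan/Wedin bound via the dilation trick, which directly yields $\|q_j\hat\vb_j^m-\vb_j^{m\star}\|_2\lesssim \|\Delta\|_2/\min_{i\neq j}|\gamma_i-\gamma_j|$. The paper instead applies Lemma~\ref{lemma:perturb_sinvec} (Theorem~5.2.2 of \citet{hsing2015theoretical}), which gives an explicit first-order expansion of $\hat\vb_j^m-q_j\vb_j^{m\star}$ in terms of $\Delta=\hat\Rb^{12}-\Rb^{12\star}$, then bounds that expansion; the gap initially appears as $|\gamma_j^2-\gamma_i^2|$ and is converted to $|\gamma_j-\gamma_i|$ using $|\gamma_i|\le 1$. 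Your route is slightly more elementary and avoids this conversion step, while the paper's route makes the first-order structure explicit; both deliver the same final bound.
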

\begin{proof}[Proof of Lemma~\ref{lemma:I82}]
Write
\[
\left\|\hat\Gammab^{-1/2}\rbr{
        \hat\Vb^m
        -
        \Vb^{m\star}\Qb
    }^\top
    (\Sigmab_{11}^m)^{-1/2}
    \right\|_F
    \leq \norm{\hat\Gammab^{-1/2}}_2\norm{(\Sigmab_{11}^m)^{-1/2}}_2\norm{\hat\Vb^m
        -
        \Vb^{m\star}\Qb}_F.
\]
Note that we have
\begin{align*}
    \norm{
    \hat\Vb^m
    -
    \Vb^{m\star}\Qb}_F^2
    =\sum_{i=1}^k \norm{
    \hat\vb_{i}^m
    -
    q_i\vb_{i}^{m\star}}_2^2,    
\end{align*}
where $q_i$ is the $i$th diagonal entry of $\Qb$.

Since $q_i$ is either $+1$ or $-1$ such that $\norm{
    \hat\vb_{i}^m
    -
    q_i\vb_{i}^{m\star}}_2^2$ is minimized, we know that $\dotp{\hat\vb_{i}^m}{q_i\vb_{i}^{m\star}}\geq0$. Let $\Delta=\hat\Rb^{12}-\Rb^{12\star}$ and $C_2,C_3>0$ be universal constants.
    Recall the singular decomposition of $\Rb^{12\star}$ is $\Rb^{12\star}=\sum_j^{k} \gamma_j \vb_{j}^{1\star} (\vb_{j}^{2\star})^\top$, where $\gamma_j$ could be viewed as the canonical correlation of $\yb_{1}^1$ and ${\yb_{1}^2}$ for $j=1,\ldots,k$. Without the loss of generality, let $m=1$ be the first data modality and $m'=2$ be the second data modality, we can apply Theorem 5.2.2 in~\citet{hsing2015theoretical}, restated in Lemma~\ref{lemma:perturb_sinvec}, and obtain
\begin{align*}
    \norm{
    \hat\vb_{i}^m
    -
    q_i\vb_{i}^{m\star}}_2
    &\leq 
    \bignorm{
    \sum_{j\neq i}
    \frac{
    \gamma_j\dotp{\vb_{j}^{m'}q_{j}}{\Delta^\top\vb_{i}^{m\star} q_{i}}
    +
    \gamma_i\dotp{\vb_{i}^{m'}q_{i}}{\Delta^\top\vb_{j}^{m\star} q_{j}}
    }{\gamma_j^2-\gamma_i^2}
    (\vb_{j}^{m\star} q_{j})
    }_2\\
    &\quad+
    C_2\max_{j\neq i}\frac{2}{|\gamma_j^2-\gamma_i^2|}\norm{\Delta}_2\\
    &\leq
    \max_{j\neq i}\frac{\gamma_j}{|\gamma_j^2-\gamma_i^2|}\bignorm{
    \sum_{j\neq i}
    \dotp{\vb_{j}^{m'}q_{j}}{\Delta^\top\vb_{i}^{m\star} q_{i}}
    (\vb_{j}^{m\star} q_{j})
    }_2\\
    &\quad
    +
    \max_{j\neq i}\frac{\gamma_i}{|\gamma_j^2-\gamma_i^2|}
    \bignorm{
    \sum_{j\neq i}
   {
    \dotp{\vb_{i}^{m'}q_{i}}{\Delta^\top\vb_{j}^{m\star} q_{j}}
    }(\vb_{j}^{m\star} q_{j})
    }_2\\
    &\quad+
    C_2\max_{j\neq i}\frac{2}{|\gamma_j^2-\gamma_i^2|}\norm{\Delta}_2.
    \intertext{Since $\norm{\sum_{j\neq i}
   {
    \dotp{\vb_{i}^{m'}q_{i}}{\Delta^\top\vb_{j}^{m\star} q_{j}}
    }(\vb_{j}^{m\star} q_{j})
    }_2\leq\norm{\Delta}_2$, $\norm{\sum_{j\neq i}
    \dotp{\vb_{j}^{m'}q_{j}}{\Delta^\top\vb_{i}^{m\star} q_{i}}
    (\vb_{j}^{m\star} q_{j})}_2\leq \norm{\Delta}_2$ and $\gamma_i,\gamma_j\leq 1$, the above display can be further bounded as}
    &\leq 
    \max_{j\neq i}\frac{2\norm{\Delta}_2}{|\gamma_j^2-\gamma_i^2|}
    +
    C_2\max_{j\neq i}\frac{2}{|\gamma_j^2-\gamma_i^2|}\norm{\Delta}_2\\
    &
    \leq C_3\max_{j\neq i}\frac{1}{|\gamma_j-\gamma_i|}\norm{\Delta}_2,
\end{align*}
where the last inequality follows by the fact that the magnitude of the canonical correlation is less than one $|\gamma_i|\leq 1$ for any $i$ and hence $|\gamma_j^2-\gamma_i^2|\leq |\gamma_j-\gamma_i||\gamma_j+\gamma_i|\leq 2|\gamma_j-\gamma_i|$. Then, we complete the proof.

\end{proof}

\begin{lemma}\label{lemma:I83}Under the conditions of Lemma~\ref{lemma:first_error} and an universal constant $C_1>0$, we have
\[
\norm{
    \rbr{\hat\Gammab^{-1/2}-\Gammab^{\star-1/2}}
    \Qb\Vb^{m\star\top}(\Sigmab_{11}^m)^{-1/2}
}_F
\leq 
k^{1/2}\gamma_k^{-3/2}\nu_x^{-1/2}(3\norm{\hat\Gammab^{-1/2}}+1)
\norm{\hat\Rb^{12}-\Rb^{12\star}}_2.
\]
\end{lemma}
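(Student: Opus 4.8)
The plan is to view the target as a product of four matrices and peel off the three rightmost factors by submultiplicativity of the Frobenius and operator norms, thereby reducing the claim to a perturbation bound on the diagonal matrix $\hat\Gammab^{-1/2}-\Gammab^{\star-1/2}$. First I would write
\[
\norm{\rbr{\hat\Gammab^{-1/2}-\Gammab^{\star-1/2}}\Qb\Vb^{m\star\top}(\Sigmab_{11}^m)^{-1/2}}_F
\leq
\norm{\hat\Gammab^{-1/2}-\Gammab^{\star-1/2}}_F
\norm{\Qb}_2
\norm{\Vb^{m\star\top}}_2
\norm{(\Sigmab_{11}^m)^{-1/2}}_2.
\]
Here $\Qb$ is diagonal with entries in $\{-1,1\}$, so $\norm{\Qb}_2=1$; the columns of $\Vb^{m\star}$ are orthonormal canonical vectors, so $\norm{\Vb^{m\star\top}}_2\leq 1$; and by Assumption~\ref{assumption:cov} together with $\sinmin{\Sigmab_{11}^m}\geq\sinmin{\Sigmab^m}\geq\nu_x$ we obtain $\norm{(\Sigmab_{11}^m)^{-1/2}}_2=\sinmin{\Sigmab_{11}^m}^{-1/2}\leq\nu_x^{-1/2}$. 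This already isolates the factor $\nu_x^{-1/2}$ appearing in the statement.

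Since $\hat\Gammab^{-1/2}-\Gammab^{\star-1/2}$ is a diagonal $k\times k$ matrix, I would then pass from the Frobenius to the operator norm via $\norm{\cdot}_F\leq k^{1/2}\norm{\cdot}_2$, which produces the $k^{1/2}$ prefactor. The operator norm is controlled by the matrix square-root perturbation bound of Lemma~\ref{lemma:difference_sqrt}, applied with $\hat A=\hat\Gammab$ and $A=\Gammab^\star$:
\[
\norm{\hat\Gammab^{-1/2}-\Gammab^{\star-1/2}}_2
\leq
\cbr{3\norm{\hat\Gammab^{-1/2}}_2\rbr{\norm{\hat\Gammab}_2\vee\norm{\Gammab^\star}_2}+1}
\norm{\Gammab^{\star-3/2}}_2
\norm{\hat\Gammab-\Gammab^\star}_2.
\]
The key simplification is that the diagonal entries of $\hat\Gammab$ and $\Gammab^\star$ are sample and population canonical correlations, all contained in $[0,1]$, so $\norm{\hat\Gammab}_2\vee\norm{\Gammab^\star}_2\leq 1$ and the bracket collapses to $3\norm{\hat\Gammab^{-1/2}}_2+1$. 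Moreover $\Gammab^\star$ is diagonal with smallest entry $\gamma_k>0$ by Assumption~\ref{assumption:distinctcca}, whence $\norm{\Gammab^{\star-3/2}}_2=\gamma_k^{-3/2}$; combining these gives precisely the prefactor $\gamma_k^{-3/2}\rbr{3\norm{\hat\Gammab^{-1/2}}+1}$.

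Finally I would bound $\norm{\hat\Gammab-\Gammab^\star}_2$. Because $\hat\Gammab$ and $\Gammab^\star$ are diagonal and hold the top-$k$ singular values of $\hat\Rb^{12}$ and $\Rb^{12\star}$ respectively, we have $\norm{\hat\Gammab-\Gammab^\star}_2=\max_{i\leq k}|\hat\gamma_i-\gamma_i|$, and Weyl's inequality for singular values yields $\max_{i\leq k}|\hat\gamma_i-\gamma_i|\leq\norm{\hat\Rb^{12}-\Rb^{12\star}}_2$. Chaining the three displays then delivers the claimed inequality. The only step requiring explicit care — and the main (mild) obstacle — is justifying that $\norm{\hat\Gammab}_2\vee\norm{\Gammab^\star}_2$ is bounded so that it can be absorbed into the constant: this rests on the classical fact that the singular values of $\hat\Rb^{12}=(\hat\Sigmab^1_{11})^{-1/2}\hat\Sigmab^{12}_{11}(\hat\Sigmab^2_{11})^{-1/2}$ are sample canonical correlations and hence never exceed one (with the sample covariances invertible on the high-probability event guaranteed by the sample-size condition). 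Everything else is routine submultiplicativity and standard singular-value perturbation theory.
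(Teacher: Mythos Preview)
Your proposal is correct and follows essentially the same route as the paper: peel off $\Qb$, $\Vb^{m\star\top}$, and $(\Sigmab_{11}^m)^{-1/2}$ by submultiplicativity, apply Lemma~\ref{lemma:difference_sqrt} with $\lambda=\norm{\hat\Gammab}_2\vee\norm{\Gammab^\star}_2\leq 1$ to bound $\norm{\hat\Gammab^{-1/2}-\Gammab^{\star-1/2}}$, then invoke Weyl's inequality for $\norm{\hat\Gammab-\Gammab^\star}_2\leq\norm{\hat\Rb^{12}-\Rb^{12\star}}_2$. The paper's proof is identical up to the cosmetic point that it inserts the factor $k^{1/2}$ directly in the Frobenius-norm application of Lemma~\ref{lemma:difference_sqrt} rather than first reducing to the operator norm.
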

\begin{proof}[Proof of Lemma~\ref{lemma:I83}]
Since $\Qb$ is a diagonal matrix whose diagonal entries taking values in $\{1,-1\}$ and $\norm{\Vb^{m\star\top}}_2\leq 1$, we can write
\[
\left\|
    \rbr{\hat\Gammab^{-1/2}-\Gammab^{\star-1/2}}
    \Qb\Vb^{m\star\top}(\Sigmab_{11}^m)^{-1/2}
    \right\|_F\leq 
    \|(\Sigmab_{11}^m)^{-1/2}
    \|_2\left\|{\hat\Gammab^{-1/2}-\Gammab^{\star-1/2}}\right\|_F.
\]
Applying Lemma~\ref{lemma:difference_sqrt} again with $\lambda=(\norm{\hat\Gammab}_2\vee\norm{\Gammab^{\star}}_2)\leq 1$ yields
\begin{align}\label{eq:I83_1}
\norm{\hat\Gammab^{-1/2}-\Gammab^{\star -1/2}}_F
&\leq 
k^{1/2}\norm{\Gammab^{\star-3/2}}(3\norm{\hat\Gammab^{-1/2}}+1)
\norm{\hat\Gammab-\Gammab^{\star}}_2.\notag\\
&\leq k^{1/2}\gamma_k^{-3/2}(3\norm{\hat\Gammab^{-1/2}}+1)
\norm{\hat\Gammab-\Gammab^{\star}}_2.
\end{align}
Applying Weyl's inequality (See Theorem~III.2.1 in~\citet{bhatia2013matrix} for details), we have
\[
\norm{\hat\Gammab-\Gammab^{\star}}_2
\leq 
\norm{\hat\Rb^{12}-\Rb^{12\star}}_2.
\]
Then, we complete the proof.

\end{proof}

\begin{lemma}\label{lemma:truncation_error}
Let ${\Lscr}\in\mathfrak{B}(\HH_1,\HH_2)$ be a compact operator and $\{\phi_{1,\ell}\}_{\ell\in\NN}$ and $\{\phi_{2,\ell}\}_{\ell\in\NN}$ be the CONS for $\HH_1$ and $\HH_2$, respectively. 
Let 
$$
{\Lscr^{r}}={\Lscr}-{\Lscr^{p,d}}, \quad{\Lscr^{p,d}}=\sum_{\ell=1}^p\sum_{\ell'=1}^{d}\dotp{{\Lscr}\phi_{1,\ell'}}{\phi_{2,\ell}}\phi_{2,\ell}\otimes \phi_{1,\ell'}.
$$
Define $\Ascr=\Lscr^\dagger$ and $\Ascr^{d,p}= \sum_{\ell=1}^d\sum_{\ell'=1}^{p}\dotp{{\Lscr^\dagger}\phi_{2,\ell'}}{\phi_{1,\ell}}\phi_{1,\ell}\otimes \phi_{2,\ell'}$, then we have
$$
\norm{(\Lscr^{p,d})^\dagger-\Ascr^{d,p}}_{\text{HS}} \leq2\opnorm{{\Lscr}^\dagger}{}^2\opnorm{{\Lscr^{r}}}{}.
$$
\end{lemma}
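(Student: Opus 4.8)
The plan is to reduce the statement to a finite‑dimensional (rank‑$r$) computation via the singular system of $\Lscr$, and then estimate the difference of the two pseudoinverses by a small‑perturbation argument on the truncated singular subspaces. Since $\opnorm{\Lscr^\dagger}{}<\infty$ and $\Lscr$ is compact, $\Lscr$ has finite rank $r$ (cf.\ Remark~\ref{remark:truncation}); write its singular system as $\Lscr=U\Sigma V^*$, where $\Sigma=\mathrm{diag}(\sigma_1,\dots,\sigma_r)$ with $\sigma_r=\opnorm{\Lscr^\dagger}{}^{-1}>0$ and $U\colon\RR^r\to\HH_2$, $V\colon\RR^r\to\HH_1$ are isometries whose columns are the left/right singular functions (expressed in the CONS), so that $\Lscr^\dagger=V\Sigma^{-1}U^*$. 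Let $P_1,P_2$ denote the coordinate projections onto $\mathrm{span}\{\phi_{1,1},\dots,\phi_{1,d}\}$ and $\mathrm{span}\{\phi_{2,1},\dots,\phi_{2,p}\}$. Then $\Lscr^{p,d}=P_2\Lscr P_1=U_0\Sigma V_0^*$ and $\Ascr^{d,p}=P_1\Lscr^\dagger P_2=V_0\Sigma^{-1}U_0^*$, with $U_0:=P_2U$, $V_0:=P_1V$ (using $U^*P_2=(P_2U)^*=U_0^*$, etc.).

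Next I would control the leakage of the singular subspaces outside the truncation windows. Since $(I-P_2)\Lscr=(I-P_2)\Lscr^r$ (the truncated block is annihilated by $I-P_2$) and $U=\Lscr V\Sigma^{-1}$, one gets $\|(I-P_2)U\|_{\mathrm{op}}=\|(I-P_2)\Lscr^r V\Sigma^{-1}\|_{\mathrm{op}}\le\opnorm{\Lscr^r}{}\opnorm{\Lscr^\dagger}{}=:\varepsilon$, and symmetrically (using $V=\Lscr^*U\Sigma^{-1}$ and $(I-P_1)\Lscr^*=(I-P_1)(\Lscr^r)^*$) $\|(I-P_1)V\|_{\mathrm{op}}\le\varepsilon$. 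Hence $\|I-U_0^*U_0\|_{\mathrm{op}}=\|[(I-P_2)U]^*[(I-P_2)U]\|_{\mathrm{op}}\le\varepsilon^2$ and likewise $\|I-V_0^*V_0\|_{\mathrm{op}}\le\varepsilon^2$. In the regime $\varepsilon<1$ — i.e.\ $\opnorm{\Lscr^\dagger}{}\opnorm{\Lscr^r}{}<1$, which is exactly where the asserted bound is informative; I would either assume it or note that otherwise the inequality carries no content — the Gram matrices $U_0^*U_0,V_0^*V_0$ are invertible, so $U_0,V_0$ have full column rank $r$ and $\Lscr^{p,d}$ is a product of two injective maps with an invertible middle factor.

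The analytic core is then the reverse‑order formula $(\Lscr^{p,d})^\dagger=V_0(V_0^*V_0)^{-1}\Sigma^{-1}(U_0^*U_0)^{-1}U_0^*$. I would establish it by factoring $U_0=Q_UR_U$, $V_0=Q_VR_V$ with $Q_U,Q_V$ having orthonormal columns and $R_U,R_V$ invertible, reducing to the elementary identity $(Q_UMQ_V^*)^\dagger=Q_VM^{-1}Q_U^*$ for an invertible $r\times r$ matrix $M$, which one checks against the four Penrose conditions directly (both $Q_UQ_U^*$ and $Q_VQ_V^*$ being orthogonal projections). Granting this,
\[
(\Lscr^{p,d})^\dagger-\Ascr^{d,p}=V_0\big[(V_0^*V_0)^{-1}\Sigma^{-1}(U_0^*U_0)^{-1}-\Sigma^{-1}\big]U_0^*,
\]
and since $\|U_0\|_{\mathrm{op}},\|V_0\|_{\mathrm{op}}\le1$ while $\|(U_0^*U_0)^{-1}-I\|_{\mathrm{op}},\|(V_0^*V_0)^{-1}-I\|_{\mathrm{op}}\le\varepsilon^2/(1-\varepsilon^2)$ by a Neumann series, expanding the bracket and using $\opnorm{\Sigma^{-1}}{}=\opnorm{\Lscr^\dagger}{}$ gives $\opnorm{(\Lscr^{p,d})^\dagger-\Ascr^{d,p}}{}\le C\varepsilon^2\opnorm{\Lscr^\dagger}{}=C\opnorm{\Lscr^\dagger}{}^3\opnorm{\Lscr^r}{}^2\le C\opnorm{\Lscr^\dagger}{}^2\opnorm{\Lscr^r}{}$ for an absolute $C$, which I would tighten to the stated constant $2$ by tracking the $\varepsilon$‑dependence; passing from the operator norm to the Hilbert–Schmidt norm (or keeping the operator norm, which is what is actually invoked in the proof of Theorem~\ref{theorem:initialcca}) costs at most a $\sqrt{r}$ factor and is routine. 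The main obstacle I anticipate is the bookkeeping around the reverse‑order law and, specifically, ruling out a spurious rank drop of $\Lscr^{p,d}$ — which is precisely what the smallness of $\varepsilon$ secures, and why a condition of the form $\opnorm{\Lscr^\dagger}{}\opnorm{\Lscr^r}{}<1$ is needed for the bound as stated.
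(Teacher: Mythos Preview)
Your approach is workable but takes a genuinely different and more laborious route than the paper. The paper's proof is three lines: first, since $(\Lscr^{p,d})^\dagger$ vanishes outside the truncated coordinate blocks (its range sits in $\ker(\Lscr^{p,d})^\perp\subseteq P_1\HH_1$ and its cokernel contains $(P_2\HH_2)^\perp$) and since $\Ascr^{d,p}=P_1\Lscr^\dagger P_2$ by definition, one has $(\Lscr^{p,d})^\dagger-\Ascr^{d,p}=P_1\big[(\Lscr^{p,d})^\dagger-\Lscr^\dagger\big]P_2$, so $\|(\Lscr^{p,d})^\dagger-\Ascr^{d,p}\|\le\|(\Lscr^{p,d})^\dagger-\Lscr^\dagger\|$. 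Second, the right side is bounded directly by the Wedin/Stewart-type pseudoinverse perturbation inequality (Lemma~\ref{lemma:pseuoinv_hs}), yielding $2\big(\opnorm{(\Lscr^{p,d})^\dagger}{}^2\vee\opnorm{\Lscr^\dagger}{}^2\big)\opnorm{\Lscr^r}{}$; finally the maximum is replaced by $\opnorm{\Lscr^\dagger}{}^2$. There is no explicit SVD, no reverse-order law, and no Neumann series.

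What each buys: your hands-on SVD argument actually yields a sharper, second-order estimate $O\big(\opnorm{\Lscr^\dagger}{}^3\opnorm{\Lscr^r}{}^2\big)$ in the small-perturbation regime, but it requires the assumption $\varepsilon=\opnorm{\Lscr^\dagger}{}\opnorm{\Lscr^r}{}<1$, and your claim that one can ``tighten to the stated constant $2$'' by tracking $\varepsilon$ does not hold across the whole range $\varepsilon\in(0,1)$ from your expansion (the Neumann factors blow up near $\varepsilon=1$). The paper's route delivers the constant $2$ immediately and unconditionally from Lemma~\ref{lemma:pseuoinv_hs}; the only delicate point is the final step $\opnorm{(\Lscr^{p,d})^\dagger}{}\le\opnorm{\Lscr^\dagger}{}$, which is asserted without justification. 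Your leakage bounds would in fact help here, since they give $\opnorm{(\Lscr^{p,d})^\dagger}{}\le\opnorm{\Lscr^\dagger}{}/(1-\varepsilon^2)$, but that is your machinery rather than the paper's.
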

\begin{proof}[Proof of Lemma~\ref{lemma:truncation_error}] Write
\begin{align*}
    \opnorm{(\Lscr^{p,d})^\dagger-\Ascr^{d,p}}{}
    &\leq
    \opnorm{({\Lscr^{p,d}})^\dagger - \Ascr}{}\\
    &=
    \opnorm{
    ({\Lscr^{p,d}})^\dagger
    -
    ({\Lscr^{p,d}}+{\Lscr^{r}})^\dagger
    }{}.
    \intertext{Apply the result from Lemma~\ref{lemma:pseuoinv_hs}, we could further bound the above display as}
    &\leq2(\opnorm{({\Lscr^{p,d}})^\dagger}{}^2\vee\opnorm{{\Lscr}^\dagger}{}^2)\opnorm{{\Lscr^{r}}}{}\\
    &\leq 2\opnorm{{\Lscr}^\dagger}{}^2\opnorm{{\Lscr^{r}}}{}.
\end{align*}
Hence we complete the proof.
\end{proof}

\begin{lemma}\label{lemma:Rxy_tail} Given two centered Gaussian random vectors $\xb\in\RR^d,\yb\in\RR^p$ with covariance $\Sigmab_x$ and $\Sigmab_y$, respectively. We define $\Sigmab_{xy}$ be the cross covariance and $\Rb_{xy}=\Sigmab_x^{-1/2}\Sigmab_{xy}\Sigmab_y^{-1/2}$. Let $\hat{\Sigmab}_x$ be the sample covariance of $\xb$, $\hat{\Sigmab}_y$ sample covariance of $\yb$, and $\hat\Sigmab_{xy}$ be the sample cross covariance of $\xb,\yb$ with $N$ independent samples. Define $\hat\Rb_{xy}=\hat\Sigmab_x^{-1/2}\hat\Sigmab_{xy}\hat\Sigmab_y^{-1/2}$. Let $C_1> 0$ be an universal constant. Assume that $N=O (p+d)$, $\alpha=(\norm{\Sigmab_x}_2\vee \norm{\Sigmab_y}_2)$ and $\beta=(\norm{\Sigmab_x^{-1/2}}_2\vee \norm{\Sigmab_y^{-1/2}}_2)$, then
\[
\norm{\hat\Rb_{xy}-\Rb_{xy}}_2\leq
C_1\alpha^2\beta^4(3\alpha^{1/2}\beta+1)
\sqrt{\frac{(p+ d)}{N}},
\]
with probability at least $1-3\exp\{-(p+d)\}$.
\end{lemma}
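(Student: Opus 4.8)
The plan is to control the perturbation $\hat\Rb_{xy}-\Rb_{xy}$ by a telescoping decomposition that isolates the three sources of error: the deviation of $\hat\Sigmab_x^{-1/2}$ from $\Sigmab_x^{-1/2}$, the deviation of $\hat\Sigmab_{xy}$ from $\Sigmab_{xy}$, and the deviation of $\hat\Sigmab_y^{-1/2}$ from $\Sigmab_y^{-1/2}$. Concretely, I would write
\begin{align*}
\hat\Rb_{xy}-\Rb_{xy}
&=(\hat\Sigmab_x^{-1/2}-\Sigmab_x^{-1/2})\hat\Sigmab_{xy}\hat\Sigmab_y^{-1/2}
+\Sigmab_x^{-1/2}(\hat\Sigmab_{xy}-\Sigmab_{xy})\hat\Sigmab_y^{-1/2}\\
&\quad+\Sigmab_x^{-1/2}\Sigmab_{xy}(\hat\Sigmab_y^{-1/2}-\Sigmab_y^{-1/2}),
\end{align*}
and bound each of the three terms in operator norm using submultiplicativity.

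First I would establish the high-probability events on which the sample covariances concentrate. Since $N=O(p+d)$, standard Gaussian covariance concentration (e.g.\ the argument behind Lemma~\ref{lemma:ci}, or Theorem~4.6.1 in~\citet{vershynin2018high}) gives $\|\hat\Sigmab_x-\Sigmab_x\|_2\lesssim \alpha\sqrt{(p+d)/N}$ and likewise for $\hat\Sigmab_y$, $\hat\Sigmab_{xy}$ relative to their population analogues, all simultaneously with probability at least $1-3\exp\{-(p+d)\}$. On this event one also gets the two-sided spectral bounds $\tfrac12\Sigmab_x\preceq\hat\Sigmab_x\preceq\tfrac32\Sigmab_x$ (and similarly for $y$), hence $\|\hat\Sigmab_x^{-1/2}\|_2\le\sqrt2\,\beta$, $\|\hat\Sigmab_{xy}\|_2\le\|\Sigmab_{xy}\|_2+\alpha\sqrt{(p+d)/N}\le C\alpha$, which I will use as crude bounds for the "already small" factors appearing next to a difference term.

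Next I would convert the covariance differences into square-root differences via Lemma~\ref{lemma:difference_sqrt}, which yields
$\|\hat\Sigmab_x^{-1/2}-\Sigmab_x^{-1/2}\|_2\le\{3\|\hat\Sigmab_x^{-1/2}\|_2(\|\hat\Sigmab_x\|_2\vee\|\Sigmab_x\|_2)+1\}\|\Sigmab_x^{-3/2}\|_2\|\hat\Sigmab_x-\Sigmab_x\|_2\lesssim \beta^3(3\alpha^{1/2}\beta+1)\cdot\alpha\sqrt{(p+d)/N}$, after absorbing the spectral bounds; the same bound holds for the $y$ factor. The middle term contributes $\|\Sigmab_x^{-1/2}\|_2\|\hat\Sigmab_{xy}-\Sigmab_{xy}\|_2\|\hat\Sigmab_y^{-1/2}\|_2\lesssim \beta^2\alpha\sqrt{(p+d)/N}$, which is dominated by the other two. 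Collecting the three terms, multiplying the square-root-difference bound by its neighbouring factors ($\|\hat\Sigmab_{xy}\|_2\|\hat\Sigmab_y^{-1/2}\|_2\lesssim \alpha\beta$ for the first term, $\|\Sigmab_x^{-1/2}\Sigmab_{xy}\|_2\lesssim \alpha^{1/2}\beta\cdot\alpha^{1/2}=\alpha\beta$ for the third, using $\|\Sigmab_x^{-1/2}\Sigmab_{xy}\Sigmab_y^{-1/2}\|_2\le1$ to see $\|\Sigmab_x^{-1/2}\Sigmab_{xy}\|_2\le\|\Sigmab_y^{1/2}\|_2\le\alpha^{1/2}$), gives a total of order $\alpha^2\beta^4(3\alpha^{1/2}\beta+1)\sqrt{(p+d)/N}$, matching the claimed bound after choosing $C_1$ to absorb the universal constants, and with the stated failure probability by a union bound over the three concentration events.

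The main obstacle is bookkeeping rather than a genuine difficulty: one must be careful that every difference factor is paired with factors that are either population quantities or already shown to be spectrally bounded on the good event, so that no factor of $\sqrt{N/(p+d)}$ is lost, and that the powers of $\alpha$ and $\beta$ come out exactly as in the statement. The one substantive input is Lemma~\ref{lemma:difference_sqrt} (perturbation of the matrix square-root/inverse-square-root), together with the requirement $N=O(p+d)$, which is precisely what guarantees $\hat\Sigmab_x,\hat\Sigmab_y$ stay invertible and spectrally comparable to their populations; without that the inverse-square-root differences cannot be controlled.
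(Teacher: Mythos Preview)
Your proposal is correct and follows essentially the same route as the paper: the paper uses the identical three-term telescoping decomposition (labelled $\Ical_{9,1},\Ical_{9,2},\Ical_{9,3}$), invokes Lemma~\ref{lemma:ci} to obtain the spectral sandwich bounds on $\hat\Sigmab_x,\hat\Sigmab_y,\hat\Sigmab_{xy}$ with probability at least $1-3\exp\{-(p+d)\}$, applies Lemma~\ref{lemma:difference_sqrt} to control the inverse-square-root differences, and observes that the middle term is of lower order. The only cosmetic difference is that the paper bounds $\|\Sigmab_x^{-1/2}\Sigmab_{xy}\|_2$ crudely by $\beta\alpha$ rather than using $\|\Rb_{xy}\|_2\le 1$ as you suggest, but this affects only constants and not the structure of the argument.
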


\begin{proof}
The inequality could be shown by first applying the triangle inequality and then use the tail bound to find the upper bound of each term individually. We have
\begin{align*}
    \norm{\hat\Rb_{xy}-\Rb_{xy}}_2
    &\leq 
    \norm{
        (\hat\Sigmab_x^{-1/2}-\Sigmab_x^{-1/2})
        \hat\Sigmab_{xy}\hat\Sigmab_y^{-1/2}
        }_2\\
    &\quad+
    \norm{
        \Sigmab_x^{-1/2}
        (\hat\Sigmab_{xy}-\Sigmab_{xy})
        \hat\Sigmab_y^{-1/2}
    }_2\\
    &\quad+
    \norm{
    \Sigmab_x^{-1/2}\Sigmab_{xy}
    (\hat\Sigmab_y^{-1/2}-\Sigmab_y^{-1/2})
    }_2\\
    &\leq \Ical_{9,1}+\Ical_{9,2}+\Ical_{9,3}
    .
\end{align*}
Since $N=O(d+ p)$, we have
\begin{align*}
    &(i)\quad\sinmin{\hat\Sigmab_x}\geq\frac{1}{2}\sinmin{\Sigmab_x};\\
    &(ii)\quad\sinmin{\hat\Sigmab_y}\geq\frac{1}{2}\sinmin{\Sigmab_y};
    \\
    &(iii)\quad\sinmax{\hat\Sigmab_{xy}}\leq\frac{3}{2}\sinmin{\Sigmab_{xy}}\leq\frac{3}{2}(\norm{\Sigmab_x}_2\vee \norm{\Sigmab_y}_2),
\end{align*}
with probability at least $1-3\exp\{-(d+p)\}$, following Lemma~\ref{lemma:ci} and the union bound. 

Apply Lemma~\ref{lemma:difference_sqrt} and Lemma~\ref{lemma:ci} to $\Ical_{9,1}$ and we could obtain the upper bound
\begin{multline*}
\Ical_{9,1}\leq 
C_2(\norm{\Sigmab_x}_2\vee \norm{\Sigmab_y}_2)^{2}
(\norm{\Sigmab_x^{-1/2}}_2\vee \norm{\Sigmab_y^{-1/2}}_2)^{4}\\
\cbr{3
(\norm{\Sigmab_x}_2\vee \norm{\Sigmab_y}_2)^{1/2}
(\norm{\Sigmab_x^{-1/2}}_2\vee \norm{\Sigmab_y^{-1/2}}_2)
+1}\sqrt{\frac{(p+ d)}{N}},
\end{multline*}
with probability $1-3\exp\{-(d+p)\}$ for some absolute constant $C_2>0$. We obtain the same upper bound for $\Ical_{9,3}$. For $\Ical_{9,2}$, we have smaller coefficient:
\[
\Ical_{9,2}\leq C_3(\norm{\Sigmab_x}_2\vee\norm{\Sigmab_y}_2) 
(\norm{\Sigmab_x^{-1/2}}_2\vee \norm{\Sigmab_y^{-1/2}}_2)^2\sqrt{\frac{(p+ d)}{N}},
\]
with probability $1-2\exp\{-(d+p)\}$ and $C_3>0$.
Hence, the terms $\Ical_{9,1}$ and $\Ical_{9,3}$ dominate. Then, by summing up $\Ical_{9,1}$, $\Ical_{9,2}$, and $\Ical_{9,3}$, we complete the proof.

\end{proof}

\subsection{Proof of Lemma~\ref{lemma:initial_stepB}}\label{ssec:proof_lemmastepB}

The proof is consisted of three steps. In the first step, we show the condition where the solution is nontrivial. In the second and third step, we show the contraction of the iterates of Algorithm~\ref{alg:initializationB}.

{\em Step 1}. Given $\Amk^{(0)}$ for $\mseq$, we want to verify that $\Amk^{(0)}\Ykmnfree{i}\neq {\bf 0}$ for $\mseq$ and $\pseq$ with high probability. Then the solution to the minimizer of $h_i(\Bik)$ is nontrivial. Since
\[
\frac{1}{N}\norm{\Amk^{(0)}\Ykmnfree{i}}_F^2
=
\VEC(\Amk^{(0)})^\top
(\Ib_k \otimes \sCovfree{i}{i})
\VEC(\Amk^{(0)})
\geq 0
\]
and $\sinmin{\Ib_k \otimes \sCovfree{i}{i}}=\sinmin{\sCovfree{i}{i}}$, it suffices to show that $\sinmin{\sCovfree{i}{i}}>0$. We define the event as
\[
{\Ecal_{ii}^m} = \{\sinmin{\sCovfree{i}{i}}>0\},\quad\mseq. 
\]
Then, under the condition that $N=O(\kappa_m^2(k_m+\log M + \log p))$, Lemma~\ref{lemma:ci} states that event ${\Ecal_{ii}^m}$ happens with probability at least $1-(pM)^{-1}\exp(-k_m)$ for $\mseq$.
Taking the union bound again and applying De Morgan's law, we can show that the event
\[
\cap_{m=1}^M\cap_{i=1}^p{\Ecal_{ii}^m},
\]
happens with probability at least $1-\max_{\mseq}\exp(-k_m)$.
Then, we conclude that $\Amk^{(0)}\Ykmnfree{i}\neq{\bf 0}$ for $\pseq$ and $\mseq$ with probability at least 
$$
1-\max_{\mseq}\exp(-k_m)\geq 1-\max_{\mseq}\exp(-k_m s^\star).
$$

{\em Step 2}. Given that the solution is nontrivial, the next step is to prove that the projected gradient descent would converge to a local optimum with quantified error.
 By Lemma~\ref{lemma:strong_hn},  we know that $h$ is a 
$\min_{\mseq}2^{-1}\nu_x\sinmintwo{\Amk^{(0)}}$-strongly convex with respect to $\Bik$ with probability at least $1-\max_{\mseq}\exp(-k_m s^\star)$. It is easy to verify that $h$ is $\max_{\mseq}\sinmaxtwo{\Ab^{m(0)}}\norm{\sCov}$-smooth. Let $\hat\Zb_i^m=\Ab^{m(0)}\Ykmnfree{i}$ and $\Zb_i^m= \pAmk\Ykmnfree{i}$ for $\pseq$. Then, we can write
\[
h(\Bik) = \frac{1}{M} \sum_{m=1}^M\norm{\hat\Zb_i^m-\sum_{j\in\pseq}\Bijk\hat\Zb_j^m}_F^2.
\]
Let $\pi_0=C_\vartheta\cbr{1-{\sinmintwo{\Ab^{m'(0)}}\nu_x\eta_{B_0}}/2}$.
Apply Lemma~\ref{lemma:Z_update}--\ref{lemma:I73i}, we can obtain
\begin{align}
\norm{\Bik^+-\pBik}_F&\leq \pi_0\norm{\Bik-\pBik}_F\notag\\
&\quad+\eta_{B_0}C_\vartheta\frac{1}{M}\sum_{m=1}^M\big\{
    \norm{[
        -\ptBik
        \bigAd{p}{\Amk^{(0)}-\pAmk}
        \sCovfree{\cdot}{\noi}
        \bigAs{p-1}{\Amk^{(0)}}^\top
    ]_{\bar\Scal_i}}_F\notag\\
&\quad\quad\quad\quad\quad\quad\quad\quad+    
    \norm{[
        \ptBik
        \bigAs{p}{\pAmk}
        \sCovfree{\cdot}{\noi}
        \bigAd{p-1}{\Amk^{(0)}-\pAmk}^\top
    ]_{\bar\Scal_i}}_F
\big\}\notag\\
&\quad + {C_{\delta,i}C_\vartheta\eta_{B_0}}
    \sqrt{\frac{ \alpha^\star s^\star  k^2}{N}},\label{eq:h_updateB}
\end{align}
 with probability at least $1-3\max_{\mseq}\exp(-k_m s^\star)$.

Note that we can write
\begin{align}
&\norm{[
-\ptBik
\bigAs{p}{\Amk^{(0)}}
\sCovfree{\cdot}{\noi}
\bigAs{p-1}{\Amk^{(0)}}^\top
+
\ptBik
\bigAs{p}{\pAmk}
\sCovfree{\cdot}{\noi}
\bigAs{p-1}{\pAmk}^\top
]_{\bar\Scal_i}
}_F\notag\\
&\leq 
\big\{
    \norm{[
        -\ptBik
        \bigAd{p}{\Amk^{(0)}-\pAmk}
        \sCovfree{\cdot}{\noi}
        \bigAs{p-1}{\Amk^{(0)}}^\top
    ]_{\bar\Scal_i}}_F\notag\\
&\quad\quad\quad\quad\quad\quad+    
    \norm{[
        \ptBik
        \bigAs{p}{\pAmk}
        \sCovfree{\cdot}{\noi}
        \bigAd{p-1}{\Amk^{(0)}-\pAmk}^\top
    ]_{\bar\Scal_i}}_F
\big\}\notag\\
&\leq \rho_b^{1/2}
\norm{\sCov}_2
\cbr{\sinmax{\Amk^{(0)}}+\rho_a^{1/2}}
\sqrt{(1+\vartheta_1)s^\star}
\norm{\Amk^{(0)} - \pAmk}_F\notag\\
&\leq C_\gamma\norm{\Amk^{(0)} - \pAmk}_F
,\label{eq:bound_2term}
\end{align}
where $C_\gamma=\max_{\mseq}\rho_b^{1/2}
\norm{\sCov}_2
\cbr{\sinmax{\Amk^{(0)}}+\rho_a^{1/2}}
\sqrt{(1+\vartheta_1)s^\star}$.

 Combine \eqref{eq:h_updateB}--\eqref{eq:bound_2term} together, we have
\[
\norm{\Bb_i^{+}-\pBik}_F
\leq \pi_0\norm{\Bik-\pBik}_F
+   \frac{C_\gamma C_\vartheta\eta_{B_0}}{M}\sum_{m=1}^M\norm{\Ab^{m(0)} - \pAmk}_F
+   {C_{\delta,i} C_\vartheta\eta_{B_0}}
    \sqrt{\frac{ \alpha^\star s^\star  k^2}{N}},
\]
  with probability at least $1-3\max_{\mseq}\exp(-k_m s^\star)$.

{\em Step 3}. 
Starting with $\Bb_i^{(0)}={\bf 0}$, after $L$ iterations of Algorithm~\ref{alg:initializationB}, we have the following result by telescoping technique:
\begin{align*}
\norm{\Bb_i^{(L)}-\pBik}_F
\leq \pi_0^L\norm{\pBik}_F
+   \frac{C_\gamma C_\vartheta}{1-\pi_0}\frac{\eta_{B_0}}{M}\sum_{m=1}^M\norm{\Ab^{m(0)} - \pAmk}_F
+   \frac{C_{\delta,i}C_\vartheta\eta_{B_0}}{1-\pi_0}
    \sqrt{\frac{ \alpha^\star s^\star  k^2}{N}}
    ,
\end{align*}
 with probability at least $1-3\max_{\mseq}\exp(-k_m s^\star)$.
\begin{lemma}\label{lemma:Z_update}
Consider the following objective function
\[
    \hat g(\Bik)=(\Bik)=\frac{1}{2N}
        \bignorm{
            \hat \Zb_i
            -
            \sum_{j\neq i} \Bijk \hat \Zb_j
        }_F^2,
\]
where $\hat g$ is $L$-smooth and $\mu$-strongly convex with respect to $\Bik$ and $\hat \Zb_i\in\RR^{k\times N}$ for $\pseq$. Let one-step iterate of the update to be
\[
    \Bb_i^{+} 
    =
        \Hcal_\alpha\circ\Tcal_s\rbr{
            \Bik  
            - 
            \eta\nabla_{\Bik }
            \hat g(\Bik)
        },
\]
where $\eta\leq 1/L$. Let $\Zb_i\in\RR^{k\times N}$ for $\pseq$.
Define 
\begin{align*}
\hat\Kb_{\cdot\backslash i}
&=(\hat\Zb_1^\top,\ldots,\hat\Zb_{i-1}^\top,\hat\Zb_i^\top,\ldots,\hat\Zb_p^\top)^\top
(\hat\Zb_1^\top,\ldots,\hat\Zb_{i-1}^\top,\hat\Zb_{i+1}^\top,\ldots,\hat\Zb_p^\top)\in\RR^{kp\times k(p-1)}\\
\Kb_{\cdot\backslash i}
&=(\Zb_1^\top,\ldots,\Zb_{i-1}^\top,\Zb_i^\top,\ldots,\Zb_p^\top)^\top
(\Zb_1^\top,\ldots,\Zb_{i-1}^\top,\Zb_{i+1}^\top,\ldots,\Zb_p^\top)\in\RR^{kp\times k(p-1)}.
\end{align*}
Then
\[
\frac{1}{C_\vartheta}\norm{\Bb_i^{+} - \Bb_i^{\star} }_F\leq 
\rbr{1-\eta\mu}\norm{\Bik -\pBik
}_F
+
\eta\norm{[\ptBik(\hat\Kb_{\cdot\backslash i}-\Kb_{\cdot\backslash i})]_{\bar\Scal_i}}_F
+
\eta\norm{[\ptBik\Kb_{\cdot\backslash i}]_{\bar\Scal_i}}_F
\]
where $\bar\Scal_i$ is defined in~\eqref{eq:defsupport}.
\end{lemma}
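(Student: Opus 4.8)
The plan is to re-run the one-step contraction argument behind~\eqref{eq:proj_supp}--\eqref{eq:B:update}, now specialized to the quadratic $\hat g$ and kept at the level of norms rather than squared norms. First, exactly as in~\eqref{eq:proj_supp}, the operators $\Tcal_s$ and $\Hcal_\alpha$ act only on coordinates in $\bar\Scal_i=\Scal_i^\star\cup\Scal_i\cup\Scal_i^+$ from~\eqref{eq:defsupport}, so that $\Bb_i^+=\Hcal_\alpha\circ\Tcal_s\big(\Bik-\eta[\nabla_{\Bik}\hat g(\Bik)]_{\bar\Scal_i}\big)$. Since $\pBik\in\Kcal_B(s^\star,\alpha^\star)\subset\Kcal_B(s,\alpha)$, the near-projection property of the composed hard-thresholding operator (Lemma~\ref{lemma:disperse_coef}--\ref{lemma:groupsparse}) gives
\[
\norm{\Bb_i^+-\pBik}_F^2\le C_\vartheta\,\bignorm{\Bik-\eta[\nabla_{\Bik}\hat g(\Bik)]_{\bar\Scal_i}-\pBik}_F^2.
\]
Taking square roots and using $C_\vartheta\ge1$ reduces the statement to bounding $\bignorm{\Bik-\eta[\nabla_{\Bik}\hat g(\Bik)]_{\bar\Scal_i}-\pBik}_F$.

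Next I would add and subtract $\eta[\nabla_{\Bik}\hat g(\pBik)]_{\bar\Scal_i}$,
\[
\Bik-\eta[\nabla_{\Bik}\hat g(\Bik)]_{\bar\Scal_i}-\pBik
=\big((\Bik-\pBik)-\eta[\nabla_{\Bik}\hat g(\Bik)-\nabla_{\Bik}\hat g(\pBik)]_{\bar\Scal_i}\big)-\eta[\nabla_{\Bik}\hat g(\pBik)]_{\bar\Scal_i}.
\]
Because $\hat g$ is quadratic, $\nabla_{\Bik}\hat g(\Bik)-\nabla_{\Bik}\hat g(\pBik)$ is a fixed symmetric linear function of $\Bik-\pBik$, and $\Bik-\pBik$ is supported on $\bar\Scal_i$ (which contains $\Scal_i$ and $\Scal_i^\star$); restricting that map to $\bar\Scal_i$-supported matrices yields a symmetric operator with spectrum in $[\mu,L]$ by the assumed $\mu$-strong convexity and $L$-smoothness. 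Hence for $\eta\le1/L$ the standard operator-norm contraction for strongly convex smooth functions (which also follows from Lemma~\ref{lemma:convexbound} together with a completion of the square) gives $\bignorm{(\Bik-\pBik)-\eta[\nabla_{\Bik}\hat g(\Bik)-\nabla_{\Bik}\hat g(\pBik)]_{\bar\Scal_i}}_F\le(1-\eta\mu)\norm{\Bik-\pBik}_F$, and the triangle inequality gives $\bignorm{\Bik-\eta[\nabla_{\Bik}\hat g(\Bik)]_{\bar\Scal_i}-\pBik}_F\le(1-\eta\mu)\norm{\Bik-\pBik}_F+\eta\bignorm{[\nabla_{\Bik}\hat g(\pBik)]_{\bar\Scal_i}}_F$.

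It remains to rewrite the residual $[\nabla_{\Bik}\hat g(\pBik)]_{\bar\Scal_i}$. Writing $\hat\Zb_{[p]}$ and $\hat\Zb_{\backslash i}$ for the vertical stacks of $\{\hat\Zb_j\}_{j=1}^p$ and $\{\hat\Zb_j\}_{j\ne i}$, one has $\nabla_{\Bik}\hat g(\pBik)=-N^{-1}(\hat\Zb_i-\pBik\hat\Zb_{\backslash i})\hat\Zb_{\backslash i}^\top$, and by the definition of $\ptBik$ as the augmented coefficient block (identity in slot $i$ and $-\Bb_{ij}^\star$ in slot $j\ne i$) the regression residual $\hat\Zb_i-\pBik\hat\Zb_{\backslash i}$ equals $\ptBik\hat\Zb_{[p]}$, so $\nabla_{\Bik}\hat g(\pBik)=-N^{-1}\ptBik\hat\Zb_{[p]}\hat\Zb_{\backslash i}^\top=-N^{-1}\ptBik\hat\Kb_{\cdot\backslash i}$ (the $N^{-1}$ being carried by the normalization of $\hat g$, which is absorbed into $\hat\Kb_{\cdot\backslash i}$ when the lemma is applied). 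Splitting $\hat\Kb_{\cdot\backslash i}=\Kb_{\cdot\backslash i}+(\hat\Kb_{\cdot\backslash i}-\Kb_{\cdot\backslash i})$ and applying the triangle inequality to $\bignorm{[\ptBik\hat\Kb_{\cdot\backslash i}]_{\bar\Scal_i}}_F$ produces the two stated terms; combining with the two displays above gives the claimed inequality.

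The step I expect to need the most care is the first one: checking that replacing $\nabla_{\Bik}\hat g(\Bik)$ by its restriction $[\nabla_{\Bik}\hat g(\Bik)]_{\bar\Scal_i}$ inside the thresholding loses nothing, and invoking Lemma~\ref{lemma:disperse_coef}--\ref{lemma:groupsparse} with the constant $C_\vartheta$ placed correctly, i.e.\ repeating the support bookkeeping of~\eqref{eq:proj_supp}--\eqref{eq:B:update} for a norm rather than a squared-norm inequality. The remaining steps are routine quadratic-optimization algebra plus the definitional identity for $\ptBik$.
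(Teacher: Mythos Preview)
Your proposal is correct and follows essentially the same route as the paper: apply the near-projection Lemmas~\ref{lemma:disperse_coef}--\ref{lemma:groupsparse} to pull out $C_\vartheta$, use the quadratic structure of $\hat g$ together with $\mu$-strong convexity/$L$-smoothness and $\eta\le 1/L$ to get the $(1-\eta\mu)$ contraction, and rewrite $\nabla_{\Bik}\hat g(\pBik)=-\ptBik\hat\Kb_{\cdot\backslash i}$ before splitting into the two residual terms. The only cosmetic difference is that the paper introduces an auxiliary $g$ built from $\Zb_j$ and decomposes via $\nabla\hat g(\pBik)-\nabla g(\pBik)$ and $\nabla g(\pBik)$, whereas you split $\hat\Kb_{\cdot\backslash i}$ directly; the two are algebraically identical. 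For the contraction step the paper passes to the unrestricted gradient (using that $\Bik-\pBik$ is supported on $\bar\Scal_i$, so dropping $[\cdot]_{\bar\Scal_i}$ only enlarges the norm) and then bounds $\|I-\eta\int_0^1\nabla^2\hat g\|_2\le 1-\eta\mu$, which is exactly the ``standard operator-norm contraction'' you invoke.
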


\begin{proof}
First, we define
\[
        g(\Bik)=\frac{1}{2N}
        \bignorm{
            \Zb_i
            -
            \sum_{j\neq i} \Bijk \Zb_j
        }_F^2.
\]
 Apply Lemma~\ref{lemma:disperse_coef}--\ref{lemma:groupsparse}, we can write
\[
\norm{\Bik^+-\pBik}_F=\norm{\Hcal_\alpha\circ\Tcal_s\rbr{
            \Bik  
            - 
            \eta\nabla_{\Bik }
            \hat{g}(\Bik)
        }
        -\pBik}_F\leq C_\vartheta
        \norm{
            \Bik -\pBik 
            - 
            \eta[
                \nabla_{\Bik }
            \hat{g}(\Bik)]_{\bar\Scal_i}}_F,
\]
where $\bar\Scal_i$ is defined in~\eqref{eq:defsupport}. Then, by~\eqref{eq:B:update}, we can write
\begin{align*}
\norm{\Bik^+-\pBik}_F
&\leq 
   C_\vartheta\big\{
        \norm{
            \Bik -\pBik 
            - 
            \eta[
                \nabla_{\Bik}\hat{g}(\Bik)
                -
                \nabla_{\Bik}\hat g(\pBik)
            ]_{\bar\Scal_i}}_F\\
        &\quad+
        \norm{\eta[
                \nabla_{\Bik}\hat g(\pBik)
                -
                \nabla_{\Bik}g(\pBik)
            ]_{\bar\Scal_i}}_F
        +
        \norm{[\eta
            \nabla_{\Bik}g(\pBik)
        ]_{\bar\Scal_i}}_F
    \big\},
\end{align*}
where $C_\vartheta$ is a constant depending on $\vartheta_1$ and $\vartheta_2$ defined in Section~\ref{ssec:notation}.

Write
\begin{align}
\norm{
            \Bik -\pBik 
            - 
            &\eta[
                \nabla_{\Bik}\hat{g}(\Bik)
                -
                \nabla_{\Bik}\hat g(\pBik)
            ]_{\bar\Scal_i}}_F\notag\\
& \leq
\norm{
            \Bik -\pBik 
            - 
            \eta\cbr{
                \nabla_{\Bik}\hat{g}(\Bik)
                -
                \nabla_{\Bik}\hat g(\pBik)}
            }_F\notag\\
&=
\norm{
           \VEC\rbr{ \Bik -\pBik}
            - 
            \eta
                \VEC\rbr{\nabla_{\Bik}\hat{g}(\Bik)
                -
                \nabla_{\Bik}\hat g(\pBik)}
            }_2\notag\\
&=
\bignorm{\cbr{\Ib_{k^2(p-1)}
    -
    \eta\int_{t=0}^1\nabla^2 \hat{g}(\VEC\rbr{t\Bik +(1-t)\pBik})dt
}
\VEC\rbr{\Bik -\pBik}
}_2\notag\\
&\leq 
\bignorm{{\Ib_{k^2(p-1)}
    -
    \eta\int_{t=0}^1\nabla^2 \hat{g}(\VEC\rbr{t\Bik +(1-t)\pBik})dt
}}_2
\norm{\Bik -\pBik
}_F\notag\\
&\leq \rbr{1-\eta\mu}\norm{\Bik -\pBik
}_F\label{eq:I71_i}
\end{align}

Note that we can write $\nabla_{\Bik}\hat{g}(\pBik)=-\ptBik\hat\Kb_{\cdot\backslash i}$ and $\nabla_{\Bik}{g}(\pBik)=-\ptBik\Kb_{\cdot\backslash i}$. 
Then, we can write
\begin{align*}
    \norm{\eta[
        \nabla_{\Bik}\hat g(\pBik)
        -
        \nabla_{\Bik}g(\pBik)
    ]_{\bar\Scal_i}}_F
    =
    \eta\norm{[\ptBik(\hat\Kb_{\cdot\backslash i}-\Kb_{\cdot\backslash i})]_{\bar\Scal_i}}_F,
\end{align*}
where $\ptBik=(-\Bb_{i1}^{\star}\cdots-\Bb_{ip}^{\star})\in\RR^{k\times pk}$.
\end{proof}

\begin{lemma}\label{lemma:I73i}
Under the conditions of Lemma~\ref{lemma:initial_stepB} and define
\begin{multline*}
C_{\delta,i}=C_2\sqrt{(1+\vartheta_2)(1+\vartheta_1)}\\
\times\max_{\mseq}\rho_x^{1/2}\rho_a^{1/2}\{\rho_b^{1/2}\norm{\Sigmab^{m,\ub}}_2^{1/2}+(\rho_a\rho_b)^{1/2}\norm{\Sigmab^{m,\qb}}_2^{1/2}+\norm{\Sigmab_i^\rb}_2^{1/2}+\norm{\Sigmab_i^\wb}_2^{1/2}\},
\end{multline*}where $C_2>0$ is an universal constant, we have
\[
\eta_{B_0}\norm{[
    \nabla_{\Bik}h_i(\pBik)
]_{\bar\Scal_i}}_F\leq
C_{\delta,i}\eta_{B_0}\sqrt{\frac{\alpha^\star s^\star k^2}{N}},
\]
with probability at least $1-\exp\{{-\alpha^\star ks^\star}\}$.
\end{lemma}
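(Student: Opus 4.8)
The plan is to prove Lemma~\ref{lemma:I73i} as the single-node, Frobenius-norm analogue of Lemma~\ref{lemma:staterror_Bij}, following the same three-stage template: expand the gradient at the true parameter into an empirical average, bound the sub-exponential Orlicz norm of a fixed ``test direction'', and combine a Bernstein tail bound with a covering argument over the restricted support. Here I read $\nabla_{\Bik}h_i(\pBik)$ as the gradient of the oracle version of the initialization objective, i.e.\ the map $g$ used in Lemma~\ref{lemma:Z_update} in which $\Amk^{(0)}$ is replaced by the population operator $\pAmk$; this is precisely the term $-\ptBik\Kb_{\cdot\backslash i}$ whose restricted norm appears as the last summand in Lemma~\ref{lemma:Z_update}. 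With the notation of~\eqref{eq:def_v}--\eqref{eq:deft} one has $\nabla_{\Bik}h_i(\pBik)=-\tfrac{1}{MN}\sum_{m=1}^M\sum_{n=1}^N\vb_i^{m,(n)}(\ykmnfree{\noi}{n})^\top\bigAs{p-1}{\pAmkT}$, where $\vb_i^{m,(n)}=\ptBik\big({\ub^m}^{(n)}+\bigAs{p}{\pAmk}{\qb^m}^{(n)}\big)+\rb_i^{(n)}+\wb_i^{(n)}$.

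First I would rewrite the restricted-support Frobenius norm in the vectorized form used in the proof of Lemma~\ref{lemma:staterror_Bij}: introducing $\tb_i^{m,(n)}$ as in~\eqref{eq:deft} and using reshape-invariance of $\norm{\cdot}_F$, one has $\norm{[\nabla_{\Bik}h_i(\pBik)]_{\bar\Scal_i}}_F=\norm{\tfrac{1}{MN}\sum_{m,n}[\vb_i^{m,(n)}\otimes\tb_i^{m,(n)}]_{\bar\Scal_i'}}_F$, where $\bar\Scal_i'$ is the support obtained from $\bar\Scal_i$ under the reshaping from $k\times k(p-1)$ to $k^2\times(p-1)$. Dualizing this Frobenius norm and replacing the supremum by a maximum over a $\tfrac12$-net $\Ncal$ of the restricted unit ball $\Dcal(\bar\Scal_i)=\{\Db:\supp(\Db)\subseteq\bar\Scal_i,\ \norm{\Db}_F\le1\}$, together with a maximum over the at most $(1+\vartheta_1)s^\star$ active column blocks indexed by $\barNi$, it suffices to control, for fixed $\Db\in\Ncal$ and $j\in\barNi$, the scalar average $\tfrac{1}{MN}\sum_{m,n}(\Db\vb_i^{m,(n)})^\top\tb_i^{m,(n)}\eb_j$. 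By the cardinality bookkeeping of Section~\ref{ssec:edge_auxlemma1}, $|\bar\Scal_i|\le(1+\vartheta_1)(1+\vartheta_2)s^\star\alpha^\star k^2$, so $|\Ncal|\le5^{(1+\vartheta_1)(1+\vartheta_2)s^\star\alpha^\star k^2}$. The summands are independent across $n$, and for each $n$ the average over $m$ is a product of two sub-Gaussian vectors and hence sub-exponential; repeating verbatim the Orlicz computation of Step~1 of the proof of Lemma~\ref{lemma:staterror_Bij} — decomposing $\vb_i^{m,(n)}$ as above and using sub-multiplicativity of the $\psi_2$-norm under products (Lemma~2.7.7 of~\citet{vershynin2018high}) together with Assumptions~\ref{assumption:cov}--\ref{assumption:AB} — yields $\norm{\tfrac{1}{M}\sum_m(\Db\vb_i^m)^\top\tb_i^m\eb_j}_{\psi_1}\le\max_{\mseq}\rho_x^{1/2}\rho_a^{1/2}\{\rho_b^{1/2}\norm{\Sigmab^{m,\ub}}_2^{1/2}+(\rho_a\rho_b)^{1/2}\norm{\Sigmab^{m,\qb}}_2^{1/2}+\norm{\Sigmab_i^{\rb}}_2^{1/2}+\norm{\Sigmab_i^{\wb}}_2^{1/2}\}$, which is $C_{\delta,i}$ up to the universal constant $C_2$ and the factor $\sqrt{(1+\vartheta_1)(1+\vartheta_2)}$ produced by the union bounds.

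Finally I would apply Bernstein's inequality (Theorem~2.8.1 of~\citet{vershynin2018high}) with $t=C_{\delta,i}\sqrt{\alpha^\star s^\star k^2/N}$, take the union bound over $\Ncal$ and over $j\in\barNi$, and invoke the sample-size hypotheses of Lemma~\ref{lemma:initial_stepB}: with the universal constant in $N=O(\cdot)$ chosen large enough, the deviation exponent dominates $\log|\Ncal|+\log|\barNi|$, and the probability that $\tfrac{1}{M}\norm{[\nabla_{\Bik}h_i(\pBik)]_{\bar\Scal_i}}_F$ exceeds $t$ is at most $\exp\{-\alpha^\star ks^\star\}$ (a convenient weakening of the sharper $\exp\{-c\,\alpha^\star s^\star k^2\}$ one actually obtains, since $k^2\ge k$); multiplying through by $\eta_{B_0}$ gives the claim. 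The main obstacle is the bookkeeping rather than the probability: one must track carefully how the per-block row/column $\alpha^\star$-sparsity within each of the $\le(1+\vartheta_1)s^\star$ active $k\times k$ blocks of $\bar\Scal_i$ combines to cardinality of order $\alpha^\star s^\star k^2$, ensure the net's log-cardinality is absorbed into the sample-size requirement so that Bernstein sits in its sub-Gaussian regime, and thread the factors $(1+\vartheta_1)$, $(1+\vartheta_2)$, $C_\vartheta$ through consistently with their appearances in Lemma~\ref{lemma:Z_update} and in the final bound of Lemma~\ref{lemma:initial_stepB}; the Orlicz-norm and Bernstein steps themselves merely reproduce computations already carried out for Lemma~\ref{lemma:staterror_Bij}.
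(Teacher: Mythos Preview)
Your approach is correct in spirit and reaches the right rate, but it differs from the paper's route and contains one imprecision worth flagging. When you ``dualize the Frobenius norm'' and then write the controlled quantity as $\max_{j\in\barNi}(\Db\vb_i^{m,(n)})^\top\tb_i^{m,(n)}\eb_j$ with a single $\Db$, you have in fact slipped into the $r(k,\infty)$ dualization of Lemma~\ref{lemma:staterror_Bij}; the genuine Frobenius dual is $\sum_{j\in\barNi}(\Db_j\vb_i^{m,(n)})^\top\tb_i^{m,(n)}\eb_j$ over $(\Db_j)_j$ with $\sum_j\norm{\Db_j}_F^2\le1$. Either reading works---the second matches your stated covering number $|\Ncal|\le5^{|\bar\Scal_i|}$, and the first would need an extra $\sqrt{(1+\vartheta_1)s^\star}$ to pass from $r(k,\infty)$ to Frobenius---and both feed into Bernstein with the same Orlicz bound.

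The paper takes a different decomposition: it first uses $\norm{[\cdot]_{\bar\Scal_i}}_F\le\sqrt{k}\,\norm{[\cdot]_{\bar\Scal_i}}_2$ and then dualizes the \emph{operator} norm over a pair of sparse unit vectors, one in $\RR^k$ with support size $O(\alpha^\star k)$ and one in $\RR^{k(p-1)}$ with support size $O(\alpha^\star k s^\star)$. This yields a smaller log-covering number $O(\alpha^\star s^\star k)$ (versus your $O((\alpha^\star)^2 s^\star k^2)$ or $O(\alpha^\star s^\star k^2)$), so the Bernstein level is only $\sqrt{\alpha^\star s^\star k/N}$, but the deferred factor $\sqrt{k}$ brings the bound back to $\sqrt{\alpha^\star s^\star k^2/N}$. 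The trade-off is that the paper's route imposes the milder sample-size requirement $N\gtrsim\alpha^\star s^\star k$, whereas your direct Frobenius covering needs $N$ large enough to sit in Bernstein's sub-Gaussian regime against the larger net; under the standing assumptions of Lemma~\ref{lemma:initial_stepB} both are available, so the final statement is the same.
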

\begin{proof}[Proof of Lemma~\ref{lemma:I73i}]
We use similar argument of Lemma~\ref{lemma:staterror_Bij} and write that
\[
[\nabla_{\Bik}h_i(\pBik)]_{\bar\Scal_i}=-\frac{1}{MN}\sbr{\sum_{m=1}^M\sum_{n=1}^N\vb_{i}^{m,(n)}
        (\ykmnfree{\noi}{n})^\top
        \bigAs{p-1}{\pAmkT}}_{\bar\Scal_i},
\]
where $\vb_{i}^{m,(n)}$ is defined in~\eqref{eq:def_v}. 

Define 
\begin{align*}
    \Ucal_1(\Scal)&=\{\ub:
    \supp({\ub})\subset\{\Scal_{\cdot,i},i=1,\ldots k(p-1)\},\norm{\ub}_1=1\},\; \Scal_{\cdot,i}=\{u:(u,i)\in\Scal\};\\
    \Ucal_2(\Scal)&=\{\wb:\supp(\wb)\subset\{\Scal_{i,\cdot},i=1,\ldots,k\},\norm{\wb}_1=1\},\;
    \Scal_{i,\cdot}=\{v;(i,v)\in\Scal\}.
\end{align*}
For notation simplicity, we define $\Ucal_1=\Ucal_1(\bar\Scal_i)$ and $\Ucal_2=\Ucal_2(\bar\Scal_i)$.
 Let $\Ncal_1$ be the $1/8$-net of $\Ucal_1$ and $\Ncal_2$ be the $1/8$-net of $\Ucal_2$.
Then, we can write
\begin{align*}
\norm{[\nabla_{\Bik}h_i(\pBik)]_{\bar\Scal_i}}_F
&\leq\sqrt{k}\norm{[\nabla_{\Bik}h_i(\pBik)]_{\bar\Scal_i}}_2\\
&=\sqrt{k}\sup_{\ub\in\Ucal_1}\sup_{\wb\in\Ucal_2}
\ub^\top\sbr{\frac{1}{N}\sum_{n=1}^N\frac{1}{M}\sum_{m=1}^M\vb_{i}^{m,(n)}
        (\ykmnfree{\noi}{n})^\top
        \bigAs{p-1}{\pAmkT}}_{\bar\Scal_i}\wb.\\
\intertext{Apply Lemma~\ref{lemma:epsilon_net}, we can upper bound the above display as}
&\leq 2\sqrt{k}\sup_{\ub\in\Ncal_1}\sup_{\wb\in\Ncal_2}
\ub^\top\sbr{\frac{1}{N}\sum_{n=1}^N\frac{1}{M}\sum_{m=1}^M\vb_{i}^{m,(n)}
        (\ykmnfree{\noi}{n})^\top
        \bigAs{p-1}{\pAmkT}}_{\bar\Scal_i}\wb,
\end{align*}
where the right hand side can be seen as sum of i.i.d. sub-exponential random variables with Orlicz norm upper bounded by $\max_{\mseq}\rho_x^{1/2}\rho_a^{1/2}\{\rho_b^{1/2}\norm{\Sigmab^{m,\ub}}_2^{1/2}+(\rho_a\rho_b)^{1/2}\norm{\Sigmab^{m,\qb}}_2^{1/2}+\norm{\Sigmab_i^\rb}_2^{1/2}+\norm{\Sigmab_i^\wb}_2^{1/2}\}$, where a similar proof can be found in Lemma~\ref{lemma:staterror_Bij}. 
Define
\begin{multline*}
C_{\delta,i}=C_2\sqrt{(1+\vartheta_2)(1+\vartheta_1)}\\
\times\max_{\mseq}\rho_x^{1/2}\rho_a^{1/2}\{\rho_b^{1/2}\norm{\Sigmab^{m,\ub}}_2^{1/2}+(\rho_a\rho_b)^{1/2}\norm{\Sigmab^{m,\qb}}_2^{1/2}+\norm{\Sigmab_i^\rb}_2^{1/2}+\norm{\Sigmab_i^\wb}_2^{1/2}\},
\end{multline*}
and $C_2>0$. Therefore, apply Bernstein's inequality with $N=O(C_{\delta,i}^2 \alpha^\star s^\star k)$ and take the union bound over all $\Ncal_1$ with $|\Ncal_1|\leq (17)^{(1+\vartheta_1)\alpha^\star k}$ and $\Ncal_2$ with $|\Ncal_2|\leq (17)^{(1+\vartheta_1)(1+\vartheta_2)\alpha^\star k s^\star}$, we have
\begin{align*}
\Ical_{73,i}
&=
\eta_{B_0}\norm{[
    \nabla_{\Bik}h_i(\pBik)
]_{\bar\Scal_i}}_F\leq \eta_{B_0}\sqrt{k}\norm{[
    \nabla_{\Bik}h_i(\pBik)
]_{\bar\Scal_i}}_2\leq
C_{\delta,i}\eta_{B_0}\sqrt{\frac{\alpha^\star s^\star k^2}{N}},
\end{align*}
with probability at least $1-\exp\{{-\alpha^\star ks^\star}\}\geq 1-\max_{\mseq}\exp(-k_m s^\star)$,

\end{proof}
\section{Proof of Theorem~\ref{theorem:main}}
The main theorem combines results from Theorem~\ref{theorem:initialcca} and Lemma~\ref{lemma:initial_stepB} with result from Theorem~\ref{theorem:convergence}.

\emph{Step 1}: The first step is to show that $\Amk^{(0)}$ computed from ~\eqref{eq:cca_model} for $\mseq$ and $\Bik$ outputed by Algorithm~\ref{alg:initializationB} for $\pseq$ satisfy Assumption~\ref{assumption:coef}.
 Since  $\norm{\Amk^{(0)}-\pAmk}_2\leq\norm{\Amk^{(0)}-\pAmk}_F$ and $\sqrt{8k}\opnorm{\Ascr^m}{}^2\opnorm{\Lscr^{m,r}}{}$ is a constant under the condition that $\opnorm{\Lscr^{m,r}}{}=o(1/\sqrt{k_m})$, Theorem~\ref{theorem:initialcca} tells us that there exists a $N=O(\max_{\mseq} k_m + \log M)$, with $M=2$, such that $\norm{\Amk^{(0)}-\pAmk}_2\geq C_A\norm{\pAmk}_2+\sqrt{8k}\opnorm{\Ascr^m}{}^2\opnorm{\Lscr^{m,r}}{}$ with probability smaller than $2^{-1}\exp(-k_1-k_2)$ for each $m=1,2$. 
 Using the condition that
 Taking the union  bound, we have  $\norm{\Amk^{(0)}-\pAmk}_2\geq  C_A\norm{\pAmk}_2$ for $\mseq$ with probability smaller than $\exp(-k_1-k_2)$. We define this event as $\Ecal_{A^0}^c$.

Similarly, since all norms in finite dimensional are equivalent, it suffices to show that $\norm{\Bik^{(0)}-\pBik}_F\leq C_B\norm{\pBik}_F$ for $\pseq$. Under the condition $N=O(\max_{\mseq}s^\star k_m+\log M+\log p)$, apply Lemma~\ref{lemma:initial_stepB} with the number of iteration $L_0=-C_1\log\{(k^2+\log s^\star)\}/{N}\}$  and Theorem~\ref{theorem:initialcca}, we have  $\norm{{\Bik^{(0)}-\pBik}}_F\geq C_B\norm{\pBik}_F$ with probability smaller than $3p^{-1}\delta_0$ for each $\pseq$. Then, taking the union bound over $\pseq$, we have $\norm{\Bik^{(0)}-\pBik}_F\geq C_B\norm{\pBik}_F$ for $\pseq$ with probability smaller than $3\delta_0$. We define such conditional event as $\Ecal_{B^{0}\cap A^{0}}^c$.
Therefore, Assumption~\ref{assumption:coef} holds with probability at least  
\[
1-\cbr{\PP(\Ecal_{B^{0}}^c)+\PP(\Ecal_{A^0}^c)}\geq 
1-\cbr{\PP(\Ecal_{B^{0}\cap A^{0}}^c)+\PP(\Ecal_{A^0}^c)}\geq 1-4\delta_0.
\]

\emph{Step 2}: Under Assumption~\ref{assumption:coef}, we can apply Theorem~\ref{theorem:convergence}. The second part of the theorem can be shown by triangle inequality. Let $L$ be any constant such that $L\geq (-\log(1-\pi)+2\log\Xi-2\log R_0)/\log\pi$. This implies that for any $\pseq$
\begin{align*}
\norm{\Bb_{i,j}^{(L)}-\pBijk}_F
&\leq
\rbr{
\max_{\mseq}\norm{\Ab_{m}^{(L)}-\pAmk}_F^2
+
\max_{\pseq}\norm{\Bb_{i,j}^{(L)}-\pBijk}_F^2}^{1/2}\\
&\leq \sqrt{\frac{2}{1-\pi}}\Xi\leq \frac{1}{2}\Lambda(k)
\end{align*}
Suppose that $j\in\pNi$, then
\[
\norm{\Bb_{i,j}^{(L)}}_F\geq \norm{\pBijk}_F
-
\norm{\Bb_{i,j}^{(L)}-\pBijk}_F
\geq \Lambda(k)-\frac{1}{2}\Lambda(k)
=
\frac{1}{2}\Lambda(k)
,
\]
with probability at least $1-10\delta_0$. Similarly, if $j\in\pNic\noi$,
\[
\norm{\Bb_{i,j}^{(L)}}_F
\leq
\norm{\Bb_{i,j}^{(L)}-\pBijk}_F
\leq  \frac{1}{2}\Lambda(k),
\]
with probability at least $1-10\delta_0$. Therefore, under Assumption~\ref{assumption:disperse}, if we select edges with the threshold $(1/2)\Lambda(k)$, we are able to recover the edge set with probability at least $1-10\delta_0$.
Finally, marginalizing over the conditional probability that Assumption~\ref{assumption:coef} holds, we have
\[
\PP(\hat{\Nscr}_i=\pNi;1\leq i\leq p)\geq (1-10\delta_0)(1-4\delta_0)\geq 1 - 14\delta_0
\]

\section{Strong Convexity and Smoothness}

In this section, we show the convexity coefficient and smoothness coefficient of $f(\cdot)$ and $h_i(\cdot)$. Lemma~\ref{lemma:scsm_coef} states that $f(\cdot,\setpBk)$ is strongly convex and smooth with respect to any $\Amk$ for $\mseq$ with high probability under proper sample size condition. Lemma~\ref{lemma:strong_hn} shows that under mild condition of $\Amk$ for $\mseq$ and sample size, $h_i(\cdot)$ is strongly convex and smooth with respect $\Bik$ with high probability. We being with reviewing the definition of $L$-smoothness and $\mu$-strong convexity.
\begin{definition}
A function $f:\Domain(f)\rightarrow\RR$ is $L$-smooth if for any $\xb,\yb\in\Domain(f)$ we have that
\[
\abr{f(\yb)-f(\xb)-\dotp{\nabla f(\xb)}{\yb-\xb}}\leq \frac{L}{2}\norm{\yb-\xb}_2^2.
\]
\end{definition}

\begin{definition}
A function $f:\Domain(f)\rightarrow\RR$ is $\mu$-strongly convex if for any $\xb,\yb\in\Domain(f)$ we have that
\[
f(\yb)\geq
f(\xb)+\dotp{\nabla f(\xb)}{\yb-\xb}+\frac{\mu}{2}\norm{\yb-\xb}_2^2.
\]
\end{definition}
\begin{lemma}\label{lemma:scsm_coef}
Assume that Assumption~\ref{assumption:cov}--\ref{assumption:AB} hold, 
$\norm{\pBik}_{r(k,0)}\leq s^\star$ for $\pseq$ and
$\defcdnm$. 
If $N=O(\kappa_m^2(k_ms^\star+\log M+\log p))$, then $f(\setAk,\setpBk)$ 
is $\{(s^\star+1) p\nu_{x}\nu_b\}/{2}$-strongly convex 
and $\{3(s^\star+1)p\rho_{x}\rho_b\}/{2}$-smooth with respect to $\Amk$ 
with probability at least $\pkms$.
\end{lemma}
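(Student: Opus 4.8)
The plan is to exploit that, once $\setpBk$ and all $\Ab^{m'k}$ with $m'\neq m$ are frozen, the map $\Amk\mapsto f(\setAk,\setpBk)$ is a convex quadratic whose only nonconstant part is the $m$-th modality term, so that $\mu$-strong convexity and $L$-smoothness are equivalent to a two-sided bound on its Hessian. From the gradient identity~\eqref{eq:gradAm} one reads off that, for any direction $\Delta\in\RR^{k\times k_m}$,
\[
\nabla^2_{\Amk}f(\setAk,\setpBk)[\Delta,\Delta]=\sum_{i=1}^p\frac1N\norm{\Delta\,\Ykmnfree{i}-\sum_{j\in\pni}\pBijk\,\Delta\,\Ykmnfree{j}}_F^2 .
\]
Since $\pBijk=\mathbf 0$ for $j\notin\pNi$, the $i$-th summand involves only the blocks indexed by $\Mscr_i:=\pNi\cup\{i\}$, and $|\Mscr_i|\le s^\star+1$ by Assumption~\ref{assumption:s-sparse} (equivalently $\norm{\pBik}_{r(k,0)}\le s^\star$). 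Rewriting it as $\tr\!\bigl(\ptBik^{\Mscr_i}(\Ib\otimes\Delta)\,\sCovfree{\Mscr_i}{\Mscr_i}\,(\Ib\otimes\Delta)^\top(\ptBik^{\Mscr_i})^\top\bigr)$, where $\ptBik^{\Mscr_i}$ is the block-restriction of $\ptBik$ to $\Mscr_i$, its value is caught between $\tfrac{\nu_x}{2}$ and $\tfrac{3\rho_x}{2}$ times $\norm{\ptBik^{\Mscr_i}(\Ib\otimes\Delta)}_F^2=\norm{\Delta}_F^2+\sum_{j\in\pNi}\norm{\pBijk\Delta}_F^2$ as soon as the inner covariance is two-sided bounded (using $\tr(AB)\ge\lambda_{\min}(A)\tr(B)$ for PSD matrices and its upper analogue, together with the block of $\ptBik$ at position $i$ being $\Ib_k$).

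The one probabilistic ingredient is exactly that two-sided bound, uniformly over the neighborhoods. Applying Lemma~\ref{lem:covariance:concentration:neighborhoods} to the index sets $\Mscr_i$ (each of size $\le s^\star+1$) and union-bounding over $\pseq$ — the $\log M+\log p$ in the sample-size requirement $N=O(\kappa_m^2(k_m s^\star+\log M+\log p))$ being what absorbs this union bound and the modality index at the stated probability level — yields an event $\Ecal^m$ of probability at least $\pkms$ on which $\tfrac{\nu_x}{2}\Ib\preceq\sCovfree{\Mscr_i}{\Mscr_i}\preceq\tfrac{3\rho_x}{2}\Ib$ for all $\pseq$. This concentration step is where I expect the real work to be: it is the only place the (sub-)Gaussian tail behaviour of empirical covariances restricted to small coordinate blocks, the condition-number dependence through $\kappa_m$, and the bookkeeping over the $p$ blocks must be combined so as to land at precisely the claimed failure probability and sample-size scaling; everything else is deterministic.

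It then remains to turn the two-sided bound on each summand into bounds in $\norm{\Delta}_F^2$. For smoothness, $\norm{\Delta}_F^2+\sum_{j\in\pNi}\norm{\pBijk\Delta}_F^2\le\bigl(1+\sum_{j\in\pNi}\norm{\pBijk}_2^2\bigr)\norm{\Delta}_F^2\le(s^\star+1)\rho_b\,\norm{\Delta}_F^2$, since each $\pBijk$ is (up to sign) a block of $\ptBik$, hence $\norm{\pBijk}_2\le\sinmax{\ptBik}\le\rho_b^{1/2}$ by Assumption~\ref{assumption:AB}, and $|\pNi|\le s^\star$, $\rho_b\ge1$. For strong convexity, the identity block already gives $\norm{\Delta}_F^2+\sum_{j\in\pNi}\norm{\pBijk\Delta}_F^2\ge\norm{\Delta}_F^2$, and carrying the lower spectral bound $\nu_b$ of $\ptBik$ from Assumption~\ref{assumption:AB} through this estimate produces a lower bound of the claimed order. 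Summing over $i=1,\dots,p$ (only modality $m$ contributes to this Hessian), one obtains, on $\Ecal^m$,
\[
\tfrac{(s^\star+1)p\,\nu_x\nu_b}{2}\,\norm{\Delta}_F^2\ \le\ \nabla^2_{\Amk}f(\setAk,\setpBk)[\Delta,\Delta]\ \le\ \tfrac{3(s^\star+1)p\,\rho_x\rho_b}{2}\,\norm{\Delta}_F^2 ,
\]
and rewriting these Hessian inequalities in the form of the strong-convexity and smoothness definitions stated above completes the argument.
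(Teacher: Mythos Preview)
Your proposal is correct and takes essentially the same approach as the paper: the paper factors the deterministic quadratic bounds into an auxiliary result (Lemma~\ref{lemma:f_scl}), which writes the second-order term as $\frac{1}{2N}\sum_i\norm{\ptBikfree{\Mscr_i}(\Ib\otimes\Delta)\Yb^m_{\Mscr_i}}_F^2$ with $\Mscr_i=\pNi\cup\{i\}$, and then invokes the covariance concentration Lemma~\ref{lem:covariance:concentration:neighborhoods} on these neighborhoods to get the two-sided spectral bounds $\tfrac{\nu_x}{2}\le\sigma(\sCovfree{\Mscr_i}{\Mscr_i})\le\tfrac{3\rho_x}{2}$, exactly as you do. Your explicit block decomposition $\norm{\Delta}_F^2+\sum_{j\in\pNi}\norm{\pBijk\Delta}_F^2$ is the unfolded form of the paper's $\norm{\ptBikfree{\Mscr_i}(\Ib\otimes\Delta)}_F^2$, and both arguments treat the lower bound at the same level of detail (the paper simply asserts the inequality with the $(s^\star+1)\sinmintwo{\ptBik}$ factor in Lemma~\ref{lemma:f_scl}).
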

\begin{proof}[Proof of Lemma~\ref{lemma:scsm_coef}]
Define the following two constants:
\begin{align*}
d_1&= ({s^\star+1}) \sum_{i=1}^p \sinmin{\sCovfree{\pNi\cup\{i\}}{\pNi\cup\{i\}}} 
\sinmintwo{\ptBik},\\
d_2 &= ({s^\star+1}) \sum_{i=1}^p \norm{\tilde\Bb_ i^\star}_2^2
    \norm{ \sCovfree{\pNi\cup\{i\}}{\pNi\cup\{i\}}}_2.
\end{align*}
By Lemma~\ref{lemma:f_scl}, we have $f$ be $d_1$-strongly convex. Under the condition $N=O(\kappa_m^2(k_ms^\star+\log M+\log p))$ event $\Ecal_m=\cap_{i=1}^p\Ecal_{m,i}(\pNi\cup\{i\})$ defined in~\eqref{eq:eventm} holds with probability at least $\pkms$. Applying Lemma~\ref{lemma:ci} yields $\sinmin{\sCovfree{\pNi\cup\{i\}}{\pNi\cup\{i\}}}\geq 2^{-1} \sinmin{\Sigmab^m_{\pNi\cup\{i\},\pNi\cup\{i\}}}$ for $\pseq$ with probability at least $\pkms$. Therefore we have
\begin{align*}
({s^\star+1}) \sum_{i=1}^p \sinmin{\sCovfree{\pNi\cup\{i\}}{\pNi\cup\{i\}}} 
\sinmintwo{\ptBik} 
&\geq \frac{s^\star+1}{2} \sum_{i=1}^p \sinmin{\Sigmab^m_{\pNi\cup\{i\},\pNi\cup\{i\}}} 
\sinmintwo{\ptBik},
\end{align*}
with probability at least $\pkms$. 
Apply Assumption~\ref{assumption:cov}--\ref{assumption:AB}, we can obtain a further lower bound and obtain $\mu$:
\[
\frac{s^\star+1}{2} \sum_{i=1}^p \sinmin{\Sigmab^m_{\pNi\cup\{i\},\pNi\cup\{i\}}}
\sinmintwo{\ptBik} \geq \frac{(s^\star+1) p \nu_x\nu_b}{2}.
\]

Next, we want to verify the smoothness condition: $f$ is $d_2$-smooth by Lemma~\ref{lemma:f_scl}.
    Apply Lemma~\ref{lemma:ci} again, with probability at least $\pkms$, we can bound 
\begin{align*}
 ({s^\star+1})\sum_{i=1}^p \norm{\tilde\Bb_ i^\star}_2^2
    \norm{ \sCovfree{\pNi\cup\{i\}}{\pNi\cup\{i\}}}_2 
    &\leq \frac{3 (s^\star+1)}{2} \sum_{i=1}^p \sinmax{\Sigmab^m_{\pNi\cup\{i\},\pNi\cup\{i\}}}
    \norm{\ptBik}_2^2.\\
    \intertext{Finally, apply Assumption~\ref{assumption:cov}--\ref{assumption:AB}, the above display is bounded as}
    &\leq \frac{3(s^\star+1) p\rho_x\rho_b}{2}\norm{\Amk'-\Amk }_F^2,
\end{align*}
with probability at least $\pkms$.
\end{proof}
\begin{lemma}\label{lemma:f_scl}
 $f(\setAk,\setpBk)$ 
is $(s^\star+1)\sum_{i=1}^p\sinmintwo{\pBik}\sinmin{\sCovfree{\pNi\cup\{i\}}{\pNi\cup\{i\}}}$-strongly convex 
and $(s^\star+1)\sum_{i=1}^p\sinmaxtwo{\pBik}\sinmax{\sCovfree{\pNi\cup\{i\}}{\pNi\cup\{i\}}}$-smooth with respect to $\Amk$.
\end{lemma}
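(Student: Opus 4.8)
The plan is to regard $f(\cdot,\setpBk)$, for a fixed modality $m$ and with the remaining matrices $\Ab^{m'}$ ($m'\neq m$) held fixed, as a quadratic function of the entries of $\Amk$, and to read off the strong-convexity and smoothness constants as the extreme eigenvalues of its positive semidefinite Hessian. Starting from the representation~\eqref{eq:main_ver2}, specialized to $\Bik=\pBik$, only the $m$-th inner summand depends on $\Amk$, so up to a term that does not involve $\Amk$,
\[
f(\setAk,\setpBk)\;=\;\mathrm{const}\;+\;\sum_{i=1}^{p} g_i(\Amk),\qquad g_i(\Amk):=\frac{1}{2N}\bignorm{\ptBik(\Ib_p\otimes\Amk)\Ykmn}_F^{2}.
\]
Each $g_i$ is a nonnegative quadratic form in $\Amk$, hence $f(\cdot,\setpBk)$ is quadratic with positive semidefinite Hessian equal to the sum over $i$ of the Hessians of the $g_i$. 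It therefore suffices to produce, for every $i$, constants $\mu_i\le L_i$ with $\mu_i\norm{\Delta}_F^{2}\le 2g_i(\Delta)\le L_i\norm{\Delta}_F^{2}$ for all $\Delta\in\RR^{k\times k_m}$, and then set $\mu=\sum_i\mu_i$, $L=\sum_i L_i$.

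First I would localize each $g_i$. By Assumption~\ref{assumption:s-sparse} the blocks of $\ptBik$ indexed outside $\Mscr_i^\star:=\pNi\cup\{i\}$ vanish and $|\Mscr_i^\star|\le s^\star+1$, so $\ptBik(\Ib_p\otimes\Amk)\Ykmn$ coincides with the same expression built only from the blocks in $\Mscr_i^\star$. Writing $\Cb_i$ for the $k\times(|\Mscr_i^\star|\,k)$ submatrix of $\ptBik$ that retains the identity block (at the slot of $i$) together with the nonzero neighbor blocks $-\pBijk$, $j\in\pNi$, and noting that $N^{-1}$ times the restriction of $\Ykmn(\Ykmn)^\top$ to the blocks of $\Mscr_i^\star$ is precisely $\sCovfree{\pNi\cup\{i\}}{\pNi\cup\{i\}}$, cyclicity of the trace gives
\[
2g_i(\Amk)\;=\;\bignorm{\Cb_i\,(\Ib_{|\Mscr_i^\star|}\otimes\Amk)\,\sCovfree{\pNi\cup\{i\}}{\pNi\cup\{i\}}^{1/2}}_F^{2}.
\]
Replacing $\sCovfree{\pNi\cup\{i\}}{\pNi\cup\{i\}}$ by $\sinmin{\sCovfree{\pNi\cup\{i\}}{\pNi\cup\{i\}}}$ times the identity (for the lower bound) and by $\sinmax{\sCovfree{\pNi\cup\{i\}}{\pNi\cup\{i\}}}$ times the identity (for the upper bound) reduces everything to bounding
\[
\bignorm{\Cb_i(\Ib_{|\Mscr_i^\star|}\otimes\Amk)}_F^{2}\;=\;\sum_{l\in\Mscr_i^\star}\bignorm{(\Cb_i)_l\,\Amk}_F^{2}\;=\;\tr\!\Big(\Amk^\top\Big(\Ib_k+\sum_{j\in\pNi}\pBijkT\pBijk\Big)\Amk\Big),
\]
and the extreme eigenvalues of the block-Gram matrix $\Ib_k+\sum_{j\in\pNi}\pBijkT\pBijk$ supply the factors $\sinmintwo{\ptBik}$ and $\sinmaxtwo{\ptBik}$. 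Keeping the identity block explicit is exactly what pins the lower eigenvalue away from zero, so that $g_i$ is genuinely strongly convex even at an isolated node; it is also why one must not collapse $\Cb_i$ into the smallest singular value of its (rank-deficient) $(|\Mscr_i^\star|\,k)\times(|\Mscr_i^\star|\,k)$ Gram.

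Summing the per-node bounds $\mu_i=\sinmintwo{\ptBik}\,\sinmin{\sCovfree{\pNi\cup\{i\}}{\pNi\cup\{i\}}}$ and $L_i=\sinmaxtwo{\ptBik}\,\sinmax{\sCovfree{\pNi\cup\{i\}}{\pNi\cup\{i\}}}$ over $i$, and tracking the factor $|\Mscr_i^\star|\le s^\star+1$ produced by the localization step, yields the announced strong-convexity and smoothness constants. I expect the only real difficulty to be bookkeeping rather than ideas: carrying the Kronecker factor $\Ib_{|\Mscr_i^\star|}\otimes\Amk$ correctly through the trace manipulations, exploiting the identity block of $\Cb_i$ to handle its rank deficiency, and applying the restriction to $\Mscr_i^\star$ consistently in both the lower and the upper estimate; the rest is elementary linear algebra.
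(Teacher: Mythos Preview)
Your overall strategy—write the $m$-th summand as a quadratic in $\Amk$, localize each term to $\pNi\cup\{i\}$ using the sparsity of $\ptBik$, and peel off the sample-covariance eigenvalues—matches the paper's proof, but two concrete steps do not go through as written. First, the eigenvalue identification is off: your computation correctly gives $\bignorm{\Cb_i(\Ib\otimes\Delta)}_F^{2}=\tr\bigl(\Delta^\top(\Ib_k+\sum_{j\in\pNi}\Bb_{ij}^{\star\top}\Bb_{ij}^\star)\Delta\bigr)$, but $\sinmintwo{\ptBik}$ and $\sinmaxtwo{\ptBik}$ are the extreme eigenvalues of $\ptBik\ptBikT=\Ib_k+\sum_{j\in\pNi}\Bb_{ij}^\star\Bb_{ij}^{\star\top}$, which is a \emph{different} matrix (the transpose sits on the other side), and the two are not isospectral in general. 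The paper never forms this $k\times k$ block-Gram; it bounds $\bignorm{\ptBik(\Ib_{|\pNi|+1}\otimes\Delta)}_F$ directly via $\norm{\Cb\Db}_F\le\norm{\Cb}_2\norm{\Db}_F$, with $\norm{\Cb}_2=\norm{\ptBik}_2$ since zero-padding preserves singular values.

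Second, the factor $(s^\star+1)$ does not arise from ``the localization step''—localizing only deletes zero blocks and introduces no multiplicative constant. In the paper's argument the factor enters through $\norm{\Ib_{|\pNi|+1}\otimes\Delta}_F^{2}=(|\pNi|+1)\norm{\Delta}_F^{2}$, after which $|\pNi|+1$ is replaced by $s^\star+1$. Your route through the $k\times k$ Gram collapses the Kronecker block count before it can appear, so you cannot recover it by ``tracking'' at the end; to land on the stated constants you must either switch to the paper's operator-norm bound on $\norm{\Cb\Db}_F$ or separately compare the extreme eigenvalues of your Gram matrix to $(s^\star+1)\sinmaxtwo{\ptBik}$ and $(s^\star+1)\sinmintwo{\ptBik}$.
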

\begin{proof}[Proof of Lemma~\ref{lemma:f_scl}]
To show strong convexity of $f(\setAk,\setpBk)$, 
we verify the following condition. For any $\Amk'\in\RR^{k\times k_m}$, we have
\begin{align*}
    f(\Ab',\setpBk)\geq f(\setAk,\setpBk) + \dotp{\nabla_{\Amk }f(\setAk,\setpBk)}{\Amk'-\Amk }+ \frac{\mu}{2}\norm{{\Amk'-\Amk }}_F^2,
\end{align*}
with specific $\mu>0$.
We first recall the definition of $f(\setAk,\setpBk)$ in~\eqref{eq:main_ver2}:
\[
f(\setAk,\setpBk) = \frac{1}{2N} \sum_{i=1}^p \sum_{m=1}^M 
\norm{\ptBik(\Ib_p\otimes\Amk )\Yb_m}_F^2.
\]
Then, we can write
\begin{multline*}
    f(\Ab',\setpBk)- f(\setAk,\setpBk) \\
    = \sum_{i=1}^p \dotp{\ptBikT \ptBik (\Ib_p\otimes \Amk )\sCov}{\Ib_p\otimes(\Amk'-\Amk )}
     + \frac{1}{2N}\sum_{i=1}^p\norm{\ptBik
     \bigAd{p}{\Amk'-\Amk}
     \Yb^m}_F^2.
\end{multline*}
Then, apply~\eqref{eq:dpot_gradAm}, the above equation is equivalent as
\begin{multline}
    f(\Ab',\setpBk)- f(\setAk,\setpBk)\\ 
    = \dotp{\nabla_{\Amk }f(\setAk,\setpBk)}{\Amk'-\Amk } 
    + \frac{1}{2N}\sum_{i=1}^p\norm{\ptBik\bigAd{p}{\Amk'-\Amk }\Yb^m}_F^2.\label{eq:mid1}
\end{multline}
Hence, it suffices to find the lower bound of the second of~\eqref{eq:mid1} 
in terms of $\norm{\Amk'-\Amk }_F^2$. 
Recall the definition of $\pNi$ in ~\eqref{eq:def_pNi2}, and therefore we have $\ptBikfree{\pNic\noi}={\bf 0}$. 
Consequently, we can write
\begin{align*}
\frac{1}{2N} \sum_{i=1}^p \norm{\ptBik \bigAd{p}{\Amk'-\Amk} &\Yb^m}_F^2 \\
 &= \frac{1}{2N} \sum_{i=1}^p 
\norm{\ptBikfree{\pNi\cup\{i\}} \bigAd{|\pNi|+1}{\Amk'-\Amk} \Yb^m_{\pNi\cup \{i\}}}_F^2\\
&\geq \frac{s^\star+1}{2} \sum_{i=1}^p \sinmin{\sCovfree{\pNi\cup\{i\}}{\pNi\cup\{i\}}} 
\sinmintwo{\ptBik} \norm{\Amk'-\Amk}_F^2.
\end{align*}

Next, we want to verify the smoothness condition. We recall the definition that $f(\setAk,\setpBk)$ is $L$-smooth with respect to $\Amk $ if for any $\Amk'$, we have
\begin{align*}
    f(\Ab',\setpBk)\leq f(\setAk,\setpBk)+\dotp{\nabla_{\Amk }f(\setAk,\setpBk)}{\Amk'-\Amk }
    + \frac{L}{2}\norm{{\Amk'-\Amk }}_F^2.
\end{align*}
To find a valid $L$ of the objective function, it suffices to find the upper bound of the second term
in~\eqref{eq:mid1}. Then, using the same proof technique as we find $\mu$, we have
\begin{align*}
\frac{1}{2N} \sum_{i=1}^p \norm{\ptBik \bigAd{p}{\Amk'-\Amk}& \Yb^m}_F^2 \\
&=
    \frac{1}{2N} \sum_{i=1}^p 
    \norm{\ptBikfree{\pNi\cup\{i\}} \bigAd{|\pNi|+1}{\Amk'-\Amk} \Yb^m_{\pNi\cup\{i\}}}_F^2 \\
    &\leq \frac{s^\star+1}{2} \sum_{i=1}^p \norm{\tilde\Bb_ i^\star}_2^2
    \norm{ \sCovfree{\pNi\cup\{i\}}{\pNi\cup\{i\}}}_2 \norm{\Amk'-\Amk}_F^2.
\end{align*}
We hence complete the proof
\end{proof}

\begin{lemma}\label{lemma:strong_hn} Assume that Assumption~\ref{assumption:cov} holds and $N= O(\max_{\mseq}\kappa_m^2(k_ms^\star+\log M))$, where $\defcdnm$. Then, 
     $h_i(\setBk)$ is $\min_{\mseq}2^{-1}\nu_x\sinmintwo{\Amk}$-strongly convex with respect to $\Bik$ with probability at least $1-\max_{\mseq}\exp(-k_m s^\star)$.
\end{lemma}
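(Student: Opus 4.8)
The plan is to exploit that $h_i$ is a quadratic function of $\Bik$, so its Hessian does not depend on $\Bik$ and can be bounded below by a direct computation. Writing $\Wb^m:=(\Ib_{p-1}\otimes\Amk)\Ykmnfree{\noi}\in\RR^{k(p-1)\times N}$, we have $h_i(\Bik)=\frac{1}{2MN}\sum_{m=1}^M\norm{\Amk\Ykmnfree{i}-\Bik\Wb^m}_F^2$, and vectorizing through $\VEC(\Bik\Wb^m)=((\Wb^m)^\top\otimes\Ib_k)\VEC(\Bik)$ shows that the Hessian of $h_i$ in $\VEC(\Bik)$ is
\[
\nabla^2 h_i=\frac{1}{M}\sum_{m=1}^M\Big[(\Ib_{p-1}\otimes\Amk)\,\sCovfree{\noi}{\noi}\,(\Ib_{p-1}\otimes\Amk)^\top\Big]\otimes\Ib_k,
\]
where $\sCovfree{\noi}{\noi}=N^{-1}\Ykmnfree{\noi}(\Ykmnfree{\noi})^\top$ is the principal submatrix of $\sCov$ indexed by $\noi=\{1,\dots,p\}\setminus\{i\}$. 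Since the extra Kronecker factor $\Ib_k$ does not change the smallest eigenvalue, and $\sinmin{\cdot}$ is superadditive over symmetric matrices, $\sinmin{\nabla^2 h_i}\ge\frac1M\sum_{m=1}^M\sinmin{(\Ib_{p-1}\otimes\Amk)\,\sCovfree{\noi}{\noi}\,(\Ib_{p-1}\otimes\Amk)^\top}$, so it remains to bound each summand from below.

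For a summand I would use that $\sinmin{\Bb\Ab\Bb^\top}\ge\sinmin{\Ab}\,\sinmin{\Bb\Bb^\top}$ for positive semidefinite $\Ab$, together with $(\Ib_{p-1}\otimes\Amk)(\Ib_{p-1}\otimes\Amk)^\top=\Ib_{p-1}\otimes(\Amk\AmkT)$, whose smallest eigenvalue equals $\sinmintwo{\Amk}$ (positive since $k\le k_m$; in the application $\Amk=\Amk^{(0)}$ is close to $\pAmk$ by Theorem~\ref{theorem:initialcca}, and $\sinmin{\pAmk}\ge\nu_a^{1/2}>0$ by Assumption~\ref{assumption:AB}). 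This gives each summand at least $\sinmin{\sCovfree{\noi}{\noi}}\,\sinmintwo{\Amk}$, so it suffices to certify $\sinmin{\sCovfree{\noi}{\noi}}\ge\nu_x/2$ on the directions that are actually exercised. Since Algorithm~\ref{alg:initializationB} keeps both its iterates and $\pBik$ inside $\Kcal_B(s,\alpha)$, every increment $\Bik-\pBik$ that appears is supported on at most $(1+\vartheta_1)s^\star$ node-blocks, so I only need the \emph{restricted} bound $\sinmin{\sCovfree{\Scal}{\Scal}}\ge\tfrac12\sinmin{\Sigmab^m_{\Scal,\Scal}}$ uniformly over node sets $\Scal\subseteq\noi$ with $|\Scal|\le(1+\vartheta_1)s^\star$. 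This is exactly the sample-covariance concentration over small neighborhoods already invoked in the proofs of Lemma~\ref{lemma:I_m1} and Lemma~\ref{lemma:scsm_coef} (Lemma~\ref{lem:covariance:concentration:neighborhoods}), which holds with probability at least $1-\exp(-k_m s^\star)$ once $N=O(\kappa_m^2(k_m s^\star+\log M))$. Finally, Cauchy interlacing gives $\sinmin{\Sigmab^m_{\Scal,\Scal}}\ge\sinmin{\Sigmab^m}\ge\nu_x$ by Assumption~\ref{assumption:cov}, and a union bound over $m=1,\dots,M$ (the source of the $\log M$ term) yields the claimed modulus $\min_{\mseq}\tfrac12\nu_x\sinmintwo{\Amk}$ with probability at least $1-\max_{\mseq}\exp(-k_m s^\star)$.

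The main obstacle is the bookkeeping around the restricted character of the statement: $h_i$ is not globally strongly convex — its full Hessian is a $(p-1)k_m$-dimensional object that cannot concentrate at a sample size $N$ of order $\kappa_m^2 k_m s^\star$ — so the claim must be read, and applied, as restricted strong convexity over the cone of $O(s^\star)$-node-sparse directions. This is precisely what the projected-gradient analysis consumes in Lemma~\ref{lemma:Z_update}: the modulus there multiplies $\VEC(\Bik-\pBik)$, which is supported on $\bar\Scal_i$, so only the restriction of $\nabla^2 h_i$ to that support is ever used. Obtaining the uniform restricted-minimum-eigenvalue concentration at the stated sample complexity and exponent is the technical heart; everything else is the Kronecker/positive-semidefinite computation above together with Assumption~\ref{assumption:cov} and a union bound over modalities.
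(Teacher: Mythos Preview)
Your route mirrors the paper's: expand $h_i$ quadratically in $\Bik$ and lower–bound the second–order term $\tfrac{1}{2MN}\sum_m\|\Delta(\Ib_{p-1}\otimes\Amk)\Ykmnfree{\noi}\|_F^2$ via concentration of the sample covariance on small node sets. Where you differ is in care. The paper's proof writes this term as a \emph{sum} of the $\pNi$– and $\pNic\setminus\{i\}$–contributions and drops the latter, but that decomposition is not an identity (the Frobenius cross term is missing), and even if granted it would only yield curvature on directions supported on the fixed set $\pNi$, which need not contain $\supp(\Bik-\pBik)$ for the iterates of Algorithm~\ref{alg:initializationB}. Your reading—that the conclusion must be \emph{restricted} strong convexity over $O(s^\star)$–node–sparse directions, and that this is exactly what the contraction step in Lemma~\ref{lemma:Z_update} actually uses—repairs both issues and is the right formulation. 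One bookkeeping point: since $\bar\Scal_i$ is data–dependent, making the restricted bound uniform over all candidate supports requires the union in Lemma~\ref{lem:covariance:concentration:neighborhoods} to range over $\binom{p-1}{(1+\vartheta_1)s^\star}$ subsets rather than $p$ fixed ones; this adds an $s^\star\log p$ term to the sample-size constant not present in the lemma's hypothesis, but it is a minor strengthening compatible with the $N$ assumed downstream in Lemma~\ref{lemma:initial_stepB}. (Also, the superadditivity of $\sigma_{\min}$ you invoke holds for positive semidefinite summands, which is the situation here.)
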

\begin{proof}[Proof of Lemma~\ref{lemma:strong_hn}]
We can write $h_i(\Bik+\Delta)$ as
\[
    h_i(\Bik+\Delta) 
    = 
        h_i(\Bik) 
        +
        2\dotp{\nabla_{\Bik}h_i(\Bik)}{\Delta}
        +
        \frac{1}{2N}\sum_{m=1}^M\frac{1}{M}\norm{
            \Delta
            \bigAs{p-1}{\Amk}
            \Ykmnfree{\noi}}_F^2.
\]
We can lower bound the last term of the above display as
\begin{align*}
    \min_{\mseq}\frac{1}{2N}&\norm{
        \Delta
        \bigAs{p-1}{\Amk}
        \Ykmnfree{\noi}}_F^2.\\
    &=     
        \min_{\mseq}\frac{1}{2N}\norm{
        \Delta
        \bigAs{|\pNi|}{\Amk}
        \Ykmnfree{\pNi}}_F^2
        +
        \frac{1}{2N}\norm{
            \Delta
            \bigAs{|\pNic|-1}{\Amk}
            \Ykmnfree{\pNic\backslash\{i\}}
        }_F^2\\
        &\geq\min_{\mseq}
        \frac{1}{2N}\norm{
        \Delta
        \bigAs{|\pNi|}{\Amk}
        \Ykmnfree{\pNi}}_F^2\\
        &\geq\min_{\mseq}
        \frac{1}{2}
        \sinmintwo{\Amk}
        \sinmin{\sCovfree{\pNi}{\pNi}}\norm{\Delta}_F^2\\
        \intertext{Under the condition $N= O(\max_{\mseq}\kappa_m^2(k_ms^\star+\log M))$ and Assumption~\ref{assumption:cov}, we can apply Lemma~\ref{lemma:ci} and obtain that $\min_{\mseq}\sinmin{\sCovfree{\pNi}{\pNi}}\geq\nu_x/2$ with probability at least $1-\max_{\mseq}\exp(-k_ms^\star)$. Then, the above display is upper bounded as}
        &\geq\min_{\mseq}
        \frac{\nu_x}{2}
        \sinmintwo{\Amk}
        \norm{\Delta}_F^2,
\end{align*}
with probability at least $1-\max_{\mseq}\exp(-k_ms^\star)$.
Here we complete the proof.
\end{proof}
\section{Projection coefficients}
In this section, we show the expansion coefficients resulted from projection to non-convex sets. Lemma~\ref{lemma:proj_Km} discusses coefficient of the general condition of projection to $\Kcal_m(\tau_1,\tau_2)$ and Lemma~\ref{lemma:disperse_coef}--\ref{lemma:groupsparse} discuss the coefficients of projection to $\Kcal_B(s,\alpha)$.
\begin{lemma}\label{lemma:proj_Km}
Let $\Kcal(\tau_1,\tau_2)=\{\Xb\in\RR^{p\times d}:\tau_1\leq\norm{\Xb_{i,\cdot}}_2\leq\tau_2,i=1,\ldots, p\}$ and $\Pcal_\tau$ be the projection operator that projects $\Xb$ to $\Kcal(\tau_1,\tau_2)$. Then, for any $\Yb\in\Kcal(\tau_1,\tau_2)$, we have
\[
\norm{\Pcal_\tau(\Xb)-\Yb}_F \leq 2\norm{\Xb -\Yb}_F.
\]
\end{lemma}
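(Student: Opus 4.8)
The plan is to reduce the matrix inequality to a one-line fact about nearest-point projections onto a single (non-convex) annulus, exploiting that both the feasible set and the Frobenius norm decouple across rows. First I would record the structural observation that $\Kcal(\tau_1,\tau_2)$ is a Cartesian product over the $p$ rows: $\Wb\in\Kcal(\tau_1,\tau_2)$ if and only if $\Wb_{i,\cdot}\in A$ for every $i$, where $A:=\{\vb\in\RR^d:\tau_1\leq\norm{\vb}_2\leq\tau_2\}$ is closed (indeed compact, hence nonempty whenever $\tau_1\leq\tau_2$). Because $\norm{\Wb-\Xb}_F^2=\sum_{i=1}^p\norm{\Wb_{i,\cdot}-\Xb_{i,\cdot}}_2^2$ and the row constraints do not interact, any minimizer defining $\Pcal_\tau(\Xb)$ satisfies $(\Pcal_\tau(\Xb))_{i,\cdot}\in\argmin_{\vb\in A}\norm{\vb-\Xb_{i,\cdot}}_2$ for each $i$; I would note in passing that this minimizer need not be unique (for instance, a zero row of $\Xb$ when $\tau_1>0$), but the argument applies verbatim to any such choice.

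Second I would establish the scalar bound: for any $\xb\in\RR^d$, any $\yb\in A$, and any nearest point $P_A(\xb)\in\argmin_{\vb\in A}\norm{\vb-\xb}_2$, one has $\norm{P_A(\xb)-\yb}_2\leq 2\norm{\xb-\yb}_2$. This is immediate: since $\yb\in A$, the definition of a nearest point gives $\norm{P_A(\xb)-\xb}_2\leq\norm{\yb-\xb}_2$ — a step that uses only optimality and requires no convexity of $A$ — and then the triangle inequality gives $\norm{P_A(\xb)-\yb}_2\leq\norm{P_A(\xb)-\xb}_2+\norm{\xb-\yb}_2\leq 2\norm{\xb-\yb}_2$.

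Finally I would reassemble the rows: applying the scalar bound with $\xb=\Xb_{i,\cdot}$, $\yb=\Yb_{i,\cdot}$ (which lies in $A$ because $\Yb\in\Kcal(\tau_1,\tau_2)$), and $P_A(\xb)=(\Pcal_\tau(\Xb))_{i,\cdot}$, then squaring and summing over $i=1,\dots,p$, yields $\norm{\Pcal_\tau(\Xb)-\Yb}_F^2\leq 4\norm{\Xb-\Yb}_F^2$, and taking square roots finishes the proof. There is no genuine obstacle here — the proof is essentially the triangle inequality — so the only items worth flagging are the careful justification of the rowwise decomposition of $\Pcal_\tau$ (including the harmless non-uniqueness just mentioned) and the remark that the constant $2$ reflects the non-convexity of the annulus and cannot be strengthened to nonexpansiveness.
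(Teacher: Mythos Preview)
Your proposal is correct and follows essentially the same approach as the paper: decouple into rows, establish the vector-level bound on the annulus, then square, sum, and take roots. Your vector-level argument is in fact a slight streamlining of the paper's Lemma~\ref{lemma:proj_Km_vec}, which splits into three cases (outside the outer sphere, inside the annulus, inside the inner sphere) and obtains the sharper constant $1$ in the first two before taking the worst case; you observe that the nearest-point inequality $\norm{P_A(\xb)-\xb}_2\le\norm{\yb-\xb}_2$ plus the triangle inequality already gives the factor $2$ uniformly, so the case split is unnecessary.
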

\begin{proof}
First, we consider that there is only single row in $\Xb$ and hence we can directly apply the result of Lemma~\ref{lemma:proj_Km_vec}.
The generalization to $p$ rows, with $p>1$, follows similarly:
\begin{align*}
    \norm{\Pcal_\tau(\Xb)-\Yb}_F^2
    =\sum_{j=1}^p\norm{\tilde\Pcal_\tau(\xb_j)-\yb_j}_2^2\leq 4\sum_{j=1}^p\norm{\xb_j-\yb_j}_2^2=4\norm{\Xb-\Yb}_F^2,
\end{align*}
where the last inequality follows by Lemma~\ref{lemma:proj_Km_vec}.
\end{proof}
\begin{lemma}\label{lemma:proj_Km_vec}
Let $\tilde\Kcal(\tau_1,\tau_2)=\{\xb\in\RR^{d}:\tau_1\leq\norm{\xb}_2\leq\tau_2\}$ and $\tilde\Pcal_\tau$ be the projection operator that projects $\xb$ to $\tilde\Kcal(\tau_1,\tau_2)$. Then, for any $\yb\in\tilde\Kcal(\tau_1,\tau_2)$, we have
\[
\norm{\tilde\Pcal_\tau(\xb)-\yb}_F \leq 2\norm{\xb -\yb}_F.
\]
\end{lemma}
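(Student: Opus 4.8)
The plan is to use only the defining variational property of the metric projection together with the triangle inequality; in particular, no explicit formula for $\tilde\Pcal_\tau$ is needed. First, observe that $\tilde\Kcal(\tau_1,\tau_2)=\{\xb\in\RR^d:\tau_1\le\norm{\xb}_2\le\tau_2\}$ is a nonempty closed subset of $\RR^d$, so for every $\xb$ the distance $\inf_{\zb\in\tilde\Kcal(\tau_1,\tau_2)}\norm{\xb-\zb}_2$ is attained, and $\tilde\Pcal_\tau(\xb)$ denotes such a minimizer (any one of them if it is not unique). By the definition of the projection and the hypothesis $\yb\in\tilde\Kcal(\tau_1,\tau_2)$, the vector $\yb$ is a feasible competitor, so
\[
\norm{\xb-\tilde\Pcal_\tau(\xb)}_2\le\norm{\xb-\yb}_2 .
\]
Then the triangle inequality gives
\[
\norm{\tilde\Pcal_\tau(\xb)-\yb}_2\le\norm{\tilde\Pcal_\tau(\xb)-\xb}_2+\norm{\xb-\yb}_2\le 2\norm{\xb-\yb}_2 ,
\]
which is the claim (the statement is written with $\norm{\cdot}_F$, which for a vector coincides with the Euclidean norm).

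It is worth recording the mechanism behind the constant $2$. Because the annulus $\tilde\Kcal(\tau_1,\tau_2)$ has a hole it is not convex, so the usual nonexpansiveness of projections onto convex sets (which would give constant $1$) is unavailable; the factor $2$ is exactly the price one pays for projecting onto an arbitrary closed set. Concretely, one can check that $\tilde\Pcal_\tau(\xb)=\xb$ when $\norm{\xb}_2\in[\tau_1,\tau_2]$, $\tilde\Pcal_\tau(\xb)=\tau_2\,\xb/\norm{\xb}_2$ when $\norm{\xb}_2>\tau_2$, and $\tilde\Pcal_\tau(\xb)=\tau_1\,\xb/\norm{\xb}_2$ when $0<\norm{\xb}_2<\tau_1$ (with an arbitrary unit direction if $\xb=0$); in the first two cases one even obtains the sharper bound with constant $1$, the outer constraint $\norm{\xb}_2\le\tau_2$ being convex, and the inner case is where the non-convexity bites. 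The short argument above sidesteps this case analysis entirely.

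There is no real obstacle here — the statement is a one-line consequence of the variational characterization of projections — so the only points to be careful about are the existence of a measurable selection $\tilde\Pcal_\tau$, which is immediate in finite dimensions, and the trivial identification of $\norm{\cdot}_F$ with $\norm{\cdot}_2$ for vectors. This lemma is then invoked in Lemma~\ref{lemma:proj_Km} by applying the bound row by row and summing squares, which transfers the factor $2$ to the projection onto $\Kcal(\tau_1,\tau_2)$, and ultimately justifies the expansion constant used for $\Pcal_{m,\tau}$ in the analysis of Algorithm~\ref{alg:update}.
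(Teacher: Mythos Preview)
Your proof is correct. The core idea---combining the variational inequality $\norm{\xb-\tilde\Pcal_\tau(\xb)}_2\le\norm{\xb-\yb}_2$ with the triangle inequality---is exactly what the paper uses in its Case~3 (the inner boundary $\norm{\xb}_2<\tau_1$). The paper, however, first splits into three cases: outside the outer sphere it invokes convexity of the ball to get constant~$1$; inside the annulus the projection is the identity; only in the remaining case does it use your argument, and then takes the maximum constant~$2$. Your observation that the two-line variational argument already covers all three cases uniformly is a genuine simplification: since the paper ultimately reports the constant~$2$ anyway, the case analysis buys nothing, and your route is cleaner. The remark you add about where the sharper constant~$1$ is available (and why non-convexity of the inner boundary forces~$2$) matches the paper's case-by-case findings and is a nice piece of commentary.
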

\begin{proof}[Proof of Lemma~\ref{lemma:proj_Km_vec}]
We consider the following three cases:\\
\emph{Case 1}:
If $\norm{\xb}_2\geq \tau_2$, then projection to $\tilde\Kcal(\tau_1,\tau_2)$ is equivalent as the projection to a ball with radius $\tau_2$, which is a convex set. Then, following the property that projection to the convex set is contraction, we have $\|\tilde\Pcal_\tau(\xb)-\yb\|_F\leq \norm{\xb-\yb}_F$. \\
\emph{Case 2}: If $\tau_1\leq\norm{\xb}_2\leq\tau_2$, then it is clear that $\tilde\Pcal_\tau(\xb)=\xb$ and hence $\|\tilde\Pcal_\tau(\xb)-\yb\|_F=\norm{\xb-\yb}_F$.\\
\emph{Case 3}: If $\norm{\xb}_2\leq\tau_1$, then define $\hat{\xb}=\tilde\Pcal_\tau(\xb)$ we can apply triangle inequality and obtain 
\begin{align*}
    \norm{\hat\xb-\yb}_2
    \leq
    \norm{\xb - \yb}_2
    +
    \norm{\hat\xb - \xb}_2
    \leq 2\norm{\xb-\yb}_2,
\end{align*}
where the last inequality follows by the fact that $\norm{\hat\xb - \xb}_2\leq \norm{\xb-\yb}_2$. Therefore, we have $\norm{\tilde\Pcal_\tau(\xb)-\yb}_F\leq 2\norm{\xb - \yb}_F$. 

Taking the maximum coefficient of three cases, we complete the proof.
\end{proof}

\begin{lemma}\label{lemma:disperse_coef}[Lemma $B.3$ in~\citep{zhang2018unified}] Let $\Bb^\star\in\RR^{k\times k}$ and suppose that there are at most $\alpha^\star$-fraction of non-zero entries per row and per column of $\Bb^\star$. Then, for any $\Bb$ and $1\geq\alpha\geq\alpha^\star$  and $\Hcal_{\alpha}(\cdot)$ be the hard thresholding operator defined in Section~\ref{sec:algo}, we have
    \[
    \norm{\Hcal_{\alpha}(\Bb)-\Bb^\star}_F\leq\rbr{1+\sqrt{\frac{2\alpha^\star}{\alpha-\alpha^\star}}}\norm{\Bb-\Bb^\star}_F.    
    \]
\end{lemma}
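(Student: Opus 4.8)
The plan is to treat $\Hcal_\alpha$ as an approximate (non-convex) projection and to control the extra factor it introduces by a counting argument. Write $\hat\Bb = \Hcal_\alpha(\Bb)$ and let $T = \supp(\hat\Bb)$ and $A = \supp(\Bb^\star)$, so that $\hat\Bb$ agrees with $\Bb$ on $T$ and vanishes off $T$. First I would restrict to the index set $T \cup A$: since $\hat\Bb - \Bb^\star$ is supported on $T \cup A$, writing $\hat\Bb - \Bb^\star = (\hat\Bb - \Bb)_{T\cup A} + (\Bb - \Bb^\star)_{T\cup A}$ and applying the triangle inequality gives $\norm{\hat\Bb - \Bb^\star}_F \le \norm{(\hat\Bb - \Bb)_{T\cup A}}_F + \norm{(\Bb - \Bb^\star)_{T\cup A}}_F$, where the second term is at most $\norm{\Bb - \Bb^\star}_F$. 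On $T$ we have $\hat\Bb = \Bb$, while off $T$ we have $\hat\Bb = 0$, so $\norm{(\hat\Bb - \Bb)_{T\cup A}}_F^2 = \sum_{(u,v)\in A\setminus T} \Bb_{uv}^2$; denote $D = A \setminus T$, the set of support entries of $\Bb^\star$ that are thresholded away. Thus it suffices to prove $\sum_{(u,v)\in D}\Bb_{uv}^2 \le \tfrac{2\alpha^\star}{\alpha-\alpha^\star}\norm{\Bb-\Bb^\star}_F^2$, after which the claimed constant $1 + \sqrt{2\alpha^\star/(\alpha-\alpha^\star)}$ follows at once.

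For this key estimate I would use a charging argument exploiting that $\Bb^\star$ has at most $\alpha^\star k$ nonzeros per row and per column while $\Hcal_\alpha$ keeps the $\alpha k$ largest entries per row and per column. Because an entry is kept only if it is among the $\alpha k$ largest in \emph{both} its row and its column, each dropped entry of $D$ fails the row condition or the column condition, so I split $D = D_r \cup D_c$ accordingly. Fix a row $u$ and let $\tau_u$ be the magnitude of the $\alpha k$-th largest entry of that row. Every entry of $D_r$ in row $u$ has magnitude at most $\tau_u$, and there are at most $\alpha^\star k$ of them; meanwhile at least $\alpha k$ entries of the row have magnitude at least $\tau_u$, of which at most $\alpha^\star k$ lie in $A$, leaving at least $(\alpha - \alpha^\star)k$ off-support entries of magnitude $\ge \tau_u$. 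Comparing these two sums in row $u$ yields $\sum_{(u,v)\in D_r} \Bb_{uv}^2 \le \alpha^\star k\,\tau_u^2 \le \tfrac{\alpha^\star}{\alpha-\alpha^\star}\sum_{v':\,(u,v')\notin A}\Bb_{uv'}^2$. Summing over rows and using $\Bb^\star_{uv'}=0$ off $A$ gives $\sum_{D_r}\Bb_{uv}^2 \le \tfrac{\alpha^\star}{\alpha-\alpha^\star}\norm{\Bb-\Bb^\star}_F^2$, and the identical argument applied to columns gives the same bound for $D_c$. Adding the two contributions produces the factor $2\alpha^\star/(\alpha - \alpha^\star)$.

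The principal difficulty is this combinatorial step, and in particular the ``AND'' structure of the thresholding rule: an entry may be dropped for either the row or the column reason, so the split $D = D_r \cup D_c$ (with possible overlap) is exactly what forces the factor of two in the constant, and one must verify that the off-support entries used to lower-bound the row/column mass are genuinely zero in $\Bb^\star$ so that they can legitimately be charged to $\norm{\Bb - \Bb^\star}_F^2$. Minor care is also needed for ties at the threshold $\tau_u$ (resolved by any fixed tie-breaking rule consistent with the definition of $\Hcal_\alpha$) and for the integer rounding of $\alpha k$ and $\alpha^\star k$; neither affects the stated constant. The remaining steps---the triangle inequality and the final assembly into $1+\sqrt{2\alpha^\star/(\alpha-\alpha^\star)}$---are routine. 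Since the result coincides with Lemma~B.3 of \citet{zhang2018unified}, one may alternatively invoke it directly, but the argument above makes the proof self-contained.
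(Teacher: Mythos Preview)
The paper does not give its own proof of this lemma; it simply cites Lemma~B.3 of \citet{zhang2018unified}. Your self-contained argument is correct and, in fact, mirrors exactly the template the paper \emph{does} spell out for the companion result, Lemma~\ref{lemma:groupsparse} (group-sparse hard thresholding): restrict to $T\cup A$, split via the triangle inequality, and bound $\norm{[\hat\Bb-\Bb]_{T\cup A}}_F^2=\norm{[\Bb]_{A\setminus T}}_F^2$ by a ratio of cardinalities times $\norm{[\Bb-\Bb^\star]}_F^2$. The only new ingredient relative to that proof is precisely the one you isolate: because $\Hcal_\alpha$ retains an entry only if it is among the top $\alpha k$ in \emph{both} its row and its column, the dropped set $D=A\setminus T$ must be covered by $D_r\cup D_c$, and the charging argument must be run once along rows and once along columns, producing the factor~$2$ in the constant. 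Your handling of this step---bounding each $D_r$-entry in row $u$ by $\tau_u^2$, counting at most $\alpha^\star k$ of them, and charging against the at least $(\alpha-\alpha^\star)k$ off-support entries of magnitude $\ge\tau_u$ in that row---is the standard and correct way to do it, and your remarks on ties and integer rounding are appropriate.
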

\begin{lemma}[Group-sparse hard thresholding]\label{lemma:groupsparse} Let $s,s^\star$ be some integers such that $s>s^\star$, $\Bb,\Bb^\star\in\RR^{k\times pk}$ and $\Bb^\star$ is $s^\star$ group-sparse, i.e., $\norm{\Bb^\star}_{r(k,0)}\leq s^\star$. Recall $\Tcal_s(\cdot)$ defined in Section~\ref{sec:algo}, we have
\[
\|\Tcal_{s}(\Bb)-\Bb^\star\|_F\leq \rbr{1+\sqrt{\frac{s^\star}{s-s^\star}}}\norm{\Bb-\Bb^\star}_F.
\]
\end{lemma}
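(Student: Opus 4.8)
\emph{Proof proposal.} The statement is a block-structured version of the classical $\ell_0$ hard-thresholding lemma, so the plan is to transcribe that argument blockwise. Partition $\Bb$ and $\Bb^\star$ into their $p$ column blocks $\Bb_{(1)},\dots,\Bb_{(p)},\Bb^\star_{(1)},\dots,\Bb^\star_{(p)}\in\RR^{k\times k}$; write $S^\star$ for the set of indices of the nonzero blocks of $\Bb^\star$, so that $|S^\star|=\norm{\Bb^\star}_{r(k,0)}\le s^\star$, and write $\hat S$ for the set of $s$ block indices kept by $\Tcal_s(\Bb)$. If $\Bb$ has fewer than $s$ nonzero blocks then $\Tcal_s(\Bb)=\Bb$ and the bound is trivial, so assume $|\hat S|=s$; ties in the selection may be broken arbitrarily, and the argument below goes through for any valid top-$s$ set $\hat S$. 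Set $\Delta:=\Bb-\Bb^\star$ and $T:=\hat S\cup S^\star$, and denote by $\Bb_{[A]}$ (resp.\ $\Delta_{[A]}$) the matrix obtained from $\Bb$ (resp.\ $\Delta$) by zeroing out all blocks outside $A$.

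The first step is the error decomposition. Since $\Tcal_s(\Bb)$ is supported on $\hat S$ and $\Bb^\star$ on $S^\star\subseteq T$, the triangle inequality through $\Bb_{[T]}$ gives $\norm{\Tcal_s(\Bb)-\Bb^\star}_F\le\norm{\Tcal_s(\Bb)-\Bb_{[T]}}_F+\norm{\Bb_{[T]}-\Bb^\star}_F$. The second term is exactly $\norm{\Delta_{[T]}}_F\le\norm{\Delta}_F$, which contributes the ``$1$'' in the constant. For the first term, $\Tcal_s(\Bb)$ and $\Bb_{[T]}$ coincide on $\hat S$ and differ only on $T\setminus\hat S=S^\star\setminus\hat S$, so $\norm{\Tcal_s(\Bb)-\Bb_{[T]}}_F=\norm{\Bb_{[S^\star\setminus\hat S]}}_F$. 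It then remains to show $\norm{\Bb_{[S^\star\setminus\hat S]}}_F\le\sqrt{s^\star/(s-s^\star)}\,\norm{\Delta}_F$, which is the heart of the proof.

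For that bound I would use two elementary facts: by the top-$s$ selection, every block $j\notin\hat S$ satisfies $\norm{\Bb_{(j)}}_F\le\norm{\Bb_{(i)}}_F$ for every $i\in\hat S$; and $|\hat S\setminus S^\star|\ge s-s^\star$, with $\Bb^\star_{(i)}=0$ and hence $\norm{\Bb_{(i)}}_F=\norm{\Delta_{(i)}}_F$ for $i\in\hat S\setminus S^\star$. Combining, for each $j\in S^\star\setminus\hat S$ we get $\norm{\Bb_{(j)}}_F^2\le\min_{i\in\hat S\setminus S^\star}\norm{\Delta_{(i)}}_F^2\le\frac{1}{s-s^\star}\sum_{i\in\hat S\setminus S^\star}\norm{\Delta_{(i)}}_F^2\le\frac{1}{s-s^\star}\norm{\Delta}_F^2$; summing over the at most $s^\star$ indices $j\in S^\star\setminus\hat S$ yields $\norm{\Bb_{[S^\star\setminus\hat S]}}_F^2\le\frac{s^\star}{s-s^\star}\norm{\Delta}_F^2$. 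Assembling the two pieces gives $\norm{\Tcal_s(\Bb)-\Bb^\star}_F\le\bigl(1+\sqrt{s^\star/(s-s^\star)}\bigr)\norm{\Bb-\Bb^\star}_F$, as claimed. The only real obstacle here is bookkeeping — pinning down the supports so the split of $\Tcal_s(\Bb)-\Bb^\star$ over $\hat S$ versus $S^\star\setminus\hat S$ is airtight, and handling tie-breaking and the degenerate case — rather than any substantive difficulty; the structure parallels the hard-thresholding bound for $\Hcal_\alpha$ in Lemma~\ref{lemma:disperse_coef}.
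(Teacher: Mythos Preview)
Your proof is correct and follows essentially the same route as the paper's: the same decomposition over $\hat S\cup S^\star$, the same identification of the residual with $\norm{\Bb_{[S^\star\setminus\hat S]}}_F$, and the same averaging argument over $\hat S\setminus S^\star$ to get the $\sqrt{s^\star/(s-s^\star)}$ factor. The only cosmetic difference is that the paper keeps the intermediate bound in terms of $\norm{[\Bb-\Bb^\star]_I}_F$ before relaxing to $\norm{\Bb-\Bb^\star}_F$, whereas you pass to the full $\norm{\Delta}_F$ one step earlier.
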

\begin{proof} First, let $\hat{\Bb}=\Tcal_s(\Bb)$, $\supp(\hat{\Bb})=\hat S$, $\supp(\Bb^\star)=S^\star$ and $I=\hat S\cup S^\star$. We have
\[
\norm{\hat{\Bb}- \Bb^\star}_F = \norm{[\hat\Bb-\Bb^\star]_I}_F\leq \norm{[\hat\Bb-\Bb]_I}_F+\norm{[\Bb-\Bb^\star]_I}_F.
\]
Let $U=S^\star/\hat S$ and $V = \hat S/S^\star$, then we have
\[
\norm{[\hat{\Bb}-\Bb]_I}_F^2=\norm{[\Bb]_U}_F^2\leq\frac{s^\star}{s-s^\star}\norm{[\Bb]_V}_F^2\leq \frac{s^\star}{s-s^\star}\norm{[\Bb-\Bb^\star]_I}_F^2.
\]
Therefore, we have
\[
\norm{\hat{\Bb}-\Bb^\star}_F\leq\rbr{1+\sqrt{\frac{s^\star}{s-s^\star}}}\norm{[\Bb-\Bb^\star]_I}_F\leq \rbr{1+\sqrt{\frac{s^\star}{s-s^\star}}}\norm{\Bb-\Bb^\star}_F.
\]
Wwe complete the analysis.
\end{proof}
\section{Auxiliary Lemmas}
This section introduces several useful properties that are used in analyses in previous sections.
\begin{lemma}\label{lemma:ci} 
 Assume that $(\xb^{(1)},\yb^{(1)}),\ldots,(\xb^{(N)},\yb^{(N)})$ are $N$ independent realizations of $(\xb,\yb)\in(\RR^p,\RR^{d})$, which are distributed as $\xb\sim\Ncal({\bf 0},\Sigmab_x)$ and $\yb\sim\Ncal({\bf 0},\Sigmab_y)$. Let $\bar\xb=N^{-1}\sum_{n=1}^N\xb^{(n)}$,  $\bar\yb=N^{-1}\sum_{n=1}^N\yb^{(n)}$ and $\hat\Sigmab_{xy} = N^{-1}\sum_{n=1}^N(\xb^{(n)}-\bar\xb)(\yb^{(n)}-\bar\yb)^\top$ be the sample covariance.
 Given $\delta>0$, there exist an universal constant $C_0>0$ such that
\[
\PP\cbr{
    \frac{\norm{
        \hat\Sigmab_{xy}
        -
        \EE\hat\Sigmab_{xy}}_2
        }{
        \norm{\EE\hat\Sigmab_{xy}}_2
        }\geq C_0\rbr{\sqrt{\frac{p+d+\log(1/\delta)}{N}} + \frac{p+d+\log(1/\delta)}{N}} }\leq\delta.
\]
Suppose that there exist $\sigma_{p\wedge d}, \sigma_1>0$ such that $\sigma_{p\wedge d}\leq\sigma_{min}(\EE\Sigmab_{xy})\leq\sigma_{max}(\EE\Sigmab_{xy})\leq\sigma_1$. If $N=O(\kappa^2 (p+d))$ with $\kappa=\sigma_1/\sigma_{p\wedge d}$, then
\[
\PP\cbr{
    \frac{\sigma_{p\wedge d}}{2}
    \leq
    \sinmin{\hat\Sigmab_{xy}}
    \leq
    \sinmax{\hat\Sigmab_{xy}}
    \leq
    \frac{3\sigma_1}{2}}\geq 1-\exp(-p-d).
\]
\end{lemma}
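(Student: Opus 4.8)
Both parts follow from the standard operator–norm concentration of sample covariance matrices of Gaussian random vectors, together with a reduction of the cross–covariance case to the ordinary covariance case by stacking. The plan is to form $\zb = (\xb^\top,\yb^\top)^\top \in \RR^{p+d}$, which is centered Gaussian with covariance $\Sigmab_z = \begin{pmatrix}\Sigmab_x & \Sigmab_{xy}\\ \Sigmab_{xy}^\top & \Sigmab_y\end{pmatrix}$, and note that $\hat\Sigmab_{xy}$ is exactly the $p\times d$ off–diagonal block of the sample covariance $\hat\Sigmab_z = N^{-1}\sum_n (\zb^{(n)}-\bar\zb)(\zb^{(n)}-\bar\zb)^\top$ with $\bar\zb = (\bar\xb^\top,\bar\yb^\top)^\top$. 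Since the operator norm of a submatrix is dominated by that of the full matrix, $\norm{\hat\Sigmab_{xy}-\EE\hat\Sigmab_{xy}}_2 \leq \norm{\hat\Sigmab_z - \EE\hat\Sigmab_z}_2$ and $\norm{\EE\hat\Sigmab_{xy}}_2 \leq \norm{\EE\hat\Sigmab_z}_2$, so it suffices to control $\norm{\hat\Sigmab_z - \EE\hat\Sigmab_z}_2 / \norm{\EE\hat\Sigmab_z}_2$. In the uses of the lemma one has $\xb=\yb$, so $\Sigmab_z = \Sigmab_{xy}$ and the two normalizing quantities coincide.

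For the full–matrix bound I would invoke the classical estimate: for $N$ i.i.d.\ centered Gaussian vectors in $\RR^D$ with covariance $\Sigmab$, $\norm{\hat\Sigmab - \Sigmab}_2 \leq C\norm{\Sigmab}_2\big(\sqrt{(D+t)/N} + (D+t)/N\big)$ with probability at least $1 - 2e^{-t}$ (see, e.g., \citet{vershynin2018high}). A self–contained route is a covering argument: writing $\norm{\hat\Sigmab - \Sigmab}_2 = \sup_{u\in\mathbb{S}^{D-1}}|u^\top(\hat\Sigmab-\Sigmab)u|$, passing to a $1/4$–net $\Ncal$ of $\mathbb{S}^{D-1}$ with $|\Ncal|\leq 9^{D}$ so the supremum is at most $2\max_{u\in\Ncal}|u^\top(\hat\Sigmab-\Sigmab)u|$, observing that for fixed $u$ the quantity $u^\top\hat\Sigmab u = N^{-1}\sum_n (u^\top\zb^{(n)})^2$ is an average of i.i.d.\ sub–exponential variables whose $\psi_1$–norm is bounded by a constant multiple of $\norm{\Sigmab}_2$, applying Bernstein's inequality, and union–bounding over $\Ncal$ (the $9^D$ factor merges with $t$ into a $D+t$ term). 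The empirical–centering term satisfies $\norm{\bar\zb\bar\zb^\top}_2 = \norm{\bar\zb}_2^2 \leq C\norm{\Sigmab}_2(D+t)/N$ with probability $1-2e^{-t}$, so it only feeds the $(D+t)/N$ term already present, and $\EE\hat\Sigmab = \tfrac{N-1}{N}\Sigmab$ changes only constants. Taking $D = p+d$ and $t = \log(1/\delta)$ yields the first display.

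For the singular–value bound I would set $\delta = e^{-(p+d)}$, so $t = p+d$ and $\sqrt{(D+t)/N} = \sqrt{2(p+d)/N}$; under $N = O(\kappa^2(p+d))$ with a sufficiently large implied constant both $\sqrt{2(p+d)/N}$ and $2(p+d)/N$ are at most $(8C_0\kappa)^{-1}$, and since $\norm{\EE\hat\Sigmab_{xy}}_2 \leq \sigma_1$ this gives $\norm{\hat\Sigmab_{xy}-\EE\hat\Sigmab_{xy}}_2 \leq \tfrac14\sigma_1/\kappa = \tfrac14\sigma_{p\wedge d}$ on an event of probability at least $1-e^{-(p+d)}$. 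Adding the $O(\sigma_1/N)$ bias $\norm{\EE\hat\Sigmab_{xy}-\Sigmab_{xy}}_2 = N^{-1}\norm{\Sigmab_{xy}}_2$ (also $\leq \tfrac14\sigma_{p\wedge d}$ for $N$ large) yields $\norm{\hat\Sigmab_{xy}-\Sigmab_{xy}}_2 \leq \tfrac12\sigma_{p\wedge d}$. Then Weyl's inequality for singular values — $|\sigma_i(\hat\Sigmab_{xy}) - \sigma_i(\Sigmab_{xy})| \leq \norm{\hat\Sigmab_{xy}-\Sigmab_{xy}}_2$ (cf.\ \citet{bhatia2013matrix}) — combined with $\sigma_{p\wedge d} \leq \sigma_i(\Sigmab_{xy}) \leq \sigma_1$ gives $\tfrac12\sigma_{p\wedge d} \leq \sinmin{\hat\Sigmab_{xy}} \leq \sinmax{\hat\Sigmab_{xy}} \leq \tfrac32\sigma_1$, which is the second claim.

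The main obstacle is the first tail bound: obtaining the \emph{operator–norm} (not entrywise) deviation with the correct $\sqrt{(p+d)/N}$ scaling and an exponential–in–$(p+d)$ failure probability. Once that estimate is in hand — either quoted or derived via the net–plus–Bernstein argument sketched above — the remaining steps (passing from the full sample covariance to its cross block, absorbing the empirical–centering correction, and the one–line Weyl perturbation) are routine bookkeeping.
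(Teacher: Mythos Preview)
Your argument for the second display is exactly what the paper does: invoke the first bound with $\delta=e^{-(p+d)}$, deduce $\norm{\hat\Sigmab_{xy}-\EE\hat\Sigmab_{xy}}_2\le \sigma_{p\wedge d}/2$ under the stated sample-size condition, and apply Weyl's inequality to get the two-sided singular-value control. For the first display the paper does not argue at all---it simply cites Theorem~6.1.1 of \citet{tropp2015introduction}---whereas you supply a self-contained route via stacking and the net-plus-Bernstein covariance concentration (equivalently \citet{vershynin2018high}). That is a legitimate and somewhat more transparent alternative.

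One wrinkle in your reduction deserves flagging. You write that $\norm{\hat\Sigmab_{xy}-\EE\hat\Sigmab_{xy}}_2\le\norm{\hat\Sigmab_z-\EE\hat\Sigmab_z}_2$ and $\norm{\EE\hat\Sigmab_{xy}}_2\le\norm{\EE\hat\Sigmab_z}_2$, and conclude that it suffices to control the ratio for $\zb$. But the second inequality points the wrong way for a ratio bound: to pass from a bound of the form $\norm{\hat\Sigmab_z-\EE\hat\Sigmab_z}_2\le C\norm{\EE\hat\Sigmab_z}_2(\cdots)$ to the claimed bound with $\norm{\EE\hat\Sigmab_{xy}}_2$ in the denominator you would need $\norm{\EE\hat\Sigmab_z}_2\lesssim\norm{\EE\hat\Sigmab_{xy}}_2$, which fails in general (take $\xb,\yb$ independent). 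Your parenthetical that ``in the uses of the lemma one has $\xb=\yb$'' is the right escape hatch---in that case $\norm{\Sigmab_z}_2=2\norm{\Sigmab_x}_2=2\norm{\Sigmab_{xy}}_2$, a harmless constant, and indeed every invocation of the lemma in the paper either takes $\xb=\yb$ or only needs the numerator bound scaled by $\norm{\Sigmab_x}_2\vee\norm{\Sigmab_y}_2$ (cf.\ the proof of Lemma~\ref{lemma:Rxy_tail}). So this is a cosmetic gap rather than a substantive one, and the paper's own appeal to Tropp sidesteps it entirely; your approach has the advantage of being essentially self-contained.
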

\begin{proof}
The first part of the proof can be found in Theorem~6.1.1 in~\citet{tropp2015introduction}, so here we only present the second part of the proof. Note that when $N=O(\kappa^2(p+d))$, we have $\norm{\hat\Sigmab_{xy}-\EE\hat\Sigmab_{xy}}_{2}
\leq
\frac{\sigma_{p\wedge d}}{2}$ with probability at least $1-\exp\{-(p+d)\}$. By Weyl's inequality, we have
\[
\sinmax{\hat\Sigmab_{xy}}
\leq
\sinmax{\EE\hat\Sigmab_{xy}}
+
\norm{
    \hat\Sigmab_{xy}
    -
    \EE\hat\Sigmab_{xy}}_{2}
\leq\frac{3\sigma_1}{2},
\]
and 
\[
\sinmin{\hat\Sigmab_{xy}}
\geq
\sinmin{\EE\hat\Sigmab_{xy}}
-
\norm{\hat\Sigmab_{xy}-\EE\hat\Sigmab_{xy}}_{2}
\geq\frac{\sigma_{p\wedge d}}{2},
\]
and hence we complete the second part of the proof.
\end{proof}
 \begin{lemma}
\label{lem:covariance:concentration:neighborhoods}
Let $\Yb\in\RR^{pd\times N}$ be a matrix whose columns are i.i.d. random vectors drawn from a distribution. Define $\hat\Sigmab=\frac{1}{N}\Yb(\Yb)^\top\in\RR^{pd\times pd }$,
$\Sigmab=\EE[\hat\Sigmab]\in\RR^{pd\times pd }$.
Let $\Mscr_i\subset\{1,\ldots,p\}$, where $\abr{\Mscr_i}\leq r$. Let $\hat\Sigmab_{\Mscr_i \Mscr_i}
    \in
        \RR^{
            \abr{\Mscr_i}d
            \times 
            \abr{\Mscr_i}d
            }$
be the sub-matrix of $\hat\Sigmab$ whose rows and columns are corresponding to nodes in $\Mscr_i$. Define the event
\begin{align*}
{\Ecal_{i}}&(\Mscr_i)
    &=\cbr{ 
        \frac{\sinmin{\Sigmab_{ \Mscr_i\Mscr_i}}}{2}
        \leq
        \sinmin{\hat\Sigmab_{\Mscr_i \Mscr_i} } 
        \leq 
        \sinmax{\hat\Sigmab_{\Mscr_i \Mscr_i} } 
        \leq 
        \frac{3\sinmax{\Sigmab_{\Mscr_i\Mscr_i}}}{2}
    }.
\end{align*}
Let $\kappa = \sinmax{\Sigmab}/\sinmin{\Sigmab}$. Under the condition that $N=O(\kappa^2 (dr+\log M + \log p))$, the event $
{\Ecal}=\cap_{i=1}^p{\Ecal_{i}}(\Mscr_i)
$ happens
with probability at least $1-M^{-1}\exp\{-dr\}$.

\end{lemma}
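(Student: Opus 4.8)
The plan is to reduce the statement to a single application of the relative-error spectral concentration bound from Lemma~\ref{lemma:ci} to each principal submatrix $\hat\Sigmab_{\Mscr_i\Mscr_i}$, followed by a union bound over $i=1,\dots,p$. The key observation is that $\hat\Sigmab_{\Mscr_i\Mscr_i}=N^{-1}\Yb_{\Mscr_i}\Yb_{\Mscr_i}^\top$, where $\Yb_{\Mscr_i}\in\RR^{\abr{\Mscr_i}d\times N}$ is the submatrix of $\Yb$ consisting of the block-rows indexed by $\Mscr_i$, is itself the sample (second-moment) matrix of $N$ i.i.d.\ copies of a centered sub-Gaussian vector in $\RR^{\abr{\Mscr_i}d}$ with population covariance $\Sigmab_{\Mscr_i\Mscr_i}$, so Lemma~\ref{lemma:ci} applies directly in dimension $\abr{\Mscr_i}d\le rd$.

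First I would record, by Cauchy's interlacing theorem applied to the principal submatrix $\Sigmab_{\Mscr_i\Mscr_i}$ of the symmetric positive definite $\Sigmab$, the sandwich $\sinmin{\Sigmab}\le\sinmin{\Sigmab_{\Mscr_i\Mscr_i}}\le\sinmax{\Sigmab_{\Mscr_i\Mscr_i}}\le\sinmax{\Sigmab}$; in particular the condition number $\kappa_i$ of $\Sigmab_{\Mscr_i\Mscr_i}$ satisfies $\kappa_i\le\kappa$. Consequently a sample size of order $\kappa^2(dr+\log M+\log p)$ is, uniformly in $i$, of the order $\kappa_i^2(\abr{\Mscr_i}d+\log(1/\delta_i))$ required to drive the $i$-th subproblem's relative spectral error below $1/(2\kappa_i)$, with $\delta_i$ chosen as below.

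Next I would instantiate the general-$\delta$ form of Lemma~\ref{lemma:ci} (rather than its stated $\exp(-p-d)$ specialization, since a sharper tail is needed) with $\delta_i:=(Mp)^{-1}\exp(-dr)$, so that $\log(1/\delta_i)=dr+\log M+\log p$. Under the stated sample-size assumption the right-hand side of that bound is at most $\sinmin{\Sigmab_{\Mscr_i\Mscr_i}}/(2\sinmax{\Sigmab_{\Mscr_i\Mscr_i}})$, which yields $\norm{\hat\Sigmab_{\Mscr_i\Mscr_i}-\Sigmab_{\Mscr_i\Mscr_i}}_2\le\sinmin{\Sigmab_{\Mscr_i\Mscr_i}}/2$ with probability at least $1-\delta_i$; Weyl's inequality then converts this into the two-sided eigenvalue bound defining the event $\Ecal_i(\Mscr_i)$. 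A union bound over $i=1,\dots,p$ finally gives $\PP(\Ecal)=\PP(\bigcap_{i=1}^p\Ecal_i(\Mscr_i))\ge 1-\sum_{i=1}^p\delta_i=1-M^{-1}\exp(-dr)$, as claimed.

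This is mostly bookkeeping rather than a deep argument; the one delicate point is making the sample-size requirement uniform in $i$ — handled through the interlacing bound $\kappa_i\le\kappa$ — and calibrating $\delta_i$ so that (a) inflating the failure probability by the factor $p$ from the union bound still leaves it below the target $M^{-1}\exp(-dr)$, and (b) the extra $\log(1/\delta_i)$ budget this costs is absorbed into the allotted $dr+\log M+\log p$ sample complexity. The underlying sub-Gaussianity of the columns of $\Yb$ and their centering — which are what make Lemma~\ref{lemma:ci} applicable and what let us pass between the centered and uncentered second-moment matrices — hold in all the settings where this lemma is invoked.
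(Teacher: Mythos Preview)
Your proposal is correct and follows essentially the same route as the paper: apply Lemma~\ref{lemma:ci} with $\delta_i=(pM)^{-1}\exp(-dr)$ to each principal submatrix $\hat\Sigmab_{\Mscr_i\Mscr_i}$, deduce the two-sided eigenvalue bound, and union bound over $i=1,\dots,p$. The paper's version is terser and omits the interlacing step justifying $\kappa_i\le\kappa$, which you make explicit.
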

\begin{proof}
 Then, by Lemma~\ref{lemma:ci}, under the condition that $N=O(\kappa^2 (dr+\log M + \log p))$, we have
\[
\PP\rbr{\opnorm{
    \hat\Sigmab_{{\Mscr_i}{\Mscr_i}} 
    -
    \Sigmab_{\Mscr_i\Mscr_i}
    }{2}
    >
    \frac{
\sinmin{\Sigmab_{\Mscr_i\Mscr_i}}}{2}}
    \leq \frac{1}{pM}\exp(-dr),
\]

and we will be on the event
${\Ecal_{i}}(\Mscr_i)$
with probability at least $1-(pM)^{-1}\exp\rbr{-dr}$. Then, we define the intersection of all events ${\Ecal_{i}}(\Mscr_i)$ for $\pseq$ as
\begin{align*}
{\Ecal}=\cap_{i=1}^p{\Ecal_i}(\Mscr_i).
\end{align*}
Then, by De Morgan's laws, event ${\Ecal}$ happens
with probability at least $1-M^{-1}\exp(-dr)$.

\end{proof}
\begin{lemma}[Theorem~$2.1.12$ in~\citet{nesterov2003introductory}]\label{lemma:convexbound}
For a $L$-smooth and $m$-strongly convex function $h$, we have
\begin{align*}
    \langle\nabla h_i(X)-\nabla h_i(Y),X-Y\rangle\geq\frac{m L}{m+L}\|X-Y\|_F^2+\frac{1}{m+L}\|\nabla h_i(X)-\nabla h_i(Y)\|_F^2.
\end{align*}
\end{lemma}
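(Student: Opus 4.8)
The plan is to reduce the claim to the gradient co-coercivity inequality for convex smooth functions by peeling off the strong-convexity quadratic. Define the auxiliary function $\phi(X) = h_i(X) - \frac{m}{2}\|X\|_F^2$, so that $\nabla\phi(X) = \nabla h_i(X) - mX$. Since $h_i$ is $m$-strongly convex, $\phi$ is convex; since $h_i$ is $L$-smooth, one checks directly from the two defining inequalities of Definitions of smoothness and strong convexity that $\phi$ is $(L-m)$-smooth. Thus the problem is transferred to proving, for the convex and $\beta$-smooth function $\phi$ with $\beta = L-m$, the co-coercivity bound $\langle \nabla\phi(X)-\nabla\phi(Y),\, X-Y\rangle \geq \beta^{-1}\|\nabla\phi(X)-\nabla\phi(Y)\|_F^2$.

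First I would establish this co-coercivity. Fix $X$ and consider the surrogate $\psi(Z) = \phi(Z) - \langle\nabla\phi(X),\, Z\rangle$, which is convex, $\beta$-smooth, and minimized at $Z=X$ because $\nabla\psi(X)=0$. Applying the descent (smoothness) inequality for $\psi$ at the point $W = Y - \beta^{-1}\nabla\psi(Y)$ gives $\psi(W) \leq \psi(Y) - (2\beta)^{-1}\|\nabla\psi(Y)\|_F^2$, and since $X$ is a global minimizer we have $\psi(X)\leq\psi(W)$. Rewriting in terms of $\phi$ and using $\nabla\psi(Y)=\nabla\phi(Y)-\nabla\phi(X)$ yields $\frac{1}{2\beta}\|\nabla\phi(Y)-\nabla\phi(X)\|_F^2 \leq \phi(Y)-\phi(X)-\langle\nabla\phi(X),\,Y-X\rangle$. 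Swapping the roles of $X$ and $Y$ and summing the two resulting inequalities cancels the function-value and first-order terms, producing exactly the co-coercivity bound.

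Finally I would substitute $\nabla\phi(X)=\nabla h_i(X)-mX$ back into the co-coercivity inequality and expand. Writing $g = \langle\nabla h_i(X)-\nabla h_i(Y),\, X-Y\rangle$, $d = \|X-Y\|_F^2$, and $n = \|\nabla h_i(X)-\nabla h_i(Y)\|_F^2$, the left side becomes $g - md$ while the squared-gradient term becomes $n - 2mg + m^2 d$; multiplying through by $\beta = L-m$ and collecting terms reduces the inequality to $(L+m)\,g \geq n + Lm\,d$, which is the claim after dividing by $L+m$. The routine part is this final algebra; the main obstacle is the careful derivation of the convex co-coercivity bound via the surrogate-minimization argument, together with handling the degenerate case $m=L$ (where $\beta=0$ and the division step fails): there $\phi$ is affine, so $\nabla h_i(X)=mX+c$ for some constant element $c$, and the stated inequality holds with equality by direct substitution.
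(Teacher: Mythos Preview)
Your proof is correct and is essentially the standard argument from Nesterov's book (Theorem~2.1.12): subtract the strongly convex quadratic to obtain a convex $(L-m)$-smooth function, invoke gradient co-coercivity for convex smooth functions, and substitute back. The paper does not give its own proof of this lemma but simply cites Nesterov, so your derivation matches the source the paper defers to.
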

\begin{lemma}[Theorem~$6.5$ in~\citet{wainwright2019high}]\label{lemma:epsilon_net} Let $\Xb\in\RR^{p\times d}$ and $\Ncal_1=\{\vb_1,\ldots,\vb_n\}$ be the $1/8$-net of $\mathbb{S}^{p-1}$ and $\Ncal_2 =\{\ub_1,\ldots,\ub_{m}\}$ be the $1/8$-net of $\mathbb{S}^{d-1}$. Then,
\[
\sup_{\vb\in \mathbb{S}^{p-1}}\sup_{\ub\in \mathbb{S}^{d-1}}\abr{\dotp{\vb}{\Ab\ub}}
\leq 2
\max_{\vb\in \Ncal_1}\max_{\ub\in \Ncal_2}\abr{\dotp{\vb}{\Ab\ub}}.
\]
\end{lemma}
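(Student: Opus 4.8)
The plan is to run the standard self-bounding covering argument: the bilinear form $(\vb,\ub)\mapsto\dotp{\vb}{\Ab\ub}$ is Lipschitz in each argument with constant governed by the spectral norm of $\Ab$, so its value at an arbitrary pair of unit vectors can be transferred to the nearest net points at the cost of a controllable fraction of the supremum itself. First I would set
\[
M := \sup_{\vb\in\mathbb{S}^{p-1}}\sup_{\ub\in\mathbb{S}^{d-1}}\abr{\dotp{\vb}{\Ab\ub}},
\]
and record that $M=\norm{\Ab}_2$. To avoid any appeal to attainment of the supremum, I would fix an \emph{arbitrary} pair $(\vb,\ub)\in\mathbb{S}^{p-1}\times\mathbb{S}^{d-1}$ and bound $\abr{\dotp{\vb}{\Ab\ub}}$, only taking the supremum over $(\vb,\ub)$ at the very end.

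By the defining property of the $1/8$-nets $\Ncal_1$ and $\Ncal_2$, there exist $\vb_0\in\Ncal_1$ and $\ub_0\in\Ncal_2$ with $\norm{\vb-\vb_0}_2\le 1/8$ and $\norm{\ub-\ub_0}_2\le 1/8$. The key step is the telescoping decomposition
\[
\dotp{\vb}{\Ab\ub}
= \dotp{\vb_0}{\Ab\ub_0}
+ \dotp{\vb-\vb_0}{\Ab\ub_0}
+ \dotp{\vb_0}{\Ab(\ub-\ub_0)}
+ \dotp{\vb-\vb_0}{\Ab(\ub-\ub_0)},
\]
after which I would bound each remainder term by Cauchy--Schwarz together with $\norm{\Ab}_2=M$. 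For instance $\abr{\dotp{\vb-\vb_0}{\Ab\ub_0}}\le \norm{\vb-\vb_0}_2\,\norm{\Ab}_2\,\norm{\ub_0}_2\le M/8$, and analogously the second cross term is at most $M/8$ while the last term is at most $M/64$. Summing gives a total remainder of at most $(1/8+1/8+1/64)M=(17/64)M$, so that
\[
\abr{\dotp{\vb}{\Ab\ub}}
\le \max_{\vb'\in\Ncal_1}\max_{\ub'\in\Ncal_2}\abr{\dotp{\vb'}{\Ab\ub'}}
+ \tfrac{17}{64}M.
\]
Taking the supremum over $(\vb,\ub)$ on the left turns the left side into $M$, and rearranging yields $(47/64)M\le \max_{\Ncal_1,\Ncal_2}\abr{\dotp{\vb'}{\Ab\ub'}}$, i.e. $M\le (64/47)\max\le 2\max$, which is exactly the claim.

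The argument involves no real obstacle beyond bookkeeping, so the main point to verify carefully is that the net radius $1/8$ is small enough that the aggregate contraction coefficient $17/64$ is strictly below $1$; this is what makes the rearrangement legitimate and simultaneously guarantees that the resulting constant $64/47$ is dominated by the stated bound $2$. The only mild subtlety is the uniform control of the three remainder terms by the single quantity $M=\norm{\Ab}_2$, which follows because $\ub_0,\vb_0$ are unit vectors and $\Ab$ is the same fixed matrix throughout; handling an arbitrary $(\vb,\ub)$ and supremizing at the end (rather than invoking compactness to work at a maximizer) keeps the proof entirely elementary.
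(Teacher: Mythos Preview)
Your proof is correct and is precisely the standard self-bounding covering argument; the paper does not supply its own proof but simply cites Theorem~6.5 of \citet{wainwright2019high}, whose argument is exactly the one you wrote out. Your constant $64/47$ is sharper than the stated $2$, and your care to avoid compactness by supremizing at the end is a nice touch but not essential here since the unit spheres are compact.
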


\begin{lemma}\label{lemma:matrixdualnorm}
Let $\frac{1}{p}+\frac{1}{q}=1$ and $\frac{1}{m}+\frac{1}{n}=1$. Then the dual norm of $\norm{\cdot}_{p,m}$ is $\norm{\cdot}_{q,n}$.
\end{lemma}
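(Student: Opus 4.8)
The plan is to prove the two matching inequalities that together establish equality of the dual norm. Throughout I work with the standard trace inner product $\dotp{\Ab}{\Bb}=\sum_{i,j}a_{ij}b_{ij}=\sum_{j}\dotp{\Ab_{\cdot j}}{\Bb_{\cdot j}}$, with respect to which the dual norm of $\norm{\cdot}_{p,m}$ is $\norm{\Bb}_{(p,m)\ast}=\sup_{\norm{\Ab}_{p,m}\le 1}\dotp{\Ab}{\Bb}$. I recall from the definition that $\norm{\Ab}_{p,m}$ is exactly the $\ell_m$-norm of the vector $(\norm{\Ab_{\cdot 1}}_p,\dots,\norm{\Ab_{\cdot c}}_p)$ of column $\ell_p$-norms, where $c$ is the number of columns and $\Ab_{\cdot j}$ denotes the $j$-th column.

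First I would establish $\norm{\Bb}_{(p,m)\ast}\le\norm{\Bb}_{q,n}$ by applying H\"older's inequality twice. For each column $j$, the scalar H\"older inequality with conjugate exponents $p,q$ gives $\dotp{\Ab_{\cdot j}}{\Bb_{\cdot j}}\le\norm{\Ab_{\cdot j}}_p\norm{\Bb_{\cdot j}}_q$. Writing $\alpha_j=\norm{\Ab_{\cdot j}}_p$ and $\beta_j=\norm{\Bb_{\cdot j}}_q$ and summing over $j$, a second H\"older inequality with conjugate exponents $m,n$ applied to the nonnegative vectors $(\alpha_j)$ and $(\beta_j)$ yields $\dotp{\Ab}{\Bb}\le\sum_j\alpha_j\beta_j\le\norm{(\alpha_j)}_m\norm{(\beta_j)}_n=\norm{\Ab}_{p,m}\norm{\Bb}_{q,n}$. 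Taking the supremum over $\norm{\Ab}_{p,m}\le 1$ gives the upper bound.

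The substantive step is the matching lower bound, for which I would exhibit an extremizer $\Ab$ obtained by chaining the two H\"older equality conditions. Within each column I set $a_{ij}=\gamma_j\,\operatorname{sign}(b_{ij})\,|b_{ij}|^{q-1}$, which forces equality in the inner H\"older and, using the identity $(q-1)p=q$, gives $\dotp{\Ab_{\cdot j}}{\Bb_{\cdot j}}=\gamma_j\beta_j^{q}$ and $\norm{\Ab_{\cdot j}}_p=\gamma_j\beta_j^{q-1}$. To force equality in the outer H\"older I then choose the column scalings $\gamma_j=\delta\,\beta_j^{\,n-q}$ on columns with $\beta_j\neq 0$ (and $a_{ij}=0$ on zero columns), so that the column-norm vector $(\norm{\Ab_{\cdot j}}_p)_j$ is proportional to $(\beta_j^{\,n-1})_j$, the equality case for exponents $m,n$. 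A direct computation using $(n-1)m=n$ then shows $\dotp{\Ab}{\Bb}=\delta\norm{\Bb}_{q,n}^{\,n}$ while $\norm{\Ab}_{p,m}=\delta\norm{\Bb}_{q,n}^{\,n-1}$, so the ratio $\dotp{\Ab}{\Bb}/\norm{\Ab}_{p,m}$ equals $\norm{\Bb}_{q,n}$ independently of $\delta$; this proves $\norm{\Bb}_{(p,m)\ast}\ge\norm{\Bb}_{q,n}$ and hence equality.

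The main obstacle is careful bookkeeping in the extremizer construction rather than any deep difficulty. I must verify the exponent identities $(q-1)p=q$ and $(n-1)m=n$ (both immediate from $1/p+1/q=1$ and $1/m+1/n=1$), handle degenerate zero columns separately so that the scaling $\beta_j^{\,n-q}$ is never evaluated at $\beta_j=0$, and treat the boundary exponents $1$ and $\infty$, where the power-law extremizer degenerates. In those boundary cases the relevant equality case of H\"older is the standard $\ell_1$--$\ell_\infty$ duality (concentrating mass on a maximizing coordinate, respectively a maximizing column), which I would dispatch by the usual direct argument. Once these routine cases are checked, combining the two inequalities yields the stated identity that $\norm{\cdot}_{q,n}$ is the dual norm of $\norm{\cdot}_{p,m}$.
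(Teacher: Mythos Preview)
Your proposal is correct and follows essentially the same route as the paper: both prove the upper bound by applying H\"older's inequality column-wise with exponents $(p,q)$ and then across columns with exponents $(m,n)$, and both attain the bound via the same explicit extremizer $a_{ij}\propto \operatorname{sign}(b_{ij})|b_{ij}|^{q-1}\beta_j^{\,n-q}$ (the paper simply fixes your free constant $\delta$ so that the extremizer has unit $\norm{\cdot}_{q,n}$ norm). Your treatment is slightly more careful than the paper's in flagging the degenerate zero-column case and the boundary exponents $1,\infty$.
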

\begin{proof}

Assume that we have two matrices $\Xb,\Yb\in\RR^{p\times d}$ and recall the definition of the dual norm
\[
\norm{\Xb}_*=\sup\{\tr(\Yb^\top\Xb)|\norm{\Yb}\leq 1\}.
\]
Then, it is easy to see that
\[
\tr(\Yb^\top\Xb)\leq\norm{\Yb}\norm{\Xb}_*,
\]
for every pair of $\Xb$, $\Yb$. 

Let $\xb_1,\ldots,\xb_d$ be columns of $\Xb$ and $\yb_1,\ldots,\yb_d$ be columns of $\Yb$. 
Apply H\"older's inequality, we have
\begin{align*}
    \tr(\Xb^\top\Yb) &= \sum_{j=1}^d\dotp{\xb_j}{\yb_j}\\
    &\leq \sum_{j=1}^d\norm{\xb_j}_p\norm{\yb_j}_q.
    \intertext{Apply the H\"older's inequality again, we can obtain}
    &\leq\rbr{\sum_{j=1}^d\norm{\xb_j}_p^m}^{\frac{1}{m}}\rbr{\sum_{j=1}^d\norm{\yb_j}_q^n}^{\frac{1}{n}}\\
    &=\norm{\Xb}_{p,m}\norm{\Yb}_{q,n}.
\end{align*}
The second step is to show that the upper bound is achievable. Let $y_{kj}$ be the $kj$-th entry of $\Yb$ such that
\[
y_{kj}=\frac{\sgn(x_{kj})|x_{kj}|^{p-1}}{\norm{\xb_j}_p^{p-m}(\sum_{j'=1}^d\norm{\xb_{j'}}_p^m)^{1/n}},\qquad k=1,\ldots p, j=1,\ldots,d.
\]
We can easily verify that $\norm{\Yb}_{q,n}=1$. Note that we have
\[
\norm{\yb_j}_q=\frac{\norm{\xb_j}_p^{q/p}}{\norm{\xb_j}_p^{p-m}(\sum_{j'=1}^d\norm{\xb_{j'}}_p^m)^{1/n}}
=\frac{\norm{\xb_j}^{m-1}_p}{(\sum_{j'=1}^d\norm{\xb_{j'}}_p^m)^{1/n}},\qquad j=1,\ldots,d.
\]
Summing over $d$ columns, we then conclude that
\[
\norm{\Yb}_{q,n}^n=\sum_{j=1}^d\norm{\yb_j}_q^n
=
\frac{\sum_{j=1}^d\norm{\xb_{j}}_p^{n(m-1)}}{\sum_{j'=1}^d\norm{\xb_{j'}}_p^m}=1.
\]
Moreover, we can verify that
\begin{align*}
    \tr(\Xb^\top\Yb)
    =
    \sum_{j=1}^d\dotp{\xb_j}{\yb_j}
    &=
    \sum_{j=1}^d\frac{\norm{\xb_{j}}_p^{p}}{\norm{\xb_j}_p^{p-m}(\sum_{j'=1}^d\norm{\xb_{j'}}_p^m)^{1/n}}\\
    &=\frac{\sum_{j=1}^d\norm{\xb_j}_p^m}{(\sum_{j'=1}^d\norm{\xb_{j'}}_p^m)^{1/n}}
    =\rbr{\sum_{j=1}^d\norm{\xb_j}_p^m}^{1/m}=\norm{\Xb}_{p,m}.
\end{align*}
Here we complete the proof.
\end{proof}

\begin{lemma}\label{lemma:dualnormrpq}
Let $\frac{1}{p}+\frac{1}{q}=1$ and recall the definition of $\norm{\cdot}_{r(k,p)}$ and $\norm{\cdot}_{r(k,q)}$ in Definition~\ref{def:rnorm}. Then the dual norm of $\norm{\cdot}_{r(k,p)}$ is $\norm{\cdot}_{r(k,q)}$.
\end{lemma}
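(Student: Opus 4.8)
The plan is to reduce the statement to Lemma~\ref{lemma:matrixdualnorm} via the isometric identification implicit in Definition~\ref{def:rnorm}. Given $\Ub=(\Ub_1\cdots\Ub_p)\in\RR^{d\times kp}$ with $\Ub_i\in\RR^{d\times k}$, let $\Phi(\Ub)=(\VEC(\Ub_1)\cdots\VEC(\Ub_p))\in\RR^{dk\times p}$. The map $\Phi$ is a linear bijection from $\RR^{d\times kp}$ onto $\RR^{dk\times p}$, and it preserves the trace (Frobenius) inner product: for any $\Ub,\Vb$,
\[
\langle\Ub,\Vb\rangle:=\tr(\Ub^\top\Vb)=\sum_{i=1}^p\tr(\Ub_i^\top\Vb_i)=\sum_{i=1}^p\VEC(\Ub_i)^\top\VEC(\Vb_i)=\tr(\Phi(\Ub)^\top\Phi(\Vb)).
\]
Moreover, by the very definition of the $r(k,\ell)$-norm, $\norm{\Ub}_{r(k,\ell)}=\norm{\Phi(\Ub)}_{2,\ell}$ for every $\ell\geq 1$.

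First I would compute the dual of $\norm{\cdot}_{r(k,p)}$ directly from these two facts. For a fixed $\Ub$,
\[
\sup_{\norm{\Vb}_{r(k,p)}\leq 1}\langle\Ub,\Vb\rangle
=\sup_{\norm{\Phi(\Vb)}_{2,p}\leq 1}\langle\Phi(\Ub),\Phi(\Vb)\rangle
=\sup_{\norm{\Wb}_{2,p}\leq 1}\langle\Phi(\Ub),\Wb\rangle,
\]
where the last equality uses the surjectivity of $\Phi$ onto $\RR^{dk\times p}$, so that the change of variables $\Wb=\Phi(\Vb)$ does not alter the feasible set. The right-hand side is, by definition, the dual matrix norm of $\norm{\cdot}_{2,p}$ evaluated at $\Phi(\Ub)$. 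Applying Lemma~\ref{lemma:matrixdualnorm} with outer exponent $2$ (which is self-conjugate, since $\tfrac12+\tfrac12=1$) and inner exponent pair $(p,q)$ with $\tfrac1p+\tfrac1q=1$, this dual norm equals $\norm{\cdot}_{2,q}$. Hence
\[
\sup_{\norm{\Vb}_{r(k,p)}\leq 1}\langle\Ub,\Vb\rangle=\norm{\Phi(\Ub)}_{2,q}=\norm{\Ub}_{r(k,q)},
\]
which is exactly the claim.

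There is essentially no obstacle here; the only points needing a line of care are that $\Phi$ is onto $\RR^{dk\times p}$ (so the supremum in the dual-norm definition transports cleanly) and that Lemma~\ref{lemma:matrixdualnorm} does cover the case where the outer exponent is $2$, both of which are immediate. I would close by remarking that the resulting identity is symmetric in $p$ and $q$, so it also yields that the dual of $\norm{\cdot}_{r(k,q)}$ is $\norm{\cdot}_{r(k,p)}$, consistent with the reflexivity of these norms in finite dimensions.
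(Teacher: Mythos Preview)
Your proof is correct and follows essentially the same route as the paper: both reduce the claim to Lemma~\ref{lemma:matrixdualnorm} by observing that $\tr(\Ub^\top\Vb)=\sum_i\VEC(\Ub_i)^\top\VEC(\Vb_i)$ and that $\norm{\cdot}_{r(k,\ell)}$ is just $\norm{\cdot}_{2,\ell}$ after vectorizing each block. Your isometry $\Phi$ makes this identification explicit and handles both the upper bound and attainability in one stroke; one small slip is that in the paper's convention $\norm{\cdot}_{2,p}$ has \emph{inner} exponent $2$ and \emph{outer} exponent $p$ (you reversed the labels), but since $2$ is self-dual and $(p,q)$ are conjugate either way, your conclusion $\norm{\cdot}_{2,q}$ is unaffected.
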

\begin{proof}
The proof is similar to proof of Lemma~\ref{lemma:matrixdualnorm}.
Assume that $\Xb,\Yb\in\RR^{d\times ks}$ and are represented as $\Xb=(\Xb_1\cdots\Xb_s)$ and $\Yb=(\Yb_1\cdots\Yb_s)$ for $\Xb_i\in\RR^{d\times k}$ and $\Yb_i\in\RR^{d\times k}$ for $i=2,\ldots,s$. Then, apply H\"older's inequality, we can write
\begin{align*}
   \tr(\Xb^\top\Yb)&= \sum_{i=1}^m\tr(\Xb_i^\top\Yb_i)\\
   &=\sum_{i=1}^m\VEC\rbr{\Xb_i}^\top\VEC\rbr{\Yb_i} \\
   &\leq \sum_{i=1}^m\norm{\VEC\rbr{\Xb_i}}_p\norm{\VEC\rbr{\Yb_i}}_q.\\
   \intertext{Recall the definition of $\norm{\cdot}_{r(k,p)}$ and $\norm{\cdot}_{r(k,q)}$ in Definition~\ref{def:rnorm}, where we can apply Lemma~\ref{lemma:matrixdualnorm} with $m=n=2$ and arrive at the following conclusion}
   &\leq\norm{\Xb}_{r(k,p)}\norm{\Yb}_{r(k,q)}.
\end{align*}
\end{proof}
\begin{lemma}\label{lemma:projection_hs}
Let $\Ascr,\Bscr\in\mathfrak{B}(\HH_1,\HH_2)$ be two compact operators, where $\HH_1,\HH_2$ are separable Hilbert spaces.
Let $\Pscr_A$ be the projection operator to $\overline{\Image(\Ascr)}$ and $\Pscr_B$ be the projection operator to $\overline{\Image(\Bscr)}$. If $\rk(\Ascr)>\rk(\Bscr)$, then
\[
\opnorm{\Pscr_B^\perp\Pscr_A}{}\geq
\opnorm{\Pscr_B\Pscr_A^\perp}{}.
\]
\end{lemma}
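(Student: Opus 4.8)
The plan is to exploit a dimension count. When $\rk(\Ascr)>\rk(\Bscr)$ the closed subspace $\overline{\Image(\Ascr)}$ must contain a nonzero vector that is orthogonal to $\overline{\Image(\Bscr)}$; such a vector is fixed by both $\Pscr_A$ and $\Pscr_B^\perp$, which forces $\opnorm{\Pscr_B^\perp\Pscr_A}{}$ to attain its largest possible value $1$, and the claim then follows from the trivial upper bound $\opnorm{\Pscr_B\Pscr_A^\perp}{}\le 1$.

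In detail, first I would set $W=\overline{\Image(\Bscr)}$ and $b=\dim W=\rk(\Bscr)$. Since $\Image(\Ascr)\subseteq\overline{\Image(\Ascr)}$, we have $\dim\overline{\Image(\Ascr)}\ge\rk(\Ascr)\ge b+1$, so I can choose a $(b+1)$-dimensional subspace $V'\subseteq\overline{\Image(\Ascr)}$. The restricted map $\Pscr_B|_{V'}\colon V'\to W$ has domain of dimension $b+1$ strictly larger than the dimension $b$ of its codomain, hence nontrivial kernel; pick a unit vector $v$ in this kernel. Then $v\in\overline{\Image(\Ascr)}$ gives $\Pscr_A v=v$, while $\Pscr_B v=0$ gives $\Pscr_B^\perp v=v$, so $\Pscr_B^\perp\Pscr_A v=v$ and therefore $\opnorm{\Pscr_B^\perp\Pscr_A}{}\ge\norm{v}_{\HH_2}=1$.

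To conclude I would use that any orthogonal projection has operator norm at most $1$, so $\opnorm{\Pscr_B\Pscr_A^\perp}{}\le\opnorm{\Pscr_B}{}\,\opnorm{\Pscr_A^\perp}{}\le 1$. Combining the two bounds yields $\opnorm{\Pscr_B^\perp\Pscr_A}{}\ge 1\ge\opnorm{\Pscr_B\Pscr_A^\perp}{}$, which is exactly the stated inequality. Note that compactness of $\Ascr$ and $\Bscr$ plays no real role beyond ensuring the ranges are the intended objects; the essential hypothesis is the finiteness of $\rk(\Bscr)$, which makes the dimension count legitimate. I do not anticipate a genuine obstacle here: the only mild point is that when $\rk(\Ascr)=\infty$ one still needs only a $(b+1)$-dimensional subspace of $\overline{\Image(\Ascr)}$, so the argument above applies verbatim.
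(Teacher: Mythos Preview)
Your argument is correct. The hypothesis $\rk(\Ascr)>\rk(\Bscr)$ forces $\rk(\Bscr)<\infty$ (both ranges sit in a separable space), so the dimension count is legitimate, and the conclusion $\opnorm{\Pscr_B^\perp\Pscr_A}{}=1\ge\opnorm{\Pscr_B\Pscr_A^\perp}{}$ follows exactly as you wrote.

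This is a genuinely different route from the paper. The paper decomposes $\Pscr_A=\Pscr_{A_1}+\Pscr_{A_2}$ with $\rk(\Pscr_{A_1})=\rk(\Pscr_B)$ and $\Pscr_{A_2}\Pscr_B=0$, then invokes the Davis--Kahan equality $\opnorm{\Pscr_B\Pscr_{A_1}^\perp}{}=\opnorm{\Pscr_B^\perp\Pscr_{A_1}}{}$ for equal-rank projections, and finally uses $\opnorm{\Pscr_B^\perp\Pscr_{A_1}}{}\le\opnorm{\Pscr_B^\perp\Pscr_A}{}$. Your approach is shorter and self-contained: you never need the principal-angles machinery, only a rank--nullity observation. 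The trade-off is that the paper's decomposition argument is structured to compare the full singular spectra of $\Pscr_B\Pscr_A^\perp$ and $\Pscr_B^\perp\Pscr_A$, which is what one would want for the Hilbert--Schmidt version the paper alludes to after the lemma; your argument pins down only the top singular value of $\Pscr_B^\perp\Pscr_A$, so it does not immediately extend to the HS norm. For the operator-norm statement as written, though, your proof is cleaner.
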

Lemma~\ref{lemma:projection_hs} similarly holds for the Hilbert-Schimidt norm.
Since we only use the result for the operator norm, we do not provide the result for the Hilbert-Schimidt norm.
\begin{proof}
We start by decomposing the projection operator as $\Pscr_A=\Pscr_{A_1}+\Pscr_{A_2}$, where $\rk(\Pscr_{A_1})=\rk(\Pscr_{B})$ and $\Pscr_{A_2}\Pscr_B=0$. Then
\begin{align*}
\opnorm{\Pscr_B\Pscr_A^\perp}{}
&=\opnorm{\Pscr_{B}(I-\Pscr_{A_1}-\Pscr_{A_2})}{}
= \opnorm{\Pscr_{B}(I-\Pscr_{A_1})}{}
=\opnorm{\Pscr_B\Pscr_{A_1}^\perp}{}
\intertext{Since $\rk(\Pscr_{A_1})=\rk(\Pscr_{B})$, we have 
$\opnorm{\Pscr_B\Pscr_{A_1}^\perp}{}=\opnorm{\Pscr_B^\perp\Pscr_{A_1}}{}$~\citep{davis1970rotation}. Then, the above display is equivalent to}
&=\opnorm{\Pscr_B^\perp\Pscr_{A_1}}{}
\leq\opnorm{\Pscr_B^\perp\Pscr_A}{},
\end{align*}
which completes the proof.
\end{proof}
\begin{lemma}\label{lemma:pseuoinv_hs}
Let $\Ascr,\Bscr\in\mathfrak{B}(\HH_1,\HH_2)$ be two compact operators, where $\HH_1,\HH_2$ are separable Hilbert spaces.
Let $\Pscr_A$ be the projection operator to $\overline{\Image(\Ascr)}$ and $\Pscr_B$ be the projection operator to $\overline{\Image(\Bscr)}$. If $\rk(\Ascr)\geq\rk(\Bscr)$, then 
\begin{equation*}
\opnorm{\Ascr^\dagger-\Bscr^\dagger}{}\leq 2\rbr{\opnorm{\Ascr^\dagger}{}^2\vee\opnorm{\Bscr^\dagger}{}^2}\opnorm{\Ascr-\Bscr}{}.
\end{equation*}
\end{lemma}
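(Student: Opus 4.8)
The plan is to derive the bound from the classical perturbation identity for the Moore--Penrose inverse and then control its three pieces, with Lemma~\ref{lemma:projection_hs} handling the subspace parts. First I would dispose of a degenerate case: if $\Ascr$ or $\Bscr$ fails to have closed range then, being compact, it has infinite rank, so $\opnorm{\Ascr^\dagger}{}\vee\opnorm{\Bscr^\dagger}{}=+\infty$ and the inequality is vacuous; hence I may assume both $\Ascr,\Bscr$ have closed (finite-dimensional) range, so that $\Ascr^\dagger,\Bscr^\dagger$ are bounded and the standard identities $\Ascr\Ascr^\dagger\Ascr=\Ascr$, $\Ascr^\dagger\Ascr\Ascr^\dagger=\Ascr^\dagger$, $\Pscr_A=\Ascr\Ascr^\dagger$, etc., hold. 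Writing $E=\Ascr-\Bscr$ and inserting $\Ascr\Ascr^\dagger$ and $\Bscr^\dagger\Bscr$, I obtain
\[
\Bscr^\dagger-\Ascr^\dagger=\underbrace{\Bscr^\dagger E\,\Ascr^\dagger}_{T_1}+\underbrace{\Bscr^\dagger\,(I-\Ascr\Ascr^\dagger)}_{T_2}-\underbrace{(I-\Bscr^\dagger\Bscr)\,\Ascr^\dagger}_{T_3}.
\]
The interior term is immediate: $\opnorm{T_1}{}\le\opnorm{\Bscr^\dagger}{}\,\opnorm{E}{}\,\opnorm{\Ascr^\dagger}{}\le(\opnorm{\Ascr^\dagger}{}\vee\opnorm{\Bscr^\dagger}{})^2\opnorm{E}{}$.

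For the two boundary terms the key is that a complementary projection annihilates the ``wrong'' operator. For $T_2$, write $\Bscr^\dagger=\Bscr^\dagger\Pscr_B$ and $I-\Ascr\Ascr^\dagger=\Pscr_A^\perp$, so $T_2=\Bscr^\dagger\Pscr_B\Pscr_A^\perp$; the hypothesis $\rk(\Ascr)\ge\rk(\Bscr)$ is exactly what Lemma~\ref{lemma:projection_hs} requires (the equal-rank case being the classical symmetric $\sin\Theta$ identity $\opnorm{\Pscr_B\Pscr_A^\perp}{}=\opnorm{\Pscr_B^\perp\Pscr_A}{}$), giving $\opnorm{\Pscr_B\Pscr_A^\perp}{}\le\opnorm{\Pscr_B^\perp\Pscr_A}{}$; and since $\Pscr_B^\perp\Bscr=0$ we have $\Pscr_B^\perp\Pscr_A=\Pscr_B^\perp\Ascr\Ascr^\dagger=\Pscr_B^\perp E\,\Ascr^\dagger$, whence $\opnorm{T_2}{}\le\opnorm{\Bscr^\dagger}{}\,\opnorm{E}{}\,\opnorm{\Ascr^\dagger}{}$. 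Term $T_3$ I would treat by passing to adjoints: $\opnorm{T_3}{}=\opnorm{(\Ascr^*)^\dagger(I-\Bscr^\dagger\Bscr)}{}$, and since $I-\Bscr^\dagger\Bscr$ is the projection onto $\ker(\Bscr)$ with $(I-\Bscr^\dagger\Bscr)\Bscr^*=0$, while $(\Ascr^*)^\dagger$ vanishes on $\ker(\Ascr)$ and $\opnorm{(\Ascr^*)^\dagger}{}=\opnorm{\Ascr^\dagger}{}$, the analogous manipulation $(\Ascr^*)^\dagger=(\Ascr^*)^\dagger\Ascr^*(\Ascr^*)^\dagger$, $\Ascr^*=(\Ascr^*-\Bscr^*)+\Bscr^*$ yields $\opnorm{T_3}{}\le\opnorm{\Ascr^\dagger}{}^2\opnorm{E}{}$. (Equivalently one can avoid the lemma and bound $T_2$ directly by the same adjoint trick, $\opnorm{\Pscr_B\Pscr_A^\perp}{}=\opnorm{\Pscr_A^\perp\Pscr_B}{}=\opnorm{\Pscr_A^\perp E\Bscr^\dagger}{}\le\opnorm{E}{}\opnorm{\Bscr^\dagger}{}$, which is why the rank hypothesis is essentially only needed for the route through Lemma~\ref{lemma:projection_hs}.)

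Summing the three bounds naively gives constant $3$, so the final ingredient is to combine $T_2$ and $-T_3$ more carefully. Reading off the outermost factors, as operators $\HH_2\to\HH_1$, $T_2$ is supported on $\Image(\Ascr)^\perp$ with range in $\ker(\Bscr)^\perp$, while $T_3$ is supported on $\overline{\Image(\Ascr)}$ with range in $\ker(\Bscr)$ --- mutually orthogonal domains \emph{and} mutually orthogonal ranges. Splitting any input along $\HH_2=\overline{\Image(\Ascr)}\oplus\Image(\Ascr)^\perp$ and applying Pythagoras on both sides then gives $\opnorm{T_2-T_3}{}=\max(\opnorm{T_2}{},\opnorm{T_3}{})\le(\opnorm{\Ascr^\dagger}{}\vee\opnorm{\Bscr^\dagger}{})^2\opnorm{E}{}$, and the triangle inequality $\opnorm{\Ascr^\dagger-\Bscr^\dagger}{}\le\opnorm{T_1}{}+\opnorm{T_2-T_3}{}$ produces the stated factor $2$ (with $\opnorm{E}{}=\opnorm{\Ascr-\Bscr}{}$). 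I expect the main obstacle to be the $T_3$ term --- getting the adjoint/co-range manipulation exactly right (which projections annihilate which operators, and the norm identity $\opnorm{(\Ascr^*)^\dagger}{}=\opnorm{\Ascr^\dagger}{}$) --- together with the bookkeeping that keeps the final constant at exactly $2$ rather than $3$; the closed-range reduction at the outset removes all remaining infinite-dimensional subtleties.
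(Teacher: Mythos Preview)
Your proof is correct and uses the same Stewart--Wedin three-term decomposition as the paper (the terms agree after absorbing $\Bscr^\dagger\Pscr_B=\Bscr^\dagger$ and $\Rscr_A\Ascr^\dagger=\Ascr^\dagger$). The one real difference is how the constant~$2$ is extracted. The paper groups $T_1+T_2$ against $T_3$ (orthogonal ranges only), bounds $T_1+T_2$ via Lemma~\ref{lemma:projection_hs}, and then asserts $\opnorm{\Pscr_B(\Bscr-\Ascr)}{}+\opnorm{\Pscr_B^\perp(\Bscr-\Ascr)}{}=\opnorm{\Bscr-\Ascr}{}$; that last equality is not valid for the operator norm in general (take $\Bscr-\Ascr=I$ and $\Pscr_B$ a proper projection), so as written that route only yields constant~$3$. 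Your pairing---bounding $T_1$ alone and grouping $T_2$ with $T_3$, which have both orthogonal supports \emph{and} orthogonal ranges, so that the Pythagorean argument gives $\opnorm{T_2-T_3}{}=\max(\opnorm{T_2}{},\opnorm{T_3}{})$ exactly---cleanly recovers the factor~$2$. Your parenthetical observation that $\opnorm{\Pscr_B\Pscr_A^\perp}{}=\opnorm{\Pscr_A^\perp\Pscr_B}{}=\opnorm{\Pscr_A^\perp(\Bscr-\Ascr)\Bscr^\dagger}{}\le\opnorm{\Bscr^\dagger}{}\opnorm{\Ascr-\Bscr}{}$, bypassing both Lemma~\ref{lemma:projection_hs} and the rank hypothesis, is also correct and not in the paper.
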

Lemma~\ref{lemma:pseuoinv_hs} similarly holds for the Hilbert-Schmidt norm with constant 2 replaced by $2^{1/2}$. We only prove the result for the operator norm.
\begin{proof}
We generalize Theorem~$3.3$ in~\citet{stewart1977perturbation} to a Hilbert space. Let $\Qscr_A$ be the projection operator to $\ker(\Ascr)$ and $\Pscr_A$ be the orthogonal projection to $\overline{\Image(\Ascr)}$. Recalling the properties of a pseudo-inverse, we have $\Pscr_{A}=\Ascr\Ascr^\dagger$ and 
define $\Rscr_{A} =\Ascr^\dagger\Ascr =I-\Qscr_A$. We define $\Pscr_B$ and $\Rscr_B$ in the same way.
Then, we can write
\begin{align*}
    \Bscr^\dagger &= \Bscr^\dagger\Bscr\Bscr^\dagger\\
    &=\Bscr^\dagger\Pscr_B\\
    &=\Bscr^\dagger\Pscr_B(\Ascr\Ascr^\dagger+I-\Ascr\Ascr^\dagger)\\
    &=\Bscr^\dagger\Pscr_B(\Ascr\Ascr^\dagger\Ascr\Ascr^\dagger+I-\Ascr\Ascr^\dagger)\\
    &=\Bscr^\dagger\Pscr_B(\Ascr\Rscr_A\Ascr^\dagger+\Pscr_A^\perp).
\end{align*}
Similarly, write
\begin{align*}
    \Ascr^\dagger &= \Ascr^\dagger\Ascr\Ascr^\dagger\\
    &=\Rscr_A\Ascr^\dagger\\
    &=(\Bscr^\dagger\Bscr+I-\Bscr^\dagger\Bscr)\Rscr_A\Ascr^\dagger\\
    &=(\Bscr^\dagger\Bscr\Bscr^\dagger\Bscr
    +
    I-\Bscr^\dagger\Bscr
    )\Rscr_A\Ascr^\dagger\\
    &=(\Bscr^\dagger\Pscr_B\Bscr+\Rscr_B^\perp)\Rscr_A\Ascr^\dagger.
\end{align*}
Therefore, we have
\begin{align*}
\Bscr^\dagger - \Ascr^\dagger =
-\Bscr^\dagger\Pscr_B(\Bscr-\Ascr)\Rscr_A\Ascr^\dagger
+
\Bscr^\dagger\Pscr_B\Pscr_A^{\perp}
-
\Rscr_B^\perp\Rscr_A\Ascr^\dagger
= T_1+T_2+T_3.
\end{align*}
Note that $\overline{\Image(\Bscr^\dagger)}$ is orthogonal to the null space $\ker(\Bscr)$ and we have
 $\Image(T_1),\Image(T_2)\in\overline{\Image(\Bscr^\dagger)} $ and $\Image(T_3)\in\ker(\Bscr)$.
We have
\[
\opnorm{\Bscr^\dagger - \Ascr^\dagger}{}\leq\opnorm{T_1+T_2}{}+\opnorm{T_3}{}.
\]
Since $T_1+T_2=\Bscr^\dagger\cbr{-\Pscr_B(\Bscr-\Ascr)\Ascr^\dagger\Pscr_A+\Pscr_B\Pscr_A^\perp}$, we have
\begin{align*}
    \opnorm{T_1+T_2}{}
    &\leq\opnorm{\Bscr^\dagger}{}\rbr{\opnorm{\Pscr_B(\Bscr-\Ascr)\Ascr^\dagger}{}+\opnorm{\Pscr_B\Pscr_A^\perp}{}}
    \notag \\
    &\leq\opnorm{\Bscr^\dagger}{}\rbr{\opnorm{\Pscr_B(\Bscr-\Ascr)\Ascr^\dagger}{}+\opnorm{\Pscr_B^\perp\Pscr_A}{}}
    && (\text{Lemma~\ref{lemma:projection_hs}}) \notag\\
    &=\opnorm{\Bscr^\dagger}{}\cbr{\opnorm{\Pscr_B(\Bscr-\Ascr)\Ascr^\dagger}{}+\opnorm{\Pscr_B^\perp(\Bscr-\Ascr)\Ascr^\dagger}{}} && (\Pscr_B^\perp\Bscr=0) \notag\\
    &\leq \opnorm{\Bscr^\dagger}{}\opnorm{\Ascr^\dagger}{}
    \cbr{
        \opnorm{\Pscr_B(\Bscr-\Ascr)}{}
        +
        \opnorm{\Pscr_B^\perp(\Bscr-\Ascr)}{}
    }\notag\\
    &=\opnorm{\Bscr^\dagger}{}\opnorm{\Ascr^\dagger}{}\opnorm{\Bscr-\Ascr}{}.
\end{align*}
For $T_3$, since $\opnorm{\Rscr_B^\perp\Rscr_A}{}=\opnorm{(\Rscr_B^\perp\Rscr_A)^*}{}=\opnorm{\Rscr_A^*(\Rscr_B^\perp)^*}{}=\opnorm{\Rscr_A\Rscr_B^\perp}{}$, we have
\begin{align*}
\opnorm{\Rscr_B^\perp\Rscr_A\Ascr^\dagger}{}
    &\leq \opnorm{\Ascr^\dagger}{}\opnorm{\Rscr_A\Rscr_B^\perp}{}\notag\\
    &=\opnorm{\Ascr^\dagger}{}\opnorm{\Ascr^\dagger(\Ascr-\Bscr)\Rscr_B^\perp}{} && (\Bscr\Rscr_B^\perp=0) \notag \\
    &\leq\opnorm{\Ascr^\dagger}{}^2\opnorm{\Bscr-\Ascr}{}.
\end{align*}
Combining the last two displays, we have
\[
\opnorm{\Ascr^\dagger-\Bscr^\dagger}{}\leq 2\rbr{\opnorm{\Ascr^\dagger}{}^2\vee\opnorm{\Bscr^\dagger}{}^2}\opnorm{\Ascr-\Bscr}{},
\]
which completes the proof.
\end{proof}
\begin{lemma}\label{lemma:perturb_sinvec}[Adapted from Theorem~5.2.2 in~\citet{hsing2015theoretical} Let $\Tscr, \tilde{\Tscr}\in\mathfrak{B}(\HH_1,\HH_2)$ be two compact operators associated with singular systems $\{\lambda_j,f_{1j},f_{2j}\}_{j\in\NN}$ and
$\{\tilde\lambda_j,\tilde{f}_{1j},\tilde{f}_{2j}\}_{j\in\NN}$, respectively. Without the loss of generality, we assume that $\{f_{1j}\}_{j\in\NN}$ and $\{f_{2j}\}_{j\in\NN}$ provide CONS for $\HH_1$ and $\HH_2$, respectively. Define $\Delta=\tilde\Tscr-\Tscr$ and $\zeta_j=(2^{-1}\min_{k\neq j} |\lambda_k^2-\lambda_j^2|)^{-1}(\opnorm{\Tscr}{}+\opnorm{\Delta}{})\opnorm{\Delta}{}$. Assume that $\zeta_j<1-\epsilon$ for any number $\epsilon\in(0,1)$. Then, there exists a constant $C=C_\epsilon\in(0,\infty)$ such that
\[
\bignorm{(\tilde{f}_{1j} - f_{1j})-\sum_{k\neq j}\frac{\lambda_k\dotp{f_{2k}}{\Delta f_{1j}}_{\HH_2}+\lambda_j\dotp{f_{2j}}{\Delta f_{1k}}_{\HH_2}}{\lambda_k^2-\lambda_j^2} f_{1k}}_{\HH_1}\leq C{\gamma_j},
\]
with $\gamma_j=\zeta_j\{\zeta_j+\opnorm{\Delta}{}/(\opnorm{\Delta}{}+\opnorm{\Tscr}{})\}$.
\end{lemma}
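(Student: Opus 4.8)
The lemma is a reformulation of the compact-operator perturbation result of \citet{hsing2015theoretical} (their Theorem~5.2.2), packaged so that the higher-order terms are collapsed into the single quantity $\gamma_j$. The plan is: (i) reduce the perturbation of the left singular vectors of $\Tscr$ to an eigenvector perturbation for the self-adjoint operator $\Tscr^*\Tscr$; (ii) apply the first-order eigenvector expansion supplied by the cited theorem; (iii) rewrite the first-order term through the singular-system identities so that it matches the sum in the statement; and (iv) bound the two leftover pieces by $\gamma_j$.

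For step (i), note that $\{f_{1j}\}_{j\in\NN}$ is an orthonormal eigenbasis of the compact self-adjoint operator $\Kscr:=\Tscr^*\Tscr\in\mathfrak{B}(\HH_1,\HH_1)$ with eigenvalues $\lambda_j^2$, since $\Tscr^*\Tscr f_{1j}=\lambda_j\Tscr^*f_{2j}=\lambda_j^2 f_{1j}$; likewise $\tilde f_{1j}$ is the $j$-th eigenvector of $\tilde\Kscr:=\tilde\Tscr^*\tilde\Tscr$. The induced perturbation is $E:=\tilde\Kscr-\Kscr=\Delta^*\Tscr+\Tscr^*\Delta+\Delta^*\Delta$, so $\opnorm{E}{}\le(2\opnorm{\Tscr}{}+\opnorm{\Delta}{})\opnorm{\Delta}{}$; hence the relative-gap parameter $\opnorm{E}{}/\min_{k\neq j}\abr{\lambda_k^2-\lambda_j^2}$ governing the perturbation of $\Kscr$ is bounded by the $\zeta_j$ of the statement, and the hypothesis $\zeta_j<1-\epsilon$ is exactly what keeps the constant $C_\epsilon$ in the cited theorem finite. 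The first-order eigenvector expansion then gives $\tilde f_{1j}-f_{1j}=\sum_{k\neq j}(\lambda_j^2-\lambda_k^2)^{-1}\dotp{f_{1k}}{E f_{1j}}\,f_{1k}+\mathrm{Rem}_j$ with $\norm{\mathrm{Rem}_j}_{\HH_1}\le C\zeta_j^2$, which is the quantitative content of Theorem~5.2.2.

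For step (iii), use $\Tscr f_{1j}=\lambda_j f_{2j}$ and $\Tscr f_{1k}=\lambda_k f_{2k}$ together with symmetry of the real inner product to obtain $\dotp{f_{1k}}{(\Delta^*\Tscr+\Tscr^*\Delta)f_{1j}}=\dotp{\Delta f_{1k}}{\Tscr f_{1j}}+\dotp{\Tscr f_{1k}}{\Delta f_{1j}}=\lambda_j\dotp{f_{2j}}{\Delta f_{1k}}+\lambda_k\dotp{f_{2k}}{\Delta f_{1j}}$, which is exactly the numerator in the statement (modulo the sign convention, inherited from the cited theorem). The remaining piece $\dotp{f_{1k}}{\Delta^*\Delta f_{1j}}$, summed against $(\lambda_j^2-\lambda_k^2)^{-1}f_{1k}$, has $\HH_1$-norm at most $\opnorm{\Delta}{}^2/\min_{k\neq j}\abr{\lambda_k^2-\lambda_j^2}=\tfrac{1}{2}\zeta_j\opnorm{\Delta}{}/(\opnorm{\Tscr}{}+\opnorm{\Delta}{})$ (by Bessel's inequality for $\{f_{1k}\}$ and $\norm{\Delta^*\Delta f_{1j}}_{\HH_1}\le\opnorm{\Delta}{}^2$, plus the definition of $\zeta_j$). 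Adding this to $\norm{\mathrm{Rem}_j}_{\HH_1}\le C\zeta_j^2$ yields the claimed estimate $C\gamma_j$ with $\gamma_j=\zeta_j\{\zeta_j+\opnorm{\Delta}{}/(\opnorm{\Delta}{}+\opnorm{\Tscr}{})\}$.

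The main obstacle is not the algebra but matching hypotheses and the abstract remainder: one must verify that translating the assumptions of the self-adjoint perturbation theorem, which are naturally phrased in $\opnorm{E}{}$ and the spectral gaps of $\Tscr^*\Tscr$, into the statement's assumptions phrased in $\opnorm{\Delta}{}$, $\opnorm{\Tscr}{}$ and $\min_{k\neq j}\abr{\lambda_k^2-\lambda_j^2}$ preserves the $\epsilon$-slack needed for $C_\epsilon<\infty$, and that the $\Delta^*\Delta$ contribution is genuinely absorbed by the factor $\opnorm{\Delta}{}/(\opnorm{\Delta}{}+\opnorm{\Tscr}{})$ rather than by $\zeta_j$ alone -- this is precisely why $\gamma_j$ has the two-term form. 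A minor point: the statement concerns only the $\HH_1$-side singular vectors; the $\HH_2$-side version (not needed in the application) follows verbatim with $\Tscr$ replaced by $\Tscr^*$.
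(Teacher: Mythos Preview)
The paper does not supply its own proof of this lemma: it is simply stated as an adaptation of Theorem~5.2.2 in \citet{hsing2015theoretical} and then invoked (in the proof of Lemma~\ref{lemma:I82}) without further argument. So there is no paper-side proof to compare against.

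Your reduction is the standard one and is correct. Passing from the singular system of $\Tscr$ to the eigensystem of the self-adjoint operator $\Tscr^*\Tscr$, writing $\tilde\Tscr^*\tilde\Tscr-\Tscr^*\Tscr=\Delta^*\Tscr+\Tscr^*\Delta+\Delta^*\Delta$, and then applying the first-order eigenvector expansion is exactly how one derives singular-vector perturbation bounds from eigenvector perturbation bounds. Your identification of the first-order coefficient $\dotp{f_{1k}}{(\Delta^*\Tscr+\Tscr^*\Delta)f_{1j}}$ with $\lambda_k\dotp{f_{2k}}{\Delta f_{1j}}+\lambda_j\dotp{f_{2j}}{\Delta f_{1k}}$ is correct, and your bookkeeping showing that the $\Delta^*\Delta$ residual contributes precisely the $\zeta_j\cdot\opnorm{\Delta}{}/(\opnorm{\Tscr}{}+\opnorm{\Delta}{})$ piece of $\gamma_j$ (while the second-order remainder from the cited theorem contributes the $\zeta_j^2$ piece) explains the particular form of $\gamma_j$ in the statement. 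The only caveat is notational: depending on the sign and ordering conventions in the cited theorem, the denominator may appear as $\lambda_j^2-\lambda_k^2$ rather than $\lambda_k^2-\lambda_j^2$, but this is absorbed into the choice of sign for $\tilde f_{1j}$ and is immaterial for the norm bound.
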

\begin{lemma}\label{lemma:difference_sqrt}
Let $\Ab$ and $\Bb$ be $d \times d$ real symmetric matrices such that $ 0\prec\Ab,\Bb\preceq\lambda\Ib$ for some constant $\lambda>0$, then we have
\[
\norm{\Ab^{-1/2}-\Bb^{-1/2}}_2\leq 
\rbr{3\lambda^{1/2}\norm{\Ab^{-1/2}}_2
+
1
}\norm{\Bb^{-3/2}}_2\norm{\Ab-\Bb}_2.
\]
\end{lemma}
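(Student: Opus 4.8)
The plan is to reduce the bound on $\norm{\Ab^{-1/2}-\Bb^{-1/2}}_2$ to a bound on the perturbation of the matrix square root, via the algebraic identity
\[
\Ab^{-1/2}-\Bb^{-1/2}=\Ab^{-1/2}\bigl(\Bb^{1/2}-\Ab^{1/2}\bigr)\Bb^{-1/2},
\]
which one verifies directly using $\Ab^{-1/2}\Ab^{1/2}=\Ib=\Bb^{1/2}\Bb^{-1/2}$ (legitimate since $\Ab,\Bb\succ0$). Submultiplicativity of the spectral norm then gives $\norm{\Ab^{-1/2}-\Bb^{-1/2}}_2\le\norm{\Ab^{-1/2}}_2\,\norm{\Bb^{1/2}-\Ab^{1/2}}_2\,\norm{\Bb^{-1/2}}_2$, so it suffices to control the square-root difference.

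First I would bound $\norm{\Bb^{1/2}-\Ab^{1/2}}_2$. Writing $\Eb:=\Bb^{1/2}-\Ab^{1/2}$ and expanding $\Bb-\Ab=\Bb^{1/2}\Bb^{1/2}-\Ab^{1/2}\Ab^{1/2}$, one sees that $\Eb$ solves the Sylvester equation $\Bb^{1/2}\Eb+\Eb\Ab^{1/2}=\Bb-\Ab$. Since $\Ab^{1/2},\Bb^{1/2}\succ0$, this equation has the unique solution $\Eb=\int_0^\infty e^{-t\Bb^{1/2}}(\Bb-\Ab)\,e^{-t\Ab^{1/2}}\,dt$, as one checks by differentiating $t\mapsto e^{-t\Bb^{1/2}}\Eb\,e^{-t\Ab^{1/2}}$ and integrating from $0$ to $\infty$. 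Taking spectral norms, with $\norm{e^{-t\Bb^{1/2}}}_2=e^{-t\sinmin{\Bb^{1/2}}}$ and likewise for $\Ab^{1/2}$,
\[
\norm{\Eb}_2\le\norm{\Bb-\Ab}_2\int_0^\infty e^{-t(\sinmin{\Ab^{1/2}}+\sinmin{\Bb^{1/2}})}\,dt=\frac{\norm{\Bb-\Ab}_2}{\sinmin{\Ab^{1/2}}+\sinmin{\Bb^{1/2}}}\le\norm{\Bb^{-1/2}}_2\,\norm{\Bb-\Ab}_2,
\]
where the last step discards the $\Ab$-contribution and uses $\sinmin{\Bb^{1/2}}^{-1}=\norm{\Bb^{-1/2}}_2$.

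Combining the two displays yields $\norm{\Ab^{-1/2}-\Bb^{-1/2}}_2\le\norm{\Ab^{-1/2}}_2\,\norm{\Bb^{-1/2}}_2^2\,\norm{\Ab-\Bb}_2$, and it remains to absorb $\norm{\Bb^{-1/2}}_2^2$ into $\norm{\Bb^{-3/2}}_2=\norm{\Bb^{-1/2}}_2^3$. The hypothesis $\Bb\preceq\lambda\Ib$ forces $\sinmin{\Bb}\le\lambda$, hence $\norm{\Bb^{-1/2}}_2=\sinmin{\Bb}^{-1/2}\ge\lambda^{-1/2}$, so $\norm{\Bb^{-1/2}}_2^2\le\lambda^{1/2}\norm{\Bb^{-1/2}}_2^3\le3\lambda^{1/2}\norm{\Bb^{-3/2}}_2$. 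Substituting and bounding $3\lambda^{1/2}\norm{\Ab^{-1/2}}_2\le3\lambda^{1/2}\norm{\Ab^{-1/2}}_2+1$ gives exactly the claimed inequality.

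The only step that needs genuine care is the square-root perturbation bound: $\Ab$ and $\Bb$ need not commute, so one cannot simultaneously diagonalize, and the Sylvester/integral representation is what makes the estimate go through uniformly; an equivalent route is to integrate the Fréchet derivative of $X\mapsto X^{1/2}$ along the segment $(1-t)\Ab+t\Bb$, using concavity of $\sinmin{\cdot}$ on symmetric matrices to keep the path uniformly positive definite. Everything else is routine manipulation with spectral norms.
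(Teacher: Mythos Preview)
Your proof is correct but takes a genuinely different route from the paper's. The paper uses the identity
\[
\Ab^{-1/2}-\Bb^{-1/2}
=\Ab^{-1/2}\bigl(\Bb^{3/2}-\Ab^{3/2}\bigr)\Bb^{-3/2}+(\Ab-\Bb)\Bb^{-3/2},
\]
and then cites Lemma~8 of \citet{fukumizu2007statistical} for the $3/2$-power perturbation bound $\norm{\Bb^{3/2}-\Ab^{3/2}}_2\le 3\lambda^{1/2}\norm{\Ab-\Bb}_2$. This decomposition is exactly what generates the constants in the stated inequality: the additive term $(\Ab-\Bb)\Bb^{-3/2}$ produces the ``$+1$'' and the cited lemma produces the ``$3$''. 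Your approach instead works with the simpler identity $\Ab^{-1/2}-\Bb^{-1/2}=\Ab^{-1/2}(\Bb^{1/2}-\Ab^{1/2})\Bb^{-1/2}$ and controls the square-root difference from first principles via the Sylvester integral representation. This is more self-contained (no external citation needed) and in fact yields the sharper intermediate bound $\norm{\Ab^{-1/2}-\Bb^{-1/2}}_2\le\norm{\Ab^{-1/2}}_2\norm{\Bb^{-1/2}}_2^{2}\norm{\Ab-\Bb}_2$, after which you pad the constants to match the target; the paper's decomposition, by contrast, lands directly on the stated form. Either argument is fine for the purposes of the paper.
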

\begin{proof}
First, we write
\begin{align*}
    \norm{\Ab^{-1/2}-\Bb^{-1/2}}_2
    &=
    \norm{\Ab^{-1/2}(\Bb^{3/2}-\Ab^{3/2})\Bb^{-3/2}
    +
    (\Ab - \Bb)\Bb^{-3/2}}_2\\
    &\leq 
    \norm{\Ab^{-1/2}}_2\norm{\Bb^{-3/2}}_2\norm{\Bb^{3/2}-\Ab^{3/2}}_2
    +
    \norm{\Bb^{-3/2}}_2\norm{\Ab - \Bb}_2.
\end{align*}
Apply Lemma~8 in~\citet{fukumizu2007statistical}, we could bound $\norm{\Bb^{3/2}-\Ab^{3/2}}_2\leq 3\lambda^{1/2}\norm{\Ab-\Bb}_2$. Then we complete the proof.
\end{proof}

\section{Discussion of the Transformation Operator}
In this section, we discuss an alternative initialization approach as compared to the approach proposed in Section~\ref{ssec:initialization}. We aggregate the samples across nodes $i\in V$ to initiate $\Amk$. The detail is documented in Appendix~\ref{sec:appendix:aggregate}. In Appendix~\ref{ssec:extension}, we discuss the construction of $\Amk$ varied across nodes. In Appendix~\ref{ssec:extension}, we discuss the generalization of the initialization~\ref{eq:cca_model} to $M\geq 2$ modalities.
\subsection{Aggregation of Samples Across Nodes}\label{sec:appendix:aggregate}
The initialization procedure introduced in Section~\ref{ssec:initialization} only use the sample from node $1$. As an alternative, one could aggregate the samples across nodes to conduct joint canonical correlation analysis. That is, we compute the singular decomposition of
\[
\hat\Rb_{12}^\sharp = \frac{1}{p}\sum_{i=1}^p (\hat\Sigmab_{i,i}^1)^{-1/2}
\hat\Sigmab_{i,i}^{12}(\hat\Sigmab_{i,i}^2)^{-1/2}.
\]
We denote $\hat\Vb^\sharp_1$ as the matrix whose columns are the top-k right singular vectors of $\hat\Rb_{12}^\sharp$, $\hat\Vb^\sharp_2$ as the matrix whose columns are the top-k left singular vectors of $\hat\Rb_{12}^\sharp$ and $\hat\Gammab^\sharp$ is a diagonal matrix whose diagonal entries are top-k singular values of $\hat\Rb_{12}^\sharp$. Then, we initiate $\Amk$ as 
\[
\Amk^\sharp=(\hat\Gammab^\sharp)^{-1/2}\hat\Vb_m^{\sharp\top}\rbr{p^{-1}\sum_{i=1}^p\sCovfree{i}{i}}^{-1/2}.
\]
The experimental setup is described in Section~\ref{ssec:dis_vs_sample} and we test the experiments for $20$ runs and take the average of the results. In the simulation, we vary $N$ and measure the distance $\sum_{m=1}^M\sigma_A^{-1}\norm{\Amk-\pAmk}_F^2$. The result is displayed in Figure~\ref{fig:tridiag3_sj_compare}. The resulting figure indicates that the normalized distance is consistent with respect to the sample size $N$.
\begin{figure}
    \centering
    \includegraphics[width=.9\textwidth]{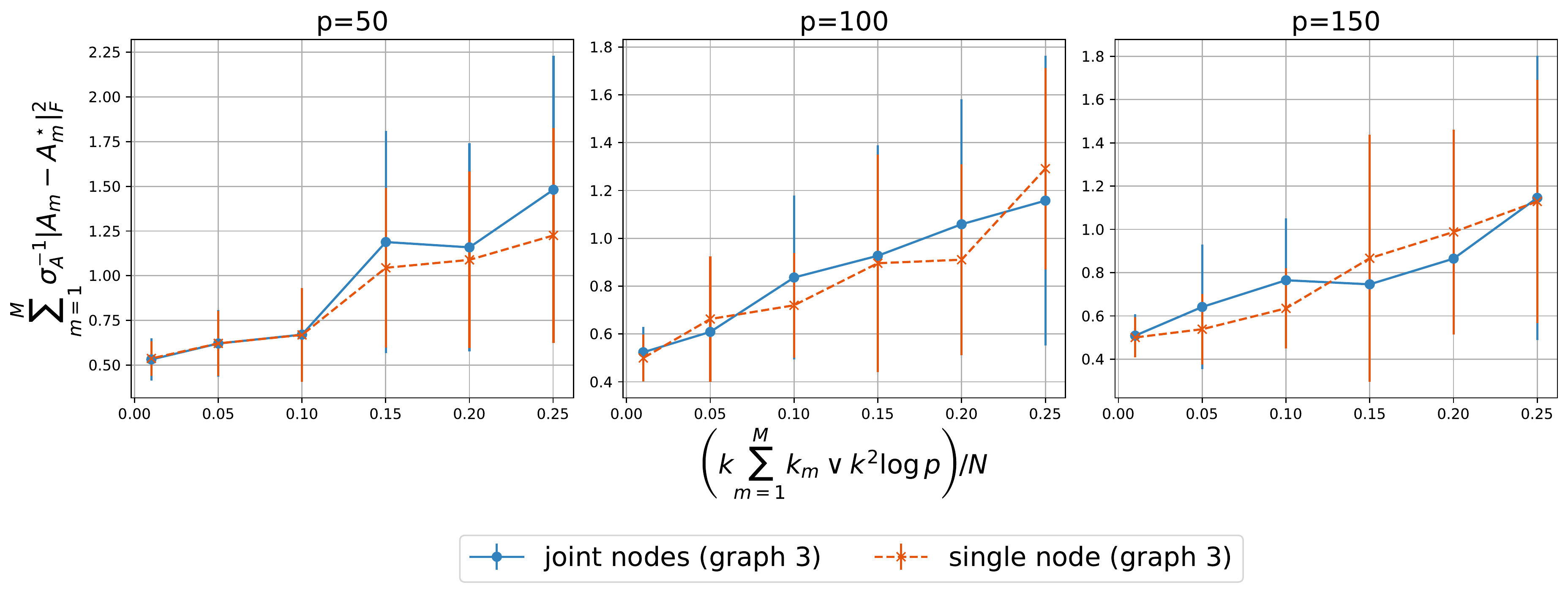}
    \caption{We compare two approaches under $p=\{50,100,150\}$, $\sigma_A=\max_{\mseq}\norm{\pAmk}_2$. The plots indicate that both approaches have comparable performance, in particular in the case when $N$ is large leading to small $(k\sum_{m=1}^Mk_m\vee k^2\log p)/N$.}
    \label{fig:tridiag3_sj_compare}
\end{figure}

\subsection{Extension to Distinct Operators}\label{ssec:extension}
Our model is easily extended to accommodate a distinct transformation operator for each node $i$, i.e., $\Ab^m_i$. Then, the objective function~\eqref{eq:f_n} becomes:
 \begin{align}\label{eq:f_n:A}
 f(\setAk,\setBk) &= \sum_{i=1}^p\sum_{m=1}^M\frac{1}{2N}\bignorm{\Ab_i^m \Ykmnfree{i}-\sum_{j\in\pni}\Bijk\Ab_j^m \Ykmnfree{j}}_F^2.
 \end{align}

 The initialization for $\{\Ab_i^m\}_{i=1,\ldots,p}$ can be obtained by simply repeating~\eqref{eq:cca_model} for each node. Similarly, we can tweak Algorithm~\ref{alg:update} to individually update $\Ab_i^m$ for this more flexible model by alternately minimizing the objective~\eqref{eq:f_n:A}.

\FloatBarrier

\subsection{Generalization to $M$ modalities}\label{ssec:generalization}
{In this section, we discuss a simple extension that enables generalization to $M\geq 2$ modalities. Note that Algorithm~1 can be applied to any $M$, and hence we will focus on the generalization of the initialization method~(4.3). The initialization method~(4.3) is the result of the maximum log-likelihood estimator of the Gaussian linear latent model~\citep{bach2005probabilistic}. It follows that the generalization to $M$ modalities may be solved using the maximum log-likelihood of the $M$-modal generative model. There are multiple potential ways to carry out the estimation, e.g., the EM algorithm. Here, we provide a simple extension from~\eqref{eq:cca_model}.\\
Given $m\in\{1,\ldots,M\}$, let $\backslash m=\{1,\ldots, M\}\backslash\{m\}$. 
Note that we can generalize~\eqref{eq:y_z} to 
\begin{align*}
    \yb^m_i\mid \zb_i&\sim\Ncal(\Lb^{m\star}\zb_i,\Sigmab_{i,i}^{m,\bf{q}}+\Sigmab_{i,i}^{m,\mub});\\
    \yb^{\backslash m}_i\mid \zb_i&\sim\Ncal(\Lb^{\backslash m\star}\zb_i,\Sigmab_{i,i}^{\backslash m,\bf{q}}+\Sigmab_{i,i}^{\backslash m,\mub}), 
\end{align*}
where $\yb^{\backslash m}_i=(\yb_i^{1\top},\ldots,\yb_i^{m-1\top},\yb_i^{m+1\top},\ldots,\yb_i^{M\top})^\top$, $\Lb^{\backslash m\star}=(\Lb^{1\star\top}\cdots \Lb^{m-1\star\top}\Lb^{m+1\star\top}\cdots \Lb^{M\star\top})^\top$. Here $\Sigmab_{i,i}^{\backslash m,\bf{q}}$ is a block-diagonal matrix whose $j$-th block is $\Sigmab_{i,i}^{j,\bf{q}}$ for $j<m$ and 
$\Sigmab_{i,i}^{j+1,\bf{q}}$ for $j\geq m$; $\Sigmab_{i,i}^{\backslash m,\mub}$ is a matrix of the covariance of $(\mub_{i}^{1\top},\ldots,\mub_{i}^{m-1\top},\mub_{i}^{m+1\top},\ldots, \mub_{i}^{M\top})^\top$. 
Then, we can view $(\yb^m_i,\yb^{\backslash m}_i)$ as two modalities and apply Theorem~\ref{theorem:initialcca} with $k_2 = \sum_{i\in \backslash m}k_i$. We generalize the result in the following paragraph. The idea is that 
since Theorem~\ref{theorem:convergence} holds for any $M\geq 2$, if the generalized initialization method can fulfill Assumption~8, then Theorem~\ref{theorem:main} can be generalized to any $M\geq 2$.\\
Define $\Rb^{m\backslash m}=(\Sigmab^{m })^{-1/2}\Sigmab^{m\backslash m}(\Sigmab^{\backslash m})^{-1/2}$. The following Assumption is a variant of Assumption~\ref{assumption:distinctcca}.
\begin{customasm}{5'}\label{asumption:v5}
Suppose that there exists an $m\in\{1,\ldots, M\}$ such that $\min(k_m,\sum_{i\in\backslash m}k_i)\geq k$ and the top-k singular values of $\Rb^{m\backslash m}$ satisfy: $\gamma_{1}> \gamma_{2}>\ldots>\gamma_{k}>0$.
\end{customasm}
\begin{customthm}{5.1'}[Generalization of Theorem~\ref{theorem:initialcca}]\label{theorem:ini_gen}
 Let $\defcdnm$. Suppose there exists a $m$ satisfies Assumption~\ref{asumption:v5} and Assumptions~\ref{assumption:cov} holds. Suppose that $N=O(\kappa_m^2 k_m\vee \max_{i\in\backslash m}\kappa_i^2\sum_{i\in\backslash m}k_i)$. Let $\Lscr^{m,r}=\sum_{(\ell,\ell')\in I}\dotp{\Lscr^m\phi_{\ell'}}{\phi_{\ell}^m}_{\mathbb{H}_m}\phi_{\ell}^m\otimes\phi_{\ell'}$, with $I=\{(\ell,\ell')\in\NN\times\NN:(\ell,\ell')\neq (a,b), a=1,\ldots,k_m, b=1,\ldots,k\}$,  denote the truncation term. Then for $m'=1,\ldots, M$
\begin{equation}
\norm{\Ab^{m'(0)} -\Qb\Ab^{m'\star}}_F
\leq 
C_{\gamma_k,\nu,\rho}\rbr{1+\max_{j\neq i}\frac{1}{|\gamma_j -\gamma_i|}}
\sqrt{k\frac{\sum_{i=1}^M k_i}{N}}+\sqrt{8k}\opnorm{\Ascr^{m'}}{}^2\opnorm{\Lscr^{m',r}}{},
\end{equation}
with probability at least $1-5\exp(-\sum_{i=1}^M k_i)$, where
$C_{\gamma_k,\nu,\rho}>0$ is a constant.
\end{customthm}
\begin{proof}[Proof of Theorem~\ref{theorem:ini_gen}]
    Note that if Assumption~\ref{asumption:v5} is satisfied, this implies that there exists a $m$ such that $\Rb^{m\backslash m}$ satisfies Assumption~\ref{assumption:distinctcca}. Then we can apply Theorem~\ref{theorem:initialcca} and obtain the result.
\end{proof}
 }

\section{Additional Simulation Results}\label{sec:addsimulation}

In this section, we present additional simulation results. 
Section~\ref{ssec:dgp} introduces the graph and data generation processes.
Section~\ref{ssec:datailsofgraph4} introduces the simulation process of the edge sets used in Graph~4. Section~\ref{ssec:roc:noise1}--\ref{ssec:pvsn} present additional simulation results with different sample size and noise model.
\subsection{Data Generation Procedures}\label{ssec:dgp}
We first introduce the precision structures followed by the data generation processes. 

{\bf Construction of inverse covariance ${\bm\Omega}$}: 
Instead of generating graphs from $\beta^\star_{ij}$, we directly construct sparse inverse covariance operators. In simulations, we require the ranks of the operators to be finite. Assume that the true latent space is $r$-dimensional. 
Note that $r$ and $k$ are different: $r$ is the true dimension that generates the process and $k$ is the estimated (low-rank) dimension via Section~\ref{sec:parameterselection}. We follow graph generation processes introduced in~\citet{Zapata2019functional} and ~\citet{qiao2019functional}. Let $\Psi\in\RR^{r\times r}$ be a tri-diagonal matrix such that $\Psi_{i,i}=1$ for $i=1,\ldots,r$ and $\Psi_{i,i+1}=\Psi_{i+1,i}=0.5$ for $i=1,\ldots,r-1$. 
Let ${\bm\Omega}\in\RR^{pr\times pr}=({\bm\Omega}_{i,j})$ be the precision matrix with $r=9$ and ${\bm\Omega}_{i,j}\in\RR^{r\times r}$ for $i,j=1,\ldots,p$. Since ${\bm\Omega}_{i,j}={\bf 0}$ if and only if $\Bb_{ij}^{r\star}={\bf 0}$ for $i\neq j$, we consider consider the following structures on ${\bm\Omega}$:
\begin{itemize}
\item Graph 1: This model is similar to Graph 1 in Section~$5.1$ in \citep{qiao2019functional}. The diagonal blocks have and ${\bm\Omega}_{i,i}=\Ib$ for $\pseq$. For any $i=1,\ldots, p-1$, the off-diagonal blocks have ${\bm\Omega}_{i,i+1}={\bm\Omega}_{i+1,i}=0.4\Psi$. For any $i=1,\ldots,p-2$, the off-diagonal blocks have ${\bm\Omega}_{i,i+2}={\bm\Omega}_{i+2,i}=0.2\Psi$. For all other off-diagonal blocks, we have ${\bm\Omega}_{i,j}={\bf 0}$.
\item Graph 2: This model has similar structure as the Graph 2 in Section~$5.1$ in \citep{qiao2019functional} with the assumption that $p$ must be a constant multiple of $10$. For $t=1,\ldots, p/10$ and let ${\bm\Omega}_{t}=({\bm\Omega}_{i,j})_{i=10(t-1)+1,\ldots,10t, j=10(t-1)+1,\ldots,10t}$ be a $10r\times 10 r$ sub-matrix of ${\bm\Omega}$. If $t$ is an odd number, ${\bm\Omega}_{t}$ comes from Graph 1 with $p=10$. If $t$ is an even number, then ${\bm\Omega}_{t}=\Ib$.
    \item Graph 3: This model is the same as Graph 1 except that we have an additional structure: ${\bm\Omega}_{i,i+3}={\bm\Omega}_{i+3,i}=0.1\Psi$ for $i=1,\ldots,p-3$. 
    \item Graph 4: The structure is similar to example used in~\citet{Zapata2019functional}, which violates the partial separability structure~\citep{Zapata2019functional}. We adopt the modified graph structure, Model D in~\citet{zhao2021high} as the graph candidate for Graph 4. First, each node has the number of neighbors following a power law distribution $f(y)=y^{-2}$ and the candidates of neighbors are selected uniformly. Then, we partition the edge set $F$ into $r$ edge sets, $F_1,\ldots, F_r$, where the construction is described in Appendix~A.4 of~\citep{Zapata2019functional} and we restate the simulation procedure in Appendix~\ref{ssec:datailsofgraph4} for completeness. Given $F_l$, we construct a $\tilde{\Omegab}_l\in\RR^{p\times p}$ precision matrix for $l=1,\ldots,r$. The $ij$-th entry of $\tilde{\Omegab}_l$ is constructed as follows
    \begin{equation*}
        \tilde{\Omegab}_{l,ij} = \left\{\begin{array}{ll}
            1, &  i=j;\\
            0, &  (i,j)\not\in F_l\text{ or } i<j;\\
            \unif\rbr{\sbr{-\frac{2}{3},-\frac{1}{3}}\cup\sbr{\frac{1}{3},\frac{2}{3}}}, & 
            (i,j)\in F_l.
        \end{array}\right.
    \end{equation*}
    We then normalize $\tilde{\Omegab}_{l}$ such that each row of $\tilde{\Omegab}_{l}$ has unit norm. Then, we symmetrize $\tilde{\Omegab}_{l}$ by computing $(\tilde{\Omegab}_{l}+\tilde{\Omegab}_{l}^\top)/2$ and setting the diagonal entries to be $1$. We define $\Sigmab_{\text{ps}}=\diag(\Sigmab_{1},\ldots,\Sigmab_{r})$ where $\Sigmab_{l}=3l^{-1.8}\tilde{\Omegab}_l^{-1}$ for $l=1,\ldots, r$. Define $\bar\Omegab\in\RR^{pr\times pr}$ be a precision matrix whose $l$-th $p\times p$ block diagonal is $\bar\Omegab_{l,l}=\tilde{\Omegab}_l$ and off-diagonal blocks are $\bar{\Omegab}_{l,l+1}=\bar{\Omegab}_{l+1,l}=\{
    \tilde{\Omegab}_l-\diag(\tilde{\Omegab}_l)
    +
    \tilde{\Omegab}_{l+1}-\diag(\tilde{\Omegab}_{l+1})
    \}/2$. We then obtain
    \[
    \Omegab = \diag(\Sigmab_{\text{ps}})^{-1/2}
    {\diag(\bar\Omegab)^{-1/2}\bar\Omegab\diag(\bar\Omegab)^{-1/2}}
    \diag(\Sigmab_{\text{ps}})^{-1/2}.
    \]
\end{itemize}

{\bf Construction of ${\Ascr^m}$}: 
 Let $\Ab^{m,r}=(\Lb^{m,r})^\dagger\in\RR^{ r\times r_m}$ be the matrix representation of ${\Ascr^m}$: for each entry $a_{\ell\ell'}^{m,r}$ in $\Ab^{m,r}$, we have $a_{\ell\ell'}^{m,r}=\dotp{{\Ascr^m}\phi_{\ell'}^m}{\phi_{\ell}}$. Noting that $\pAmk$ is a sub-matrix of $\Ab^{m,r}$. 
To construct $\Ab^{m,r}$ for $\mseq$, we first generate sparse orthonormal rows, where the ratio of the non-zero entries are $1/3$. Then we scale the magnitude of row $i$ in $\Ab^{m,r}$ with $0.2(i+1)+1$ for $i=1,\ldots, r_m$. The  covariance matrix of $\xb^m$ for $m=1,2$ is hence $\{(\Ib_p\otimes\Lb^{m,r}){\bm\Omega}^{-1}(\Ib_p\otimes(\Lb^{m,r})^\top)\}$.\\\par
{\bf Construction of Noise models}: 
We consider two covariance structures of noise models.
\begin{itemize}
    \item Noise Model 1: $\Sigmab^{m,\qb}=\sigma\Ib$, where  $\Ib\in\RR^{pr\times pr}$ for $\nseq$ and $\mseq$. In the simulation, we set $\sigma=0.05$
    \item Noise Model 2: $\Sigma^{m,\qb}$ is a block-diagonal matrix with $p/10$ blocks. First, we generate a block-diagonal matrix ${\Fb^m}$ with $p/10$ blocks and each row in a block is orthonormal to the other rows in the same block. For $m=2$, we rotate ${\Fb^m}$ $90$ degree clockwise.
    Let $\lambda_{m,1}\geq\ldots\geq\lambda_{m,pr}$ be the eigenvalues of $\tilde{\Fb}^m = ({\Fb^m}+{\Fb^m})/2$.    Then, we make $\tilde{\Fb}^m$ positive definite by taking the low-rank parts of $\tilde{\Fb}^m$ such that the corresponding eigenvalues are greater than zero. Then, we normalize the remaining eigenvalues $\lambda_{m,i}$ by $0.01\lambda_{m,i}\{\max_j(\lambda_{m,j})\}^{-1}\sinmax{\Sigmab^{m,\qb}}$. 

\end{itemize}

The major difference between Noise Model 1 and Noise Model 2 is that data from all modalities corrupted with Noise Model 1 have identical graph structures in the observational space. Moreover, when magnitude of the noise is small, i.e., $\sigma$ is small, the graph structures of the observed graphs and the latent graph will look almost identical.
In contrast, data corrupted with Noise Model 2 has very different observed graph structures between modalities. In addition, the latent graph has distinct graph structure from the observed graphs. We construct Noise Model 2 to mimic the real world situation that data from different modalities are corrupted with distinct structured noise and to test the robustness of the proposed model.  
\\
{\bf Data Simulation Process:}
We generate $N$ samples from the latent space, $\zb^{(1)},\ldots\zb^{(N)}\in\RR^{p r}$ under the distribution $\Ncal({\bf 0},{\bm\Omega}^{-1})$, where $\bm\Omega$ is constructed by one of the Graph 1--4. Then, we compute $\xkmn$ as $\xkmn=(\Ib_p\otimes \Lb^{m,r})\zb^{(n)}$ for $\nseq$ and $\mseq$. The observed samples are 
\[
\ykmn=\xkmn+\qb^{m,(n)},\quad \nseq,\;\mseq,
\]
where $\qb^{m,(n)}\sim\Ncal({\bf 0},\Sigmab^{m,\qb})$ is generated independently from either of Noise Model 1 or Noise Model 2.

\subsection{Simulation Details of Graph 4}\label{ssec:datailsofgraph4}
\begin{algorithm}[t!]
\spacingset{1}
    \caption{Partition of the edge sets $F_1,\ldots,F_r$}\label{alg:edge_partition}
\begin{algorithmic}
    \State Input: {$F$, $\tau\in[0,1]$}
    \State $F_c\leftarrow$ randomly select $\tau\abr{F}$ of the edges from $F$
    \For{$l=1,\ldots, r$}
    \State $F_l\leftarrow F_c$
    \State $l\leftarrow 1$, $c\leftarrow 1$
    \EndFor
    \For{$e\in E\backslash E_c$}
     \State $F_l\leftarrow F_l\bigcup e$
     \State $l\leftarrow l+1$
     \If{$l>c$}
        \State $l\leftarrow 1$
        \State $c\leftarrow (c+1) \text{mod } r$
     \EndIf
   \EndFor
\end{algorithmic}
\end{algorithm}

This section discusses the details of the simulation process of the edge set $F_1,\ldots,F_r$ of Graph 4. We follow the same simulation process introduced in Section~A4.1 in~\citet{Zapata2019functional} and restate it here for clarity. First, we generate a graph with the edge set $F$ whose edges follow the power law distribution $f(y)=y^{-2}$. Then, we partition the edge set $F$ into $r$ edge sets $F_1,\ldots,F_r$ such that $F=\bigcup_{l=1}^r F_i$. The partition procedure is described in Algorithm~\ref{alg:edge_partition}.

\subsection{Additional Results for Graph Simulations}\label{ssec:roc:noise1}
\begin{figure}[!h]
    \centering    \includegraphics[width=\textwidth]{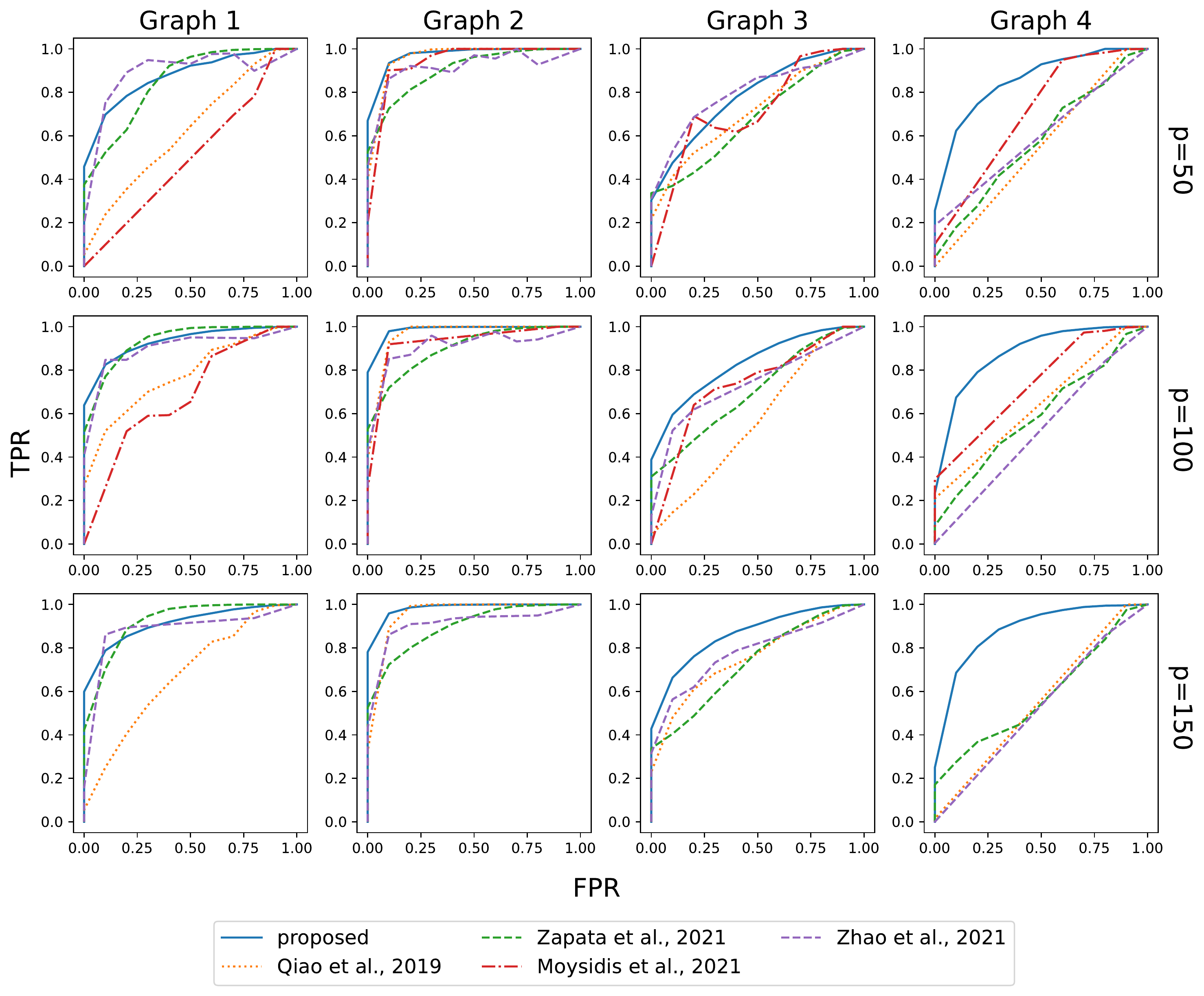}
    \caption{The ROC curves of Graph 1–4. The additive noise is generated from noise
model 1.The AUC is discussed in Table~\ref{tab:auc_noise1}.  The proposed method has consistent performance across four graphs and $p=\{50,100,150\}$. }
    \label{fig:noise_model1}
\end{figure}
{We run the experiments for noise model~1. Figure~\ref{fig:noise_model1} shows the ROC curves under noise model 1. Table~\ref{tab:auc}--\ref{tab:auc_noise1} indicate that the proposed method has gained in performance in both the AUC and AUC15 under different graph settings and dimension $p$. It is worth noting that the neighborhood regression approach on single data modality proposed by~\citet{zhao2021high} has smaller AUC and AUC15 compared to our method, suggesting that integrating data modality might improve the performance. Graph~2 has a simpler and sparser graph structure compared to the other three graphs, and hence most estimators perform well in this setting. In the case of Graph~1 and Graph~3 which have much more complicated graph structures, ~\citet{qiao2019functional} and~\citet{moysidis2021joint} achieve much lower AUC. Finally, while most methods fail in Graph~4, our method still retains good performance under $p=50,100,150$.}

\begin{table}[!h]
\spacingset{1}
    \centering
    \resizebox{\textwidth}{!}{%
    \begin{tabular}{llllllll}
    \toprule
    \midrule\\
    &&\multicolumn{3}{c}{AUC}&\multicolumn{3}{c}{AUC15}\\
     &&\multicolumn{6}{c}{Dimension (p)}\\
    Graph & Method&$50$&$100$&$150$&$50$&$100$&$150$\\
    \midrule
    \multirow{4}{*}{Graph 1}    & Proposed
    &$0.88(0.02)$	&$0.93(0.01)$	&$0.90(0.01)$	&$0.64(0.03)$	&$0.77(0.03)$	&$0.70(0.02)$
    \\
     & FGGM 
    &$0.55(0.05)$	&$0.63(0.13)$	&$0.57(0.07)$	&$0.12(0.05)$	&$0.26(0.19)$	&$0.11(0.04)$\\
    & PSFGGM
    &$0.83(0.01)$	&$0.92(0.01)$	&$0.90(0.00)$	&$0.47(0.01)$	&$0.64(0.02)$	&$0.56(0.01)$\\
    &FPCA
    &$0.84(0.29)$
    &$0.89(0.12)$
    &$0.87(0.15)$
    &$0.51(0.03)$
    &$0.70(0.01)$
    &$0.68(0.02)$
    \\
    & JFGGM
    &$0.47(0.02)$	&$0.64(0.01)$
    &\multicolumn{1}{c}{--}
    &$0.07(0.00)$	&$0.15(0.02)$
    &\multicolumn{1}{c}{--}
    \\
    [1ex]
    \multirow{4}{*}{Graph 2}    & Proposed
    &$0.97(0.01)$	&$0.98(0.01)$	&$0.98(0.00)$ 	&$0.82(0.03)$	&$0.90(0.03)$	&$0.88(0.01)$\\
    & FGGM 
    &$0.96(0.00)$	&$0.96(0.01)$	&$0.95(0.00)$	&$0.74(0.03)$	&$0.81(0.02)$	&$0.75(0.02)$\\
    & PSFGGM
    &$0.89(0.01)$	&$0.89(0.02)$	&$0.89(0.01)$	&$0.63(0.03)$	&$0.65(0.03)$	&$0.65(0.02)$\\
    &FPCA
    &$0.89(0.01)$
    &$0.89(0.01)$
    &$0.91(0.01)$
    &$0.74(0.02)$
    &$0.74(0.04)$
    &$0.77(0.01)$\\
    & JFGGM
    &$0.83(0.03)$	&$0.91(0.01)$
    &\multicolumn{1}{c}{--}
    &$0.28(0.07)$	&$0.62(0.03)$
    &\multicolumn{1}{c}{--}\\
    [1ex]
    \multirow{4}{*}{Graph 3}    & Proposed 
    &$0.79(0.03)$	&$0.80(0.03)$	&$0.88(0.02)$	&$0.39(0.03)$	&$0.48(0.04)$	&$0.56(0.03)$\\
    & FGGM 
    &$0.68(0.02)$	&$0.52(0.00)$	&$0.73(0.01)$	&$0.36(0.02)$	&$0.08(0.00)$	&$0.40(0.02)$ \\
    & PSFGGM
    &$0.65(0.01)$	&$0.67(0.01)$	&$0.70(0.01)$	&$0.35(0.00)$	&$0.34(0.00)$	&$0.36(0.00)$\\
    &FPCA
    &$0.76(0.01)$
    &$0.68(0.02)$
    &$0.75(0.01)$
    &$0.44(0.02)$
    &$0.41(0.02)$
    &$0.44(0.01)$
    \\
    & JFGGM
    &$0.63(0.04)$	&$0.68(0.01)$
    &\multicolumn{1}{c}{--}
    &$0.12(0.03)$	&$0.16(0.01)$
    &\multicolumn{1}{c}{--}\\
    [1ex]
    \multirow{4}{*}{Graph 4}    & Proposed
    &$0.83(0.00)$	&$0.85(0.00)$ 	&$0.86(0.00)$	&$0.47(0.00)$	&$0.46(0.00)$	&$0.46(0.00)$\\
    & FGGM
    &$0.72(0.00)$
    &$0.63(0.00)$
    &$0.67(0.00)$
    &$0.15(0.00)$
    &$0.11(0.00)$	&$0.27(0.00)$\\
    & PSFGGM
    &$0.54(0.00)$	&$0.56(0.00)$	&$0.54(0.00)$	&$0.09(0.00)$	
    &$0.17(0.00)$	&$0.23(0.00)$\\
    &FPCA
    &$0.62(0.00)$
    &$0.49(0.00)$
    &$0.49(0.00)$
    &$0.35(0.01)$
    &$0.08(0.00)$
    &$0.07(0.00)$
    \\
    & JFGGM
    &$0.68(0.02)$
    &$0.69(0.00)$
    &\multicolumn{1}{c}{--}
    &$0.20(0.01)$	
    &$0.29(0.00)$
    &\multicolumn{1}{c}{--}\\
    \midrule
    \bottomrule
    \end{tabular}%
    }
    \caption{The average AUC of Graph 1--4 over 10 runs. The value inside the parentheses denotes the standard deviation. The additive noise is generated from noise model 1, where the AUC plot is shown in Figure~\ref{fig:noise_model1}. The proposed method consistently achieves the largest AUC and AUC15 across four different graphs and $p=\{50,100,150\}$.}
    \label{tab:auc_noise1}
\end{table}

\begin{table}[!h]
    \spacingset{1}
    \centering
    \resizebox{\textwidth}{!}{%
    \begin{tabular}{llllllll}
    \toprule
    \midrule\\
    &&\multicolumn{3}{c}{AUC}&\multicolumn{3}{c}{AUC15}\\
     &&\multicolumn{6}{c}{Dimension (p)}\\
    Graph & Method&$50$&$100$&$150$&$50$&$100$&$150$\\
    \midrule
    \multirow{4}{*}{Graph 1}    & Proposed
    &$0.87(0.02)$	
    &$0.92(0.02)$	
    &$0.87(0.01)$	
    &$0.59(0.02)$	
    &$0.72(0.04)$	
    &$0.62(0.02)$
    \\
     & FGGM 
     &$0.60(0.02)$	&$0.76(0.01)$	&$0.64(0.01)$	&$0.17(0.02)$	&$0.45(0.02)$	&$0.15(0.02)$\\
    & PSFGGM
    &$0.83(0.01)$
    &$0.91(0.01)$	
    &$0.90(0.01)$
    &$0.46(0.01)$
    &$0.63(0.02)$	&$0.56(0.01)$\\
    & FPCA
    &$0.84(0.01)$
    &$0.89(0.00)$
    &$0.85(0.02)$
    &$0.52(0.03)$
    &$0.70(0.01)$
    &$0.65(0.02)$
    \\
    & JFGGM
    &$0.48(0.01)$
    &$0.66(0.01)$
    &\multicolumn{1}{c}{--}
    &$0.07(0.00)$
    &$0.18(0.02)$
    &\multicolumn{1}{c}{--}
    \\
    [1ex]
    \multirow{4}{*}{Graph 2}    & Proposed
    &$0.96(0.01)$
    &$0.98(0.01)$
    &$0.98(0.00)$
    &$0.82(0.02)$
    &$0.88(0.02)$	&$0.82(0.02)$\\
    & FGGM 
    &$0.95(0.01)$
    &$0.96(0.01)$
    &$0.96(0.01)$
    &$0.71(0.02)$
    &$0.78(0.03)$	&$0.76(0.03)$\\
    & PSFGGM
    &$0.88(0.02)$
    &$0.89(0.00)$
    &$0.89(0.01)$
    &$0.63(0.03)$
    &$0.64(0.02)$	&$0.65(0.02)$\\
    &FPCA
    &$0.89(0.02)$
    &$0.90(0.01)$
    &$0.90(0.02)$
    &$0.70(0.02)$
    &$0.76(0.03)$
    &$0.75(0.02)$
    \\
    & JFGGM
    &$0.85(0.02)$
    &$0.89(0.03)$
    &\multicolumn{1}{c}{--}
    &$0.33(0.07)$
    &$0.64(0.04)$
    &\multicolumn{1}{c}{--}\\
    [1ex]
    \multirow{4}{*}{Graph 3}    & Proposed 
    &$0.77(0.05)$
    &$0.80(0.48)$
    &$0.83(0.03)$
    &$0.38(0.04)$
    &$0.48(0.05)$	&$0.52(0.05)$\\
    & FGGM 
    &$0.71(0.02)$
    &$0.52(0.00)$
    &$0.73(0.01)$
    &$0.36(0.02)$
    &$0.09(0.01)$
    &$0.40(0.02)$ \\
    & PSFGGM
    &$0.66(0.01)$
    &$0.67(0.01)$
    &$0.70(0.01)$
    &$0.35(0.00)$
    &$0.33(0.01)$
    &$0.37(0.01)$ \\
    &FPCA
    &$0.76(0.02)$
    &$0.69(0.01)$
    &$0.74(0.00)$
    &$0.45(0.01)$
    &$0.41(0.01)$
    &$0.44(0.01)$
    \\
    & JFGGM
    &$0.66(0.03)$
    &$0.69(0.02)$
    &\multicolumn{1}{c}{--}
    &$0.15(0.02)$
    &$0.18(0.01)$
    &\multicolumn{1}{c}{--}\\
    [1ex]
    \multirow{4}{*}{Graph 4}    & Proposed
    &$0.84(0.00)$
    &$0.84(0.00)$
    &$0.86(0.00)$
    &$0.47(0.00)$
    &$0.45(0.00)$
    &$0.48(0.00)$\\
    & FGGM
    &$0.72(0.00)$
    &$0.63(0.00)$
    &$0.67(0.00)$
    &$0.15(0.00)$
    &$0.11(0.00)$	&$0.27(0.00)$\\
    & PSFGGM
    &$0.53(0.00)$
    &$0.56(0.00)$
    &$0.51(0.00)$
    &$0.10(0.00)$
    &$0.14(0.00)$
    &$0.12(0.00)$\\
    &FPCA
    &$0.61(0.00)$
    &$0.48(0.00)$
    &$0.51(0.00)$
    &$0.34(0.00)$
    &$0.07(0.00)$
    &$0.08(0.00)$\\
    & JFGGM
    &$0.75(0.04)$
    &$0.63(0.00)$
    &\multicolumn{1}{c}{--}
    &$0.23(0.03)$	
    &$0.24(0.00)$
    &\multicolumn{1}{c}{--}\\
    \midrule
    %
    \bottomrule
    \end{tabular}%
    }
    \caption{The average AUC of Model 1--4 over 10 runs. The value inside the parentheses denotes the standard deviation. The additive noise is generated from noise model 2, where the AUC plot is shown in Figure~\ref{fig:noise_model3}. The rightmost three columns denote the average AUC for FPR between $[0,0.15]$, normalized to have a maximum area $1$. The proposed method achieves the largest AUC and AUC15 across four different graphs and $p=\{50,100,150\}$.}
    \label{tab:auc}
\end{table}

\FloatBarrier

\begin{figure}[!h]
    \centering
    \includegraphics[width=\textwidth]{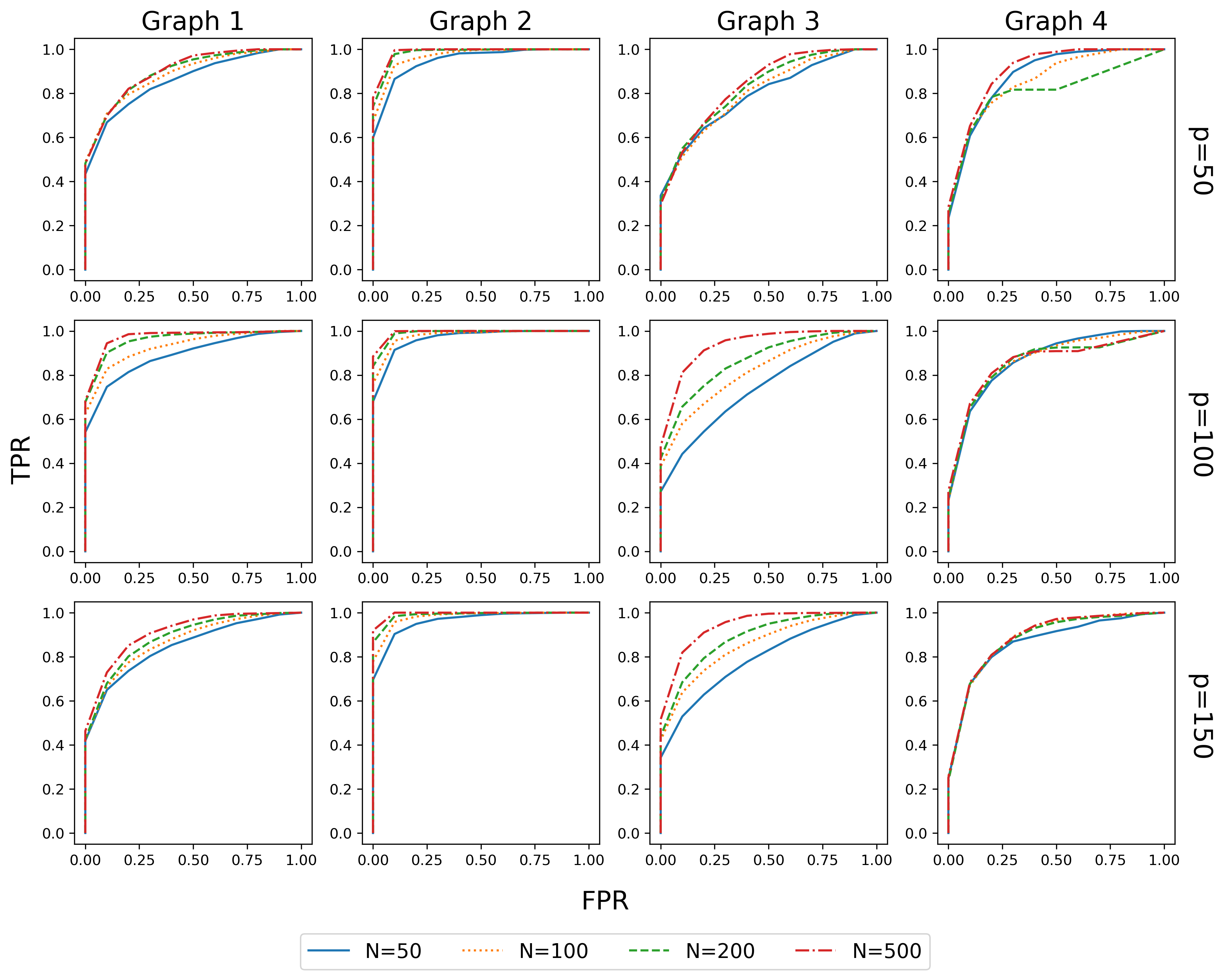}
    \caption{The AUC of proposed model where data is corrupted by noise model 2. Left to right: Graph 1--Graph 4 }
    \label{fig:simulation1}
\end{figure}
\subsection{Experiment 1: p v.s. N}\label{ssec:pvsn}
We plot the ROC curve of various graph with $p\in\{50,100,150\}$ and $\in\{50,100,200,500\}$. The results are displayed in Figure~\ref{fig:simulation1}--\ref{fig:simulation2}. The corresponding AUC is documented in Table~\ref{tab:auc_samplesize20} and Table~\ref{tab:auc_samplesize21}, respectively. For Graph 1--3, the AUC consistently increases as the sample size increases. However, the sample size has mild effect on Graph 4 in both cases. 

\begin{table}[!h]
\spacingset{1}
    \centering
    \resizebox{\textwidth}{!}{%
    \begin{tabular}{llllllll}
    \toprule
    \midrule\\
    &&\multicolumn{3}{c}{AUC}&\multicolumn{3}{c}{AUC15}\\
     &&\multicolumn{6}{c}{Dimension (p)}\\
    Graph & Method&$50$&$100$&$150$&$50$&$100$&$150$\\
    \midrule
    \multirow{4}{*}{Graph 1}    & $N=50$
    &$0.84(0.01)$  &$0.87(0.01)$  &$0.83(0.02)$  &$0.55(0.04)$  &$0.64(0.03)$  &$0.53(0.04)$ \\
    & $N=100$
    &$0.87(0.02)$  &$0.92(0.02)$  &$0.85(0.02)$  &$0.59(0.02)$  &$0.72(0.04)$  &$0.53(0.06)$ \\
    & $N=200$
    &$0.88(0.02)$  &$0.95(0.01)$  &$0.86(0.01)$  &$0.59(0.02)$  &$0.78(0.05)$  &$0.53(0.03)$ \\
    & $N=500$
    &$0.89(0.01)$  &$0.96(0.00)$  &$0.89(0.02)$  &$0.59(0.02)$  &$0.79(0.01)$  &$0.56(0.03)$ \\
    [1ex]
    \multirow{4}{*}{Graph 2}    & $N=50$
    &$0.94(0.02)$  &$0.96(0.01)$  &$0.96(0.01)$  &$0.75(0.03)$  &$0.81(0.02)$  &$0.81(0.03)$ \\
    & $N=100$
    &$0.96(0.01)$  &$0.98(0.01)$  &$0.98(0.00)$  &$0.82(0.02)$  &$0.88(0.02)$  &$0.88(0.02)$ \\
    & $N=200$
    &$0.98(0.00)$  &$0.99(0.00)$  &$0.99(0.00)$  &$0.87(0.01)$  &$0.93(0.01)$  &$0.94(0.01)$ \\
    & $N=500$
    &$0.98(0.00)$  &$0.99(0.00)$  &$1.00(0.00)$  &$0.90(0.01)$  &$0.96(0.00)$  &$0.97(0.00)$ \\
    [1ex]
    \multirow{4}{*}{Graph 3}    & $N=50$
    &$0.76(0.04)$  &$0.71(0.02)$  &$0.76(0.02)$  &$0.43(0.03)$  &$0.35(0.02)$  &$0.43(0.03)$ \\
    & $N=100$
    &$0.77(0.05)$  &$0.80(0.03)$  &$0.83(0.03)$  &$0.38(0.04)$  &$0.48(0.05)$  &$0.52(0.05)$ \\
    & $N=200$
    &$0.80(0.05)$  &$0.84(0.03)$  &$0.87(0.04)$  &$0.40(0.04)$  &$0.53(0.05)$  &$0.54(0.09)$ \\
    & $N=500$
    &$0.81(0.05)$  &$0.92(0.01)$  &$0.92(0.01)$  &$0.39(0.06)$  &$0.62(0.04)$  &$0.64(0.05)$ \\
    [1ex]
    \multirow{4}{*}{Graph 4}    & $N=50$
    &$0.86(0.00)$  &$0.84(0.00)$  &$0.84(0.00)$  &$0.44(0.00)$  &$0.44(0.00)$  &$0.45(0.00)$ \\
    & $N=100$
    &$0.84(0.00)$  &$0.84(0.00)$  &$0.86(0.00)$  &$0.47(0.00)$  &$0.45(0.00)$  &$0.48(0.00)$ \\
    & $N=200$
    &$0.76(0.00)$  &$0.83(0.00)$  &$0.86(0.00)$  &$0.47(0.00)$  &$0.46(0.00)$  &$0.46(0.00)$ \\    
    & $N=500$
    &$0.76(0.00)$  &$0.83(0.00)$  &$0.86(0.00)$  &$0.47(0.00)$  &$0.46(0.00)$  &$0.46(0.00)$ \\
    \midrule
    \bottomrule
    \end{tabular}%
    }
    \caption{The average AUC of Graph 1--4 over 10 runs with different sample size. The corresponding plots are displayed in Figure~\ref{fig:simulation1}. The value inside the parentheses denotes the standard deviation. The additive noise is generated from noise model 2.}
    \label{tab:auc_samplesize20}
\end{table}

\begin{figure}
    \centering
    \includegraphics[width=\textwidth]{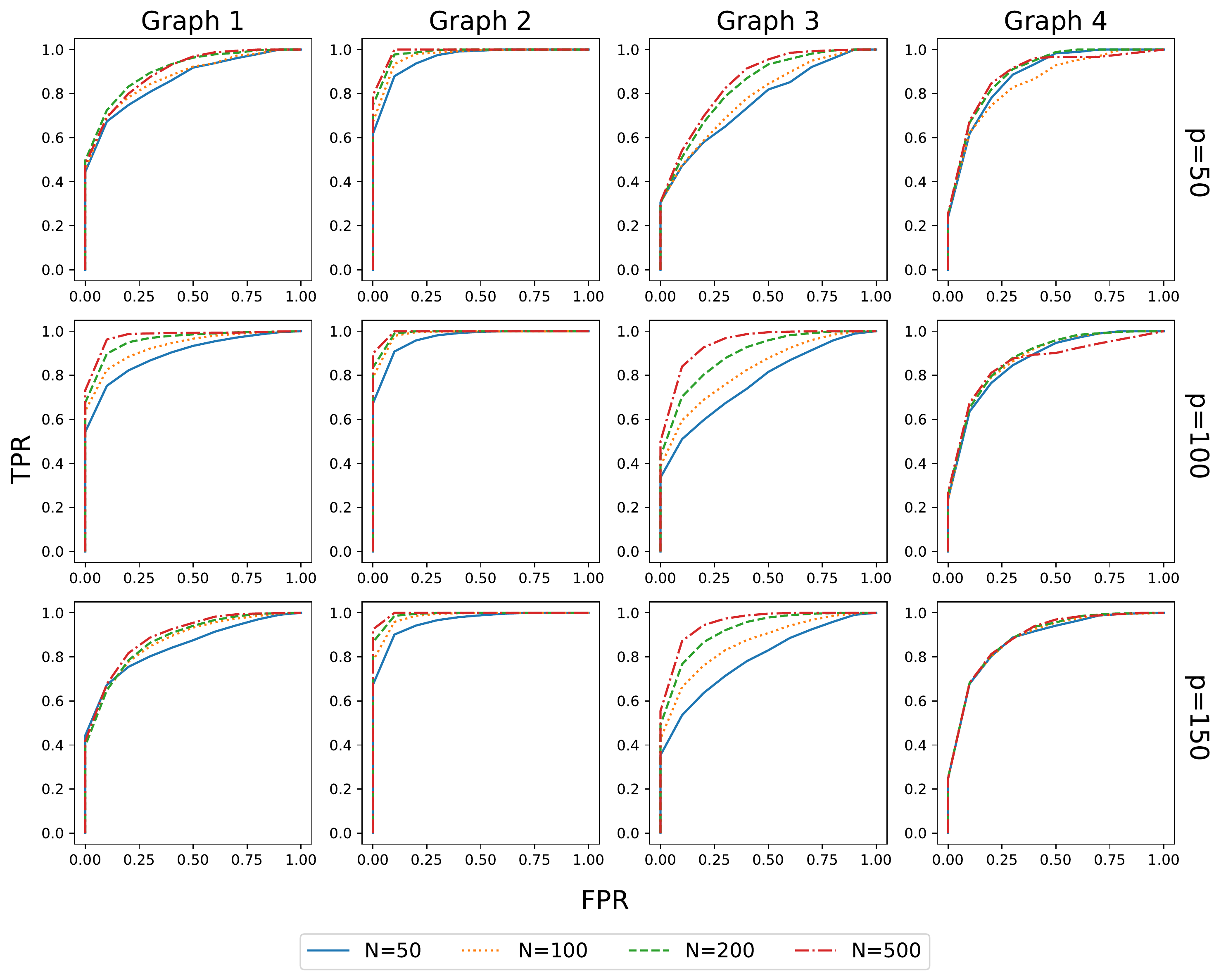}
    \caption{The AUC of proposed model where data is corrupted by noise model 1. Left to right: Graph 1--Graph 4 }
    \label{fig:simulation2}
\end{figure}

\begin{table}[!h]
\spacingset{1}
    \centering
    \resizebox{\textwidth}{!}{%
    \begin{tabular}{llllllll}
    \toprule
    \midrule\\
    &&\multicolumn{3}{c}{AUC}&\multicolumn{3}{c}{AUC15}\\
     &&\multicolumn{6}{c}{Dimension (p)}\\
    Graph & Method&$50$&$100$&$150$&$50$&$100$&$150$\\
    \midrule
    \multirow{4}{*}{Graph 1}    
    &$N=50$  &$0.84(0.02)$  &$0.88(0.02)$  &$0.82(0.03)$  &$0.57(0.03)$  &$0.66(0.04)$  &$0.57(0.03)$ \\
    &$N=100$  &$0.85(0.01)$  &$0.92(0.01)$  &$0.84(0.01)$  &$0.58(0.02)$  &$0.74(0.02)$  &$0.48(0.02)$ \\
    &$N=200$  &$0.89(0.01)$  &$0.95(0.01)$  &$0.86(0.01)$  &$0.60(0.02)$  &$0.78(0.04)$  &$0.51(0.03)$ \\
    &$N=500$  &$0.88(0.01)$  &$0.97(0.01)$  &$0.87(0.01)$  &$0.59(0.01)$  &$0.83(0.03)$  &$0.52(0.03)$ \\[1ex]
    \multirow{4}{*}{Graph 2}   
    &$N=50$  &$0.95(0.01)$  &$0.96(0.01)$  &$0.95(0.01)$  &$0.76(0.03)$  &$0.80(0.02)$  &$0.79(0.02)$ \\
    &$N=100$  &$0.97(0.01)$  &$0.98(0.01)$  &$0.98(0.00)$  &$0.82(0.03)$  &$0.90(0.03)$  &$0.88(0.01)$ \\
    &$N=200$  &$0.98(0.00)$  &$0.99(0.00)$  &$0.99(0.00)$  &$0.88(0.01)$  &$0.93(0.01)$  &$0.94(0.01)$ \\
    &$N=500$  &$0.98(0.00)$  &$0.99(0.00)$  &$1.00(0.00)$  &$0.90(0.01)$  &$0.96(0.00)$  &$0.97(0.00)$ \\[1ex]
    \multirow{4}{*}{Graph 3}   
    &$N=50$  &$0.72(0.04)$  &$0.75(0.03)$  &$0.77(0.02)$  &$0.38(0.06)$  &$0.42(0.04)$  &$0.44(0.04)$ \\
    &$N=100$  &$0.77(0.05)$  &$0.80(0.03)$  &$0.85(0.02)$  &$0.38(0.04)$  &$0.48(0.04)$  &$0.55(0.04)$ \\
    &$N=200$  &$0.81(0.02)$  &$0.87(0.02)$  &$0.90(0.01)$  &$0.38(0.03)$  &$0.52(0.06)$  &$0.60(0.03)$ \\
    &$N=500$  &$0.84(0.03)$  &$0.92(0.02)$  &$0.94(0.01)$  &$0.40(0.04)$  &$0.61(0.04)$  &$0.69(0.03)$ \\[1ex]
    \multirow{4}{*}{Graph 4}    
    &$N=50$  &$0.86(0.00)$  &$0.84(0.00)$  &$0.85(0.00)$  &$0.45(0.00)$  &$0.45(0.00)$  &$0.45(0.00)$ \\
    &$N=100$  &$0.83(0.00)$  &$0.85(0.00)$  &$0.86(0.00)$  &$0.47(0.00)$  &$0.46(0.00)$  &$0.46(0.00)$ \\
    &$N=200$  &$0.88(0.00)$  &$0.86(0.00)$  &$0.86(0.00)$  &$0.48(0.00)$  &$0.46(0.00)$  &$0.46(0.00)$ \\
    &$N=500$  &$0.87(0.00)$  &$0.82(0.02)$  &$0.86(0.00)$  &$0.49(0.00)$  &$0.48(0.00)$  &$0.45(0.00)$ \\
    \midrule
    \bottomrule
    \end{tabular}%
    }
    \caption{The average AUC of Graph 1--4 over 10 runs with different sample size. The corresponding plots are displayed in Figure~\ref{fig:simulation2}. The value inside the parentheses denotes the standard deviation. The additive noise is generated from noise model 1.}
    \label{tab:auc_samplesize21}
\end{table}

\FloatBarrier
\begin{figure}[hbt!]
    \centering
    \includegraphics[width=.9\textwidth]{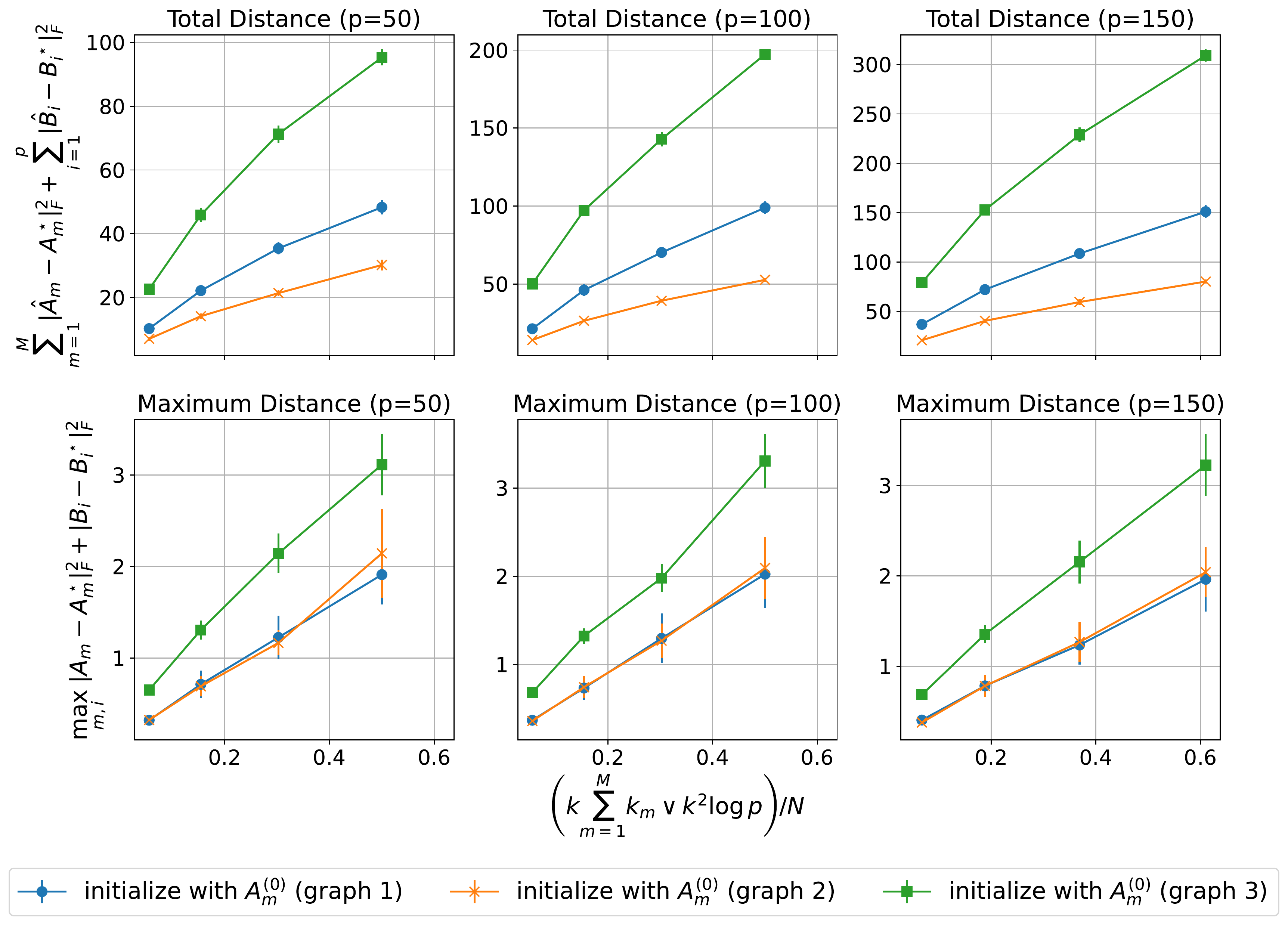}
    \caption{Distance v.s. statistical error. We fix $N=324$ for $p=50,100$ and $N=289$ for $p=150$ and vary $k={3,5,7,9}$. When the value on the $x$-axis is small, both the average error and maximum error are small in all graphs. The errors increase as the value on the $x$-axis increases. }
    \label{fig:vary_k}
\end{figure}
\subsection{Distance v.s. $k$}
{In addition to the sample complexity experiment discussed in Section~7.2, we verify Theorem~5.3 with varying $k$ as well. We choose $k=\{3,5,7,9\}$, where we set $k=k_m$ and $N=324$ here. 
We run the simulation for $20$ independent simulated datasets and the average result is shown in Figure~\ref{fig:vary_k}. Note that the lines are nearly linear for three different graphs and $p=50,100,150$, supporting the result from Theorem~\ref{theorem:convergence}.}

\subsection{Sensitivity of the Tuning Parameters}

        \begin{figure}[hbt!]
            \centering
            \includegraphics[width=.9\textwidth]{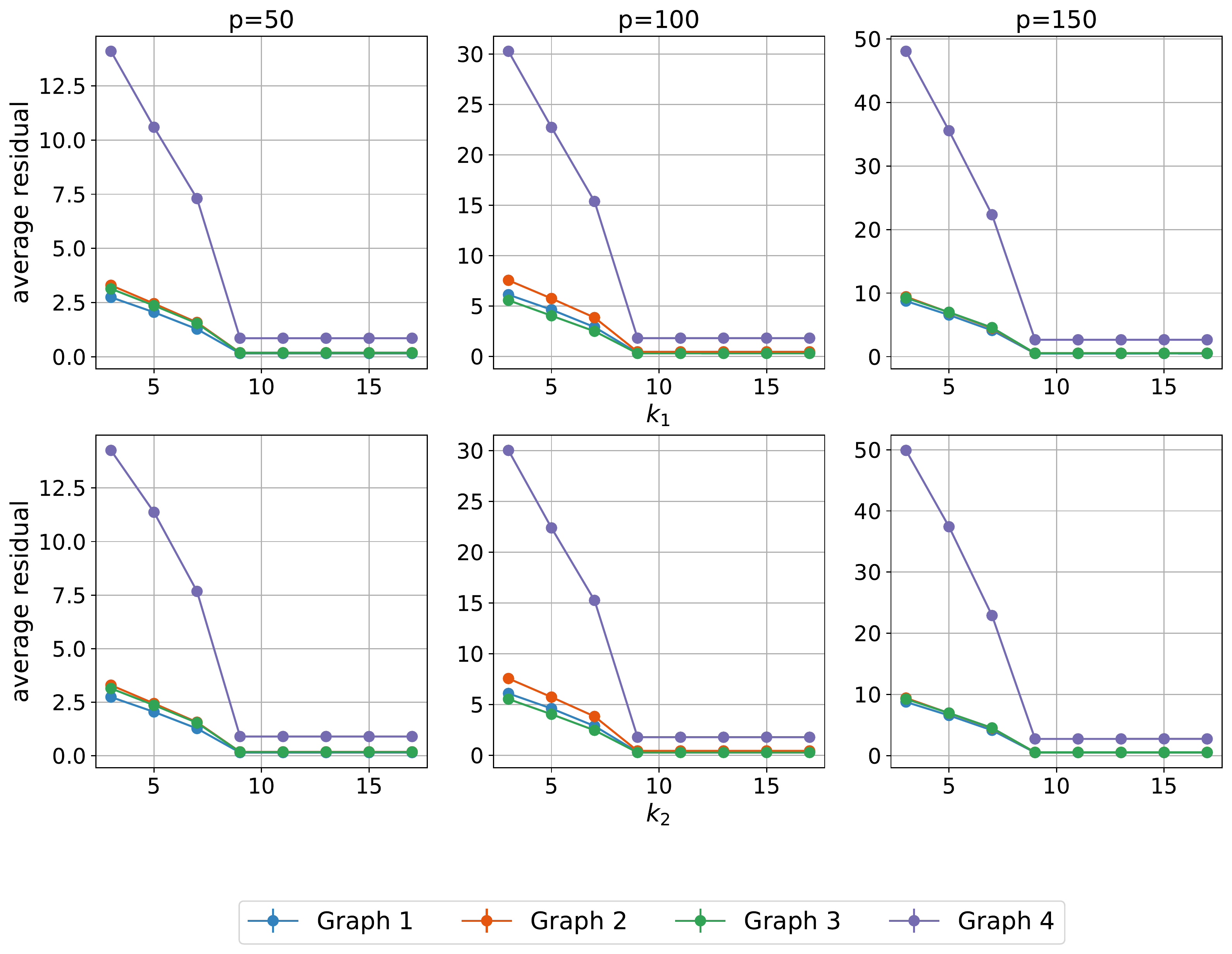}
            \caption{{\bf Top row}: The average residuals of signals from data modality 1 after projecting to $k_1$ number of basis functions, where the ground truth is $k_1=9$. 
            {\bf Bottom row}: The average residuals of signals from data modality 2 after projecting to $k_2$ number of basis functions, where the ground truth is $k_2=9$. Each line is plotted as the average result of $N=100$ subjects and over $10$ independent runs. The standard deviation of each point is at the scale of $0.1$. Both figures indicate that there is a turning point in the residual when $k_1$, $k_2$ exceed the $9$, suggesting that the elbow method could give us relatively accurate estimates. 
            }
            \label{fig:kkm}
        \end{figure}
\begin{figure}[h!]
    \centering
    \includegraphics[width=.9\textwidth]{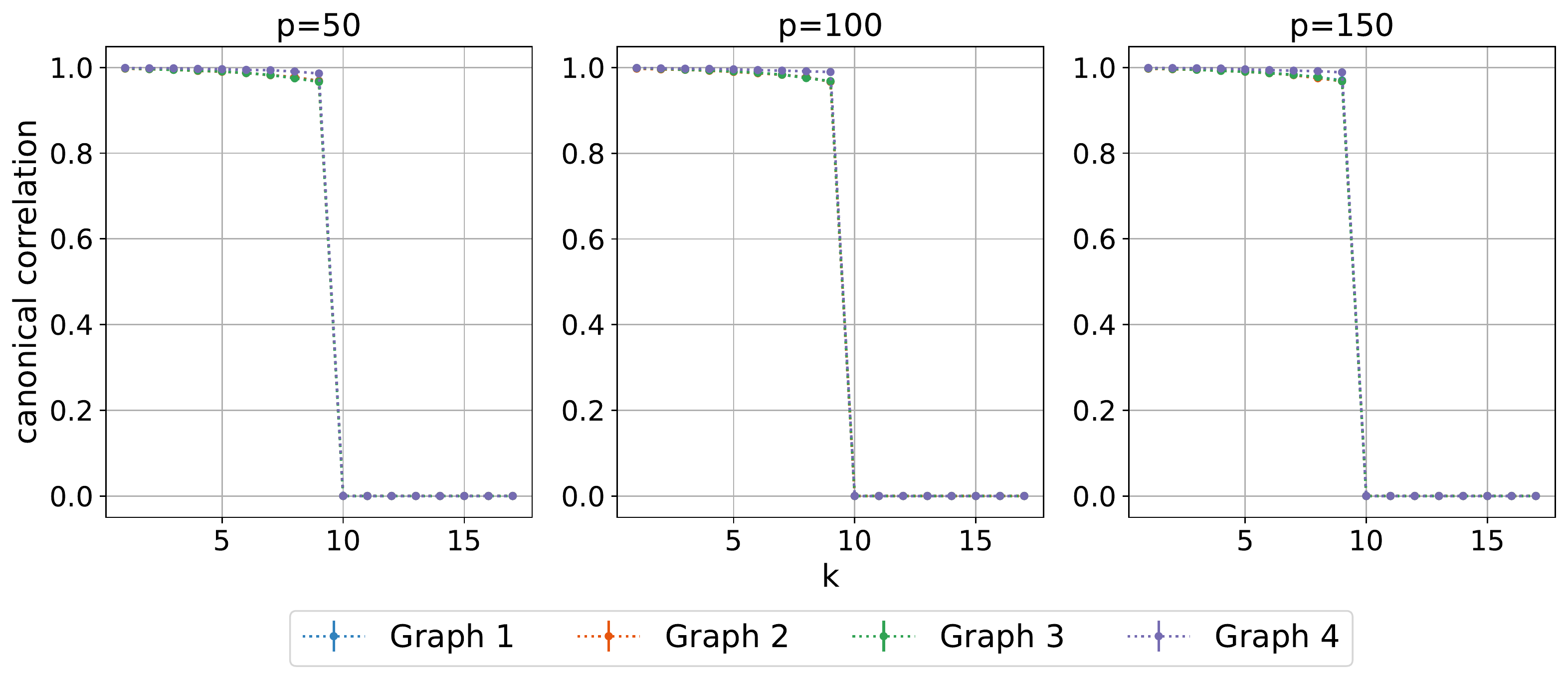}
    \caption{ The top-$k$ canonical correlations
            averaged over $10$ independent runs, where the true $k=9$. The canonical correlation plot has an abrupt drop to $0$ at $k=10$, when $k$ exceeds the ground truth value. }
    \label{fig:kkm2}
\end{figure}
\begin{figure}[h!]
    \centering
    \includegraphics[width=.9\textwidth]{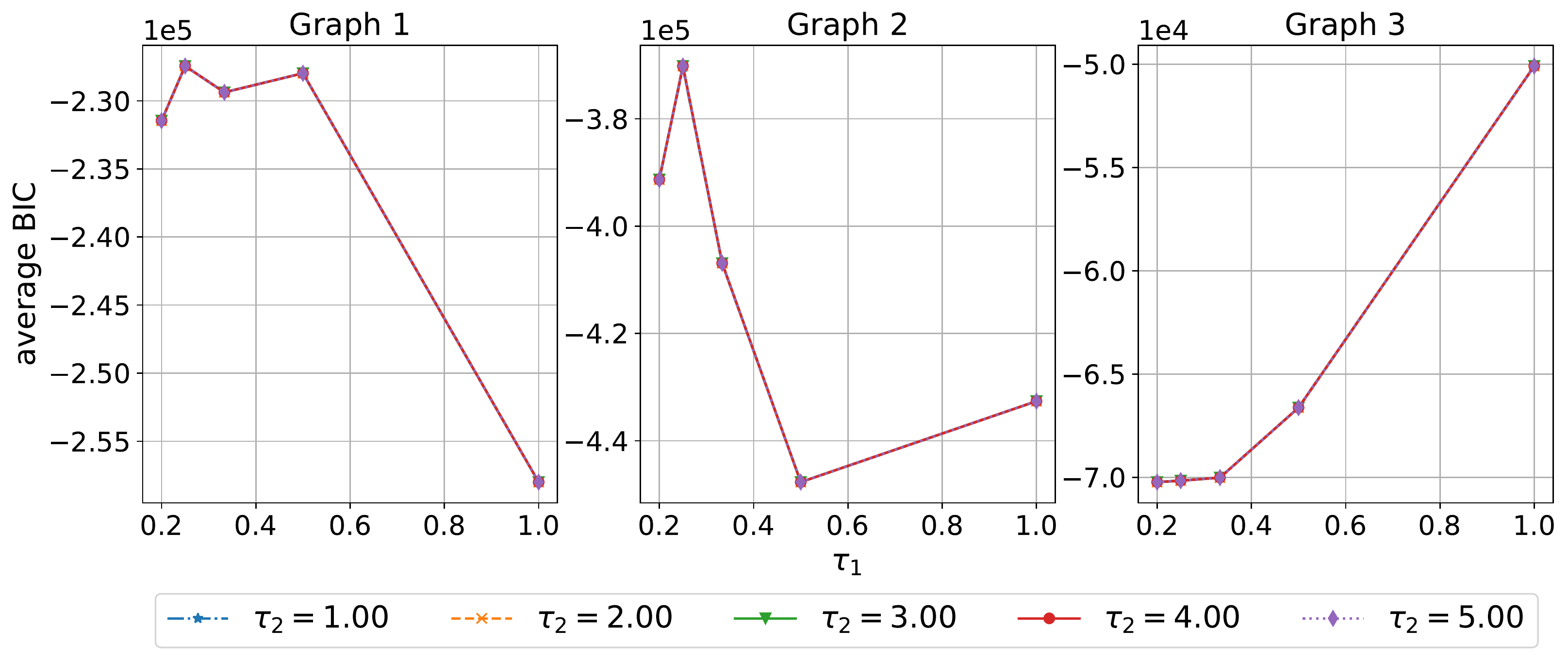}
    \includegraphics[width=.9\textwidth]{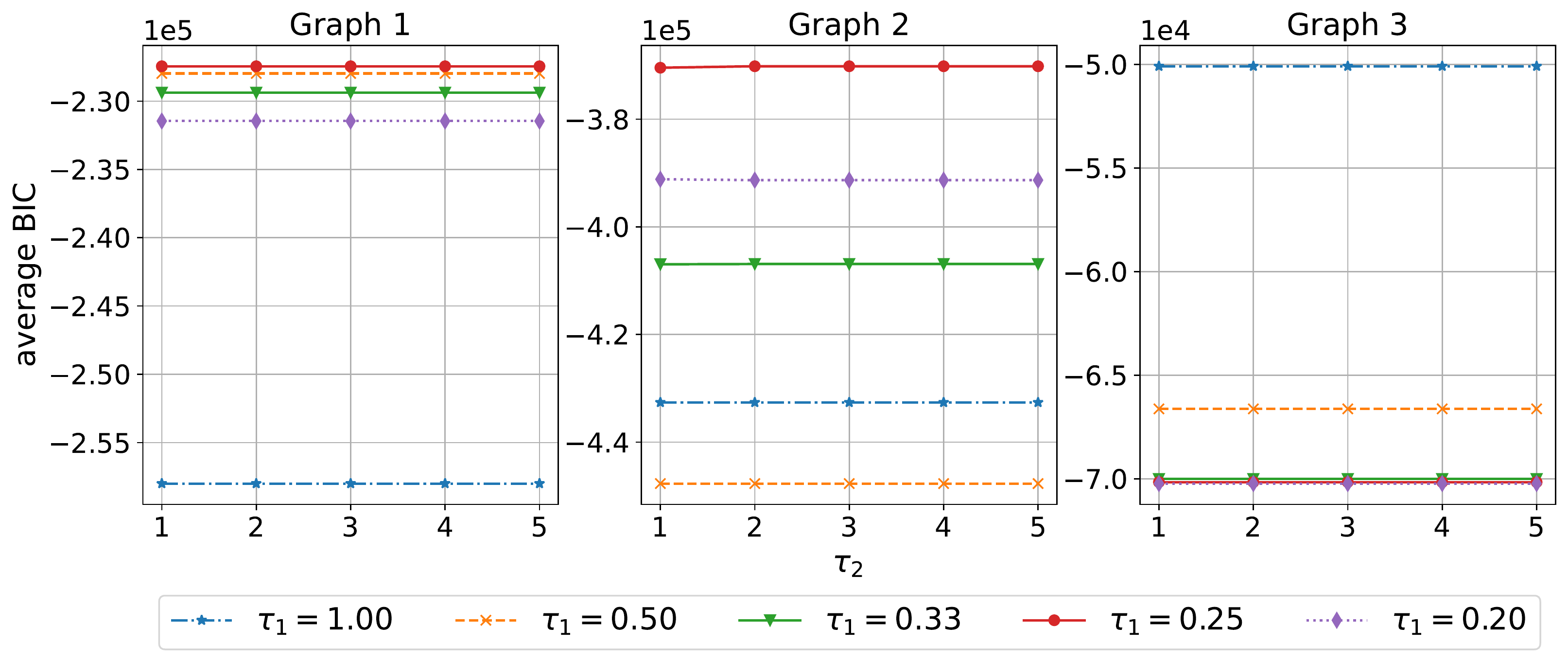}
    \caption{The BIC score plot with $5$-fold cross-validation. In this simulation, we choose $N=100$, $p=100$, $s=4$, and $\alpha=0.44$. Both the plots from the top row and bottom row suggest that the score is sensitive to the values of $\tau_1$ but indifferent to the choices of $\tau_2$.}
    \label{fig:tau_plot}
\end{figure}

\begin{figure}[hbt!]
    \centering
    \includegraphics[width=.9\textwidth]{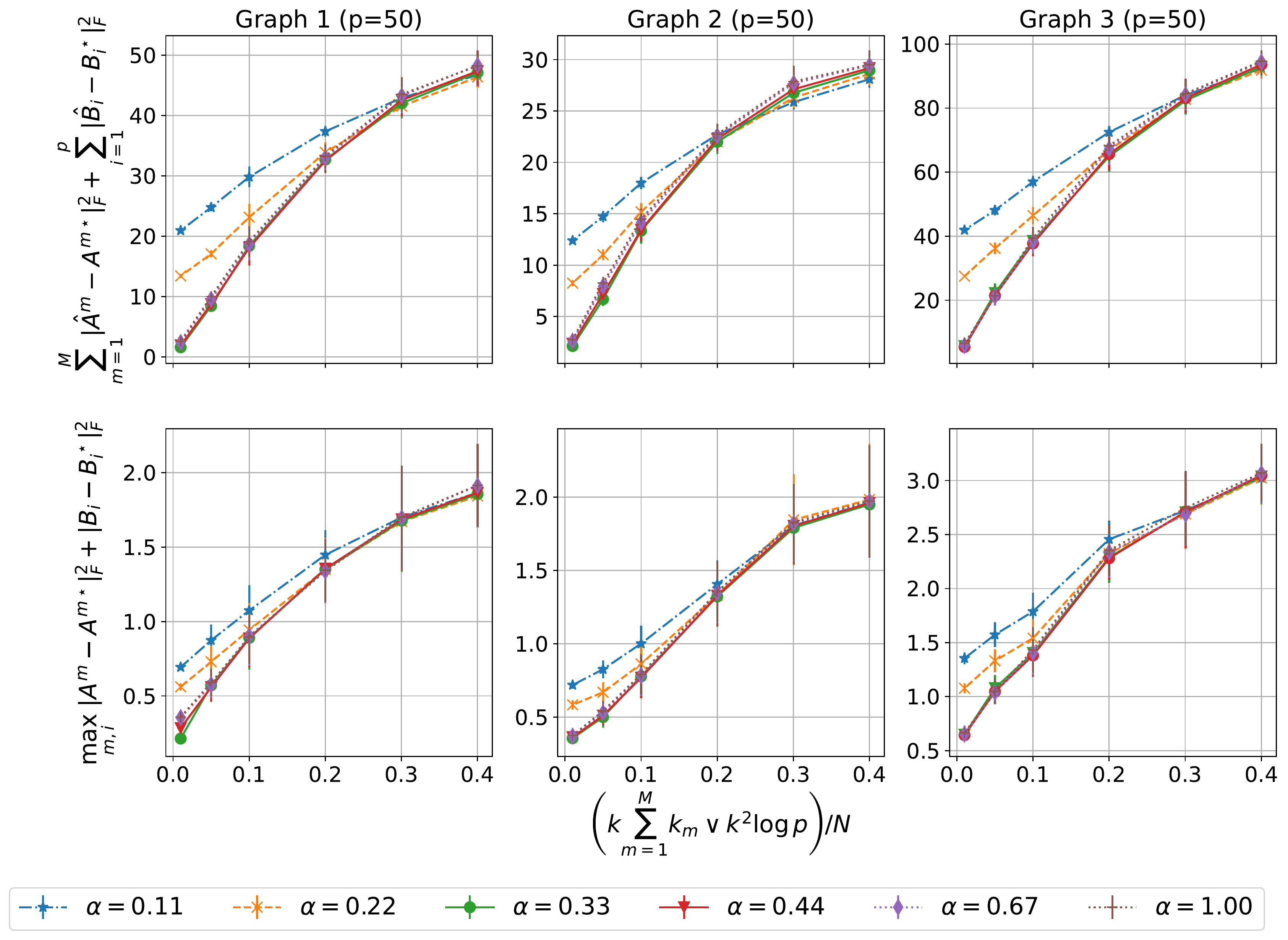}
    \caption{The sample complexity plot with varying $\alpha$. We select other tunning parameters $\{s,\tau_1,\tau_2\}$ using the procedure discussed in Section~\ref{sec:parameterselection} and apply $(4.3)$ to initialize $\Ab^m$. The true $\alpha^\star$ is $0.33$ for Graph~1--2, and $0.44$ for Graph~3. As the plots indicate, under-selection of $\alpha$ leads to a larger error when $x$ approaches zero, where in this case we have a large sample size $N$. In contrast, large $\alpha$ has a weaker influence on the plot. We find that the correct choice of $\alpha=\alpha^\star$ will lead to optimal results.}
    \label{fig:vary_alpha}
\end{figure}

\begin{figure}[hbt!]
    \centering
    \includegraphics[width=.9\textwidth]{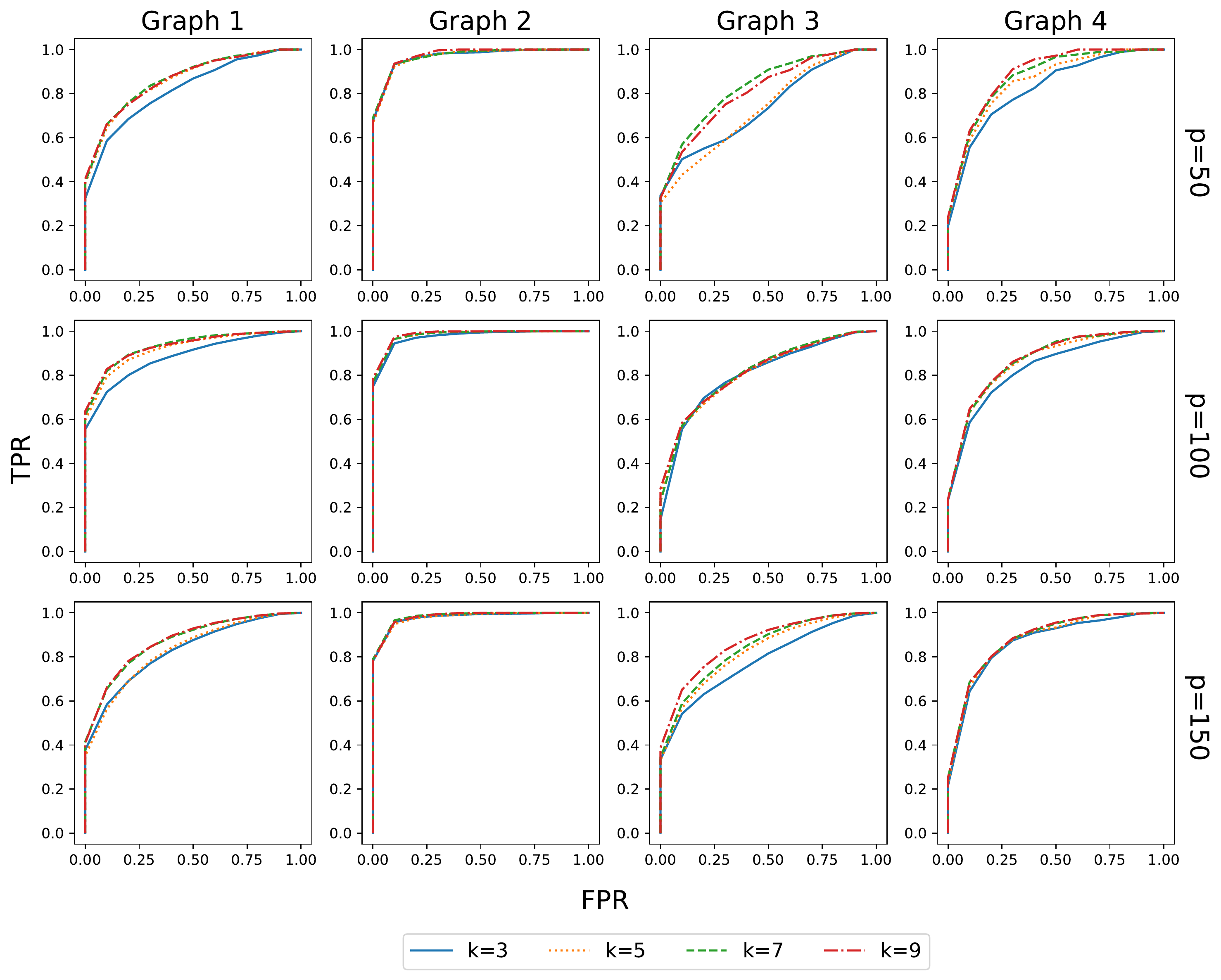}
    \caption{The ROC plot of varying $k$ (ground truth $k=9$ and $k_1=k_2=9$). We use $N=100$ and each plot is the average result over $10$ independent runs. The underlying AUCs are smaller when $k$ and $k_m$ are greatly under-selected. We found that when $k$ is slightly under-selected, i.e. $k=7$, the AUCs are close to the AUCs of $k=9$ in most settings.  }
    \label{fig:vary_k_roc}
\end{figure}

\begin{figure}[hbt!]
    \centering
    \includegraphics[width=.9\textwidth]{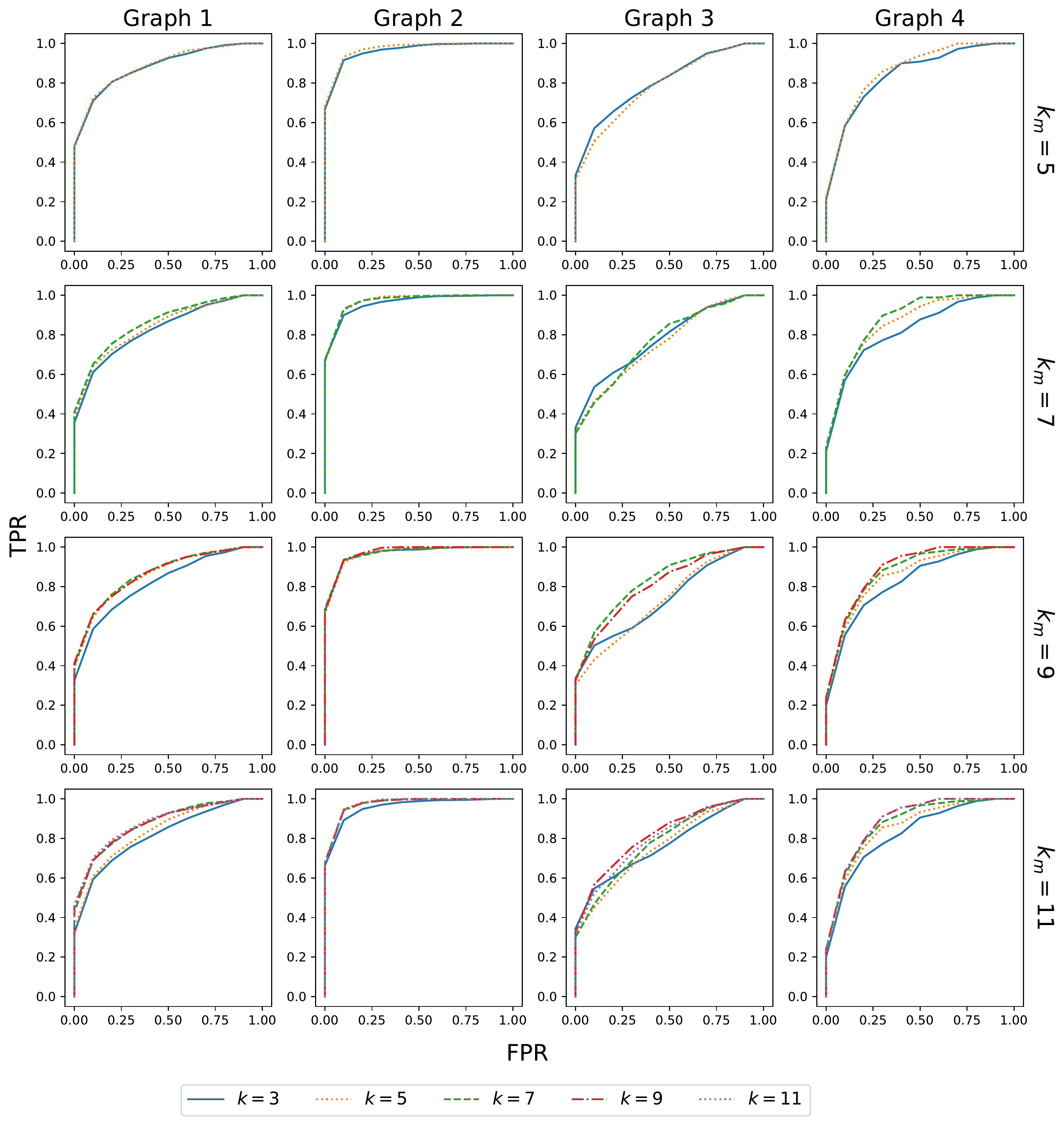}
    \caption{The ROC plot of varying $k$ and $k_m$ with $k_m\geq k$. The ground truth values of $k$ and $k_m$ are $9$. We use $N=100$, $p=50$, and noise model 1. Each plot is the average result over $10$ independent runs. {\bf Top row}: when both $k_1$ and $k_2$ are significantly under-selected, the choice of $k$ is insensitive to the ROC. {\bf Centered two rows}: when both $k_1$ and $k_2$ are slightly under-selected or equal to the true value, larger $k$ has a larger AUC. {\bf Bottom row}: when both $k_1$ and $k_2$ are over-selected, the ROC curves of $k=9$ and $k=11$ almost overlap. It only has a suboptimal curve in Graph~3.}
    \label{fig:vary_kkm_roc}
\end{figure}

{
In this section, we analyze the variable selection method introduced in Section~\ref{sec:parameterselection}. First, we discuss the practice of using elbow method to select $k$ and $k_m$. We then discuss how mis-specification of $k$ and $k_m$ affects the results. Finally, we discuss the sensitivity of choosing $\tau_1,\tau_2$ and $\alpha$ in the following.\\
First, we run the elbow algorithm discussed in Section~\ref{sec:parameterselection} to select $k$ and $k_m$. The result is displayed in Figure~\ref{fig:kkm}--\ref{fig:kkm2}, where there are clear turning points in all cases. Furthermore, the elbow points match the true values. Figure~\ref{fig:vary_k_roc} shows the ROC plot when $k$ is under-selected, and we select $k_m$ based on the elbow method. It shows that the AUCs are smaller when $k$ is significantly smaller than the true $k=9$. In the case when $k=7$, the AUCs are close to the case when $k=9$, as Table~\ref{tab:auc_varyk} shows. In Figure~\ref{fig:vary_kkm_roc}, we vary $k_m=\{5,7,9,11\}$ and $k=\{3,5,7,9,11\}$. Furthermore, we restrict $k_m\geq k$ due to the technical constraints of CCA. The corresponding AUC is documented in Table~\ref{tab:auc_varykmk}. The result indicates that if the difference between $k_m$ and $k$ is small, i.e., $k_m-k=2$, the underlying AUC is close to the case when $k_m=k$. However, if $k$ is much smaller than $k_m$, we see the decay in the underlying AUC. 
        When we over-select $k_m=11$, the corresponding AUCs of $k=9$ and $k=11$ are close to the case when $k_m=k=9$. However, under-selection of $k$ results in smaller AUCs.\\
We select the candidates of $\tau_1$ to be $\{1/n:n=1,\ldots,5\}$ and the candidates for $\tau_2$ to be $\{n:n=1,\ldots,5\}$. Then, we apply 5-fold cross-validation with BIC metric to select the optimal values. In the following simulation, we fix $s=4$ and $\alpha=4/9$ and plot the BIC score with respect to $\tau_1$ and $\tau_2$, shown in Figure~\ref{fig:tau_plot}. The plot indicates that the score is insensitive to the upper bound, the choice of $\tau_2$, and more sensitive to the lower bound, the choice of $\tau_1$. \\
Finally, we find varying $\alpha$ is less sensitive to the result so long as $\alpha\geq\alpha^\star$. We present the simulation result in Figure~\ref{fig:vary_alpha}. The simulation details are discussed in Section~\ref{ssec:dis_vs_sample}.
}

\begin{table}
    \spacingset{1}
    \centering
    \resizebox{\textwidth}{!}{%
    \begin{tabular}{llllllll}
    \toprule
    \midrule\\
    &&\multicolumn{3}{c}{AUC}&\multicolumn{3}{c}{AUC15}\\
     &&\multicolumn{6}{c}{Dimension (p)}\\
    Graph & Method&$50$&$100$&$150$&$50$&$100$&$150$\\
    \midrule
    \multirow{4}{*}{Graph 1}    & $k=3$
    &$0.79(0.02)$	
    &$0.87(0.04)$	
    &$0.80(0.02)$	
    &$0.44(0.02)$	
    &$0.62(0.04)$	
    &$0.47(0.04)$
    \\
     & $k=5$ 
     &$0.84(0.01)$		
     &$0.91(0.02)$	
     &$0.80(0.02)$	
     &$0.50(0.03)$	
     &$0.69(0.06)$	
     &$0.45(0.05)$\\
    & $k=7$
    &$0.85(0.03)$
    &$0.92(0.01)$	
    &$0.85(0.02)$
    &$0.55(0.03)$
    &$0.72(0.02)$	
    &$0.52(0.06)$\\
    & $k=9$
    &$0.85(0.02)$
    &$0.92(0.01)$
    &$0.85(0.02)$
    &$0.55(0.03)$
    &$0.74(0.03)$
    &$0.52(0.06)$\\\midrule
    \multirow{4}{*}{Graph 2}    & 
    $k=3$
    &$0.96(0.01)$
    &$0.97(0.01)$
    &$0.97(0.01)$
    &$0.81(0.04)$
    &$0.86(0.02)$	
    &$0.88(0.02)$\\
    & $k=5$
    &$0.96(0.01)$
    &$0.98(0.00)$
    &$0.97(0.01)$
    &$0.81(0.02)$
    &$0.89(0.01)$	
    &$0.88(0.02)$\\
    & $k=7$
    &$0.96(0.01)$
    &$0.98(0.01)$
    &$0.98(0.00)$
    &$0.83(0.03)$
    &$0.88(0.03)$	
    &$0.89(0.02)$\\
    &$k=9$
    &$0.97(0.00)$
    &$0.98(0.00)$
    &$0.98(0.00)$
    &$0.82(0.02)$
    &$0.90(0.01)$
    &$0.88(0.01)$
    \\\midrule
    \multirow{4}{*}{Graph 3}    & $k=3$
    &$0.72(0.04)$
    &$0.78(0.02)$
    &$0.75(0.04)$
    &$0.44(0.02)$
    &$0.37(0.04)$	
    &$0.43(0.03)$\\
    & $k=5$
    &$0.71(0.05)$
    &$0.78(0.01)$
    &$0.80(0.04)$
    &$0.37(0.05)$
    &$0.36(0.01)$
    &$0.44(0.05)$ \\
    & $k=7$
    &$0.81(0.04)$
    &$0.79(0.01)$
    &$0.81(0.02)$
    &$0.42(0.04)$
    &$0.40(0.03)$
    &$0.45(0.03)$ \\
    & $k=9$
    &$0.78(0.06)$
    &$0.79(0.02)$
    &$0.84(0.04)$
    &$0.41(0.05)$
    &$0.43(0.03)$
    &$0.51(0.06)$\\
    \midrule
    \multirow{4}{*}{Graph 4}    & $k=3$
    &$0.80(0.00)$
    &$0.81(0.00)$
    &$0.84(0.00)$
    &$0.38(0.00)$
    &$0.41(0.00)$
    &$0.41(0.00)$\\
    & $k=5$
    &$0.83(0.00)$
    &$0.84(0.00)$
    &$0.85(0.00)$
    &$0.43(0.00)$
    &$0.45(0.00)$	
    &$0.44(0.00)$\\
    & $k=7$
    &$0.85(0.00)$
    &$0.85(0.00)$
    &$0.86(0.00)$
    &$0.45(0.00)$
    &$0.45(0.00)$
    &$0.46(0.00)$\\
    &$k=9$
    &$0.86(0.00)$
    &$0.85(0.00)$
    &$0.86(0.00)$
    &$0.46(0.00)$
    &$0.46(0.00)$
    &$0.46(0.00)$\\
    \midrule
    \bottomrule
    \end{tabular}%
    }
    \caption{The average AUC of Model 1--4 over 10 runs. The ground truth $k=k_m=9$. We vary $k=\{3,5,7,9\}$. The value inside the parentheses denotes the standard deviation. The additive noise is generated from noise model 1, where the AUC plot is shown in Figure~\ref{fig:vary_k_roc}. The rightmost three columns denote the average AUC for FPR between $[0,0.15]$ divided by $0.15$. The division is a normalization such that the maximum area will be $1$. When $k$ is significant under selected, $k=3$, the AUC and AUC15 are smaller. In contrast, the AUC and AUC15 increase as $k$ increases close to the true value $9$.}
    \label{tab:auc_varyk}
\end{table}

\begin{table}[hbt!]
    \spacingset{1}
    \centering
    \resizebox{\textwidth}{!}{%
    \begin{tabular}{llllllllll}
    \toprule
    \midrule\\
    &&\multicolumn{4}{c}{AUC}&\multicolumn{4}{c}{AUC15}\\
     &&\multicolumn{8}{c}{$k_m$}\\
    Graph & $k$&$5$&$7$&$9$&$11$&$5$&$7$&$9$&$11$\\
    \midrule
\multirow{5}{*}{Graph 1}
&$3$
&$0.87(0.02)$
&$0.80(0.03)$
&$0.79(0.02)$
&$0.79(0.02)$
&$0.60(0.02)$
&$0.48(0.05)$
&$0.44(0.02)$
&$0.45(0.05)$
\\
&$5$
&$0.87(0.02)$
&$0.82(0.01)$
&$0.84(0.01)$
&$0.81(0.02)$
&$0.61(0.04)$
&$0.51(0.01)$
&$0.50(0.03)$
&$0.47(0.04)$
\\
&$7$
&\multicolumn{1}{c}{--}
&$0.84(0.02)$
&$0.85(0.03)$
&$0.85(0.02)$
&\multicolumn{1}{c}{--}
&$0.54(0.03)$
&$0.55(0.03)$
&$0.56(0.05)$
\\
&$9$
&\multicolumn{1}{c}{--}
&\multicolumn{1}{c}{--}
&$0.85(0.02)$
&$0.86(0.02)$
&\multicolumn{1}{c}{--}
&\multicolumn{1}{c}{--}
&$0.55(0.03)$
&$0.57(0.05)$
\\
&$11$
&\multicolumn{1}{c}{--}
&\multicolumn{1}{c}{--}
&\multicolumn{1}{c}{--}
&$0.86(0.02)$
&\multicolumn{1}{c}{--}
&\multicolumn{1}{c}{--}
&\multicolumn{1}{c}{--}
&$0.57(0.05)$
\\
\midrule
\multirow{5}{*}{Graph 2}
&$3$
&$0.96(0.01)$
&$0.95(0.01)$
&$0.96(0.01)$
&$0.95(0.01)$
&$0.81(0.03)$
&$0.80(0.05)$
&$0.81(0.04)$
&$0.80(0.03)$
\\
&$5$
&$0.97(0.01)$
&$0.97(0.01)$
&$0.96(0.01)$
&$0.97(0.00)$
&$0.82(0.03)$
&$0.81(0.03)$
&$0.81(0.02)$
&$0.82(0.02)$
\\
&$7$
&\multicolumn{1}{c}{--}
&$0.96(0.01)$
&$0.96(0.01)$
&$0.97(0.01)$
&\multicolumn{1}{c}{--}
&$0.80(0.05)$
&$0.83(0.03)$
&$0.82(0.02)$
\\
&$9$
&\multicolumn{1}{c}{--}
&\multicolumn{1}{c}{--}
&$0.97(0.00)$
&$0.97(0.01)$
&\multicolumn{1}{c}{--}
&\multicolumn{1}{c}{--}
&$0.82(0.02)$
&$0.82(0.03)$
\\
&$11$
&\multicolumn{1}{c}{--}
&\multicolumn{1}{c}{--}
&\multicolumn{1}{c}{--}
&$0.97(0.00)$
&\multicolumn{1}{c}{--}
&\multicolumn{1}{c}{--}
&\multicolumn{1}{c}{--}
&$0.82(0.03)$
\\
\midrule
\multirow{5}{*}{Graph 3}
&$3$
&$0.78(0.03)$
&$0.76(0.02)$
&$0.72(0.04)$
&$0.75(0.03)$
&$0.45(0.04)$
&$0.42(0.03)$
&$0.44(0.03)$
&$0.45(0.06)$
\\
&$5$
&$0.76(0.04)$
&$0.74(0.04)$
&$0.71(0.05)$
&$0.73(0.04)$
&$0.39(0.04)$
&$0.39(0.03)$
&$0.37(0.05)$
&$0.38(0.03)$
\\
&$7$
&\multicolumn{1}{c}{--}
&$0.74(0.03)$
&$0.81(0.04)$
&$0.76(0.05)$
&\multicolumn{1}{c}{--}
&$0.38(0.04)$
&$0.42(0.04)$
&$0.38(0.04)$
\\
&$9$
&\multicolumn{1}{c}{--}
&\multicolumn{1}{c}{--}
&$0.78(0.06)$
&$0.79(0.04)$
&\multicolumn{1}{c}{--}
&\multicolumn{1}{c}{--}
&$0.41(0.05)$
&$0.41(0.04)$
\\
&$11$
&\multicolumn{1}{c}{--}
&\multicolumn{1}{c}{--}
&\multicolumn{1}{c}{--}
&$0.77(0.04)$
&\multicolumn{1}{c}{--}
&\multicolumn{1}{c}{--}
&\multicolumn{1}{c}{--}
&$0.42(0.03)$
\\
\midrule
\multirow{5}{*}{Graph 4}
&$3$
&$0.81(0.00)$
&$0.80(0.00)$
&$0.80(0.00)$
&$0.80(0.00)$
&$0.40(0.00)$
&$0.40(0.00)$
&$0.38(0.00)$
&$0.38(0.00)$
\\
&$5$
&$0.84(0.00)$
&$0.84(0.00)$
&$0.83(0.00)$
&$0.83(0.00)$
&$0.42(0.00)$
&$0.43(0.00)$
&$0.43(0.00)$
&$0.43(0.00)$
\\
&$7$
&\multicolumn{1}{c}{--}
&$0.86(0.00)$
&$0.85(0.00)$
&$0.85(0.00)$
&\multicolumn{1}{c}{--}
&$0.44(0.00)$
&$0.45(0.00)$
&$0.45(0.00)$
\\
&$9$
&\multicolumn{1}{c}{--}
&\multicolumn{1}{c}{--}
&$0.86(0.00)$
&$0.86(0.00)$
&\multicolumn{1}{c}{--}
&\multicolumn{1}{c}{--}
&$0.46(0.00)$
&$0.46(0.00)$
\\
&$11$
&\multicolumn{1}{c}{--}
&\multicolumn{1}{c}{--}
&\multicolumn{1}{c}{--}
&$0.86(0.00)$
&\multicolumn{1}{c}{--}
&\multicolumn{1}{c}{--}
&\multicolumn{1}{c}{--}
&$0.46(0.00)$
\\
\midrule
    \bottomrule
    \end{tabular}%
    }
    \caption{The average AUC of Graph 1--4 over 10 runs. The ground truth $k=k_m=9$. We vary $k=\{3,5,7,9,11\}$ and $k_m=\{5,7,9,11\}$. Since $k$ must be smaller or equal to $k_m$ due to the initialization method, we only show the results for $k\leq k_m$. The value inside the parentheses denotes the standard deviation. The additive noise is generated from noise model 1, where the AUC plot is shown in Figure~\ref{fig:vary_kkm_roc}. The rightmost three columns denote the average AUC for FPR between $[0,0.15]$, normalized to have a maximum area $1$. Fix $k_m$, both AUC and AUC15 are optimal when $k$ is close to $k_m$.  }
    \label{tab:auc_varykmk}
\end{table}
\FloatBarrier
\section{Treatment to Discrete Observed Data}
\label{appendix:treatment}
{
Throughout the paper we have assumed that observations across modalities are continuous functions. As a result, this article focuses mainly on the construction of the latent model, and we assume continuous observations for simplicity. In practice, data from different modalities are expected to be recorded with different temporal resolutions and can be viewed as discrete data. 
When the observations are discrete, we can still compute the function score by projecting the discrete samples to the discretized basis functions. Recall the basis $\{\phi_1^m,\ldots,\phi^m_{k_m}\}$ and let $t_{i1}^{m, (n)},\ldots,t_{iq}^{m, (n)}$ be the sampling time points of subject $n$ of the modality $m$ at node $i$. We can obtain the function score by solving a least-squares problem~(cf. Section 3.1 of~\citet{zhao2022fudge}) and obtain
\[
\tilde{\yb}^{m, (n)}_i=(\Cb_{i}^{m,(n)\top} \Cb_{i}^{m,(n)})^{-1}\Cb_{i}^{m,(n)\top} \hb_{i}^{m,(n)}, 
\]
where
\[
\Cb_{i}^{m,(n)} = \begin{bmatrix}
\phi_1^m(t_{i1}^{m,(n)})&\ldots&\phi_{k_m}^m(t_{i1}^{m,(n)})\\
\vdots&\ddots&\vdots\\
\phi_1^m(t_{iq}^{m,(n)})&\ldots&\phi_{k_m}^m(t_{iq}^{m,(n)})
\end{bmatrix},\quad
\hb_i^{m,(n)}=\begin{bmatrix}
    \mathcal{Y}_i^{m,(n)}(t_{i1}^{m,(n)})\\
    \vdots\\
    \mathcal{Y}_i^{m,(n)}(t_{iq}^{m,(n)}).
\end{bmatrix}
\]
Then, we can replace $\Yb_i^m$ with $\tilde{\Yb}_i^m=(\tilde{\yb}_i^{m,(1)},\ldots,\tilde{\yb}_i^{m,(N)})$ in~\eqref{eq:f_n}. The question is then how well can we estimate the parameters with the new objective function? \citet{zhao2022fudge} analyzed the conditions when the covariance of $\tilde{\yb}_i^m$ is close to the covariance of ${\yb}_i^m$. This provides insight that we might be able to quantify the error due to discretization under some regularity conditions, i.e., when the basis functions have smooth structures and the samples are evenly spaced.  
}

{
The family of basis functions is often unknown in practice. Hence, one approach uses functional PCA (fPCA) to estimate the basis functions. Existing literature has studied this setting from both theoretical and methodological perspectives~\citep{yao2005functional,li2010uniform,cai2011optimal,amini2012sampled,zhang2016sparse}. Motivated by~\cite{qiao2020doubly} who studied functional graphical model under the setting of the discrete sample, we briefly outline an extension of the methodology from~\citet{yao2005functional} to our model and propose a simple treatment.
In this first stage, we apply the algorithm proposed by~\citet{yao2005functional} to estimate the functional score $\yb_i^{m,(n)}$, denoted as $\hat{\yb}^{m,(n)}_i=(\hat{y}_{i,1}^{m,(n)},\ldots, \hat{y}_{i,k_m}^{m,(n)})^\top$ for $\pseq$ and the estimated basis functions $\{\hat{\phi}_1^m,\ldots,\hat{\phi}_{k_m}^{m}\}$ individually for each modality $\mseq$. 
Assume we observe $J$ discrete samples randomly sampled at  $T_{1}^{m,(n)},\ldots, T_{J}^{m,(n)}$:
\[
\Ycal_i\rbr{T_j^{m,(n)}}=\chi_i\rbr{T_j^{m,(n)}}+\xi_i\rbr{T_j^{m,(n)}},
\]
where $\Ycal_i({T_j^{m,(n)}})$ is a shorthand for $\Ycal_i^{m,(n)}({T_j^{m,(n)}})$, $\chi_i({T_j^{m,(n)}})=\chi_i^{m,(n)}({T_j^{m,(n)}})$ and $\xi_i({T_j^{m,(n)}})=\xi_i^{m,(n)}({T_j^{m,(n)}})$. 
Define $\Kscr^{m}_i(u,v)$ as the covariance estimator of $\Ycal^m_i(u)$ and $\Ycal^m_i(v)$. The first step is to estimate $\Kscr^{m}_i(u,v)$ with discrete observations. Let $h>0$ be a smoothing constant and denote $K_h(\cdot)=h^{-1}K(\cdot/h)$ be a smoothing kernel function. Given $u,v$, we consider the minimization of the following function with respect to $(\beta_0,\beta_1,\beta_2)$:
\begin{multline}\label{eq:smoothobj}
    \sum_{n=1}^N\sum_{T_j^{m,(n)}\neq T_{j'}^{m,(n)}}\cbr{\Ycal_i\rbr{T_j^{m,(n)}}\Ycal_i\rbr{T_{j'}^{m,(n)}}-\beta_0-\beta_1\rbr{T_{j}^{m,(n)}-u}-\beta_2\rbr{T_{j'}^{m,(n)}-v}}^2\\
\times K_h\rbr{T_{j}^{m,(n)}-u}K_h\rbr{T_{j'}^{m,(n)}-v}.
\end{multline}
Define $\hat{\beta}_0$ be the optimal solution of $\beta_0$ in~\eqref{eq:smoothobj}. The estimator is obtained as $\hat{\Kscr}_i^m(u,v)=\hat\beta_0$. Let $\{(\hat{\lambda}_{i,\ell},\hat{\phi}^m_{i,\ell})\}_{\ell=1}^{k_{m,i}'}$ be the eigen-pairs of $\hat{\Kscr}_i^m(\cdot,\cdot)$. Define $\{\hat{\phi}_1^m,\ldots,\hat{\phi}_{k_m}^m\}$ as the set of orthonormal basis functions that spans $\{\hat{\phi}_{1,1}^m,\ldots,\hat\phi_{1,k_{m,1}'}^m,\ldots,\hat{\phi}_{p,1}^m,\ldots,\hat\phi_{p,k_{m,p}'}^m\}$.\\ Let
\[\tilde{\Sigmab}^{m,(n)}_i=[\hat{\Kscr}_i^m(u,v)+\hat{\sigma}^2I(u=v)]_{u,v\in\{T_{1}^{m,(n)},\ldots, T_{J}^{m,(n)}\}}\in\RR^{J\times J},\]
where the selection of $\hat\sigma$ has been discussed in~\citet{yao2005functional},
\[
\tilde{\yb}_i^{m,(n)}=(\Ycal_i({T_1^{m,(n)}}),\ldots, \Ycal_i({T_J^{m,(n)}}))^\top\in\RR^J.
\] 
Then, the estimator~\citep{yao2005functional} for ${y}_{i,\ell}^{m,(n)}$ under the discrete setting is
\[
\hat{y}_{i,\ell}^{m,(n)}=\hat{\db}_{i,\ell}^{m,(n)\top}(\tilde{\Sigmab}^{m,(n)}_i)^{-1}\tilde{\yb}_i^{m,(n)},\quad\ell=1,\ldots,k_m,
\]
where $\hat{\db}_{i,\ell}^{m,(n)}=(\hat{\db}_{i,\ell}^{m}(T_1^{m,(n)}),\ldots, \hat{\db}_{i,\ell}^{m}(T_J^{m,(n)}))^\top\in\RR^J$ and
 \[\hat{\db}_{i,\ell}^{m}(T_j^{m,(n)})=\int_u\hat{\Kscr}_i^m(T_j^{m,(n)},u)\hat{\phi}^m_\ell(u)du,\quad j=T_1^{m,(n)},\ldots,T_J^{m,(n)}.\]\\
 After obtaining the estimates $\hat{\Yb}_i^m=(\hat{\yb}_i^{m,(1)},\ldots,\hat{\yb}_i^{m,(N)})\in\RR^{k_m\times N}$ from Stage~1, we can replace $\Yb_i^m$ with $\hat{\Yb}_i^m$ in~\eqref{eq:f_n}. Then, the rest of the estimation procedures follow the proposal. } 
\section{Experiments on Real Data}\label{appendix:exp_real}
This section discusses the implementation details of the concurrent EEG-fMRI recordings~\citep{sadaghiani2010intrinsic}. Section~\ref{ssec:pp} introduces the data preprocessing pipeline. 
Section~\ref{ssec:details_nr} discusses the details of the regression procedure introduced in Section~\ref{sec:experiment:brain}. Section~\ref{ssec:visualization} shows the visualization of the precision matrices.

\subsection{Preprocessing Pipeline}\label{ssec:pp}
First, we conduct the z-transform of the time-series for both EEG and fMRI data, which is a standard preprocessing step.
In the second step, we project the data to different bases as outlined in the following.  
We first regress out the global signal and standardize each time-series. 
The truncated fMRI time-series has $143$ time points and the original EEG time-series has $71680$ time points, after removing a few time points in the beginning~\citep{poldrack2011handbook}. We span the fMRI data using the Fourier basis  functions. Then, we down-sample the EEG data evenly to $1024$ time points so that we can project it to wavelet family basis functions, a common basis family used to decompose EEG signals~\citep{gandhi2011comparative}.
 The candidates of the wavelet bases are Daubechies’ extremal phase wavelets, 
Daubechies’ “least-asymmetric” wavelets, and Coiflets wavelets, provided by the `wavethresh' R-package. To select the best wavelet basis, we use the Shannon entropy type function as the evaluation metric. The steps are as the following. For each subject $n$ and each region $i$, we compute the wavelet coefficients $w^{m,(n)}_{i,\ell}$ for $\ell=1,2,\ldots,1024$. We normalize each coefficient as  $\bar{w}_\ell^m = \sum_{i,n}w^{m,(n)}_{i,\ell}/\max_{\ell}(\sum_{i,n}w^{m,(n)}_{i,\ell})$.  We then select the basis family that has the smallest entropy $-\sum_{\ell} (\bar{w}_i^m)^2\log (\bar{w}_\ell^m)^2$. The rest of the tuning parameters are selected using the procedure discussed in Section~\ref{sec:parameterselection}.

\subsection{Details of the Neighborhood Regression Procedure}\label{ssec:details_nr}
The functional neighborhood regression method for data of single modality follows closely from~\eqref{eq:f_n}, except we do not need to estimate the transformation operator $\Amk$ here. Hence, given $\ykmnfree{i}{1},\ldots,\ykmnfree{i}{N}$ independent samples and let $\Ykmnfree{i}=(\ykmnfree{i}{1},\ldots, \ykmnfree{i}{N})\in\RR^{k_m\times N}$ for $\pseq$, we define
$$
g_{\Yb}(\setBk)=\sum_{i=1}^p\frac{1}{2N}\bignorm{\Ykmnfree{i}-\sum_{j\in\pni}\Bijk \Ykmnfree{j}}_F^2.$$
Then, we optimize the objective function:
\begin{align}\label{eq:regressin_singlemodality}
&\argmin_{\setBk} g_{\Yb}(\setBk) \notag\\
&\text{s.t.}\quad\Bik\in\Kcal_B(s,\alpha),\quad\pseq.
 \end{align}
 The optimization problem~\eqref{eq:regressin_singlemodality} is carried out by projected gradient descent, as shown in Algorithm~\ref{alg:update_single}. Next, we use the estimated $\hBik$ for $\pseq$ to construct the edge set via AND operation~\eqref{eq:edge_thre}. Finally, we estimate the inverse covariance matrix by solving~\eqref{eq:graph_learning}.
 \begin{algorithm}[t!]
  \caption{Sparse Neighborhood Regression}\label{alg:update_single}
    \begin{algorithmic}
\State Input: { $\{\Yb^m\}_{\mseq}$}
\For{$\pseq$}
 \State $\Bb_i^{(0)}\leftarrow {\bf 0}$\;
\EndFor
 \While{Not converged}
 \For{$\pseq$}
 \State $\Bb_i^{(t+.25)}\leftarrow \Bb_i^{(t)} - \eta_B\nabla_{\Bik}g_{\Yb}$\;
\State $\Bb_i^{(t+.5)}\leftarrow \Tcal_{\vartheta_1s^\star/2}(\Bb_i^{(t)})$\;   
     \For{$j\in\pni$}
     \State $\Bijk^{(t+1)}\leftarrow\Hcal_{\vartheta_2\alpha^\star/2}(\Bijk^{(t+.5)})$\;
   \EndFor
 \EndFor
 \EndWhile
  \end{algorithmic}
\end{algorithm}

\begin{figure}
    \centering
    \includegraphics[width=0.45\textwidth]{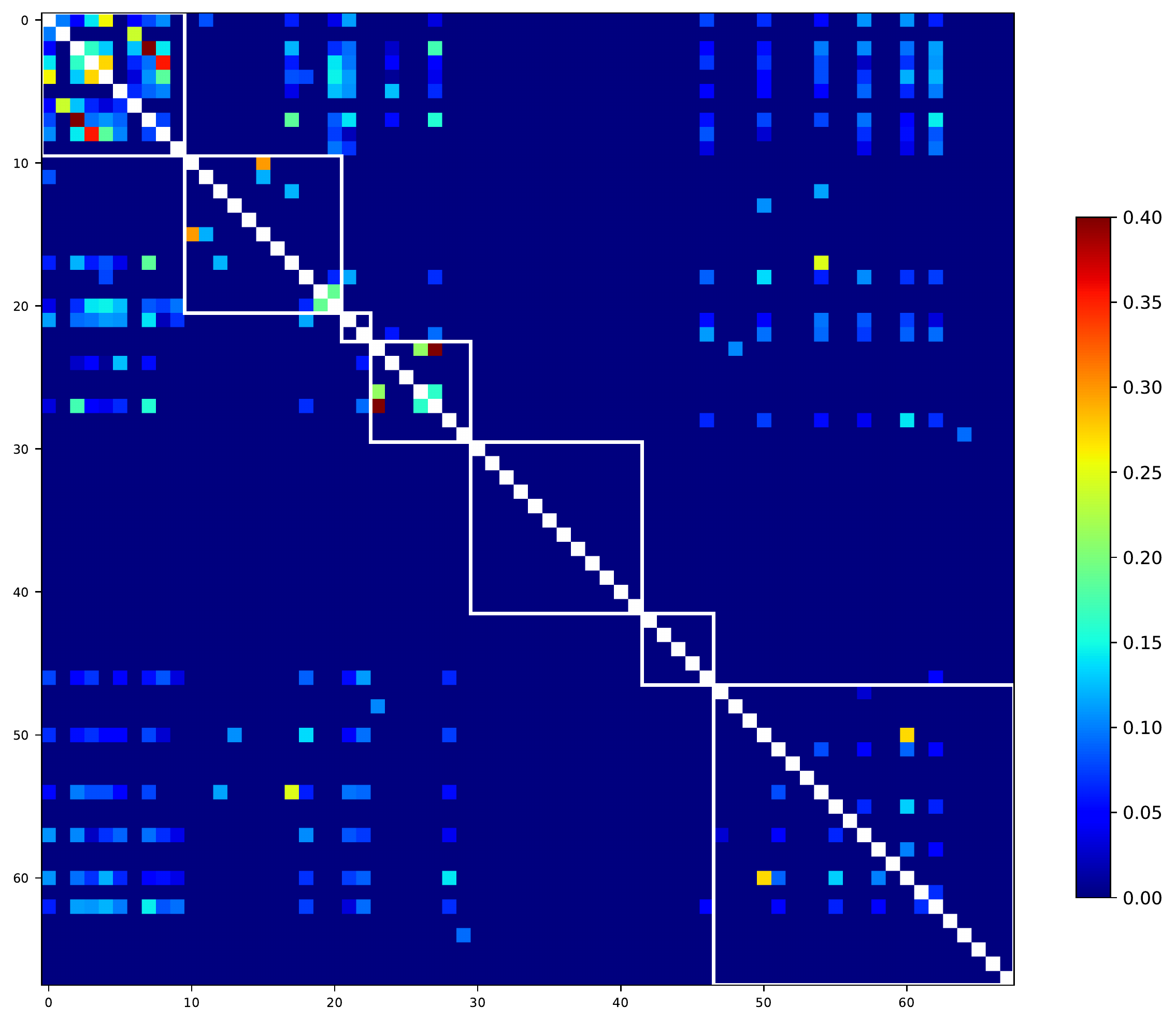}
    \includegraphics[width=0.45\textwidth]{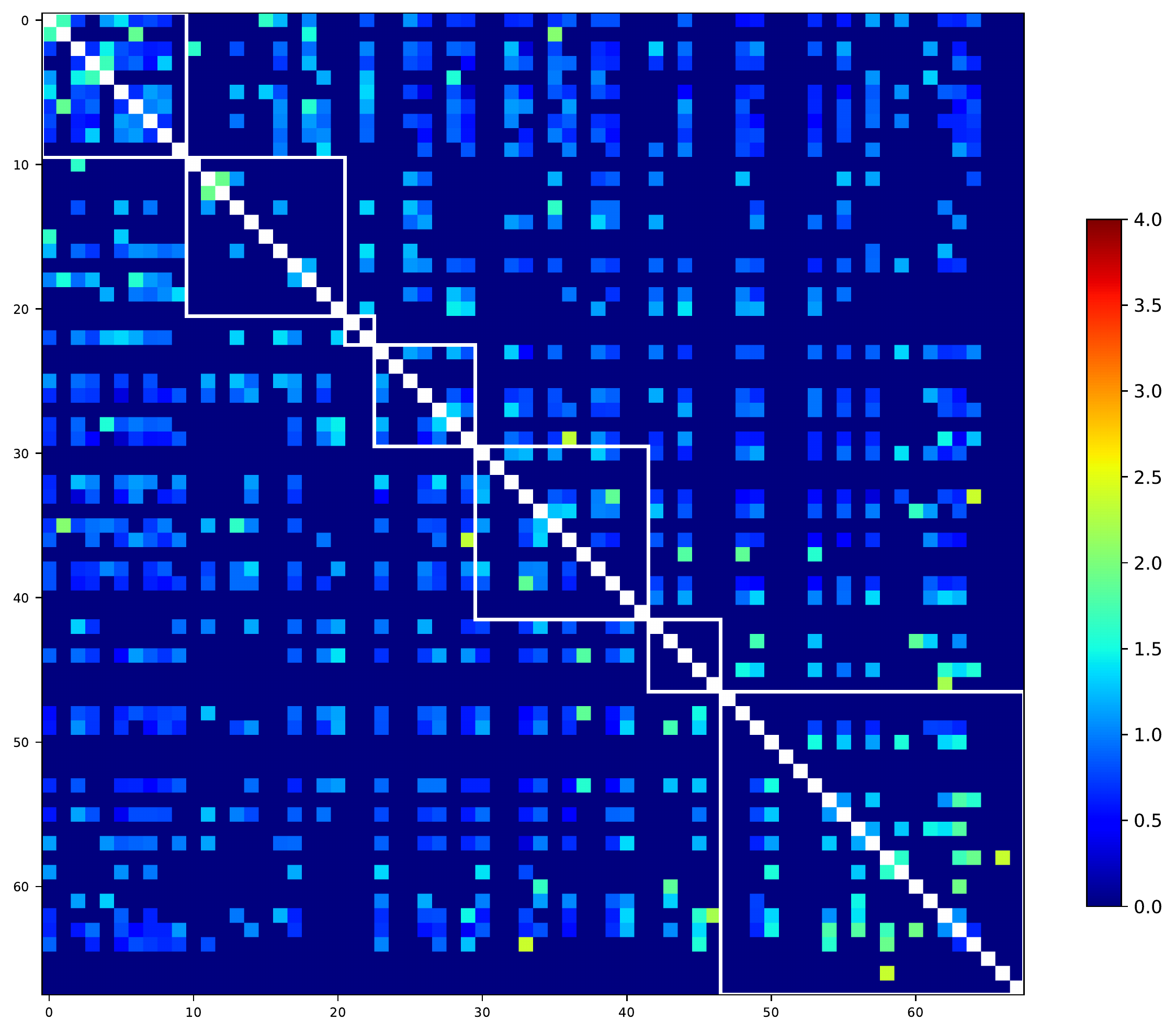}
    \includegraphics[width=0.45\textwidth]{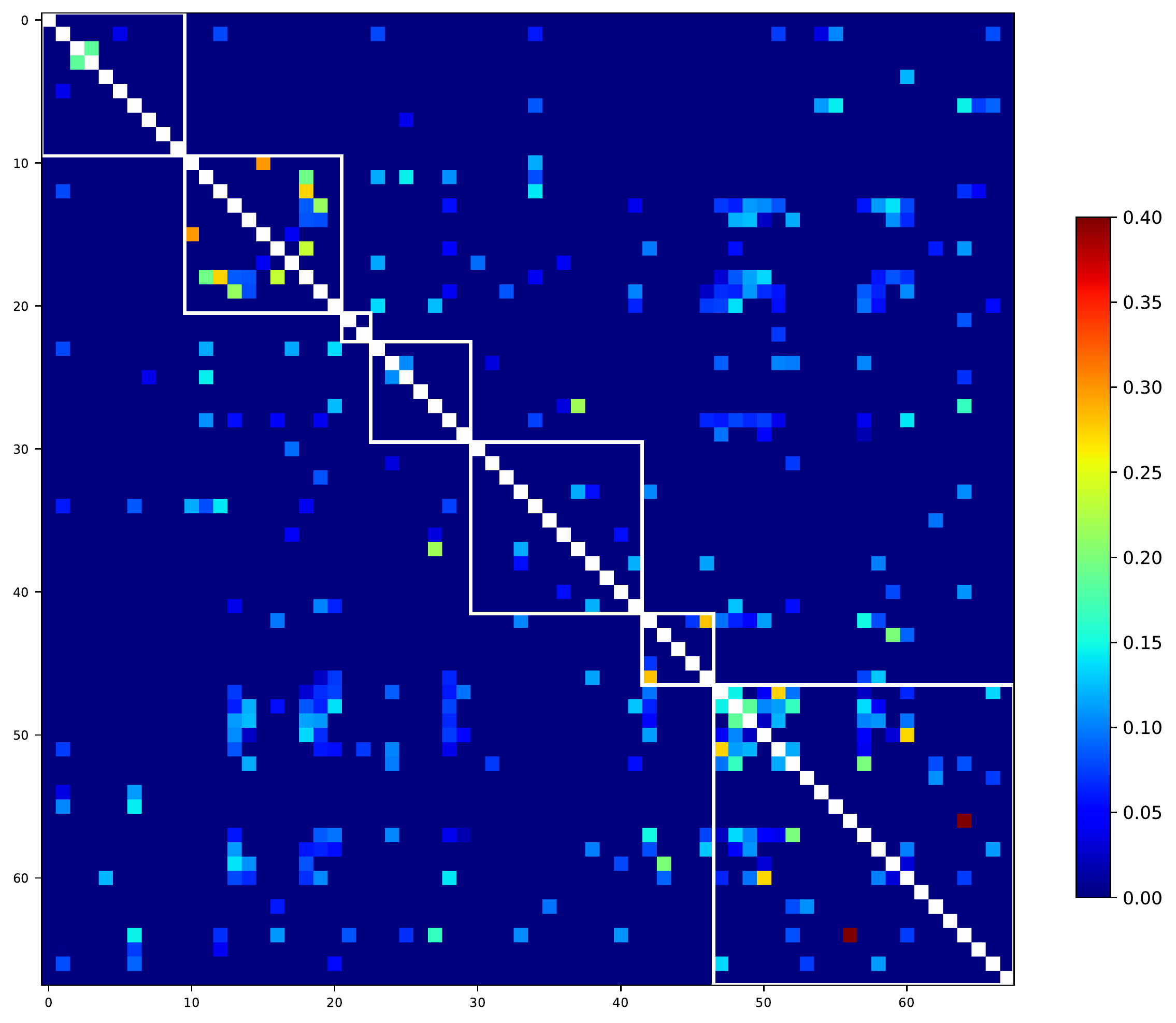}
    \includegraphics[width=0.45\textwidth]{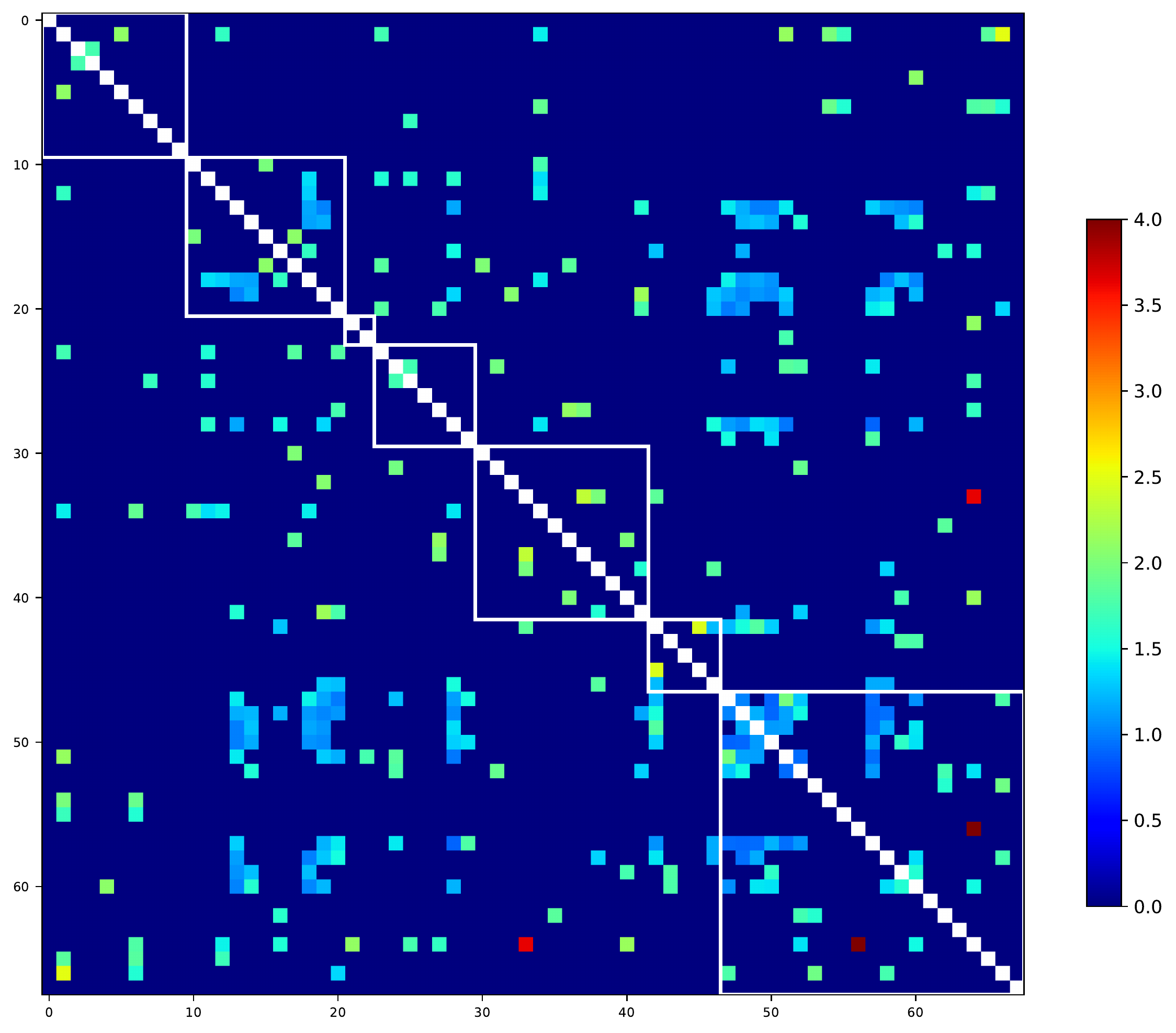}
    \includegraphics[width=0.45\textwidth]{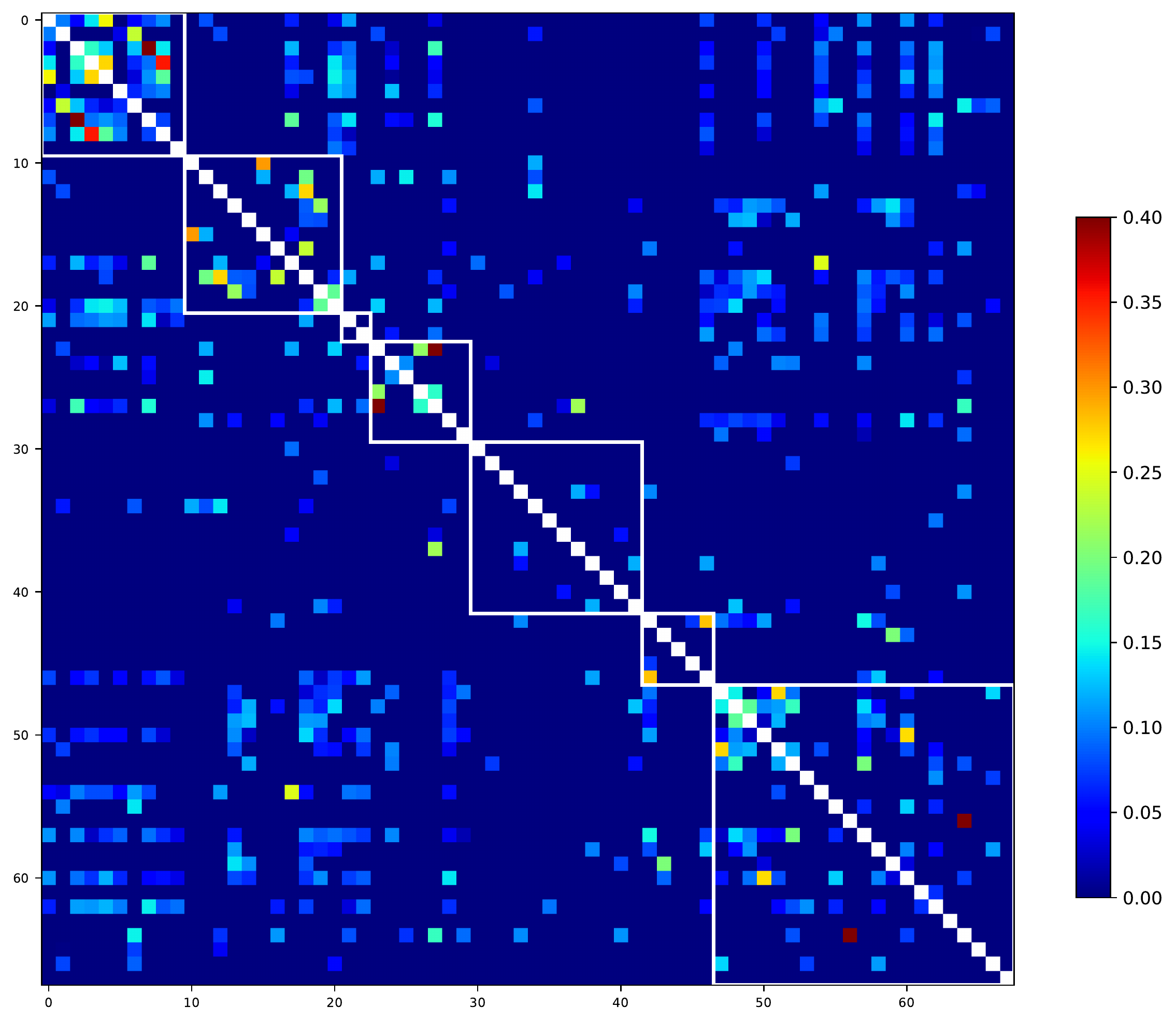}
    \includegraphics[width=0.45\textwidth]{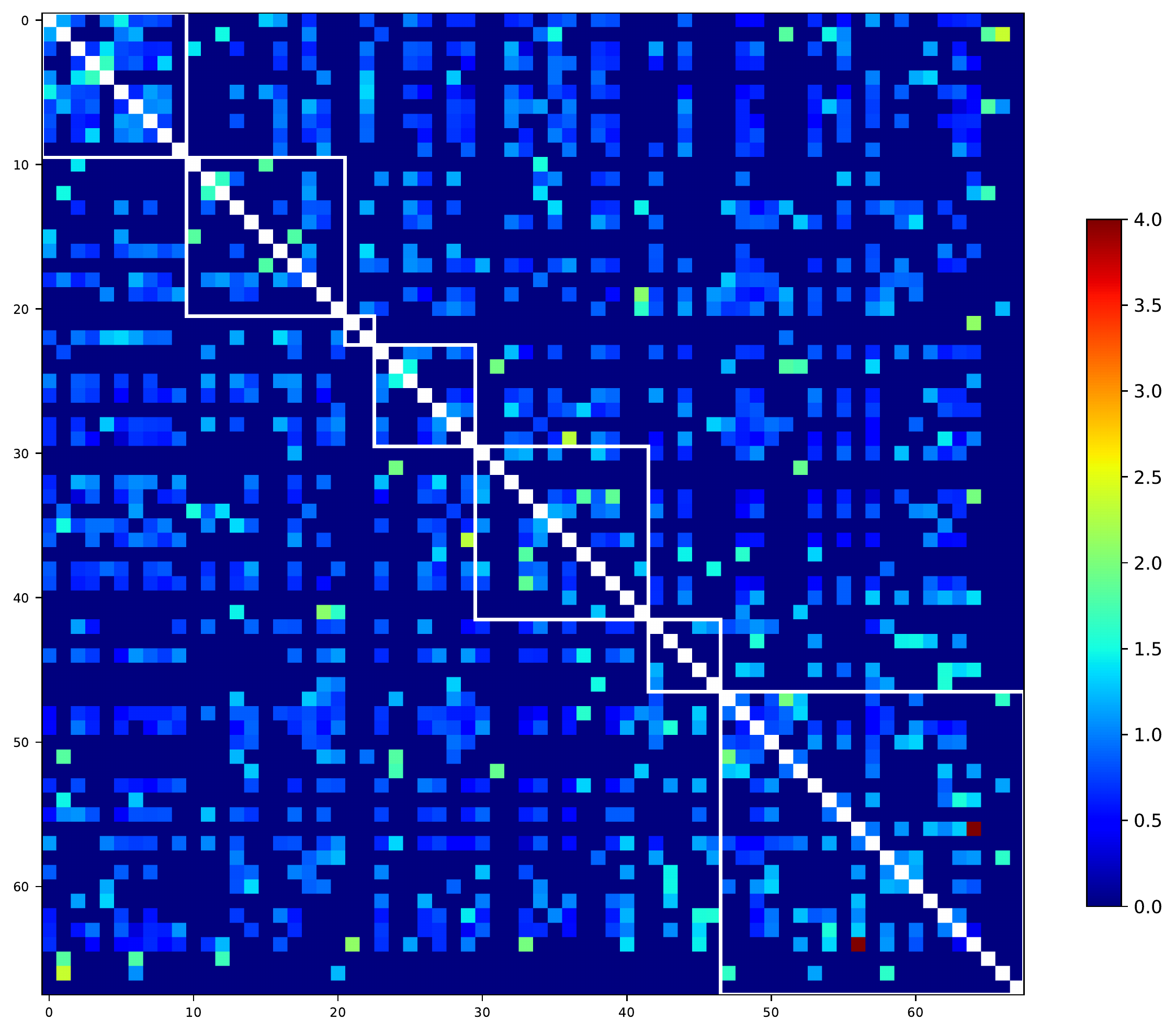}
    \caption{
    Estimated graph structure from three edge set candidates. 
    {\bf Left} to {\bf right}: the graphs of data modality 1 and the graphs of data modality 2. {\bf Top} to {\bf bottom}: the individual edge set, the latent edge set, and the fused edge set.
    The block diagonal box presents the partition using Yeo 7 network~\citep{yeo2011organization}. 
    }
    \label{fig:connectivity}
\end{figure}

\subsection{Visualization of Precision Matrices}
\label{ssec:visualization}

 As the number of bases used to span two data modalities is different, the precision matrices $\hat\Omega^{1}\in\RR^{pk_1\times pk_1}$ and $\hat\Omega^{2}\in\RR^{pk_2\times pk_2}$ have different dimensions. This makes it hard to compare the precision matrices of two graphs. As an alternative, given estimated $\hat\Omega^{m}$ estimated from~\eqref{eq:graph_learning} with $E\in\{\hat E^{m}, \hat E^{z}, \hat E^{m}\cup\hat E^{z}\}$ for $m=1,2$, we compute the magnitude of the $(i,j)$th submatrix $\hat\omega_{i,j}^m$ of $\hat\Omega^{m}$, in Frobenious norm.
The resulting plots are shown Figure~\ref{fig:connectivity}, where the $(i,j)$th is the magnitude for $\omega_{i,j}^m$ and the label is documented in Table~\ref{tab:DK_label}.
 The figure indicates that the latent dependency graph includes some edge connections that were not originally detected by any modality individually.

\subsection{Computation Complexity}\label{ssec:comutation}
{\color{black}{In this section, we analyze the computation time for running the experiments. The sample size, dimension, and number of time points are documented in Section~\ref{sec:experiment:brain}. 
        The computing time of the model selection of $\{s,\alpha,\tau_1,\tau_2\}$ introduced in Section~\ref{sec:parameterselection} is shown in the step~5 of Table~\ref{tab:experiment_runtime}. We run grid search on $s=\{3,6,9,12,15,18,21,24,27,30,33\}$, $\alpha=\{3,7,10,13\}$, $\tau_1=\{0.5,0.25\}$ and $\tau_2=\{2,4\}$. 
        The computing time of edge selection (proposed algorithm) is shown in Step~7. 
        In the experiment, we use learning rate $10^{-4}$ to initialize $\Bb$, the learning rate $10^{-3}$ for updating $\Bb$ and $10^{-4}$ for updating $\Ab$ in Algorithm~\ref{alg:update}.
        Then, we select $\lambda=\{0.1,0.2,0.3,0.4,0.5,0.6,0.7,0.8,0.9,1.0\}$ in~\eqref{eq:graph_learning} using 5-fold cross-validation with the BIC metric. Then, we use $skggm$ package to implement the graph estimation with selected $\lambda$. The computing time of model selection of $\lambda$ along with the graph estimation is shown in Step~8--9 of Table~\ref{tab:experiment_runtime}. Noting that Step~5, 8--9 take much longer time to finish as the running time depends on the size of the tuning parameter candidates, other steps are executed only once and could be done within a few minutes as shown in Table~\ref{tab:experiment_runtime}. }
}
        \begin{table}[!h]
\spacingset{1}
            \centering
            \resizebox{\textwidth}{!}{%
            \begin{tabular}{lll}
            \toprule
            \midrule
            Task && Running Time (s)\\
            \midrule
            Step 1:& standard preprocessing &$202.60$\\
            \midrule
            Step 2:& model Selection: basis family, $k$, $k_m$  & $201.64$ \\
            Step 3:& project data to selected basis& $16.91$\\
            Step 4:& initialization of $\Ab$& $2.374$\\
            Step 5:& model selection: $s,\alpha,\tau_1,\tau_2$ & $23610.33^*$\\
            Step 6:& initialization of $\Bb$ (Algorithm~2)& $251.17$\\
            Step 7:& Algorithm~1&$277.19$\\
            \midrule
            Step 8:& estimate graph of fMRI &$2777.08^*$\\
            Step 9:& estimate graph of EEG&$4289.77^*$\\
            \midrule
            \bottomrule
            \end{tabular}
            }
            \caption{Running time of the fMRI-EEG experiment. We use Intel Xeon Processor E5-2620 v3 @ 2.40 GHz to run all the steps. The * implies that we parallel the experiment with 24 CPUs and the remaining experiments are executed with single CPU.  }
            \label{tab:experiment_runtime}
        \end{table}

\FloatBarrier
\spacingset{1}
\begin{longtable}{ p{.10\textwidth}p{.2\textwidth}  p{.15\textwidth}p{.1\textwidth}
p{.35\textwidth} } 
\toprule
\midrule
Matrix Index & Desikan-Killiany Atlas Index & Abbreviation & Yeo 7 Network & Name  \\
\midrule
1 & 4 & lCUN & VIS & L cuneus \\
2 & 6 & lFUS & VIS & L fusiform \\
3 & 10 & lLOG & VIS & L lateral occipital gyrus \\
4 & 12 & lLING & VIS & L lingual \\
5 & 20 & lperiCAL & VIS & L pericalcarine \\
6 & 38 & rCUN & VIS & R cuneus \\
7 & 40 & rFUS & VIS & R fusiform \\
8 & 44 & rLOG & VIS & R lateral occipital gyrus \\
9 & 46 & rLING & VIS & R lingual \\
10 & 54 & rperiCAL & VIS & R pericalcarine \\
11 & 16 & lparaC & SM & L paracentral \\
12 & 21 & lpostC & SM & L postcentral \\
13 & 23 & lpreC & SM & L precentral \\
14 & 29 & lSTG & SM & L superior temporal gyrus \\
15 & 33 & lTT & SM & L transverse temporal \\
16 & 50 & rparaC & SM & R paracentral \\
17 & 55 & rpostC & SM & R postcentral \\
18 & 56 & rPCC & SM & R posterior cingulate cortex \\
19 & 57 & rpreC & SM & R precentral \\
20 & 63 & rSTG & SM & R superior temporal gyrus \\
21 & 67 & rTT & SM & R transverse temporal \\
22 & 28 & lSPL &  DA & L superior parietal lobule \\
23 & 62 & rSPL &  DA & R superior parietal lobule \\
24 & 2 & lcACC & VA & L caudal anterior cingulate \\
25 & 17 & lpOPER & VA & L pars opercularis \\
26 & 30 & lSMAR & VA & L supramarginal gyrus \\
27 & 34 & lINS & VA & L insula \\
28 & 36 & rcACC & VA & R caudal anterior cingulate \\
29 & 64 & rSMAR & VA & R supramarginal gyrus \\
30 & 68 & rINS & VA & R insula \\
31 & 5 & lENT & L & L entorhinal \\
32 & 8 & lITG & L & L inferior temporal gyrus \\
33 & 11 & lLOF & L & L lateral orbitofrontal \\
34 & 13 & lMOF & L & L medial orbitofrontal \\
35 & 31 & lFP & L & L frontal pole \\
36 & 32 & lTP & L & L temporal pole \\
37 & 39 & rENT & L & R entorhinal \\
38 & 42 & rITG & L & R inferior temporal gyrus \\
39 & 45 & rLOF & L & R lateral orbitofrontal \\
40 & 47 & rMOF & L & R medial orbitofrontal \\
41 & 65 & rFP & L & R frontal pole \\
42 & 66 & rTP & L & R temporal pole \\
43 & 26 & lrMFG & FP & L rostral middle frontal gyrus \\
44 & 37 & rcMFG & FP & R caudal middle frontal gyrus \\
45 & 51 & rpOPER & FP & R pars opercularis \\
46 & 53 & rpTRI & FP & R pars triangularis \\
47 & 60 & rrMFG & FP & R rostral middle frontal gyrus \\
48 & 1 & lBSTS & DMN & L bank of the superior temporal sulcus \\
49 & 3 & lcMFG & DMN & L caudal middle frontal gyrus \\
50 & 7 & lIPL & DMN & L inferior parietal lobule \\
51 & 9 & liCC & DMN & L isthmus cingulate cortex \\
52 & 14 & lMTG & DMN & L middle temporal gyrus \\
53 & 18 & lpORB & DMN & L pars orbitalis \\
54 & 19 & lpTRI & DMN & L pars triangularis \\
55 & 22 & lPCC & DMN & L posterior cingulate cortex \\
56 & 24 & lPCUN & DMN & L precuneus \\
57 & 25 & lrACC & DMN & L rostral anterior cingulate cortex \\
58 & 27 & lSFG & DMN & L superior frontal gyrus \\
59 & 35 & rBSTS & DMN & R bank of the superior temporal sulcus \\
60 & 41 & rIPL & DMN & R inferior parietal lobule \\
61 & 43 & riCC & DMN & R isthmus cingulate cortex \\
62 & 48 & rMTG & DMN & R middle temporal gyrus \\
63 & 52 & rpORB & DMN & R pars orbitalis \\
64 & 58 & rPCUN & DMN & R precuneus \\
65 & 59 & rrACC & DMN & R rostral anterior cingulate cortex \\
66 & 61 & rSFG & DMN & R superior frontal gyrus \\
67 & 15 & lPARH & DMN & L parahippocampal \\
68 & 49 & rPARH & DMN & R parahippocampal \\
\midrule
\bottomrule
\caption{ The index table of Desikan-Killiany atlas~\citep{desikan2006automated} reordered to Yeo 7 network~\citep{yeo2011organization}. VIS = visual; SM = somatomotor; DA = dorsal attention;
VA = ventral attention; L = limbic; FP = fronto-parietal; DMN = default mode network.
}
    \label{tab:DK_label}
\end{longtable}

\clearpage
\putbib[bu2]
\end{bibunit}
\end{document}